\newtheorem*{notation*}{Notation}
\newcommand{\citet}[1]{{\bf \color{red}Fix citet }\cite{#1}}
\newcommand{\mergethm}{Merge Theorem\xspace}
\DeclareSymbolFont{Shuffle}{U}{shuffle}{m}{n}
\DeclareFontFamily{U}{shuffle}{}
\DeclareFontShape{U}{shuffle}{m}{n}{%
  <-8>shuffle7%
  <8->shuffle10%
}{}
\DeclareMathSymbol\shuffle{\mathbin}{Shuffle}{"001}
\DeclareMathSymbol\cshuffle{\mathbin}{Shuffle}{"002}
\Crefname{Example}{Example}{Examples}
\crefname{Example}{Example}{Examples}
\Crefname{section}{Section}{Section}
\crefname{corollary}{\text{Corollary}}{\text{corollaries}}
\Crefname{corollary}{\text{Corollary}}{\text{Corollaries}}
\crefname{lemma}{\text{Lem.}}{\text{Lemmas}}
\Crefname{lemma}{\text{Lem.}}{\text{Lemmas}}
\crefname{proposition}{\text{Prop.}}{\text{Propositions}}
\Crefname{proposition}{\text{Proposition}}{\text{Propositions}}
\crefname{definition2}{\text{Def.}}{\text{Definitions}}
\Crefname{definition2}{\text{Def.}}{\text{Definitions}}
\crefname{notation}{\text{Notation}}{\text{Notations}}
\Crefname{notation}{\text{Notation}}{\text{Notations}}
\crefname{theorem}{\text{Thm.}}{\text{Theorems}}
\Crefname{theorem}{\text{Thm.}}{\text{Theorems}}
\crefname{figure}{\text{Fig.}}{\text{Figures}}
\Crefname{figure}{\text{Fig.}}{\text{Figures}}
\crefname{example}{\text{Example}}{\text{Examples}}
\Crefname{example}{\text{Example}}{\text{Examples}}
\theoremstyle{definition}
\newtheorem{definition2}[theorem]{Definition}
\renewcommand{\paragraph}[1]{\smallskip\noindent{\bfseries #1}}
\newcommand{\inarr}[1]{
  \begin{array}[t]{@{}l@{}}
    #1
  \end{array}
}
\newcommand{\inarrC}[1]{
  \begin{array}[t]{@{}c@{}}
    #1
  \end{array}
}
\newcommand{\ie}{i.e.,\@\xspace}
\newcommand{\eg}{e.g.,\@\xspace}
\newcommand{\wrt}{w.r.t.\@\xspace}
\newcommand{\aka}{a.k.a.\@\xspace}
\newcommand{\suq}{\subseteq}
\newcommand{\N}{\mathbb{N}}
\newcommand{\emptyseq}{\varepsilon}
\newcommand{\ite}[3]{\textbf{if } #1 \textbf{ then } #2 \textbf{ else } #3}
\newcommand{\nin}{\not\in}
\newcommand{\set}[1]{\{{#1}\}}
\newcommand{\size}[1]{|{#1}|}
\renewcommand{\st}{\; | \;}
\newcommand{\tup}[1]{{\langle{#1}\rangle}}
\newcommand{\defeq}{\triangleq}
\newcommand{\restrict}[2]{{#1}|_{#2}}
\newcommand{\True}{{\it true}}
\newcommand{\False}{{\it false}}
\newcommand{\maketil}[1]{{#1}\ldots{#1}}
\newcommand{\til}{\maketil{,}}
\newcommand{\init}{{\mathsf{init}}}
\newcommand{\reorder}[2]{\mathsf{reorder}_{#1}(#2)}
\newcommand{\lts}{L}
\newcommand{\states}[1]{{\mathsf{states}({#1})}}
\newcommand{\initial}[1]{{\mathsf{init}({#1})}}
\newcommand{\trans}[1]{{\mathsf{trans}({#1})}}
\newcommand{\labels}[1]{{\mathsf{labels}({#1})}}
\newcommand{\traces}[1]{{\mathsf{traces}({#1})}}
\newcommand{\otraces}[1]{{\mathsf{otraces}({#1})}}
\newcommand{\sysp}[2]{{\mathsf{S}_#1^#2}}
\newcommand{\sys}[1]{{\mathsf{S}_#1}}
\newcommand{\implo}{I}
\newcommand{\impl}{\mathcal{I}}
\newcommand{\obj}{O}
\newcommand{\proc}{p}
\newcommand{\op}{o}
\newcommand{\act}{a}
\newcommand{\evt}{e}
\newcommand{\var}{x}
\newcommand{\cproc}{\mathtt{p}}
\newcommand{\Loc}{{\mathsf{Loc}}}
\newcommand{\Val}{{\mathsf{Val}}}
\newcommand{\Var}{{\mathsf{Var}}}
\newcommand{\Proc}{{\mathsf{P}}}
\newcommand{\memacts}{\mathsf{MemActs}}
\newcommand{\Reads}{\mathsf{R}}
\newcommand{\Fences}{\mathsf{F}}
\newcommand{\MemEvs}{\mathsf{MemEvs}}
\newcommand{\ObjEvs}{\mathsf{Evs}}
\newcommand{\Read}{\texttt{R}}
\newcommand{\Write}{\texttt{W}}
\newcommand{\Fence}{\texttt{F}}
\newcommand{\RMW}{\texttt{RMW}}
\newcommand{\ack}{{\normalfont\texttt{ack}}}
\newcommand{\vold}{v_\textsf{old}}
\newcommand{\vexp}{v_\textsf{exp}}
\newcommand{\vnew}{v_\textsf{new}}
\newcommand{\mm}{M}
\newcommand{\SCM}{{\text{{\normalfont SCM}}}\xspace}
\newcommand{\TSO}{{\text{{\normalfont TSO}}}\xspace}
\newcommand{\RA}{{\text{{\normalfont RA}}}\xspace}
\newcommand{\Atomic}{\SCM}
\newcommand{\invi}[1]{{\mathtt{inv}(#1)}}
\newcommand{\resi}[1]{{\mathtt{res}(#1)}}
\newcommand{\opsf}[1]{{\mathsf{ops}(#1)}}
\newcommand{\retsf}[1]{{\mathsf{rets}(#1)}}
\newcommand{\ops}{{\mathit{ops}}}
\newcommand{\rets}{{\mathit{rets}}}
\newcommand{\procsf}[1]{\mathsf{proc}\mbox{-}\mathsf{set}({#1})}
\newcommand{\actsf}[1]{{\mathsf{acts}}({#1})}
\newcommand{\tracef}[1]{{\mathsf{trace}({#1})}}
\newcommand{\histories}[1]{{\mathsf{H}({#1})}}
\newcommand{\chistories}[1]{{\mathsf{ComH}({#1})}}
\newcommand{\cshistories}[1]{{\mathsf{ComSeqH}({#1})}}
\newcommand{\procf}[1]{{\mathsf{proc}({#1})}}
\newcommand{\actionf}[1]{{\mathsf{act}({#1})}}
\newcommand{\varf}[1]{{\mathsf{var}({#1})}}
\newcommand{\typf}[1]{{\mathsf{typ}({#1})}}
\newcommand{\valwf}[1]{{\mathsf{val}({#1})}}
\newcommand{\typsf}[1]{\mathsf{typ}\mbox{-}\mathsf{set}({#1})}
\newcommand{\spec}[1]{{\mathsf{spec}({#1})}}
\newcommand{\mra}{m}
\newcommand{\View}{\mathsf{View}}
\newcommand{\pv}{{\mathcal{V}_\mathsf{p}}}
\newcommand{\fv}{{{V}_\mathsf{f}}}
\newcommand{\view}{V}
\newcommand{\msg}{\mu}
\newcommand{\valmf}[1]{{\mathsf{val}({#1})}}
\newcommand{\timef}[1]{{\mathsf{ts}({#1})}}
\newcommand{\viewf}[1]{{\mathsf{view}({#1})}}
\newcommand{\h}{h}
\newcommand{\seqh}{h_\mathsf{seq}}
\newcommand{\sproc}{\mathsf{sproc}}
\newcommand{\lin}{\mathsf{lin}}
\newcommand{\restrictmem}[1]{\restrict{#1}{\mathsf{M}}}
\newcommand{\restrictobj}[2][\obj]{\restrict{#2}{#1}}
\newcommand{\ainv}[1]{{{\stackrel{#1}{}}}}
\newcommand{\winv}[1]{{\stackrel{\updateopp{#1}}{}}}
\newcommand{\rinv}{{{\stackrel{\scanop}{}}}}
\newcommand{\rval}[1]{{{\stackrel{ {#1}}{}}}}
\newcommand{\wint}[2]{\;|\!$\parbox[t][][t]{#2}{$\,\winv{#1}\hfill\rval{\ack\,}$\vspace*{-1pt}{\color{black}\hrule height 1pt}}$\!|\;}
\newcommand{\rint}[2]{\;|\!$\parbox[t][][t]{#2}{$\,\rinv\hfill\rval{#1}\,$\vspace*{-1pt}{\color{black}\hrule height 1pt}}$\!|\;}
\newcommand{\hint}[3]{\;|\!$\parbox[t][][t]{#3}{$\,\ainv{#1}\hfill\rval{#2}\,$\vspace*{-1pt}{\color{black}\hrule height 1pt}}$\!|\;}
\newcommand{\invres}[3]{({#1}\text{:}\hint{#2\vphantom{)}}{#3\vphantom{)}}{20pt}\!)}
\newcommand{\invress}[4]{({#1}\text{:}\hint{#2\vphantom{)}}{#3\vphantom{)}}{#4}\!)}
\newcommand{\ev}[2]{{#1}\text{:}{#2}}
\newcommand{\locko}{\mathsf{Lock}}
\newcommand{\acquireo}{\mathtt{acquire}}
\newcommand{\snapshoto}{\mathsf{Snapshot}}
\newcommand{\updateop}{{\tt update}}
\newcommand{\updateopp}[1]{\updateop(#1)}
\newcommand{\scanop}{{\tt scan}}
\newcommand{\snapop}{\scanop}
\newcommand{\writesno}[1]{\updateop(#1)}
\newcommand{\rego}{\mathsf{Reg}}
\newcommand{\writeo}[1]{\mathtt{write}(#1)}
\newcommand{\writeoo}{\mathtt{write}}
\newcommand{\reado}{\mathtt{read}}
\newcommand{\mrego}{\mathsf{MaxReg}}
\newcommand{\mcounto}{\mathsf{MC}}
\newcommand{\inco}{\mathtt{inc}}
\newcommand{\counto}{\mathsf{Counter}}
\newcommand{\deco}{\mathtt{dec}}
\newcommand{\seto}{\mathsf{Set}}
\newcommand{\putop}{{\tt add}}
\newcommand{\takeop}{{\tt remove}}
\newcommand{\conso}{\mathsf{Consensus}}
\newcommand{\proposeop}[1]{{\tt propose}(#1)}
\newcommand{\proposeopp}{{\tt propose}}
\newif\iffull
\author{Armando Casta\~neda}
{Instituto de Matemáticas, Universidad Nacional Autónoma de México}
{armando.castaneda@im.unam.mx}
{https://orcid.org/0000-0002-8017-8639}
{Supported by DGAPA PAPIIT project IN108723 and Royal Society grant  IES$\backslash$R1$\backslash$221226.}
\author{Gregory Chockler}
{Department of Computer Science, University of Surrey}
{g.chockler@surrey.ac.uk}
{https://orcid.org/0000-0001-6700-9235}
{Supported by Royal Society grant: IES$\backslash$R1$\backslash$221226; CHIST-ERA project REDONDA EP/Y036425/1;
and EPSRC grants: EP/X037142/1 and EP/X015149/1.}
\author{Brijesh Dongol}
{Department of Computer Science, University of Surrey}
{b.dongol@surrey.ac.uk}
{https://orcid.org/0000-0003-0446-3507}
{
Supported by VeTSS; Royal Society grant: IES$\backslash$R1$\backslash$221226; CHIST-ERA project REDONDA EP/Y036425/1;
and EPSRC grants: EP/X037142/1, EP/X015149/1, EP/V038915/1, and 
EP/R025134/2.
}
\author{Ori Lahav}
{School of Computer Science, Tel Aviv University}
{orilahav@tau.ac.il}
{https://orcid.org/0000-0003-4305-6998}
{Supported by the European Research Council (ERC) under the European Union's Horizon 2020
research and innovation programme (grant agreement no.~851811)
and the Israel Science Foundation (grant number~814/22).}
\authorrunning{A. Casta\~neda, G. Chockler, B. Dongol, and O. Lahav}
\begin{document}

\begin{CCSXML}
<ccs2012>
   <concept>
       <concept_id>10003752.10003753.10003761.10003763</concept_id>
       <concept_desc>Theory of computation~Distributed computing models</concept_desc>
       <concept_significance>500</concept_significance>
       </concept>
   <concept>
       <concept_id>10003752.10003809.10011778</concept_id>
       <concept_desc>Theory of computation~Concurrent algorithms</concept_desc>
       <concept_significance>500</concept_significance>
       </concept>
 </ccs2012>
\end{CCSXML}

\ccsdesc[500]{Theory of computation~Distributed computing models}
\ccsdesc[500]{Theory of computation~Concurrent algorithms}

\keywords{Impossibility, Weak Memory Models, Total-Store Order, Release-Acquire}

\title{What Cannot Be Implemented on Weak Memory?}

\maketitle
\allowdisplaybreaks

\begin{abstract}
  We present a general methodology for establishing the impossibility
  of implementing certain concurrent objects on different (weak)
  memory models.  The key idea behind our approach lies in
  characterizing memory models by their \emph{mergeability
    properties}, identifying restrictions under which independent
  memory traces 
  can be merged into a single valid memory trace.  In turn, we show
  that the mergeability properties of the underlying memory model
  entail similar mergeability requirements on the specifications of
  objects that can be implemented on that memory model.  We
  demonstrate the applicability of our approach to establish the
  impossibility of implementing 
  standard distributed objects
  with different restrictions on memory traces on three memory models:
  strictly consistent
  memory, total store order, and release-acquire.  
  These impossibility results allow us to identify tight and almost tight bounds for some objects,
  as well as new separation results between weak memory models,
  and between well-studied 
  objects based on their implementability on weak memory models.
\end{abstract}

\section{Introduction}

Weak memory models have become standard in modern hardware
architectures and programming languages. Unlike traditional strictly
consistent memory (SCM), which provides \emph{atomic} read/write instructions, 
memories achieve efficiency by multiple optimizations, which, in
particular, delay propagation of writes instead of making them
immediately visible to subsequent reads in other threads. Two well-studied 
models, which we consider in this paper, are \emph{total store
  order} model (TSO), as implemented in
SPARC~\cite{sparc-tso,guide2011intel} and x86
multiprocessors~\cite{x86-tso}, and the weaker \emph{release-acquire} model (RA), 
a fragment of C/C++11~\cite{Batty:2011,sra}, which guarantees 
causal consistency together with per-location strict consistency (\aka  coherence).

The standard memory model for the design and analysis of asynchronous shared memory algorithms is SCM.
These algorithms however, are not guaranteed to work correctly on weaker memory models
(such as TSO and RA) due to the lack of atomicity of reads and writes. 
To ensure atomicity, one can use \emph{fence} or atomic \emph{read-modify-write (RMW)} instructions 
provided by the weak memory models. However, since fences and RMWs disable hardware optimizations and 
enforce synchronization between threads, they incur substantial performance overheads.
Thus, one would like to understand when fences and RMWs are necessary 
and when they can be avoided, in order to correctly and efficiently implement the large 
body of existing shared memory algorithms on weak memory architectures.

In this paper, we set out to tackle this important and challenging
question. The crux of our approach 
is
based on \emph{mergeability} of traces and object
histories.  Roughly
speaking, two 
memory
traces (sequences of memory accesses)
of some memory model $M$ are strongly (resp., weakly) mergeable
if every (resp., some) interleaving of these traces forms a valid
trace of $M$. Likewise, two object histories (sequences of invocations
and responses) of some object $O$ are strongly (resp., weakly
mergeable) if every (resp., some) interleaving of these histories
forms a valid history of $O$.  Then, our key result is the \emph{\mergethm}, 
which, roughly speaking, states that strongly (resp., weakly)
mergeable memory traces can only be used to implement strongly
(resp., weakly) mergeable object histories. 
Contrapositively, when operations of a certain concurrent
object are not strongly (resp., weakly) mergeable, then the
memory traces implementing these operations on a memory model $M$
cannot be strongly (resp., weakly) mergeable in $M$.
The correctness and progress conditions in the \mergethm are weaker versions of
 \emph{linearizability}~\cite{HW90} and \emph{obstruction-freedom}~\cite{obstruction-free}.

A prerequisite for applying our \mergethm for a particular memory
model is to identify useful mergeability properties of the model.  For
\SCM, \TSO, and \RA, we develop a set of properties (see
\cref{tab:results}) that describe conditions under which traces
of the models can be (weakly/strongly) merged.  These results provide
key insights into the synchronization power of these
memory models, and together with the \mergethm allow us to derive multiple impossibility results, and
identify optimal implementations.

\begin{Example}
\newcommand{\pinv}[2]{{\stackrel{\putop({#2})}{}}}
\newcommand{\tinv}{{{\stackrel{\takeop{({1}})}{}}}}
\newcommand{\pint}[3]{\;|\!$\parbox[t][][t]{#3}{$\,\pinv{#1}{#2}\hfill\rval{\ack\,}$\vspace*{-1pt}{\color{black}\hrule height 1pt}}$\!|\;}
\newcommand{\tint}[2]{\;|\!$\parbox[t][][t]{#2}{$\,\tinv\hfill\rval{#1}\,$\vspace*{-1pt}{\color{black}\hrule height 1pt}}$\!|\;}

  \small
 Consider a set object that provides the high-level operations
 $\putop(v)$ and $\takeop(v)$, where $\takeop$ returns $\True$ iff the
 element $v$ is in the abstract set and in this case removes $v$ from the set.
Consider the following histories assuming two processes:
$$
\inarr{\begin{array}{ll}
\cproc_1: & \pint{\cproc_1}{{{1}}}{50pt}   \tint{{{\True}}}{50pt} \\ 
\cproc_2: &  \\
\end{array}
\qquad\qquad
\begin{array}{ll}
\cproc_1: & \pint{\cproc_1}{{{1}}}{50pt}   \\ 
\cproc_2: & \qquad\qquad\qquad \quad \tint{{{\True}}}{50pt} \\ 
\end{array} \hfill \\[1ex]
\qquad\qquad\qquad \text{history $\h_1$} \qquad\qquad\qquad\qquad\qquad\qquad\qquad\qquad \text{history $\h_2$}} \qquad
$$ 
Let $\sigma_0$ be the trace of a set implementation $\implo$ generated
by $\cproc_1$ executing $\putop(1)$ until completion from the initial
state, and for $i\in\set{1,2}$, let $\sigma_i$ be the trace generated
by $\cproc_i$ after $\sigma_0$ to induce history $\h_i$.
Such traces must exist assuming $\implo$ is obstruction-free.
If $\sigma_1$ and $\sigma_2$ can be merged into a trace
$\sigma$ such that $\sigma_0\cdot \sigma$ is a valid trace of
a memory model $M$, then we reach a contradiction because $\cproc_1$ (resp.,
$\cproc_2$) cannot distinguish between $\sigma_1$ (resp.,
$\sigma_2$) and $\sigma$, and thus both $\takeop$ operations of
$\cproc_1$ and $\cproc_2$ in $\sigma$ return $\True$, contradicting
linearizability of $\implo$.  In other words, since the two
${\takeop}$ invocations cannot be merged into a single linearizable
object history, it must be that the corresponding memory traces cannot be merged.
In particular, if $\sigma_1$ and $\sigma_2$ have neither RAW nor RMW,
then they can always be merged on \SCM, which gives us the impossibility
result of~\cite{AGHKMV11} for this object.
    \caption{Linearizable Obstruction-Free Set}
  \label{ex:LOO-intro}
\end{Example}

For instance, consider the read-after-write pattern (RAW),
which is often used by shared memory algorithms under SCM (such as classical
mutual exclusion~\cite{ewd123,bakery}) as a 
synchronization mechanism. In RAW, 
a process first writes to a shared variable and then reads from a different shared variable,
and under SCM, this ensures that at least one
of the two processes writing to two different variables has
to observe the value written by the other process 
(see the SB program in \cref{sec:memory}).
This means that solo traces that use RAW are not mergeable into a single trace.
In turn, it is straightforward to establish that any two RAW-free 
read-write traces (by distinct processes) are weakly mergeable under SCM (\cref{sec:merge}).

With this observation, we easily re-establish (and generalize) 
the ``Laws of Order'' results from~\cite{AGHKMV11},
showing that mutual exclusion protocols, as well as concurrent objects with \emph{strongly non-commutative} methods,
cannot be implemented on SCM with neither RAW nor RMW.
We do so by simple mergeability-based arguments (see, \eg~\cref{ex:LOO-intro}),
instead of rather complex and ad-hoc application of the covering technique used in~\cite{AGHKMV11}.
Intuitively, two methods are strongly non-commutative if executing one of them first 
affects the response of the other, and vice versa.
Moreover, by using mergeability properties for TSO and RA we directly obtain
similar impossibility results for these models, whereas the argument in~\cite{AGHKMV11}
for weak memory models is only implicit, based on the fact that 
enforcing a write to be executed before a read (i.e., implementing RAW) on a weak model requires a fence.

A benefit of our generic methodology is that
we can also reason about implementability of methods that 
are not strongly non-commutative, hence not covered by~\cite{AGHKMV11}:

\paragraph{One-Sided Non-Commutative Operations.}
Some objects such as register, max-register, snapshot and monotone counter
have pairs of methods 
that do not strongly non-commute.
To support them, we consider \emph{one-sided non-commutativity} of pairs of methods,
which, roughly speaking, means that executing one of them first affects the response of the other,
but not necessarily vice versa. 
We then apply the \mergethm to show that any linearizable obstruction-free
implementations of these objects must use fences or RMWs in TSO and RA.

Then, for max-register, a useful building block in several implementations, \eg~\cite{AAC12, BaigHMT23, CP21},
we obtain \emph{fence-optimal} implementations in TSO and RA. 
The TSO implementation is obtained through a more general \emph{fence-insertion strategy}:
a transformation that takes any read/write linearizable implementation in SCM 
and adds fences between every write followed by a read or a return of an operation, 
provably resulting in a linearizable implementation in TSO.
Combined with a wait-free read/write max-register implementation in SCM (with uses neither RAW nor RMW),
the transformation gives a fence-optimal wait-free read/write max-register implementation
in TSO.
For RA, we develop a similar linearizable implementation 
by placing a fence in the beginning and the end of every operation, 
which leads to a fence-optimal implementation of max-register in RA.

\paragraph{Snapshot and Counter.}
We also reason about snapshot and (non-monotone) counter,
which fall beyond the scope of non-commutativity.
These two objects are of particular interests:
snapshot is \emph{universal} for a family of objects whose pairs of
operations either commute or one overwrites the other~\cite{AH90}, 
and counter is a useful building block for randomized consensus~\cite{aspness-random2, AspnesH1990consensus}.
For TSO, the fence-insertion transformation above 
once again provides a wait-free fence-optimal snapshot (resp., counter) implementation 
where every update operation ends with a fence.
However, we use our \mergethm to show that, in sharp contrast to max-register, 
there is no obstruction-free read/write snapshot (resp., counter) implementation in RA,
whose operations start with a fence and end with a fence (see outline in \cref{ex:snapshot-intro}).
To the best of our knowledge, this is the first sharp separation between max-register
on the one hand and snapshot and counter on the other in terms of their 
implementability under $\RA$ using only reads, writes and fences.

\begin{Example}
  \small
  Mergeability can justify a novel impossibility result for \RA,
  showing that a shared (single-writer multi-reader) snapshot object cannot be implemented with
  only reads, writes and fences under the restriction that all fences
  are only placed at the beginning and end of a method invocation.
Consider the following histories assuming three processes:
$$
\inarr{\begin{array}{ll}
\cproc_1: & \wint{1}{80pt} \\ 
\cproc_2: &  \\ 
\cproc_3: & \quad \rint{\tup{{{1}},\bot,\bot}}{80pt}
\end{array}
\qquad\qquad\qquad
\begin{array}{ll}
\cproc_1: & \\ 
\cproc_2: &  \wint{1}{70pt} \rint{\tup{\bot,{{1}},\bot}}{70pt} \\ 
\cproc_3: & 
\end{array} \hfill \\[1ex]
\qquad\qquad\qquad \text{history $\h_1$} \qquad\qquad\qquad\qquad\qquad\qquad\qquad \text{history $\h_2$}} \qquad
$$
An obstruction-free implementation should generate both histories.
A merge-based argument implies that the memory traces $\sigma_1$ and $\sigma_2$ induced by the implementation 
when it generates $\h_1$ and $\h_2$ must not be mergable in the underlying memory model.
Otherwise, the same algorithm will also allow some interleaving $\h$ of $\h_1$ and $\h_2$,
but it is easy to observe that no such interleaving is linearizable:
no valid single history $\h$ with only two updates, 
$\updateopp{1}$ by $\cproc_1$ and $\updateopp{1}$ by $\cproc_2$,
can have \emph{both} scan results $\tup{{1},\bot,\bot}$ and $\tup{\bot,{1},\bot}$.
The $\RA$ memory model allows any two RMW-free traces $\sigma_1$ and $\sigma_2$ by disjoint sets of processes
to be merged, provided that fences are not used in the middle of these
traces.  Roughly speaking, following~\cite{sra,promising}, the
semantics of $\RA$ is based on point-to-point communication, making it
is possible for $\cproc_1$ and $\cproc_3$ to communicate directly,
without affecting $\cproc_2$.
Thus, every implementation of snapshot on $\RA$ 
uses RMWs or fences in the middle of operations.
    \caption{Linearizable Obstruction-Free Snapshot}
  \label{ex:snapshot-intro}
\end{Example}

\bigskip
\paragraph{Outline.}
The rest of this paper is structured as follows.
In \cref{sec:memory} we define the notion of a memory model.
In \cref{sec:merge} we establish multiple mergeability properties for these memory models.
In \cref{sec:impos-obj} we present the general impossibility result.
In \cref{sec-first-applications} we discuss applications of the theorem for well known objects,
and tightness of the obtained lower bounds.
We conclude and discuss related work in \cref{sec:related}.


\section{Weak Memory Models}
\label{sec:memory}

In this paper, we consider three memory models: 

\begin{description}[leftmargin=0pt]
\item [Strictly Consistent Memory ($\SCM$):]
In this model every write is propagated to all
threads immediately after being executed. 
In the weak memory literature, this
  memory model is often referred to as sequential
  consistency, but it essentially corresponds to a
  collection of \emph{linearizable} (\aka atomic) register objects~\cite{HW90}.
\item [Total Store Order ($\TSO$):]
Each process has a local FIFO store buffer.  Writes are first enqueued
in the buffer of the writing process, and later propagate from
the buffer to main memory in an \emph{internal} step that occurs
non-deterministically as part of the system's execution. 
A read of a variable returns the latest write to the variable in the
reading process' buffer or the value in
main memory if there is no pending write to that variable in the buffer.
\item [Release/Acquire ($\RA$):] This model
 employs a notion of synchronization between
processes through acquiring instructions (read or RMW) which
synchronize with previously executed releasing instructions (write or RMW) 
when the acquiring instruction reads its value from the releasing
instruction. Such synchronization transfers
``happens-before'' knowledge from the releasing instruction to the
acquiring instruction. Following a release-acquire synchronization,
instructions that follow (in ``happens-before'' order) the acquire
instruction must be consistent with the happens-before knowledge
received through the synchronization.
\end{description}

The classic examples used to explain these memory models 
are the \emph{store buffering} (SB), 
\emph{independent reads of independent writes} (IRIW), 
and \emph{message passing} (MP) programs,
given below.  We assume shared variables $x$ and
$y$ initialized with the value $0$ and process-local variables $a,b,\ldots$. 
The possible final values of $a,b,\ldots$ depend on the memory model.

\medskip
\noindent\hfill\begin{tabular}[t]{@{}c@{\qquad}c@{\qquad}c@{}} \small
  $\inarrC{\begin{array}[c]{@{}l||l@{}}
    \inarr{\text{Proc $\cproc_1$} \\ x := 1; \\ a := y;}
    & \inarr{\text{Proc $\cproc_2$} \\ y := 1; \\ b := x;}
   \end{array}\\  \text{(SB)}}$
  & \small
  $\inarrC{\begin{array}[c]{@{}l||l||l||l@{}}
    \inarr{\text{Proc $\cproc_1$} \\ x := 1;}
    & \inarr{\text{Proc $\cproc_2$} \\ a := x; \\ b := y;}
    & \inarr{\text{Proc $\cproc_3$} \\ c := y; \\ d := x;}
    & \inarr{\text{Proc $\cproc_4$} \\ y := 1; }
   \end{array}\\  \text{(IRIW)}}$
  & \small
  $\inarrC{\begin{array}[c]{@{}l||l@{}}
    \inarr{\text{Proc $\cproc_1$} \\ x := 1; \\ y := 1; }
    & \inarr{\text{Proc $\cproc_2$} \\ a := y; \\ b := x;}
   \end{array}\\  \text{(MP)}}$
\end{tabular}\hfill {} 
      
\medskip
Under \Atomic, no execution of SB ends with $a = b = 0$, while this
outcome is possible under both \TSO and \RA. Under both \Atomic and
\TSO, no execution of IRIW ends with $a = c = 1$ and $b = d = 0$,
while this outcome is possible under \RA, indicating that under \RA,
processes $\cproc_2$ and $\cproc_3$ observe the writes to $x$ and $y$
in a different order. In particular, under \RA, suppose that both
$\cproc_1$ and $\cproc_4$ execute their writes. It is possible for
$\cproc_2$ (resp., $\cproc_3$) to read the new value for $x$ (resp.,
$y$) then read the old value for $y$ (resp., $x$). Although \RA is
weaker than both \SCM and \TSO, like \TSO, \RA 
maintains causal consistency as demonstrated MP. Under all three
memory models, when MP terminates, if $a=1$, then $b=1$, indicating
that if $\cproc_2$ is aware of the write to $y$ by $\cproc_1$, then it
must also be aware of the prior write to $x$.

Non-\Atomic-outcomes (\aka \emph{weak behaviors}) can be avoided in weak memory models by using
\emph{fence} instructions. In \TSO
fences drain the store buffer of the process that executes the fence.
In \RA fences synchronize \emph{in pairs}, transferring happens-before
knowledge from one process to another. We formally include fences
also in \SCM (with ``no-op'' semantics).

\subsection{Formalizing Weak Memory Models}

For the formal definitions of the models,
we find it most convenient to follow an operational
presentation, where memory models are specified by labeled transition systems.

\begin{description}[leftmargin=0pt,itemsep=2pt]
\item[Sequences.] For a sequence $s=\tup{x_1 \til x_n}$, $s[i]$
  denotes the $i$th element of $s$ (\ie $x_i$), and $\size{s}$
  denotes the length of $s$ (\ie $n$).  We write $x\in s$ when $s[i]=x$ for some
  $1 \leq i \leq n$.  We denote by $\emptyseq$ the empty sequence,
  write $s_1 \cdot s_2$ for concatenation of $s_1$ and $s_2$ and
  denote by $X^*$ the set of all sequences over elements of a set
  $X$.  The restriction of a sequence $s$ \wrt a set $Y$, denoted
  $\restrict{s}{Y}$, is the longest subsequence of $s$ that consists
  only of elements in $Y$. These notations are lifted to sets in the obvious way
  (\eg $S \cdot s' \defeq \set{s \cdot s' \st s \in S}$ and $\restrict{S}{Y} \defeq \set{\restrict{s}{Y} \st s \in S}$).
We use the suffix
`$\mbox{-}\mathsf{set}$' to lift a function $f$ from some set $X$ to a
function form sequences over $X$, formally defined by:
$f\mbox{-}\mathsf{set}(s)\defeq \set{f(s[i]) \st 1 \leq i \leq \size{s}}$.
  
\item[Labeled Transition Systems (LTSs).]  An
  LTS 
  $\lts$ consists of a set of states, $\states{\lts}$; an initial
  state, $\initial{\lts}\in\states{\lts}$; a set of transition labels,
  $\labels{\lts}$; and a set of transitions, 
  $\trans{\lts} \subseteq  \states{\lts} \times \labels{\lts} \times \states{\lts}$.  
  We write
  $q \xrightarrow{l}_\lts q'$ for 
  $\tup{q, l, q'}\in\trans{\lts}$, and given
  $\pi=\tup{l_1 \til l_n}\in \labels{\lts}^*$, we write
  $q \xrightarrow{\pi}_\lts q'$ for
  $\exists q_2 \til q_n \ldotp q \xrightarrow{l_1}_\lts q_2
  \xrightarrow{l_2}_\lts \ldots q_n
  \xrightarrow{l_n}_\lts q'$.  
  An \emph{execution fragment} of $\lts$ is a sequence
$\alpha=\tup{q_0, l_1, q_1, l_2 \til l_n, q_n}$ of alternating states and transition labels 
such that $q_i \xrightarrow{l_{i+1}}_\lts q_{i+1}$ for every $0 \leq i \leq n-1$.
The \emph{trace} of $\alpha$, denoted
$\tracef{\alpha}$, is the restriction of $\alpha$ \wrt  $\labels{\lts}$.
 We denote by $\traces{\lts,q}$ the
set of all sequences that are traces of some execution fragment
$\alpha$ of $\lts$ that starts from $q\in\states{\lts}$.  An execution fragment
$\alpha$ of $\lts$ is an \emph{execution} of $\lts$ if it starts from $\initial{\lts}$.
A sequence $\pi$ of transition labels is
a \emph{trace} of $\lts$ if it is a trace of some execution of
$\lts$.  We denote by $\traces{\lts}$ the set of all traces of $\lts$
(so we have $\traces{\lts}=\traces{\lts,\initial{\lts}}$).
  
\item[Domains.]
  We assume sets $\Var$ of \emph{shared variables}
  and $\Val$ of \emph{values} with a distinguished \emph{initial} value $0\in \Val$. 
   We let $\Proc \defeq \set{\cproc_1 \til \cproc_N}$ be the set of process identifiers.

\item[Memory Actions.]
  Memory operations execute atomically
  using \emph{memory actions}, which include both argument and return values.
  Formally, a \emph{memory action} $\act \in \memacts$ is one the following
  (where $\var\in\Var$ and $v,\vold,\vnew\in \Val$):
  \begin{enumerate*}[label=(\roman*)]
  \item write action of the form $\Write(\var,v)$;
  \item read action of the form $\Read(\var,v)$;
  \item RMW action of the form $\RMW(\var,\vold,\vnew)$;
    and \item fence action of the form $\Fence$.
  \end{enumerate*}
  We denote by $\typf{\act}$ the type of the memory action $\act$ ($\Write$, $\Read$, $\RMW$, or $\Fence$)
  and by $\varf{\act}$ the variable accessed by action $\act$ (when applicable).

 \item[Memory Events.]
  A \emph{memory event}
  $\evt \in \MemEvs$ is a pair $\evt=\ev{\proc}{\act}$ where
  $\proc\in\Proc$ and $\act \in\memacts$.  We use $\procf{\evt}$ and
  $\actionf{\evt}$ to retrieve the components of $\evt$
  ($\proc$ and $\act$, respectively).
  The functions $\typf{\cdot}$ and $\varf{\cdot}$ are lifted to events in the obvious way.

 \item [Memory Models.]   
    The semantics of the memory
  operations is given by an LTS, called a \emph{memory model}.  The
  transition labels of a memory model $\mm$,
  $\labels{\mm} \defeq \MemEvs \cup \set{\tau}$, consist of memory
  events, as well as $\tau$, which represents a silent memory internal
  step.  
\end{description}

\noindent
  We demonstrate the formulation of TSO as an LTS. 
\iffull
Appendix \ref{sec:memory-models} formally presents \SCM and \RA. 
\else
The formal models for \SCM and \RA can be found in~\cite{full}.
\fi

\begin{definition2}
\label{sec:tso}
$\TSO$'s states are pairs $\tup{m,b}$, where
$m \in\Var \to \Val$ is the main memory and
$b \in \Proc \to (\Var \times \Val)^*$ assigns a store buffer to every
process; the initial state is
$\init(\TSO) \defeq\tup{\lambda \var \ldotp 0, \lambda \proc \ldotp \emptyseq}$
(\ie all variables in memory are zeroed and all store buffers are empty);
and the transitions are as follows, where
$\restrict{\beta}{\var}$ denotes the restriction of a store buffer $\beta$ to pairs
of the form $\tup{\var,\_}$:
\begin{mathpar} 
  \inferrule[write]
  {\evt=\ev{\proc}{\Write(\var,v)} \\\\
  b' = b[\proc \mapsto b(\proc) \cdot \tup{\var, v}]}
  {\tup{m, b} \xrightarrow{\evt} \tup{m, b'}}
  \and
    \inferrule[read-from-buffer]
  {\evt=\ev{\proc}{\Read(\var,v)} \\\\ 
      \restrict{b(\proc)}{\var} = \_ \cdot \tup{\tup{\var,v}}}
  {\tup{m, b} \xrightarrow{\evt} \tup{m,b}}
  \and
    \inferrule[read-from-memory]
  {\evt=\ev{\proc}{\Read(\var,v)} \\\\ 
    \restrict{b(\proc)}{\var} = \emptyseq \\  m(\var)=v}
  {\tup{m, b} \xrightarrow{\evt} \tup{m,b}}
   \\
  \inferrule[rmw]{\evt=\ev{\proc}{\RMW(\var,\vold,\vnew)} \\\\ b(\proc) = \emptyseq \\ m(\var)=\vexp}
{\tup{m,b} \xrightarrow{\evt} \tup{m[\var\mapsto \vnew],b}}
\and  
  \inferrule[fence]
  {\evt = \ev{\proc}{\Fence}
    \\\\ b(\proc) = \emptyseq }
  {\tup{m, b} \xrightarrow{\evt} \tup{m, b}} 
    \and
  \inferrule[propagate]
  {b(\proc) = \tup{\tup{\var,v}} \cdot \beta \\\\
  m'= m[\var \mapsto v] \\ b' = b[\proc \mapsto \beta]}
  {\tup{m, b} \xrightarrow{\tau} \tup{m', b'}}   
\end{mathpar}
\end{definition2}

\begin{description}[leftmargin=0pt,itemsep=2pt]
\item [Memory Sequences.] We refer to sequences
  $\rho \in (\MemEvs \cup \set{\tau})^*$ as \emph{memory sequences}
  and to sequences $\sigma \in \MemEvs^*$ as \emph{observable memory
    sequences}.  We use the following notations:
\begin{itemize}
\item $\restrict{\sigma}{\proc}$ denotes the restriction of $\sigma$
\wrt $\set{\evt \in \MemEvs \st \procf{\evt}=\proc}$.
 \item $\otraces{\mm, q}$ denotes the set of all
  observable memory sequences obtained by restricting traces of $\mm$ from a state $q$
  to non-$\tau$ steps, \ie
  $\otraces{\mm, q} \defeq \restrict{\traces{\mm, q}}{\MemEvs}$. 
  \item $\otraces{\mm} \defeq \restrict{\traces{\mm}}{\MemEvs}$ is the
  set of all observable memory sequences of $\mm$.
\end{itemize}    

\item [Stable States.] 
A state $q \in \states{\mm}$ is \emph{stable} if
$q \not\xrightarrow{\tau}_\mm q'$ for any $q'\in \states{\mm}$.  
Every state of $\Atomic$ is stable, a state of $\TSO$ is
stable iff all store buffers are empty, and a state of $\RA$
is stable iff all processes are aware of all writes.

\item [Well-Behaved Memory Models.] 
$\TSO$ is strictly weaker
than $\Atomic$ and $\RA$ is strictly weaker than $\TSO$, which
formally means that
$\otraces{\Atomic} \subsetneq \otraces{\TSO} \subsetneq \otraces{\RA}$.  
In the sequel we will need the following assumption on memory models:

\begin{definition2}
\label{def:well-behaved}
A memory model $\mm$ is \emph{well-behaved} if there exists a simulation $R$
from $\SCM$ to $\mm$ whose codomain consists solely of stable states.
That is, there should exist a relation
$R \suq \states{\SCM} \times \set{q \in \states{\mm} \st q \text{ is stable}}$
such that $(i)$ $\tup{\initial{\SCM},\initial{\mm}} \in R$;
and $(ii)$ if $\tup{m,q}\in R$ and $m \xrightarrow{l}_\SCM m'$,
then $q \xrightarrow{l}_\mm \, \xrightarrow{\tau}_\mm^* q'$ and $\tup{m',q'}\in R$ for some stable $q' \in \states{\mm}$.
\end{definition2}

Note that if $\mm$ is well-behaved, then 
$\sigma_0 \cdot \sigma \in \otraces{\SCM}$ implies that 
there exist a stable state $q\in \states{\mm}$
and a memory trace $\rho_0$ such that 
$\initial{\mm} \xrightarrow{\rho_0}_\mm q$, $\restrict{\rho_0}{\MemEvs}=\sigma_0$, and $\sigma \in \otraces{\mm,q}$.
\iffull
The following lemma is proven in~\cref{sec:memory-models}.
\else
The following lemma is proven in~\cite{full}.
\fi

\begin{restatable}{lemma}{wellbehaved}
\label{lem:at_traces}
Each $\mm\in\set{\Atomic, \TSO,\RA}$ is well-behaved.
\end{restatable}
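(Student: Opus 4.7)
The plan is to treat each of the three memory models separately and exhibit an explicit simulation relation $R$, verifying its two defining properties directly from the operational semantics.

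For $\SCM$, every state is stable by construction (there are no $\tau$-transitions), so the identity relation $R \defeq \set{\tup{m,m} \st m \in \states{\SCM}}$ trivially works: clause $(i)$ is immediate, and for clause $(ii)$ each $\SCM$-step is matched by the same step with zero $\tau$-steps appended.

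For $\TSO$, I would define
$R \defeq \set{\tup{m, \tup{m, \lambda \proc \ldotp \emptyseq}} \st m \in \Var \to \Val}$,
\ie I pair each $\SCM$ memory $m$ with the $\TSO$ state having the same main memory and all store buffers empty. Note every such state is stable by the \textsc{propagate} rule. Clause $(i)$ holds since $\initial{\TSO}=\tup{\lambda \var \ldotp 0, \lambda \proc \ldotp \emptyseq}$. For clause $(ii)$, I would case-split on the label $l$: if $l$ is a write $\ev{\proc}{\Write(\var,v)}$, simulate it by the \textsc{write} rule (which enqueues $\tup{\var,v}$ into $b(\proc)$), immediately followed by one \textsc{propagate} $\tau$-step that drains this singleton buffer back to empty while updating $m$ to $m[\var\mapsto v]$; the resulting state is again in the codomain of $R$ and relates to $m[\var\mapsto v]$. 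For reads, RMWs, and fences, the buffer condition $b(\proc)=\emptyseq$ is satisfied, so the corresponding $\TSO$ rule fires directly without any $\tau$-step, producing a state still in the codomain of $R$.

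For $\RA$ the principle is the same but the bookkeeping is more delicate, and this is where I expect the main difficulty to lie, since the stable $\RA$ states are exactly those in which every process's view already covers every message posted so far. I would let $R$ relate each $\SCM$ memory $m$ to the canonical stable $\RA$ state in which all processes hold the maximal view over a message set that assigns every variable a unique most-recent message with value $m(\var)$. For a write or RMW by $\proc$, the $\RA$ step posts a new message; this may temporarily leave some processes ``behind'', but because the post-state is reachable from a state where all views were maximal, a bounded sequence of $\tau$-steps (corresponding to other processes updating their views to include the newly posted message) restores maximality without producing any observable label, yielding a codomain state still related to the updated $\SCM$ memory. Reads and fences from a stable state are direct. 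The obstacle is to make the argument for $\RA$ fully rigorous given the point-to-point propagation of $\RA$, which requires carefully appealing to the operational definition of $\RA$ given in the appendix; once that is in place, the simulation diagrams are straightforward to close.

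Finally, I note the existence of such an $R$ is exactly what is needed to conclude the remark preceding the lemma: given $\sigma_0\cdot\sigma \in \otraces{\SCM}$, repeatedly applying clause $(ii)$ along the $\sigma_0$-prefix produces a stable $q$ and memory trace $\rho_0$ with $\restrict{\rho_0}{\MemEvs}=\sigma_0$, and then further simulating $\sigma$ yields $\sigma \in \otraces{\mm,q}$.
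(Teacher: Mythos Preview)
Your proposal is correct and follows essentially the same strategy as the paper: for each model you simulate an $\SCM$ step by the corresponding memory step followed by enough $\tau$-propagations to restore stability (none for $\SCM$, one for a $\TSO$ write, and $\size{\Proc}-1$ for an $\RA$ write/RMW so that every other process acquires the new message's view). One small wording caveat: for $\RA$ your phrase ``the canonical stable $\RA$ state'' suggests $R$ is a function of $m$ alone, but the $\RA$ message set depends on the entire write history, so $R$ must be a genuine relation (the paper handles this by describing the simulation invariant---all thread views equal and pointing to the latest message---rather than a closed-form $R$); this is a cosmetic fix, not a gap in the argument.
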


\end{description}


\newcommand{\cmark}{{\color{green!85!black}{\checkmark}}}
\newcommand{\purple}[1]{\color{purple}#1}
\newcommand{\teal}[1]{\color{teal}#1}
\newcommand{\blue}[1]{\color{blue}#1}
\newcommand{\red}[1]{\color{brown}#1}

\section{Mergeability Results for Memory Models}
\label{sec:merge}

We consider two notions of mergeability of observable memory traces,
\emph{weak} mergeability, which means that \emph{some} interleaving of
the given traces is admitted, and \emph{strong} mergeability, which
requires that \emph{all} interleavings are admitted. 
We denote by $s_1 \shuffle s_2$ the 
the set of all interleavings of $s_1$ and $s_2$.

For our impossibility result to handle a non-empty base object history (as in \cref{ex:LOO-intro}),
it does not suffice to merge memory traces from the initial state.
Instead, we require the traces to be mergeable from every \emph{stable} state:

\begin{definition2}
\label{def:mergeable}
Two observable memory traces $\sigma_1, \sigma_2$ with
$\procsf{\sigma_1} \cap \procsf{\sigma_2} = \emptyset$ are
\emph{weakly (resp., strongly) mergeable} in a memory model $\mm$ if
for every stable state $q_0 \in \states{\mm}$ such that
$\sigma_1,\sigma_2 \in \otraces{\mm,q_0}$, we have
$\sigma \in \otraces{\mm,q_0}$ for some (resp., every)
$\sigma \in \sigma_1 \shuffle \sigma_2$.
\end{definition2}

\iffull
\begin{table}[t] 
  \centering 
  \begin{center}
    \scalebox{0.9}{
    \begin{tabular}[t]{l@{}l|l|l|l@{~~}l@{~~}l|l@{~~}l@{~~}l|l}
      & & & Memory  & \multicolumn{3}{c|}{Restrictions on $\sigma_1$} & \multicolumn{3}{c|}{Restrictions on $\sigma_2$} & Merge   \\
      \cline{5-10}
      \# & Name &Theorem & model   & process & events & pattern &  process & events & pattern &  property \\
      \hline
      1& $\TSO^\textsf{s}$ &\cref{thm:tso-merge2} &\TSO    & solo    & RW    & --- & solo & RW & --- &  Strong \\
      2& $\RA^\textsf{s}_1$&\cref{thm:ra-merge1}~\eqref{thm:RAM-Strong1} &\RA     & ---         & RW     & --- & --- & --- & --- &  Strong \\
      3& $\RA^\textsf{s}_2$ &\cref{thm:ra-merge3}~\eqref{thm:RAM-Strong2} &\RA      & ---     & RWF   & PPTF & --- & ---   & PPTF &  Strong \\
      4& $\RA^\textsf{s}_3$ &\cref{thm:ra-merge4}~\eqref{thm:RAM-Strong3} &\RA      & ---     & RWF   & PPLF & --- & --- & PPLF &  Strong \\
            \hline
      5& $\SCM^\textsf{w}$ &\cref{thm:atomic-merge} & \Atomic & ---    & RW  & RBW        & --- & --- & --- & Weak \\
      6& $\TSO^\textsf{w}$ &\cref{thm:tso-merge1} &\TSO    & solo      & RWF & LTF   & --- & --- & --- &  Weak \\
      7& $\RA^\textsf{w}$ &\cref{thm:ra-merge2} &\RA      & ---     & RWF   & LTF & --- & --- & --- &  Weak 
    \end{tabular}}
  \end{center}
  \caption{Merging observable memory sequences $\sigma_1$ and
    $\sigma_2$ such that
    $\procsf{\sigma_1} \cap \procsf{\sigma_2} = \emptyset$}
  \label{tab:results}
  \vspace{-20pt}
\end{table}
\else
\begin{table}[t] 
  \centering 
  \begin{center}
    \scalebox{0.9}{
    \begin{tabular}[t]{l@{}l|l|l@{~~}l@{~~}l|l@{~~}l@{~~}l|l}
      & & Memory  & \multicolumn{3}{c|}{Restrictions on $\sigma_1$} & \multicolumn{3}{c|}{Restrictions on $\sigma_2$} & Merge   \\
      \cline{5-10}
      \# & Name & model   & process & events & pattern &  process & events & pattern &  property \\
      \hline
      1& $\TSO^\textsf{s}$  &\TSO    & solo    & RW    & --- & solo & RW & --- &  Strong \\
      2& $\RA^\textsf{s}_1$ &\RA     & ---         & RW     & --- & --- & --- & --- &  Strong \\
      3& $\RA^\textsf{s}_2$ &\RA      & ---     & RWF   & PPTF & --- & ---   & PPTF &  Strong \\
      4& $\RA^\textsf{s}_3$ &\RA      & ---     & RWF   & PPLF & --- & --- & PPLF &  Strong \\
            \hline
      5& $\SCM^\textsf{w}$ & \Atomic & ---    & RW  & RBW        & --- & --- & --- & Weak \\
      6& $\TSO^\textsf{w}$ &\TSO    & solo      & RWF & LTF   & --- & --- & --- &  Weak \\
      7& $\RA^\textsf{w}$ &\RA      & ---     & RWF   & LTF & --- & --- & --- &  Weak 
    \end{tabular}}
  \end{center}
  \caption{Merging observable memory sequences $\sigma_1$ and
    $\sigma_2$ such that
    $\procsf{\sigma_1} \cap \procsf{\sigma_2} = \emptyset$}
  \label{tab:results}
  \vspace{-20pt}
\end{table}
\fi

\Cref{tab:results} presents the merge properties established
for the memory models we consider  
\iffull
(see~\cref{app:proofs_merge} for the proofs).
\else
(see~\cite{full} for the proofs).
\fi
To specify restrictions on the mergeable traces,
we say that an observable memory sequence $\sigma$ is:
\begin{description}[leftmargin=5pt]
\item [\emph{solo}] if $\size{\procsf{\sigma}}=1$; 
\item [\emph{read-write} (\emph{RW})] if $\typsf{\sigma}\suq\set{\Read,\Write}$;
\item [\emph{read-write-fence} (\emph{RWF})] if $\typsf{\sigma}\suq\set{\Read,\Write,\Fence}$; 
\item [\emph{read-before-write} (\emph{RBW})] if for every $k_1 < k_2$,
if $\typf{\sigma[k_1]}=\Write$, $\typf{\sigma[k_2]}=\Read$, and $\varf{\sigma[k_1]}\neq \varf{\sigma[k_2]}$,
then $\typf{\sigma[k]}=\Write$ and $\varf{\sigma[k]}= \varf{\sigma[k_2]}$ for some $k_1 < k < k_2$;\footnote{RBW is equivalent to the absence of the read-after-write (RAW) pattern as defined in~\cite{AGHKMV11}.}
\item [\emph{trailing-fence} (\emph{TF})] 
if there is no $k$ such that 
$\typf{\sigma[k]}=\Fence$ but $\typf{\sigma[k+1]}\neq\Fence$;
\item [\emph{leading-fence} (\emph{LF})] 
if there is no $k$ such that 
$\typf{\sigma[k]}=\Fence$ but $\typf{\sigma[k-1]}\neq\Fence$;
\item [\emph{per-process trailing fence} (\emph{PPTF})]
  if $\restrict{\sigma}{\proc}$ is TF for all processes $\proc$;
\item [\emph{per-process leading fence} (\emph{PPLF})]
  if $\restrict{\sigma}{\proc}$ is LF for all processes $\proc$; and 
\item [\emph{leading-and-trailing-fence} (\emph{LTF})] if 
  $\sigma = \sigma_1 \cdot \sigma_2$ for some LF $\sigma_1$ and TF $\sigma_2$.
\end{description}
We have three types of restrictions, namely:
\begin{enumerate*}[label=(\roman*)]
\item a restriction on the processes (solo);
\item restrictions on the types of events (RW and RWF); and 
\item restrictions on the access pattern (all others).
\end{enumerate*}
The restrictions on types and access patterns correspond to synchronization mechanisms that are expensive performance wise.
RMWs and non-RBW were identified as such in~\cite{AGHKMV11},
and since we explicitly deal with weak memory models, we add fences to this list.
To motivate our focus on leading/trailing fence placement,
we note that the trivial linearizable implementation of an atomic register using a write/read instruction
requires fences:
at the end of every write operation on \TSO,
and at the beginning and the end of every (write/read) operation on \RA.
We aim to investigate whether other objects admit similar implementations.

Next, we briefly discuss the results in the table:
\begin{description}[leftmargin=5pt]
\item [SCM.] In \SCM, if $\sigma_1$ is RW-RBW, then it can be weakly merged
with any other observable memory trace. 
Indeed, being RW-RBW, $\sigma_1$ must be of the form
  $\sigma_1^{\sf r}  \cdot \sigma_1^{\sf w}$ where $\sigma_1^{\sf r}$ is a
  sequence of reads and $ \sigma_1^{\sf w}$ is a sequence of writes and
  reads, starting with a write, where the reads in $\sigma_1^{\sf w}$
  read from the writes in $ \sigma_1^{\sf w}$.  
Then, it is
  straightforward to see that $\sigma_1$ and any observable memory sequence $\sigma_2$ can be merged
  to form the trace
  $\sigma = \sigma_1^{\sf r} \cdot \sigma_2 \cdot \sigma_1^{\sf w}$, which is valid trace under
  $\Atomic$. 
We note that the RBW restriction is necessary here, as 
$\tup{\purple{\ev{\cproc_1}{\Write(x,1)}}, \blue{\ev{\cproc_1}{\Read(y,0)}}}$
and
$\tup{\teal{\ev{\cproc_2}{\Write(y,1)}}, \red{\ev{\cproc_2}{\Read(x,0)}}}$
(which may arise from the SB example)
are not weakly mergeable.
Also note that there is no useful \emph{strong} merge property for \SCM.
Even $\tup{\purple{\ev{\cproc_1}{\Write(x,1)}}}$
and $\tup{\teal{\ev{\cproc_2}{\Read(x,0)}}}$ cannot be strongly merged.
\item [TSO.] 
In \TSO, $\sigma_1$ and $\sigma_2$ can be \emph{strongly} merged when they
are both solo-RW traces. This holds
because with only writes and reads, there is always an observable trace
where {\em all} the writes of both $\sigma_1$ and $\sigma_2$ remain in
the local store buffers, allowing the events of $\sigma_1$ and
$\sigma_2$ to be arbitrarily interleaved. 
\TSO also satisfies a weak merge property if $\sigma_1$ is solo-RWF-LTF
and $\sigma_2$ is arbitrary. 
To do so, we let $\sigma_1 = \sigma_1^{\sf lf} \cdot \sigma_1' \cdot \sigma_1^{\sf tf}$ 
where $\typsf{\sigma_1^{\sf lf}} \cup \typsf{\sigma_1^{\sf tf}} \subseteq \{\Fence\}$ and
$\sigma_1'$ is RW. Then,
$\sigma_1^{\sf lf} \cdot \sigma_1' \cdot \sigma_2 \cdot \sigma_1^{\sf tf}$ 
is a valid \TSO observable trace since no instruction in
$\sigma_1'$ forces writes to propagate.
We note that the solo restriction is essential. 
For example, 
$\tup{\purple{\ev{\cproc_1}{\Write(x,1)}}, \blue{\ev{\cproc_2}{\Read(x,1)}},\blue{\ev{\cproc_2}{\Read(y,0)}}}$
and
$\tup{\teal{\ev{\cproc_4}{\Write(y,1)}}, \red{\ev{\cproc_3}{\Read(y,1)}},\red{\ev{\cproc_3}{\Read(x,0)}}}$
(which may arise from the IRIW example)
are not weakly mergeable.
\item [RA.] 
We prove three strong merge properties for \RA:
\begin{enumerate*}
\item[($\RA^\textsf{s}_1$)]
  If $\sigma_1$ is RW, then it can be strongly merged with
  $\sigma_2$ even when $\sigma_1$ is non-solo. Indeed, in the
  absence of RMWs and fences in $\sigma_1$, the writes in $\sigma_1$ can be
  propagated to other processes of $\sigma_1$, but never propagate to
  the processes of $\sigma_2$, and vice-versa. 
\item[($\RA^\textsf{s}_2$)] If $\sigma_1$ is RWF-PPTF
  and $\sigma_2$ is PPTF, the strong merge argument is as
  follows. First, we remove all the fences in $\sigma_1$, which results
  in an RW trace. From $\RA^\textsf{s}_1$, this trace can be strongly merged
  with $\sigma_2$. In the resulting trace, we reintroduce the fences
  removed from $\sigma_1$ arbitrarily after the last read or write of
  the corresponding process. Regardless of whether this fence is before
  or after a fence of $\sigma_2$, the resulting fence synchronization
  has no effect since $\sigma_2$ is also PPTF. 
\item[($\RA^\textsf{s}_3$)] If $\sigma_1$ is RWF-PPLF and $\sigma_2$ is PPLF the argument is
  symmetric to $\RA^\textsf{s}_2$.
\end{enumerate*}
Finally, \RA satisfies a weak merge property if $\sigma_1$ is
RWF-LTF. As in the \TSO weak merge property, we split
$\sigma_1 = \sigma_1^{\sf lf} \cdot \sigma_1' \cdot \sigma_1^{\sf tf}$.
By $\RA^\textsf{s}_1$, $\sigma_1' \cdot \sigma_2$ is
an \RA observable trace. Then,
$\sigma_1^{\sf lf} \cdot \sigma_1' \cdot \sigma_2 \cdot \sigma_1^{\sf
  tf}$ is an \RA observable trace since the leading/trailing fences have no
bearing on the execution.
\end{description}


\section{A Recipe for Merge-Based Impossibility Results}
\label{sec:impos-obj}

We introduce objects, implementations, and histories (\cref{sec:impl}),
and our main theorem (\cref{sec:theorem}).

\subsection{Objects and Their Implementations}
\label{sec:impl}

We consider systems implementing of a high-level object $\obj$
using the low-level atomic shared-memory operations provided by the memory model $\mm$.

\begin{description}[leftmargin=0pt,itemsep=2pt,listparindent=\parindent]

\item [Objects.]
An \emph{object} $\obj$ is a pair $\obj=\tup{\ops,\rets}$,
where $\ops$ is a set of \emph{operation names} (each of which may include argument values)
and $\rets$ is a set of \emph{response values}.
We use $\opsf{\obj}$ and $\retsf{\obj}$ to retrieve the components of an object $\obj$ 
($\ops$ and $\rets$, respectively).
We use $\ack$ for a default response value for operations that do not return any value.
\item [Object Actions.]
To delimit executions of operations of $\obj$, we use 
\emph{object actions} that can be either \emph{invocation actions} of the form $\invi{\op}$ with $\op\in \opsf{\obj}$,
or \emph{response actions} of the form $\resi{u}$ with $u\in \retsf{\obj}$.  
We let $\actsf{\obj}$ denote the set of all object actions of $\obj$. 
\item [Object Events.]
Like memory events defined in \cref{sec:memory}, 
\emph{object events} are pairs $\evt=\ev{\proc}{\act}$ 
where $\proc\in\Proc$ and $\act\in\actsf{\obj}$.
We apply the same notations used for memory events to object events,
and let $\ObjEvs(\obj)$ denote the set of all object events.
By \emph{event} we collectively refer
to either a memory event or an object event. 
Given a sequence $\pi$ of events, we define the following notations:
\begin{itemize}
\item $\restrict{\pi}{\proc}$ denotes the restriction of $\pi$ \wrt the set
of events $\evt$ with $\procf{\evt}=\proc$. 
\item $\restrictmem{\pi}$ denotes the restriction of $\pi$ \wrt the set $\MemEvs$ of memory events.
\item $\restrictobj{\pi}$ denotes the restriction of $\pi$ \wrt the set $\ObjEvs(\obj)$ of object events.
\end{itemize}
\item [Histories.]
A \emph{history} of an object $\obj$ is a sequence of events in $\ObjEvs(\obj)$.
We denote by $\invres{\proc}{\op}{u}$ the history consisting of a single operation by 
process $\proc\in\Proc$ invoking $\op\in \opsf{\obj}$ with response value $u\in \retsf{\obj}$
(and omit the response value if it is $\ack$), \ie
$\invres{\proc}{\op}{u} \defeq \tup{\ev{\proc}{\invi{\op}},\ev{\proc}{\resi{u}}}$
and $\invres{\proc}{\op}{} \defeq \tup{\ev{\proc}{\invi{\op}},\ev{\proc}{\resi{\ack}}}$.
A history $\h$ is:
\begin{itemize}
\item  \emph{sequential} if it is a prefix of
a history of the form $\invres{\proc_1}{\op_1}{u_1} \cdot \invres{\proc_2}{\op_2}{u_2} \cdots \invres{\proc_n}{\op_n}{u_n}$;
\item \emph{well-formed} if $\restrict{\h}{\proc}$ is sequential for every $\proc\in \Proc$; and 
\item \emph{complete} if it is well-formed and each $\restrict{\h}{\proc}$ ends with a response event.
\end{itemize}
We let $\histories{\obj}$, $\chistories{\obj}$, and $\cshistories{\obj}$
denote the sets of all well-formed histories of $\obj$,
all complete histories of $\obj$,
and all complete sequential histories of $\obj$ (respectively).

\item [Specifications.]
We assume that every object $\obj$ is associated with a \emph{specification}, denoted $\spec{\obj}$, 
that is a subset of $\cshistories{\obj}$
that is prefix-closed (in the sense that $\h' \in \spec{\obj}$
for every $\h' \in \cshistories{\obj}$ that is a prefix of some $\h \in \spec{\obj}$).
An object $\obj$ is \emph{deterministic} if no 
two histories in $\spec{\obj}$
have longest common prefix that ends with an invocation.
\item [Implementations.]
An \emph{implementation $\implo$ of an operation $\op$ for a process $\proc$} is an
LTS whose set of transition labels are events with process identifier $\proc$.
We assume that a response event is always the last transition of executions of $\implo$
(\ie if $q \xrightarrow{\ev{\proc}{\resi{u}}}_{\implo} q'$, then no transition is enabled in $q'$).
An \emph{implementation $\impl$ of an object $\obj$} is a function
assigning an implementation $\impl(\op,\proc)$ of $\op$ for $\proc$ to every $\op\in \opsf{\obj}$ and $\proc\in\Proc$.

\begin{figure}[t]
  \begin{minipage}[b]{0.6\columnwidth}
\begin{mathpar}
    \centering
    \inferrule
    {\evt=\ev{\proc}{\invi{\op}} \\\\ q = \initial{\impl(\op,\proc)}}
    {\bot \xrightarrow{\evt}_\sysp{\impl}{\proc} \tup{\op,q}}
    \and
    \inferrule
    {\evt\in\MemEvs \\\\ q \xrightarrow{\evt}_{\impl(\op,\proc)} q'}
    {\tup{\op,q} \xrightarrow{\evt}_\sysp{\impl}{\proc} \tup{\op,q'}}
    \and
    \inferrule
    {\evt=\ev{\proc}{\resi{u}} \\\\ q \xrightarrow{\evt}_{\impl(\op,\proc)} \_}
    {\tup{\op,q} \xrightarrow{\evt}_\sysp{\impl}{\proc} \bot}
  \end{mathpar}
  \vspace{-15pt}
  \caption{Transitions of $\sysp{\impl}{\proc}$}
  \label{fig:lts-SIP}
\end{minipage}
\hfill 
  \begin{minipage}[b]{0.35\columnwidth}
\begin{mathpar}
    \centering
    \inferrule
    {\bar{q}(\proc) \xrightarrow{\evt}_{\sysp{\impl}{\proc}} q'}
    {\bar{q} \xrightarrow{\evt}_\sys{\impl} \bar{q}[\proc\mapsto q']}
  \end{mathpar}
  \vspace{-10pt}
  \caption{Transitions of $\sys{\impl}$}
  \label{fig:lts-SI}
  \end{minipage}
  \vspace{-10pt}
\end{figure}

An implementation $\impl$ of an object $\obj$ induces an LTS, denoted
$\sys{\impl}$, that repeatedly and concurrently executes the
operations of $\obj$ as $\impl$ prescribes.
To 
formally define
$\sys{\impl}$, we first define the ``per-process'' LTS induced by
$\impl$, denoted $\sysp{\impl}{\proc}$.  This LTS is given by:
$\states{\sysp{\impl}{\proc}} \defeq \set{\bot} \cup \set{\tup{\op,q}
  \st \op\in\opsf{\obj}, q\in\states{\impl(\op,\proc)}}$;
$\initial{\sysp{\impl}{\proc}} \defeq \bot$;
$\labels{\sysp{\impl}{\proc}}\defeq \ObjEvs(\obj) \cup \MemEvs$; and
the transitions are given in \cref{fig:lts-SIP}. The state $\bot$
means that the process is not currently executing any operation,
whereas $\tup{\op,q}$ means that process $\proc$ is currently executing
$\op$ and it is in state $q$ of the implementation of $\op$ for $\proc$.

In turn, 
$\sys{\impl}$ is given by: $\states{\sys{\impl}}$ is the set of all
mappings assigning a state in $\states{\sysp{\impl}{\proc}}$ to every
$\proc\in\Proc$;
$\initial{\sys{\impl}} \defeq \lambda \proc \ldotp \bot$;
$\labels{\sys{\impl}}\defeq \ObjEvs(\obj) \cup \MemEvs$; and the
transition relation in \cref{fig:lts-SI}.
This transition simply interleaves the transitions of the different processes.
In the sequel, we let $\traces{\impl} \defeq \traces{\sys{\impl}}$.

\item [Histories of Implementations.]
Let $\impl$ be an implementation of an object $\obj$, 
$\pi_0$ be a sequence of events,
and $\mm$ be a memory model.
A history $\h$ of $\obj$ is:
\begin{itemize}
\item \emph{generated by $\impl$ after $\pi_0$}
if $\h=\restrictobj{\pi}$ for some $\pi$ 
such that $\pi_0\cdot \pi \in\traces{\impl}$.
\item \emph{generated by $\impl$ after $\pi_0$ under $\mm$}
if $\h=\restrictobj{\pi}$ for some $\pi$ 
such that $\pi_0\cdot \pi \in\traces{\impl}$ and $\restrictmem{(\pi_0\cdot \pi)} \in \otraces{\mm}$.
\end{itemize}
We denote by $\histories{\pi_0,\impl}$ the set of all histories 
that are generated by $\impl$ after $\pi_0$,
and by $\histories{\pi_0,\impl,\mm}$ the set of all histories 
generated by $\impl$ after $\pi_0$ under $\mm$.
We also write $\histories{\impl}$ instead of $\histories{\emptyseq,\impl}$
and $\histories{\impl,\mm}$ instead of $\histories{\emptyseq,\impl,\mm}$.
\end{description}

\subsection{The Merge Theorem}
\label{sec:theorem}

Our main result relates mergeability properties of memory models and
objects implemented in those models, assuming that the implementation
provides minimal safety and liveness guarantees.
This result can be also seen as a \emph{CAP Theorem} for weak memory models~\cite{GilbertL02}, 
where partition tolerance of CAP corresponds to mergeability, as it allows two traces of distinct set of processes to run concurrently without interaction.
Our results are more fine grained, as we show the correspondence between mergeability
of certain traces in a memory model, and the (in)ability of these traces to implement non-mergeable object histories.

For the formal treatment, we first present the following lemma 
\iffull
(proven in~\cref{app:impos-obj}).
\else
(proven in~\cite{full}).
\fi
The lemma describes
the key shape of our results, namely that given two traces of an
implementation over a memory model, the merge property over these
traces carries over to a merge
property over the histories induced by the traces.
\begin{restatable}{lemma}{mainzero}
\label{lem:main0}
Let $\impl$ be an implementation of $\obj$.
Suppose that there exist sequences $\pi_0, \pi_1, \pi_2$ of events such that the following hold:
\begin{enumerate}[leftmargin=*,label=(\alph*)]
\item $\procsf{\pi_1} \cap \procsf{\pi_2} = \emptyset$;  $\pi_0 \cdot \pi_1, \pi_0 \cdot \pi_2 \in \traces{\impl}$;  $\restrictobj{\pi_0} \in \chistories{\obj}$;
and \item $\restrictmem{\pi_0} \cdot \sigma \in \otraces{\mm}$ 
for some (resp., every) $\sigma\in \restrictmem{\pi_1} \shuffle \restrictmem{\pi_2}$.
\end{enumerate}
Then, $\h \in \histories{\pi_0,\impl,\mm}$ 
for some (resp., every) $\h \in \restrictobj{\pi_1} \shuffle \restrictobj{\pi_2}$.
\end{restatable}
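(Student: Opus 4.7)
The plan is to prove the lemma by a uniform construction that lifts a given memory-level (or, symmetrically, object-level) interleaving of $\pi_1$ and $\pi_2$ to a full interleaving $\pi$ that is a valid continuation of $\pi_0$ in $\sys{\impl}$ and whose memory projection lies in $\otraces{\mm}$. The key structural fact I rely on is that the transition relation of $\sys{\impl}$ acts on exactly one process at a time and leaves all other per-process local states untouched; combined with $\procsf{\pi_1}\cap \procsf{\pi_2} = \emptyset$, this lets the two witnessing executions for $\pi_0\cdot\pi_1$ and $\pi_0\cdot\pi_2$ coexist.

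First I would establish a \emph{compatibility} step: the witnessing executions $E_1, E_2$ of $\sys{\impl}$ for $\pi_0 \cdot \pi_1$ and $\pi_0 \cdot \pi_2$ can be stitched into a single execution reaching a state $\bar{q}$ after $\pi_0$ such that, for every $i\in\set{1,2}$ and every $\proc \in \procsf{\pi_i}$, the local state $\bar{q}(\proc)$ admits the per-process continuation $\restrict{\pi_i}{\proc}$ in $\sysp{\impl}{\proc}$. This is done by induction on the length of $\pi_0$: at each event $\evt$ of $\pi_0$ with $\procf{\evt}=\proc$, one picks the local transition taken by $E_1$ if $\proc \in \procsf{\pi_1}$, by $E_2$ if $\proc \in \procsf{\pi_2}$, and either otherwise. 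The construction is well-defined because for every $\proc$ the projections $\restrict{\pi_0\cdot\pi_1}{\proc}$ and $\restrict{\pi_0\cdot\pi_2}{\proc}$ agree on $\restrict{\pi_0}{\proc}$ (since processes in $\pi_1$ do not appear in $\pi_2$ and vice versa).

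Second I construct $\pi$. In the ``some'' case, I fix a memory interleaving $\sigma$ witnessing hypothesis (b) and insert, for each $\proc\in\procsf{\pi_1}\cup\procsf{\pi_2}$, the object events of $\restrict{\pi_i}{\proc}$ around its memory events in $\sigma$, preserving the order of $\restrict{\pi_i}{\proc}$; the result is an interleaving $\pi$ of $\pi_1$ and $\pi_2$ with $\restrictmem{\pi}=\sigma$ and $\restrict{\pi}{\proc}=\restrict{\pi_i}{\proc}$. In the ``every'' case, I start from an arbitrary object interleaving $\h\in \restrictobj{\pi_1}\shuffle\restrictobj{\pi_2}$ and insert the memory events of each process in the corresponding gaps, obtaining some $\pi$ with $\restrictobj{\pi}=\h$; the memory projection $\restrictmem{\pi}$ is then some element of $\restrictmem{\pi_1}\shuffle\restrictmem{\pi_2}$, hence lies in $\otraces{\mm}$ after $\restrictmem{\pi_0}$ by hypothesis (b).

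Finally I would check that $\pi_0\cdot \pi \in \traces{\impl}$ by replaying $\pi$ from $\bar{q}$: each event $\evt$ with $\procf{\evt}=\proc$ triggers the transition prescribed by the corresponding position of $\restrict{\pi_i}{\proc}$ in $\sysp{\impl}{\proc}$, which is enabled from the current local state of $\proc$ by the compatibility step, and updates only $\proc$'s local state. Together with $\restrictmem{\pi_0}\cdot\restrictmem{\pi}\in\otraces{\mm}$, this yields $\restrictobj{\pi}\in\histories{\pi_0,\impl,\mm}$, establishing both directions. The main obstacle I expect is the compatibility step: although its intuition (per-process transitions of disjoint process sets commute in $\sys{\impl}$) is simple, making it precise in the presence of possibly non-deterministic per-process LTSs $\sysp{\impl}{\proc}$ requires careful bookkeeping to ensure that the global state reached along $\pi_0$ matches the required local states of every $\proc \in \procsf{\pi_1}\cup\procsf{\pi_2}$ simultaneously.
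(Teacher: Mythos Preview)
Your approach is correct and follows the same structure as the paper's (lift to a full interleaving $\pi\in\pi_1\shuffle\pi_2$, verify $\pi_0\cdot\pi\in\traces{\impl}$, project back); the paper factors these two steps into \cref{lem:shuffle_restrict} and \cref{lem:impl_shuffle}. The obstacle you anticipate in the compatibility step dissolves once you observe that $\restrictobj{\pi_0}\in\chistories{\obj}$ forces every process to be in state $\bot$ at the end of $\pi_0$ (each invoked operation has responded, and $\sysp{\impl}{\proc}$ returns to $\bot$ on a response), so \emph{any} execution of $\sys{\impl}$ along $\pi_0$ ends in the same global state $\lambda\proc.\bot$ and no hybrid construction over $E_1,E_2$ is needed.
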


The Merge Theorem, which we obtain using this lemma,
makes several assumptions on implementations.
First, the safety condition, which we call \emph{consistency}, 
is restriction of linearizability to complete histories. For its definition,
we first define reorderings of sequences.

\begin{definition2}
Let $R \subseteq X \times X$.
A sequence $s' \in X^*$ is an \emph{$R$-reordering}
of a sequence $s \in X^*$ if there exists a bijection $f : \set{1 \til \size{s}} \to \set{1 \til \size{s'}}$
such that $s[i]=s'[f(i)]$ for every $1\leq i \leq \size{s}$,
and $f(i) < f(j)$ whenever $i < j$ and $\tup{s[i],s[j]}\in R$.
We denote by $\reorder{R}{s}$ the set of all $R$-reorderings of $s$,
and lift this notation to sets by letting
$\reorder{R}{S} \defeq \bigcup_{s\in S} \reorder{R}{s}$.
\end{definition2}

We define $\sproc$ and $\lin$ relations on events:
  \begin{align*}
    \sproc & \defeq \set{\tup{e_1,e_2} \st \procf{e_1} = \procf{e_2}}
    \\ \lin & \defeq \sproc \cup (\set{e \st \text{$e$ is a response event}} \times \set{e \st \text{$e$ is a invocation event}})
  \end{align*}

  \begin{definition2}
    \label{def:lin}
    A history $\h' \in \histories{\obj}$ \emph{linearizes}
    a history $\h\in \histories{\obj}$, denoted $\h \sqsubseteq \h'$,
    if $\h' \in \reorder{\lin}{\h}$.
    For a set $H' \subseteq \histories{\obj}$, 
    we write $\h \sqsubseteq H'$ if $\h \sqsubseteq \h'$ for some $\h'\in H'$.
  \end{definition2}

  \begin{definition2}
    \label{def:consistent}
    An implementation $\impl$ of an object $\obj$ is \emph{consistent} under a memory model $\mm$ if 
    $\h \sqsubseteq \spec{\obj}$ for every complete history $\h \in \histories{\impl,\mm}$.
  \end{definition2}

Consistency follows from linearizability~\cite{HW90}, and it is equivalent to linearizability
for implementations in which every history can be extended to a complete history.

Next, the liveness condition, which we call \emph{availability}, 
requires progress for the specific histories under consideration.

\begin{definition2}
  \label{def:available}
  An implementation $\impl$ of $\obj$ is \emph{available after a history $\h_0\in \cshistories{\obj}$}  \wrt a history $\h \in \histories{\obj}$
  if $\h \in \histories{\pi_0,\impl,\Atomic}$ for every $\pi_0 \in \traces{\impl}$ such that 
  $\restrictmem{\pi_0}\in\traces{\Atomic}$ and $\restrictobj{\pi_0}=\h_0$. 
  We say $\impl$ is \emph{available \wrt $\h$}, if it is available after $\emptyseq$ \wrt $\h$ (\ie $\h \in \histories{\impl,\Atomic}$).
  We call $\impl$ \emph{spec-available} if for every $\h_0,\h\in \cshistories{\obj}$ such that $\h_0 \cdot \h \in \spec{\obj}$,
  $\impl$ is available after $h_0$ \wrt $\h$.
\end{definition2}

Availability \wrt $\h$ after $\h_0$ only guarantees that the
implementation under \SCM is able to generate the history $\h$ when it
starts executing after generating $\h_0$. 
For deterministic implementations, availability \wrt $\h$ after $\h_0$
follows from availability \wrt $\h_0 \cdot \h$ (after $\epsilon$).
Note that availability 
considers \SCM rather than a general memory model $\mm$,
but when $\mm$ is well-behaved (\cref{def:well-behaved}),
$\h \in \histories{\pi_0,\impl,\Atomic}$ ensures that
$\h \in \histories{\pi_0,\impl,\mm}$.  
Spec-availability essentially means that the implementation can generate
all (sequential) specification histories and for
deterministic objects and implementations, it follows from obstruction-freedom~\cite{obstruction-free}.

The next lemma 
\iffull
(proven in~\cref{app:impos-obj})
\else
(proven in~\cite{full})
\fi
is used in the sequel to derive availability \wrt a history $\h$ from the fact
that availability holds \wrt a sequential history that linearizes $\h$.

  \begin{restatable}{lemma}{availablelin}
    \label{lem:available_lin}
    Suppose that $\impl$ is available after $\h_0$ \wrt a history $\h' \in \histories{\obj}$.
    Then, $\impl$ is available after $\h_0$ \wrt every $\h\in \histories{\obj}$ such that $\h \sqsubseteq \h'$.
  \end{restatable}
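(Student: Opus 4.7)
The plan is to take a witness trace $\pi'$ realizing $\h'$ (obtained from the availability hypothesis) and construct a witness $\pi$ realizing $\h$ by rearranging only the \emph{object events} of $\pi'$: memory events retain their positions (and hence their relative order), while object events are shuffled to follow the order of $\h$ rather than $\h'$, and each per-process subsequence is preserved.

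More concretely, I would fix any $\pi_0\in\traces{\impl}$ with $\restrictmem{\pi_0}\in\traces{\Atomic}$ and $\restrictobj{\pi_0}=\h_0$, and let $\pi'$ be the trace provided by the availability assumption, so $\pi_0\cdot\pi'\in\traces{\impl}$, $\restrictmem{\pi_0\cdot\pi'}\in\otraces{\Atomic}$, and $\restrictobj{\pi'}=\h'$. I would then define $\pi$ to be a linear extension of the partial order $P$ on the events of $\pi'$ generated by the global order of memory events in $\pi'$ together with each per-process order of events in $\pi'$, chosen so that $\restrictobj{\pi}=\h$. Granted such a $\pi$ exists, the conclusion is immediate: because $\sys{\impl}$ is the interleaved parallel composition of the per-process LTSs $\sysp{\impl}{\proc}$, the equality of per-process subsequences $\restrict{\pi}{\proc}=\restrict{\pi'}{\proc}$ implies $\pi_0\cdot\pi\in\traces{\impl}$; and $\restrictmem{\pi_0\cdot\pi}=\restrictmem{\pi_0\cdot\pi'}\in\otraces{\Atomic}$ since the memory events and their order are unchanged. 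Hence $\h=\restrictobj{\pi}\in\histories{\pi_0,\impl,\Atomic}$, as required.

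The main obstacle is establishing the existence of such a $\pi$, which amounts to showing that $\h$ is compatible with the restriction of $P$ to object events. The inclusion $\sproc\subseteq\lin$ yields $\restrict{\h}{\proc}=\restrict{\h'}{\proc}$ for every $\proc$, so all same-process constraints are automatically satisfied. The remaining concern is a memory-induced constraint $e_1<_P e_2$ between object events of distinct processes. I would verify this by a chain argument: each memory-order edge in the chain goes from some $p$-memory event $\mu^p$ to some $q$-memory event $\mu^q$, and the invocation $I^p$ of the enclosing $p$-operation together with the response $R^q$ of the enclosing $q$-operation satisfy $I^p<_P\mu^p<_P\mu^q<_P R^q$, so they form an $(\mathrm{inv},\mathrm{resp})$ pair across processes that is itself $<_P$-ordered; swapping such a pair is exactly what $\lin$ forbids (its reverse is a $(\mathrm{resp},\mathrm{inv})$ pair lying in $\lin$), so these endpoints remain in order in $\h$. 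Composing such $\lin$-forced orderings transitively along the chain, and bridging between consecutive operations of the same process via their response-then-invocation boundary (same process, hence in $\lin$), shows that $e_1$ still precedes $e_2$ in $\h$, as needed.
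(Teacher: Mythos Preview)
Your approach is correct and takes a genuinely different route from the paper. The paper builds $\pi$ by an explicit induction: at step~$\ell$ it maintains a trace $\pi^\ell$ with $\restrictobj{\pi^\ell}[1,\ell]=\h[1,\ell]$, and advances by physically moving one object event into place, with a case analysis on whether the two out-of-place events are invocations or responses (and a nested iteration in the response--response case). Your argument is more structural: freeze the memory events, observe that the only cross-process constraints this imposes on object events are of the form ``invocation of one operation precedes response of another,'' and note that these are exactly the orderings $\lin$-reordering cannot reverse. This explains \emph{why} the lemma holds and avoids the case split entirely. Two small points worth tightening. First, you assert that compatibility of $\h$ with $P|_{\text{object events}}$ is equivalent to the existence of the desired linear extension $\pi$; this is true but deserves a line: any cycle in the transitive closure of $P$ together with the order of $\h$ would, when restricted to its object-event checkpoints, become a cycle in $\h$ (each $P$-segment between consecutive object events is a $P|_{\text{obj}}$-edge, hence an $\h$-edge by your compatibility claim). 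Second, the multi-hop composition you sketch is unnecessary: for $e_1<_P e_2$ with different processes, take $\mu$ to be the last event of the $P$-chain still in $\procf{e_1}$ and $\mu'$ the first one in $\procf{e_2}$; both are memory events (the chain crosses processes only via memory-order edges), they satisfy $\mu<_{\pi'}\mu'$, and the single pair $(I_\mu,R_{\mu'})$ of their enclosing invocation and response already gives $e_1\le_\h I_\mu<_\h R_{\mu'}\le_\h e_2$.
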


Next, we define mergeability for objects, akin to mergeability for memory models (\cref{def:mergeable}):

\begin{definition2}
\label{def:mergeable_obj}
Two histories $\h_1, \h_2\in\chistories{\obj}$ with $\procsf{\h_1} \cap \procsf{\h_2} = \emptyset$
are \emph{weakly (resp., strongly) mergeable in $\spec{\obj}$ after a history 
$\h_0\in \cshistories{\obj}$}
if $\h_0 \cdot{} \h_1 \sqsubseteq \spec{\obj}$ and $\h_0 \cdot{} \h_2 \sqsubseteq \spec{\obj}$ 
imply that $\h_0 \cdot{} \h \sqsubseteq \spec{\obj}$ for some (resp., every) $\h \in \h_1 \shuffle \h_2$. 
\end{definition2}

We now have all prerequisites to state our Merge Theorem 
\iffull
(see~\cref{app:impos-obj} for the proof). 
\else
(see~\cite{full} for the proof).
\fi
\begin{restatable}{theorem}{main}
\label{thm:main}
Let $\impl$ be an implementation of an object $\obj$ that is consistent under a 
well-behaved memory model $\mm$.
Suppose that there exist $\pi_0 \in \traces{\impl}$, 
$\h_1,\h_2 \in \chistories{\obj}$ such that
the following hold, where $\h_0=\restrictobj{\pi_0} $ and $\sigma_0=\restrictmem{\pi_0}$:
\begin{enumerate}[leftmargin=0pt,align=left, label=(\roman*)]
\item $\h_0 \in \spec{\obj}$, $\sigma_0 \in\traces{\Atomic}$, and $\procsf{\h_1} \cap \procsf{\h_2} = \emptyset$,\label{item:1} 
\item $\impl$ is available after $\h_0$ \wrt some $\seqh^i \in \cshistories{\obj}$
  such that $\h_i \sqsubseteq \seqh^i$ for $i\in\set{1,2}$, \label{item:2} 
\item $\h_1$ and $\h_2$ are not weakly (resp., strongly) mergeable in $\spec{\obj}$ after $\h_0$.\label{item:3}
\end{enumerate}
Then, there exist $\pi_1$ and $\pi_2$ such that all of the following
hold:
\begin{enumerate}[leftmargin=0pt,align=left, label=(\alph*)]

\item For $i\in\set{1,2}$, we have $\pi_0 \cdot \pi_i \in \traces{\impl}$; \; $\restrictobj{\pi_i}=\h_i$;  \; $\sigma_0 \cdot \restrictmem{\pi_i} \in \traces{\Atomic}$; \; and $\procsf{\pi_i} = \procsf{\h_i}$. \label{item:main1}\label{item:main2}\label{item:main3} \label{item:main4}

\item For every $\pi_1' \in \reorder{\sproc}{\pi_1}$ and $\pi_2' \in \reorder{\sproc}{\pi_2}$
such that $\restrictobj{\pi_1'}=\h_1$, $\restrictobj{\pi_2'}=\h_2$, 
and $\sigma_0 \cdot \restrictmem{\pi_1'}, \sigma_0 \cdot \restrictmem{\pi_2'}\in \traces{\Atomic}$,
we have that $\restrictmem{\pi_1'}$ and $\restrictmem{\pi_2'}$ are not weakly (resp., strongly) mergeable in $\mm$.
In particular, $\restrictmem{\pi_1}$ and $\restrictmem{\pi_2}$ are not weakly (resp., strongly) mergeable in $\mm$.\label{item:main5}
\end{enumerate}
\end{restatable}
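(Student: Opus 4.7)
The plan has two pieces: part~(a) follows directly from availability, while part~(b) is obtained as the contrapositive of \cref{lem:main0}, with well-behavedness bridging $\SCM$ and $\mm$.

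For part~(a), I would first upgrade hypothesis~(ii) from $\seqh^i$ to $\h_i$ using \cref{lem:available_lin}, which applies because $\h_i \sqsubseteq \seqh^i$. The given $\pi_0$ satisfies the preconditions of availability by hypothesis~(i), so unfolding the definition of availability produces a $\pi_i$ with $\pi_0 \cdot \pi_i \in \traces{\impl}$, $\restrictobj{\pi_i} = \h_i$, and $\sigma_0 \cdot \restrictmem{\pi_i} \in \traces{\Atomic}$. The process-set equality $\procsf{\pi_i} = \procsf{\h_i}$ falls out of completeness of $\h_0 \in \spec{\obj} \subseteq \cshistories{\obj}$: after $\pi_0$ every per-process LTS $\sysp{\impl}{\proc}$ is in state $\bot$, so a process takes a step in $\pi_i$ iff it issues an invocation, iff it appears in $\h_i$.

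For part~(b), I would fix $\pi_1', \pi_2'$ as in the statement and assume, for contradiction, that $\restrictmem{\pi_1'}$ and $\restrictmem{\pi_2'}$ are weakly (resp., strongly) mergeable in $\mm$. Applying the simulation of \cref{def:well-behaved} along $\sigma_0$ yields a single stable $q_0 \in \states{\mm}$ reached by a trace projecting onto $\sigma_0$; the simulation also lifts every $\SCM$-continuation of $\sigma_0$ to a trace of $\mm$ from $q_0$, so both $\restrictmem{\pi_i'} \in \otraces{\mm, q_0}$. Mergeability then supplies some (resp., every) shuffle $\sigma \in \restrictmem{\pi_1'} \shuffle \restrictmem{\pi_2'}$ with $\sigma \in \otraces{\mm, q_0}$, whence $\sigma_0 \cdot \sigma \in \otraces{\mm}$. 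Since $\sproc$-reorderings preserve every per-process subsequence, $\pi_0 \cdot \pi_i' \in \traces{\impl}$ for $i\in\set{1,2}$, and $\procsf{\pi_1'} \cap \procsf{\pi_2'} = \procsf{\h_1} \cap \procsf{\h_2} = \emptyset$ by~(i). All premises of \cref{lem:main0} now hold for $\pi_0, \pi_1', \pi_2'$, so the lemma delivers some (resp., every) $\h \in \restrictobj{\pi_1'} \shuffle \restrictobj{\pi_2'} = \h_1 \shuffle \h_2$ with $\h \in \histories{\pi_0, \impl, \mm}$. Because $\h_0 \cdot \h$ is complete and lies in $\histories{\impl, \mm}$, consistency of $\impl$ forces $\h_0 \cdot \h \sqsubseteq \spec{\obj}$.

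To close the contradiction with~(iii), I observe that part~(a) together with consistency (using $\otraces{\Atomic} \subseteq \otraces{\mm}$ for well-behaved $\mm$) yields $\h_0 \cdot \h_i \sqsubseteq \spec{\obj}$, so the premise of object-mergeability in \cref{def:mergeable_obj} is satisfied, and the statement derived above is exactly the conclusion that hypothesis~(iii) denies. The ``in particular'' clause is the special case $\pi_i' = \pi_i$. I expect the main subtle point to be the \emph{simultaneous} lifting of both memory traces to a common stable state $q_0$ of $\mm$; this is precisely what the simulation formulation of well-behavedness provides, as opposed to a trace-by-trace lifting that would generally furnish two unrelated stable states and leave no shared starting point from which to invoke memory-level mergeability.
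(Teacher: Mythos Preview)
Your proposal is correct and follows essentially the same route as the paper's proof: part~(a) via \cref{lem:available_lin} plus unfolding availability (with $\procsf{\pi_i}=\procsf{\h_i}$ from completeness of $\h_0$, which is precisely \cref{lem:same_procs}), and part~(b) by contradiction using well-behavedness to obtain a common stable $q_0$, then \cref{lem:impl_sproc} to push $\pi_i'$ back into $\traces{\impl}$, then \cref{lem:main0}, then consistency. Your closing observation about the simulation-based formulation of well-behavedness being essential for getting a \emph{single} stable $q_0$ hosting both continuations is exactly the point; the paper uses this implicitly in the same way.
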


For simplicity, we explain \cref{thm:main} for $\pi_0 = \emptyseq$
(and hence $\h_0 = \sigma_0 = \emptyseq$).  The theorem assumes that
we start with an implementation $\impl$ that is consistent
under the memory model $M$ under consideration. Moreover, we assume
that we have two complete histories $\h_1$ and $\h_2$ of the object
such that the processes of $\h_1$ and $\h_2$ are distinct (condition
\ref{item:1}), $\impl$ is available \wrt some linearization of $\h_1$
and $\h_2$ (condition \ref{item:2}), and that $\h_1$ and $\h_2$ are
\emph{\textbf{not}} weakly (strongly) mergeable (condition
\ref{item:3}). Then, for $i\in \{1,2\}$ there must be a trace $\pi_i$
of $\impl$, 
corresponding to $\h_i$,
whose memory events are allowed by
$\SCM$,  
and processes are only included in
$\pi_i$ if they call some operation of the object (condition
\ref{item:main4}), such that $\pi_1$ and $\pi_2$ restricted to memory
events are \emph{\textbf{not}} weakly (strongly) mergeable in $M$
(second clause of condition \ref{item:main5}). In fact, weak (strong)
\emph{\textbf{non}}-mergeability extends to any process-preserving
reordering of $\pi_1$ and $\pi_2$ whose corresponding histories are
$\h_1$ and $\h_2$ and corresponding memory traces are \SCM traces (first clause of condition \ref{item:main5}). 


\newcommand{\sumtable}{\begin{table}[t] 
  \centering \footnotesize
  \begin{center}\footnotesize
    \begin{tabular}[t]{l|l|l|l|l|l|l}
      &\makecell{History\\ Merge}  & Bound & Termination & $\Atomic$ & $\TSO$ & $\RA$\\
      \hline
      \makecell{One-sided \\non-comm (\cref{def-weakly-non-commutative})\\($\rego$, $\mrego$, $\mcounto$)} & \makecell{Weak,\\ but non-\\ strong} & Lower&\makecell{spec-\\available}&NA&\cref{thm:weak-non-comm} &\cref{thm:weak-non-comm}\\
      \cline{3-7}
      &&Upper & wait-free & well-known &  \textbf{\cref{thm:strong-impl}} & \textbf{\cref{thm:strong-impl}} \\
      \hline
	\makecell{Two-sided \\non-comm\ (\cref{def-strongly-non-commutative})\\($\conso$, set)} & \makecell{Non-\\weak} & Lower&\makecell{spec-\\available}&\cref{thm:two-side-lower}&\cref{thm:two-side-lower} &\cref{thm:two-side-lower}\\
      \cline{3-7}
      &&Upper & \makecell{obstruction-\\free} & \textbf{\cref{thm:two-sided-upper}}$^*$ & \textbf{\cref{thm:two-sided-upper}}$^*$ & FE \\
      \hline

    \makecell{Snapshot, counter}& \makecell{Non-\\weak} & Lower&\makecell{spec-\\available}&
    \cref{thm:snap-weak-tso-scm-ra},~\ref{thm:count-weak-tso-scm-ra}&
    \cref{thm:snap-weak-tso-scm-ra},~\ref{thm:count-weak-tso-scm-ra} & 
    \cref{thm:snap-weak-tso-scm-ra},~\ref{thm:count-weak-tso-scm-ra}\\
      \cline{3-7}
      &&Upper & \makecell{wait-free} & \textbf{\cref{thm:snap-upper},~\ref{thm:counter-upper}} & 
      \textbf{\cref{thm:snap-upper},~\ref{thm:counter-upper}}& FE \\
      \hline

     \makecell{Mutual Exclusion}& \makecell{Non-\\weak} & Lower&\makecell{solo-termination}&
    Cor~\ref{cor:mutex}&
    Cor~\ref{cor:mutex} & 
    Cor~\ref{cor:mutex}\\
      \cline{3-7}
      &&Upper & \makecell{starvation-\\free} & \textbf{\cref{sec:mutex}} & 
      \textbf{\cref{sec:mutex}} & FE \\
      \hline

    \end{tabular}
  \end{center}
  \caption{Implementability Results. Lower bounds are for consistent implementations and upper bounds
  are for linearizable ones.
  Tight upper bounds are \textbf{bold-faced}. The bounds which are tight in the absence of contention are marked with ($^*$).
  FE stands for ``fence-everywhere'' -- the strategy that obtains an equivalent $\RA$ implementation from
  an $\Atomic$ algorithm by inserting a fence after every non-RMW memory operation.}
  \label{tab:impl-results}
  \vspace{-20pt}
\end{table}
}

\section{Implementability of Objects on Weak Memory Models}
\label{sec-first-applications}

We demonstrate the power of the \mergethm by using it along with the
mergeability results in \cref{tab:results} to characterize
implementability of objects under weak memory models.

\subsection{One-Sided Non-Commutative Operations}
\label{sec:app_weak}

We start by analyzing implementability of pair of operations
$\op_1$ and $\op_2$ such that $\op_1$ is \emph{one-sided non-commutative} \wrt $\op_2$. 
Roughly, this means that the execution order of $\op_1$
and $\op_2$ affects the response of $\op_1$. 
Formally:

\begin{definition2}
\label{def-weakly-non-commutative}
An operation $\op_1  \in \opsf{\obj}$ is 
\emph{one-sided non-commutative \wrt an operation $\op_2 \in \opsf{\obj}$ in $\spec{\obj}$} 
if there exist $\h_0 \in \cshistories{\obj}$, processes $\proc_1 \neq \proc_2$, 
and response values $u_1,v_1,u_2\in\retsf{\obj}$ such that:
\begin{enumerate*}[label=(\roman*)]
\item $u_1 \neq v_1$;
\item $\h_0 \cdot \invres{\proc_1}{\op_1}{u_1} \in \spec{\obj}$; and 
\item $\h_0 \cdot \invres{\proc_2}{\op_2}{u_2} \cdot \invres{\proc_1}{\op_1}{v_1} \in \spec{\obj}$.
\end{enumerate*}
\end{definition2}

\begin{example}
Consider a standard register object $\rego$ with initial value $0$, and 
operations $\writeo{v}$, where $v\in V$ for some set of values $V$, and $\reado$.
Then, $\reado$ is one-sided non-commutative \wrt $\writeoo$ in $\spec{\rego}$. 
Indeed, for $\proc_1\neq\proc_2$ and $\h_0=\emptyseq$, we have both
$\invress{\proc_1}{\reado}{0}{30pt} \in \spec{\rego}$ and
$\invress{\proc_2}{\writeo{1}}{}{40pt} \cdot \invress{\proc_1}{\reado}{1}{30pt}\in \spec{\rego}$.
The same holds for \emph{max-register}~\cite{AAC12}, denoted $\mrego$,
that stores integers with the initial value $0$. 
We note that all pairs of specification histories of $\rego$ and $\mrego$ with disjoint sets of processes
are weakly mergeable.
\end{example}

\begin{example}
Consider a monotone counter object $\mcounto$ with initial value $0$, and 
operations $\inco$ and $\reado$.
Then, $\reado$ is one-sided non-commutative \wrt $\inco$ in $\spec{\mcounto}$ 
as for $\proc_1\neq\proc_2$ and $\h_0=\emptyseq$, we have 
$\invress{\proc_1}{\reado}{0}{30pt} \in \spec{\mcounto}$ and
$\invress{\proc_2}{\inco}{}{20pt} \cdot \invress{\proc_1}{\reado}{1}{30pt}\in \spec{\mcounto}$.
\end{example}

The next lemma 
\iffull
(proven in~\cref{app:app_weak})
\else
(proven in~\cite{full})
\fi
shows that for deterministic objects, the existence 
of a pair of operations one of which is one-sided non-commutative \wrt to the other 
implies that their corresponding histories are  not strongly mergeable:

\begin{restatable}{lemma}{lemmaweaknoncommandmerge}
\label{lemma-weak-non-comm-and-merge}
Let $\obj$ be a deterministic object and suppose that 
$\op_1 \in \opsf{\obj}$ is one-sided non-commutative \wrt $\op_2\in \opsf{\obj}$ in $\spec{\obj}$. 
Then, there exist  $\h_0 \in \cshistories{\obj}$, processes $\proc_1\neq \proc_2$,  
and response values $u_1,u_2\in\retsf{\obj}$ 
such that $\invres{\proc_1}{\op_1}{u_1}$ and $\invres{\proc_2}{\op_2}{u_2}$ 
are not strongly mergeable in $\spec{\obj}$ after $\h_0$. 
\end{restatable}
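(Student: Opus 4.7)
The plan is to extract the witness directly from the definition of one-sided non-commutativity and exhibit a concrete interleaving that violates strong mergeability. Unpacking \cref{def-weakly-non-commutative}, I obtain $\h_0\in\cshistories{\obj}$, distinct processes $\proc_1,\proc_2$, and values $u_1,v_1,u_2$ with $u_1\neq v_1$, satisfying $\h_0\cdot\invres{\proc_1}{\op_1}{u_1}\in\spec{\obj}$ and $\h_0\cdot\invres{\proc_2}{\op_2}{u_2}\cdot\invres{\proc_1}{\op_1}{v_1}\in\spec{\obj}$. I set $\h_1\defeq\invres{\proc_1}{\op_1}{u_1}$ and $\h_2\defeq\invres{\proc_2}{\op_2}{u_2}$; these are complete histories on disjoint processes, and prefix-closure of $\spec{\obj}$ applied to the second specification history yields $\h_0\cdot\h_2\in\spec{\obj}$. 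Thus both $\h_0\cdot\h_1\sqsubseteq\spec{\obj}$ and $\h_0\cdot\h_2\sqsubseteq\spec{\obj}$, activating the antecedent of strong mergeability.

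The candidate ``bad'' interleaving I will use is $\h\defeq\invres{\proc_2}{\op_2}{u_2}\cdot\invres{\proc_1}{\op_1}{u_1}\in\h_1\shuffle\h_2$, which sequences $\op_2$ entirely before $\op_1$. The key auxiliary observation is that, in any complete sequential history, every pair of consecutive events is $\lin$-pinned: intra-operation pairs by $\sproc$, and operation-boundary pairs (a response immediately followed by an invocation of the next operation) by the response-before-invocation clause of $\lin$. A straightforward bijection argument then shows that the unique $\lin$-reordering of the sequential history $\h_0\cdot\h$ is itself, since a bijection $f$ satisfying $f(i)<f(i+1)$ for every consecutive pair must be the identity. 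Consequently, $\h_0\cdot\h\sqsubseteq\spec{\obj}$ reduces to $\h_0\cdot\h\in\spec{\obj}$.

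Finally, I invoke determinism of $\obj$. The two histories $\h_0\cdot\invres{\proc_2}{\op_2}{u_2}\cdot\invres{\proc_1}{\op_1}{u_1}$ and $\h_0\cdot\invres{\proc_2}{\op_2}{u_2}\cdot\invres{\proc_1}{\op_1}{v_1}$ share the longest common prefix $\h_0\cdot\invres{\proc_2}{\op_2}{u_2}\cdot\tup{\ev{\proc_1}{\invi{\op_1}}}$, which ends in an invocation. Determinism therefore forbids both from belonging to $\spec{\obj}$, and since the second one does by hypothesis while $u_1\neq v_1$, the first one does not. Hence $\h_0\cdot\h\notin\spec{\obj}$, so $\h_1$ and $\h_2$ are not strongly mergeable in $\spec{\obj}$ after $\h_0$ with the chosen response values $u_1,u_2$.

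I anticipate the main subtlety to be the $\lin$-reordering observation in the second paragraph; however, this amounts to a short combinatorial check on consecutive indices of a sequential history rather than a deep technical issue, with the remainder of the argument being a direct application of the hypotheses, prefix-closure, and determinism.
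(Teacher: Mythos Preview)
Your proof is correct and follows essentially the same approach as the paper: extract the witnesses from \cref{def-weakly-non-commutative}, use prefix-closure to establish the antecedent of mergeability, take $\h=\invres{\proc_2}{\op_2}{u_2}\cdot\invres{\proc_1}{\op_1}{u_1}$ as the bad interleaving, and derive a contradiction with determinism against the history $\h_0\cdot\invres{\proc_2}{\op_2}{u_2}\cdot\invres{\proc_1}{\op_1}{v_1}$. Your explicit observation that the only $\lin$-reordering of a complete sequential history is itself (hence $\h_0\cdot\h\sqsubseteq\spec{\obj}$ collapses to $\h_0\cdot\h\in\spec{\obj}$) is a detail the paper leaves implicit, so in that respect your argument is slightly more complete.
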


Then, the following theorem
\iffull
(proven in~\cref{app:app_weak})
\else
(proven in~\cite{full})
\fi
follows from \cref{thm:main} 
and properties $\TSO^\textsf{s}$ and $\RA^\textsf{s}_1$ in \cref{tab:results}. 

\begin{restatable}{theorem}{thmweaknoncomm}
\label{thm:weak-non-comm}
Let $\obj$ be a deterministic object and suppose that $\op_1 \in \opsf{\obj}$ 
is one-sided non-commutative \wrt $\op_2 \in \opsf{\obj}$ in $\spec{\obj}$. 
Let $\impl$ be a spec-available implementation of $\obj$ that is consistent under 
$\mm\in \set{\TSO, \RA}$.
Then, there exist $\proc_1, \proc_2\in\Proc$, $\pi_1 \in \traces{\impl(\op_1,\proc_1)}$, 
and $\pi_2\in \traces{\impl(\op_2,\proc_2)}$ such that the following hold 
for $\sigma_1=\restrictmem{\pi_1}$ and $\sigma_2=\restrictmem{\pi_2}$:
\begin{enumerate}[leftmargin=*,align=left, label=(\alph*)]
\item
if $\mm = \TSO$, then either $\sigma_1$ or $\sigma_2$ has a fence or a RMW event; and
\item
if $\mm=\RA$, then neither $\sigma_1$ nor $\sigma_2$ is RW, and 
one of the following holds:
\begin{enumerate*}[label=(\roman*)]
\item either $\sigma_1$ or $\sigma_2$ has a RMW event;
\item either $\sigma_1$ or $\sigma_2$ is not LTF (\ie has a fence in the middle); 
\item $\sigma_1$ is LF and $\sigma_2$ is TF; or
\item $\sigma_1$ is TF and $\sigma_2$ is LF.
\end{enumerate*}
\end{enumerate}
\end{restatable}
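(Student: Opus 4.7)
The plan is to combine Lemma \ref{lemma-weak-non-comm-and-merge}, which converts one-sided non-commutativity into strong non-mergeability of singleton object histories, with the Merge Theorem \ref{thm:main} to lift that non-mergeability to the memory level, and then to use the strong-merge properties from \cref{tab:results} to exclude all of the ``benign'' shapes that $\sigma_1,\sigma_2$ could otherwise take.

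First, I would invoke \cref{lemma-weak-non-comm-and-merge} to obtain a complete history $\h_0 \in \cshistories{\obj}$, distinct processes $\proc_1 \neq \proc_2$, and responses $u_1,u_2$ such that the singleton histories $\h_i \defeq \invres{\proc_i}{\op_i}{u_i}$ are not strongly mergeable in $\spec{\obj}$ after $\h_0$. Since $\h_0 \in \spec{\obj}$ and $\impl$ is spec-available, two applications of \cref{def:available} yield (i) a trace $\pi_0 \in \traces{\impl}$ with $\restrictobj{\pi_0}=\h_0$ and $\restrictmem{\pi_0}\in \traces{\Atomic}$, and (ii) availability of $\impl$ after $\h_0$ \wrt each $\h_i$. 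Taking $\seqh^i = \h_i$ (which linearizes itself trivially) gives the availability clause required by the Merge Theorem.

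Next, I would apply \cref{thm:main} in its \emph{strong} form to $\pi_0,\h_1,\h_2$. All preconditions hold: $\h_0 \in \spec{\obj}$; $\sigma_0 \defeq \restrictmem{\pi_0} \in \traces{\Atomic}$; $\procsf{\h_1} \cap \procsf{\h_2} = \emptyset$; availability after $\h_0$; and strong non-mergeability of $\h_1,\h_2$ in $\spec{\obj}$. The theorem returns traces $\pi_1,\pi_2$ with $\procsf{\pi_i}=\{\proc_i\}$, $\restrictobj{\pi_i}=\h_i$, such that $\sigma_i \defeq \restrictmem{\pi_i}$ are not strongly mergeable in $\mm$. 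Since each $\pi_i$ is solo, begins with $\ev{\proc_i}{\invi{\op_i}}$, ends with $\ev{\proc_i}{\resi{u_i}}$, and contains only $\proc_i$-events in between, the transition rules of $\sysp{\impl}{\proc_i}$ in \cref{fig:lts-SIP} guarantee that $\pi_i$ itself is an execution trace of $\impl(\op_i,\proc_i)$.

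Finally, I would read off the required syntactic shape of $\sigma_1,\sigma_2$ from \cref{tab:results} by contrapositive. For $\mm=\TSO$, property $\TSO^\textsf{s}$ says two solo RW traces are strongly mergeable; since $\sigma_1,\sigma_2$ are solo but not strongly mergeable, at least one must contain a fence or an RMW. For $\mm=\RA$, I would exclude in turn: $\RA^\textsf{s}_1$ forbids either trace from being RW; and since solo traces are PPTF iff TF and PPLF iff LF, $\RA^\textsf{s}_2$ and $\RA^\textsf{s}_3$ forbid the two RWF traces from being both TF or both LF respectively. What remains is: either some trace has an RMW (case (i)), some trace has a fence in its middle, i.e.\ is not LTF (case (ii)), or both are LTF and RWF with one purely leading-fenced and the other purely trailing-fenced (cases (iii) and (iv)). The main obstacle I anticipate is verifying that this RA case analysis is exhaustive: one has to carefully enumerate the possible LTF shapes of each solo RWF trace, and invoke the symmetry of strong mergeability and the fact that $\sigma_1$ or $\sigma_2$ can play the role of the RWF side in $\RA^\textsf{s}_2$/$\RA^\textsf{s}_3$, so as to collapse every remaining configuration into (i)--(iv).
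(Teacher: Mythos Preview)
Your proposal is correct and follows essentially the same approach as the paper: invoke \cref{lemma-weak-non-comm-and-merge} to obtain non-strongly-mergeable singleton histories after some $\h_0$, use spec-availability to supply $\pi_0$ and the availability hypotheses for the Merge Theorem, apply \cref{thm:main} in its strong form, and then read off the syntactic constraints on $\sigma_1,\sigma_2$ by contraposition from $\TSO^\textsf{s}$ and $\RA^\textsf{s}_1,\RA^\textsf{s}_2,\RA^\textsf{s}_3$. The paper's own proof is terser (it simply asserts that the conclusion follows from the table entries), whereas you spell out the $\RA$ case analysis and the reason why the solo $\pi_i$ returned by the Merge Theorem are traces of $\impl(\op_i,\proc_i)$; these are elaborations, not departures.
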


Since $\reado$ is one-sided non-commutative \wrt $\writeoo$ in both $\spec{\rego}$ and $\spec{\mrego}$, 
their respective implementations under $\TSO$ and $\RA$ are subject to the constraints
given in \cref{thm:weak-non-comm}.
The same holds for the implementations of 
the $\reado$ and $\inco$ operations of $\mcounto$. 

To establish the tightness of these lower bounds,
we present linearizable wait-free implementations of
$\rego$ and $\mrego$ that are optimal \wrt the above bounds:
for \TSO, it uses only reads, writes, and a single fence at the end of\hspace{2pt} $\writeoo$;
and for \RA, it uses only reads, writes, and a pair of fences at both the 
beginning and the end of\hspace{2pt} both $\writeoo$ and $\reado$.

A $\rego$ object is trivial to implement under $\Atomic$ and 
there are $\mrego$ implementations under $\Atomic$~\cite{AAC12}
with every operation being RBW.
We use these implementations as a basis for implementations under $\TSO$ and $\RA$ as follows:

\begin{description}[leftmargin=*,topsep=5pt,itemsep=2pt]
\item[TSO.] For $\TSO$, we utilize a \emph{fence-insertion strategy},
  which derives a linearizable $\TSO$ implementation of an object from its
  $\Atomic$ counterpart by inserting a fence in-between every consecutive 
  pair of write and read, as well as between a final write of an operation (if it exists) and the operation's response.
  We give full details, prove correctness, and present more examples of applications of this transformation
  \iffull
  in~\cref{sec-simulation}.
  \else
  in~\cite{full}.
  \fi
  Using this strategy, we obtain
  a $\TSO$ implementation of $\rego$ as follows: $\writeoo$ first writes 
  to a memory location, and then executes a fence, and $\reado$
  reads the same memory location and returns the value read. 
  Likewise, to implement $\mrego$ under $\TSO$, we add 
  a fence at the end of the $\writeoo$ implementations of~\cite{AAC12}, and
  leave their $\reado$ implementation as is. 
\item[RA.] 

  We augment the $\TSO$ implementations above by adding another fence
  at the beginning of $\writeoo$ as well as fences at the beginning and the
  end of $\reado$. 
  \iffull
  The pseudocode of the $\mrego$ algorithm and its correctness proof 
  can be found in~\cref{app:max_reg}. 
  \else
  The pseudocode of the $\mrego$ algorithm appears in~\cref{app:max_reg}
  and its correctness proof can be found in~\cite{full}. 
  \fi
  Further details of the register implementation 
  and its correctness proof appear 
  \iffull
  in~\cref{app:reg_under_RA}.
  \else
  in~\cite{full}.
  \fi
  For conciseness, our $\mrego$ implementation under
  $\RA$ is derived from a simplified version of the algorithm 
  in~\cite{AAC12} (with linear step complexity instead of logarithmic as in~\cite{AAC12}).
\end{description}

\subsection{Two-Sided Non-Commutative Operations and Mutual Exclusion}
\label{sec:app_two}

We next explore implementability of objects with non-weakly mergeable histories. 
We apply our framework to generalize the ``laws of order'' (LOO) results of~\cite{AGHKMV11}.
The next notion of two-sided non-commutativity is a strengthening
of one-sided non-commutativity defined above, and is identical to 
the notion of strong non-commutativity in~\cite{AGHKMV11}:

\begin{definition2}
\label{def-strongly-non-commutative}
Two operations $\op_1,\op_2 \in \opsf{\obj}$ are \emph{two-sided non-commutative in $\spec{\obj}$} 
if there exist history 
$\h_0 \in \cshistories{\obj}$, 
processes $\proc_1\neq \proc_2$, 
and response values $u_1 \neq v_1$ and $u_2 \neq v_2$ in $\retsf{\obj}$
such that:
\begin{enumerate*}[label=(\roman*)]
\item $\h_0 \cdot \invres{\proc_1}{\op_1}{u_1} \cdot \invres{\proc_2}{\op_2}{v_2} \in \spec{\obj}$; and
\item $\h_0 \cdot \invres{\proc_2}{\op_2}{u_2} \cdot \invres{\proc_1}{\op_1}{v_1} \in \spec{\obj}$.
\end{enumerate*}
\end{definition2}

\begin{example}
\label{ex:set}
Revisiting \cref{ex:LOO-intro}, in a standard 
set object $\seto$
the operations $\takeop(v)$ and $\takeop(v)$ (for any $v$) are strongly non-commutative.
Indeed, we can take
any $\proc_1\neq\proc_2$,
$\h_0=\invress{\proc}{\putop(v)}{}{30pt}$ (with any $\proc \in \Proc$),
$u_1=u_2=\True$, 
and $v_1=v_2=\False$,
and we have $\invress{\proc}{\putop(v)}{}{30pt} \cdot \invress{\proc_1}{\takeop(v)}{\True}{60pt} \cdot \invress{\proc_2}{\takeop(v)}{\False}{60pt} \in \spec{\seto}$
and $\invress{\proc}{\putop(v)}{}{30pt} \cdot \invress{\proc_2}{\takeop(v)}{\True}{60pt} \cdot \invress{\proc_1}{\takeop(v)}{\False}{60pt} \in \spec{\seto}$.
\end{example}

\begin{example}
\label{ex:cons}
Consider a consensus object $\conso$ with operations $\proposeop{0}$ and 
$\proposeop{1}$ and return values $\set{0,1}$.
Its specification $\spec{\conso}$ consists of all histories $\h \in \cshistories{\conso}$ such that 
every $\proposeop{v}$ invoked in $\h$ returns the same value, which is either $v$ or
the argument of one of the previously invoked $\proposeopp$ operations.
The operations $\proposeop{0}$ and $\proposeop{1}$ 
are two-sided non-commutative.
Indeed, for any $\proc_1\neq\proc_2$ and $\h_0=\emptyseq$, we have 
$\invress{\proc_1}{\proposeop{0}}{0}{50pt} \cdot \invress{\proc_2}{\proposeop{1}}{0}{50pt}\in \spec{\conso}$ and
$\invress{\proc_2}{\proposeop{1}}{1}{50pt} \cdot \invress{\proc_1}{\proposeop{0}}{1}{50pt}\in \spec{\conso}$.
\end{example}

Examples for other objects with consensus number $>1$, such as
swap, compare-and-swap, fetch-and-add, queues, stacks,
are constructed similarly.
\iffull
In~\cref{app:cn-two-sided}, 
\else
In~\cite{full}
\fi
we show that  deterministic objects with a pair of two-sided non-commutative 
operations must have consensus numbers $>1$.
(We conjecture that the converse also holds.)

We prove 
\iffull
in~\cref{sec:non-strong-impl}
\else
in~\cite{full}
\fi
that two-sided non-commutative operations imply non-weakly mergeability:

\begin{restatable}{lemma}{lemmastrongnoncommandmerge}
\label{lemma-strong-non-comm-and-merge}
Let $\obj$ be a deterministic object and $\op_1,\op_2\in \opsf{\obj}$ be 
two-sided non-commutative operations in $\spec{\obj}$. 
Then, there exist $\h_0 \in \cshistories{\obj}$,
processes $\proc_1 \neq \proc_2$ and response values $u_1,u_2\in\retsf{\obj}$
such that $\invres{\proc_1}{\op_1}{u_1}$ and $\invres{\proc_2}{\op_2}{u_2}$ 
are not weakly mergeable in $\spec{\obj}$ after $\h_0$. 
\end{restatable}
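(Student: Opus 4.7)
The plan is to reuse the witnesses from the definition of two-sided non-commutativity essentially verbatim. Fix $\h_0 \in \cshistories{\obj}$, distinct $\proc_1, \proc_2$, and values $u_1 \neq v_1$ and $u_2 \neq v_2$ such that both $\h_0 \cdot \invres{\proc_1}{\op_1}{u_1} \cdot \invres{\proc_2}{\op_2}{v_2}$ and $\h_0 \cdot \invres{\proc_2}{\op_2}{u_2} \cdot \invres{\proc_1}{\op_1}{v_1}$ lie in $\spec{\obj}$. Set $\h_1 \defeq \invres{\proc_1}{\op_1}{u_1}$ and $\h_2 \defeq \invres{\proc_2}{\op_2}{u_2}$. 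I claim that $\h_1$ and $\h_2$ witness non-weak-mergeability after $\h_0$.

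First, I would verify the hypotheses of \cref{def:mergeable_obj}: prefix-closure of $\spec{\obj}$ gives $\h_0 \cdot \h_1 \in \spec{\obj}$ and $\h_0 \cdot \h_2 \in \spec{\obj}$, and since these are already complete sequential, trivially $\h_0 \cdot \h_i \sqsubseteq \spec{\obj}$ for $i\in\{1,2\}$. So it remains to show that no $\h \in \h_1 \shuffle \h_2$ satisfies $\h_0 \cdot \h \sqsubseteq \spec{\obj}$.

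The next step is to classify the possible linearizations. Pick arbitrary $\h \in \h_1 \shuffle \h_2$ and any $\h' \in \reorder{\lin}{\h_0 \cdot \h} \cap \spec{\obj}$. Being in $\spec{\obj}$, $\h'$ is complete sequential, hence a concatenation of complete operations. The response-before-invocation clause of $\lin$ forces every event of $\h_0$ to precede the two new invocations in $\h'$, and sequentiality forces each new response to sit adjacent to its own invocation; together with the fact that $\h_0$ is itself complete sequential, this pins down $\h' = \h_0 \cdot \h''$ where $\h''$ is either $\invres{\proc_1}{\op_1}{u_1} \cdot \invres{\proc_2}{\op_2}{u_2}$ or $\invres{\proc_2}{\op_2}{u_2} \cdot \invres{\proc_1}{\op_1}{u_1}$.

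Finally, I would rule out both candidates using determinism. If $\h_0 \cdot \invres{\proc_1}{\op_1}{u_1} \cdot \invres{\proc_2}{\op_2}{u_2}$ were in $\spec{\obj}$, then combined with $\h_0 \cdot \invres{\proc_1}{\op_1}{u_1} \cdot \invres{\proc_2}{\op_2}{v_2} \in \spec{\obj}$ and $u_2 \neq v_2$, we obtain two specification histories whose longest common prefix ends with the invocation $\ev{\proc_2}{\invi{\op_2}}$, contradicting determinism. The symmetric argument, using $u_1 \neq v_1$, eliminates the other candidate. The only delicate point is justifying the two-case enumeration of $\h''$, but this is a routine consequence of the $\sproc$ and response/invocation clauses in the definition of $\lin$ combined with complete sequentiality; once it is in hand, determinism closes the proof immediately.
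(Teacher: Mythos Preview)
Your proof is correct and follows essentially the same route as the paper's own argument: take the witnesses from \cref{def-strongly-non-commutative}, use prefix-closure to verify the premises of \cref{def:mergeable_obj}, observe that any sequential linearization of $\h_0\cdot\h$ must place the two new operations in one of the two orders, and kill each order by determinism against the assumed spec histories. The only difference is that you spell out the justification for the two-case enumeration of $\h''$, which the paper simply asserts.
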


We now apply the merge theorem and the properties 
$\SCM^\textsf{w},\TSO^\textsf{w},\RA^\textsf{w}$ from \cref{tab:results} to obtain the lower bounds
of LOO under $\SCM$ along with impossibilities for $\TSO$ and $\RA$
\iffull
(see~\cref{sec:non-strong-impl} for the proof):
\else
(see~\cite{full} for the proof):
\fi
\begin{restatable}{theorem}{thmtwosidelower}
\label{thm:two-side-lower}
Let $\obj$ be a deterministic object with a pair of strongly non-commutative operations $\op_1,\op_2\in \opsf{\obj}$ 
in $\spec{\obj}$.
Let $\impl$ be a spec-available implementation of $\obj$ that is consistent under a memory model $\mm$.
Then, there exist $\proc_1 \in\Proc$ and $\pi_1\in \traces{\impl(\op_1, \proc_1)}$  
such that the following hold
for $\sigma_1=\restrictmem{\pi_1}$:
\begin{enumerate}[label=(\alph*)]
\item
if $\mm = \Atomic$, then 
$\sigma_1$ either has an RMW or is not RBW; and
\item
if $\mm \in \{\TSO,\RA\}$, then 
$\sigma_1$ either has  an RMW  
or is not LTF (\ie has a fence in the middle).
\end{enumerate}
\end{restatable}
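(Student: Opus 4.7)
The plan is to combine \cref{lemma-strong-non-comm-and-merge} with the Merge Theorem (\cref{thm:main}) and the weak-merge rows of \cref{tab:results}: two-sided non-commutativity forces non-weak-mergeability of the corresponding single-operation object histories, this is transferred to the memory level by \cref{thm:main}, and the claimed syntactic shape of $\sigma_1$ then follows by contraposing the appropriate weak-merge row.

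First I would unpack \cref{def-strongly-non-commutative} to obtain a common prefix $\h_0 \in \cshistories{\obj}$, distinct processes $\proc_1 \neq \proc_2$, and response values $u_1 \neq v_1$ and $u_2 \neq v_2$ with $\h_0 \cdot \invres{\proc_1}{\op_1}{u_1} \cdot \invres{\proc_2}{\op_2}{v_2} \in \spec{\obj}$ and $\h_0 \cdot \invres{\proc_2}{\op_2}{u_2} \cdot \invres{\proc_1}{\op_1}{v_1} \in \spec{\obj}$. Setting $\h_i \defeq \invres{\proc_i}{\op_i}{u_i}$, prefix-closedness of $\spec{\obj}$ gives $\h_0, \h_0 \cdot \h_1, \h_0 \cdot \h_2 \in \spec{\obj}$, and \cref{lemma-strong-non-comm-and-merge} yields non-weak-mergeability of $\h_1, \h_2$ in $\spec{\obj}$ after $\h_0$: every $\h \in \h_1 \shuffle \h_2$ admits only the two sequentializations $\h_0 \cdot \h_1 \cdot \h_2$ and $\h_0 \cdot \h_2 \cdot \h_1$, and by determinism together with two-sided non-commutativity neither lies in $\spec{\obj}$.

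Next I would set up the hypotheses of \cref{thm:main} via two applications of spec-availability: with empty prefix and target $\h_0$ to produce a trace $\pi_0 \in \traces{\impl}$ with $\restrictobj{\pi_0} = \h_0$ and $\sigma_0 \defeq \restrictmem{\pi_0} \in \traces{\Atomic}$; and with prefix $\h_0$ and target $\h_i$ (valid since $\h_0 \cdot \h_i \in \spec{\obj}$) to get that $\impl$ is available after $\h_0$ \wrt $\h_i$, so that $\seqh^i \defeq \h_i$ witnesses condition~(ii). Disjointness of processes is immediate, and invoking the weak variant of \cref{thm:main} produces traces $\pi_1, \pi_2$ with $\restrictobj{\pi_i} = \h_i$, $\procsf{\pi_i} = \set{\proc_i}$, $\sigma_0 \cdot \restrictmem{\pi_i} \in \traces{\Atomic}$, and, crucially, $\sigma_1 \defeq \restrictmem{\pi_1}$ and $\sigma_2 \defeq \restrictmem{\pi_2}$ not weakly mergeable in $\mm$.

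To finish, I would contrapose the appropriate weak-merge row of \cref{tab:results}. Assume for contradiction that $\sigma_1$ has no RMW event and in addition satisfies the pattern the theorem forbids (RBW for $\mm = \Atomic$, or LTF for $\mm \in \set{\TSO,\RA}$). Since $\procsf{\pi_1} = \set{\proc_1}$, $\sigma_1$ is solo. Then for $\mm = \TSO$ the trace $\sigma_1$ is solo-RWF-LTF, so property $\TSO^\textsf{w}$ weakly merges $\sigma_1$ with $\sigma_2$, contradicting the Merge Theorem; for $\mm = \RA$, $\sigma_1$ is RWF-LTF and $\RA^\textsf{w}$ gives the same contradiction. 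The subtle case is $\mm = \Atomic$, where $\SCM^\textsf{w}$ requires $\sigma_1$ to be RW while the theorem imposes no restriction on fences; this is handled using the no-op semantics of fences in $\Atomic$, by stripping all fences from $\sigma_1$ to obtain an RW-RBW trace $\sigma_1^-$, applying $\SCM^\textsf{w}$ to weakly merge $\sigma_1^-$ with $\sigma_2$, and reinserting the fences at their original positions inside the merged sequence; the result is a valid $\Atomic$ trace in $\sigma_1 \shuffle \sigma_2$, again contradicting non-mergeability. This fence handling is the main obstacle; every other step is a direct application of previously established results.
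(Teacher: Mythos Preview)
Your proposal is correct and follows essentially the same approach as the paper's proof: invoke \cref{lemma-strong-non-comm-and-merge} to obtain non-weak-mergeability at the object level, set up and apply \cref{thm:main} via spec-availability, then contrapose the weak-merge rows $\SCM^\textsf{w}$, $\TSO^\textsf{w}$, $\RA^\textsf{w}$ of \cref{tab:results}. Your explicit fence-stripping argument for the $\Atomic$ case is in fact more careful than the paper, which simply says ``the required follows'' without addressing the possibility that $\sigma_1$ contains fences.
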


Since deterministic objects with a pair of two-sided non-commutative 
operations have consensus numbers $>1$, their wait-free implementations must rely on RMWs~\cite{wait-free}.
We therefore consider their obstruction-free implementations to obtain upper bounds
in the absence of RMWs.\footnote{It is known that 
every deterministic object has read/write obstruction-free linearizable implementations in \SCM~\cite{obstruction-free}.}
\iffull
In~\cref{sec:two-sided},
\else
In~\cite{full},
\fi
we show that every object in this class has an 
obstruction-free implementation under $\TSO$ with a fence pattern optimal
\wrt our lower bounds in contention-free executions, \ie when a process
runs solo for long enough to complete its operation. For that,
we use a variant of a universal construction from~\cite{raynal17} instantiated on top of a
$\TSO$-based obstruction-free consensus algorithm. The latter is obtained
from shared memory Paxos~\cite{disk-paxos} using our fence-insertion strategy.

Finally, in~\cref{sec:mutex},
we derive lower bounds for mutual exclusion.
We define an object $\locko$ that can be implemented by means of an 
entry section of a mutual exclusion algorithm. We show that $\locko$
has a pair of non-weakly mergeable histories, and apply
the merge theorem to obtain the lower bounds of LOO for $\Atomic$
and their counterparts for $\TSO$ and $\RA$. 
A matching upper bound for $\TSO$ is obtained by adding a single
fence to the entry section of the Bakery algorithm~\cite{bakery}.

\subsection{Snapshot and Counter}
\label{sec:snapshot_and_counter}

We next explore implementability of snapshot~\cite{AADGMS93} ($\snapshoto$) and (non-monotone) counter ($\counto$). 
The former is known to be universal \wrt a large class of objects
implementable in read/write $\Atomic$~\cite{AH90}, and the latter has been studied
extensively as a building block for randomized 
consensus (\eg~\cite{AspnesH1990consensus,aspness-random2}). 

In \cref{sec:snap-count}, we revisit and formalize \cref{ex:snapshot-intro},
and obtain lower bounds on memory events and fence structure that must be exhibited
by any consistent and spec-available implementation of snapshot and counter
under the memory models we consider.
Specifically, we obtain the following for snapshot:

\begin{restatable}{theorem}{thmsnapweaktsoscmra}
\label{thm:snap-weak-tso-scm-ra}
Let $\impl$ be a spec-available implementation of $\snapshoto$ that is consistent under 
a memory model $\mm$.
Then, there exist $\proc,\proc'\in\Proc$,
$\pi_1\in \traces{\impl(\updateopp{w}, \proc)}$ for some
$i\in \{1..m\}$ and $w\in W$, 
and $\pi_2 \in \traces{\impl(\snapop, \proc')}$ such that
the following hold for $\sigma_1=\restrictmem{\pi_1}$ and
$\sigma_2=\restrictmem{\pi_2}$:
\begin{enumerate}[leftmargin=*,align=left, label=(\alph*)]
\item \label{thm:snap-weak-scm}
if $\mm = \Atomic$, then $\sigma_1 \cdot \sigma_2$ either has an RMW 
or is not RBW; 
\item \label{thm:snap-weak-tso}
if $\mm = \TSO$, then $\sigma_1 \cdot \sigma_2$ either has  an RMW  
or is not LTF (\ie has a fence in the middle); and
\item \label{thm:snap-weak-ra}
if $\mm = \RA$, then 
\begin{enumerate*}[label=(\roman*)]
\item 
either $\sigma_1$ or $\sigma_2$ has an RMW, or
\item 
either $\sigma_1$ or $\sigma_2$ is not LTF.
\end{enumerate*}
\end{enumerate}
\end{restatable}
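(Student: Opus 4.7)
The plan is to instantiate the Merge Theorem (\cref{thm:main}) with the histories of \cref{ex:snapshot-intro}, made precise as follows. Fix three distinct processes $\proc_1, \proc_2, \proc_3$, take $\pi_0 = \emptyseq$ (so $\h_0 = \sigma_0 = \emptyseq$), and set
\[
\h_1 \;=\; \invres{\proc_1}{\updateopp{1}}{} \cdot \invres{\proc_3}{\snapop}{\tup{1,\bot,\ldots,\bot}},
\qquad
\h_2 \;=\; \invres{\proc_2}{\updateopp{2}}{} \cdot \invres{\proc_2}{\snapop}{\tup{\bot,1,\bot,\ldots,\bot}}.
\]
These are complete, use disjoint process sets, and (being themselves sequential) lie in $\spec{\snapshoto}$, so by spec-availability of $\impl$ both are witnessed as executions after $\emptyseq$; this gives assumptions~\ref{item:1} and~\ref{item:2}.

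The key step is to verify condition~\ref{item:3}, i.e.\ that $\h_1$ and $\h_2$ are not weakly mergeable in $\spec{\snapshoto}$ after $\emptyseq$. For any interleaving $\h \in \h_1 \shuffle \h_2$ and any linearization $\seqh\in\reorder{\lin}{\h}$, process order forces $\proc_2$'s update to precede its own scan. Since $\proc_2$'s scan returns $\tup{\bot,1,\bot,\ldots}$, $\proc_1$'s update must come \emph{after} $\proc_2$'s scan in $\seqh$. Similarly, $\proc_3$'s scan returns $\tup{1,\bot,\ldots}$, so $\proc_1$'s update must come \emph{before} $\proc_3$'s scan, while $\proc_2$'s update must come \emph{after} $\proc_3$'s scan. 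Chaining these constraints yields the cycle $\proc_1$-update $<$ $\proc_3$-scan $<$ $\proc_2$-update $<$ $\proc_2$-scan $<$ $\proc_1$-update, contradicting that $\seqh$ is a total order; hence $\h_0 \cdot \h \not\sqsubseteq \spec{\snapshoto}$.

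Applying \cref{thm:main} yields traces $\pi_1, \pi_2$ with $\restrictobj{\pi_i}=\h_i$ whose memory projections $\bar\sigma_1, \bar\sigma_2$ are not weakly mergeable in $\mm$. To produce the single-operation traces required by the theorem statement, I would slice $\bar\sigma_1$ along operation boundaries of $\sys{\impl}$: the sub-trace of $\bar\sigma_1$ generated while $\proc_1$ was executing $\updateopp{1}$ lies in $\traces{\impl(\updateopp{1},\proc_1)}$ and its $\proc_3$-sub-trace lies in $\traces{\impl(\snapop,\proc_3)}$; $\bar\sigma_2$ already splits cleanly into an update-trace and scan-trace for $\proc_2$. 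I then set $\sigma_1$ to be the update-trace and $\sigma_2$ the scan-trace (choosing which process contributes which so that the non-mergeability passes to this pair, using that the four sub-traces reassemble via process restrictions to $\bar\sigma_1, \bar\sigma_2$).

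The final step is to push non-weak-mergeability of $\sigma_1,\sigma_2$ through the contrapositives of \cref{tab:results}. For~\ref{thm:snap-weak-scm}, if $\sigma_1\cdot\sigma_2$ were RW and RBW, then $\sigma_1$ (a prefix-like restriction) would be RW-RBW, and $\SCM^\textsf{w}$ would give weak mergeability, a contradiction; hence $\sigma_1\cdot\sigma_2$ contains an RMW or is not RBW. The argument for~\ref{thm:snap-weak-tso} is analogous using $\TSO^\textsf{w}$ (exploiting that the scan-trace $\sigma_2$ is already solo), and~\ref{thm:snap-weak-ra} uses $\RA^\textsf{w}$: if both $\sigma_1$ and $\sigma_2$ were RWF and LTF, one of them (together with the other) would satisfy the premise of $\RA^\textsf{w}$, again giving weak mergeability. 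The main obstacle is the bookkeeping of the slicing step: one must ensure that the chosen update/scan sub-traces inherit non-mergeability from the full traces, which amounts to checking that merging sub-traces would extend to a merge of the full observable memory traces through the memory model's prefix/suffix closure properties used in proving \cref{tab:results}.
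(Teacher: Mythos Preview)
Your setup and the non-mergeability argument are fine (modulo the typo: if $\proc_2$ calls $\updateopp{2}$ its scan should return $2$ in slot~$2$, not~$1$).  The real issue is the ``slicing'' step, which is not mere bookkeeping but hides a genuine gap, especially for the $\RA$ case.

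The Merge Theorem gives you that $\bar\sigma_1=\restrictmem{\pi_1}$ and $\bar\sigma_2=\restrictmem{\pi_2}$ are not weakly mergeable in $\mm$.  It says nothing about mergeability of single-operation sub-traces, so your contrapositive must be applied to $\bar\sigma_1$ or $\bar\sigma_2$, not to extracted pieces.  For \SCM/\TSO this is salvageable because your $\h_2$ is \emph{solo}: then $\bar\sigma_2$ is literally the concatenation of $\proc_2$'s update trace and $\proc_2$'s scan trace, and $\SCM^\textsf{w}$/$\TSO^\textsf{w}$ applied to $\bar\sigma_2$ directly yields~(a) and~(b).  (This is exactly how the paper argues, except that it chooses \emph{both} histories to be solo for the \SCM/\TSO case, making the point explicit.)  Your discussion of ``slicing $\bar\sigma_1$'' is a red herring here.

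For $\RA$ your argument breaks.  With your \emph{sequential} $\h_1$, the trace $\pi_1$ is forced to have all of $\proc_1$'s events before all of $\proc_3$'s, so $\bar\sigma_1=\sigma_{\text{upd}}^{\proc_1}\cdot\sigma_{\text{scan}}^{\proc_3}$.  Now from $\RA^\textsf{w}$ you only learn that $\bar\sigma_1$ is not RWF-LTF, but this does \emph{not} imply either component fails LTF: e.g.\ $\sigma_{\text{upd}}=\tup{\Fence,\Write,\Fence}$ and $\sigma_{\text{scan}}=\tup{\Fence,\Read,\Fence}$ are each LTF, yet their concatenation is not.  The paper fixes this with two ingredients you are missing: (i) it takes $\h_1$ to be the \emph{concurrent} history where $\proc_1$'s update and $\proc_3$'s scan overlap, and (ii) it exploits the $\reorder{\sproc}{\cdot}$ freedom in clause~\ref{item:main5} of \cref{thm:main} to reorder $\pi_1$ so that all per-process leading fences are pushed to the very front and all per-process trailing fences to the very back (fences are no-ops in $\Atomic$, so the $\Atomic$-trace constraint is preserved).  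After this reordering, if both operations were individually RWF-LTF then the reordered $\restrictmem{\pi_1'}$ would be globally RWF-LTF, and $\RA^\textsf{w}$ would give weak mergeability---contradiction.  Hence one of the two operations must contain an RMW or a non-leading/trailing fence.  This reordering trick is the key idea for~(c), and it requires the overlapping $\h_1$; with your sequential $\h_1$ the constraint $\restrictobj{\pi_1'}=\h_1$ blocks exactly the rearrangement you need.
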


We show that a similar lower bounds holds for $\counto$ for
$\pi_1\in \traces{\impl(\op, \proc)}$ with $\op \in \{\inco, \deco\}$ and
$\pi_2 \in \traces{\impl(\reado, \proc')}$. 

The wait-free linearizable implementations of both $\snapshoto$ and $\counto$ under 
$\Atomic$ are well-known~\cite{AADGMS93,AH90}. The implementation of 
$\updateop$ operation uses collect followed by a write, and the 
implementation of $\snapop$ uses a sequence of three collects. 
Counter can be implemented on top of a snapshot using a single call to
$\updateop$ to implement increment and decrement, and a single call
to $\snapop$ to implement read.
Both implementations exhibit a single read-after-write
across a consecutive pair of update and read, and are therefore
optimal \wrt to the above lower bounds.

To obtain optimal upper bounds for $\TSO$, the above algorithms
are modified using the fence-insertion strategy discussed above 
that inserts a fence at the end of $\updateop$ for snapshot,
and at the end of the increment and decrement for counter.
Optimal implementations under $\RA$ are left for future work.

\paragraph{Max-register vs.~snapshot and counter.}
Our analysis yields the first sharp separation between max-register
on the one hand and snapshot and counter on the other in terms of their 
implementability under $\RA$ using only reads, writes, and fences. 
Specifically, as we show above, 
max-register can be implemented under $\RA$ using fences only at the 
beginning and the end of $\reado$ and $\writeoo$. 
On the other hand, our lower bounds for snapshot and counter 
show that this fence placement 
is insufficient to correctly
implement these objects under $\RA$. We are unaware of prior 
results separating these objects. In particular,
all of them are equivalent \wrt their power to solve consensus 
under $\Atomic$~\cite{wait-free}.


\section{Related Work}
\label{sec:related}

Our mergeability approach is inspired by the work of Kawash~\cite{KawashThesis},
who showed that, without fences and RMWs, the critical section problem, as well as certain producer/consumer
  coordination problems, cannot be solved in a variety of weak memory
  models that were studied at that time (including TSO).
However, while Kawash considers specific tasks, 
we derive a general result
by relating mergeability of traces in the underlying memory model to mergeability at the level of the implemented
object histories.\footnote{We also note that Kawash's merge strategy for TSO traces is unnecessarily complex, while our proofs
directly exploit the local store buffers for avoiding inter-thread communication.} 
Moreover, we also use different mergeability properties to differentiate between weak memory models.

We have already discussed how the results of \cite{AGHKMV11},
which were based on a covering technique~\cite{BurnsL93}, 
are obtained by a simpler merge-based argument.
The main advantage of our approach is its applicability beyond
``strongly non-commutative operations'' (see \cref{lemma-strong-non-comm-and-merge}),
as well as the fact that we directly handle weak memory models, which is only implicit in \cite{AGHKMV11}.
In addition, \cite{AGHKMV11} is restricted to deterministic objects and implementations,
while our merge theorem avoids these assumptions by stating more precise availability requirements.

Through consensus numbers, Herlihy~\cite{wait-free} already showed that for some of the objects we consider here, such as
sets, queues and stacks, RMW operations are required in any lock-free linearizable implementation.
This result does not have any implication for obstruction-freedom.
In fact, for every object,  there is a read/write obstruction-free linearizable implementation under \Atomic
(as consensus is universal and read/write obstruction-free solvable~\cite{wait-free, obstruction-free}).
Due to our results, if the object has non-weakly mergeable operations,
any such implementation cannot be RBW.

For several objects such as snapshot, counter, max register, work stealing and even relaxations of queues, stacks, and data sketches,
there have been proposed lock-free or wait-free read/write linearizable (or variants of it) RBW implementations 
under \Atomic~\cite{AH90, AADGMS93, AAC12, CP21, CRR23, RK20}. 
None of these works relate the possibility of such implementations with mergeability properties of the objects implemented.
Morrison and Afek~\cite{MA14} show how memory fences can be eliminated on \TSO
in the implementation of work stealing by assuming that the store buffers are 
bounded in size, and using this bound in the thief implementation to guarantee that a write is propagated to main memory 
after a number of subsequent writes.
In contrast, the store buffers in the \TSO model we study are unbounded, 
and hence their implementation is not considered linearizable.

In the weak memory literature, some works studied \emph{robustness} of concurrent implementations
under \TSO and \RA, where a robust implementation cannot have any non-\SCM 
behaviors~\cite{tso-robustness,tso-robustness2,Lahav:pldi19,Bouajjani18,Margalit21}.
We note, however, that robustness does not entail that linearizability under \SCM is transferred
to linearizability under \TSO or \RA (a register implementation that uses one shared variable
is robust, but fences are needed to ensure linearizability under \TSO and \RA).
This is different from the fence-insertion strategy 
\iffull
in~\cref{coro-from-scm-to-tso} 
\else
in~\cref{sec:app_weak}
\fi
that transfers linearizability under \Atomic to linearizability under \TSO.
Other works studied alternatives to linearizability for \TSO and \RA~\cite{Burckhardt_2012_library_tso,SinghL23,Raad2019:libraries},
whereas we take standard linearizability as a correctness criterion.


\bibliography{biblio}

\begin{thebibliography}{10}

\bibitem{AADGMS93}
Yehuda Afek, Hagit Attiya, Danny Dolev, Eli Gafni, Michael Merritt, and Nir
  Shavit.
\newblock Atomic snapshots of shared memory.
\newblock {\em J. {ACM}}, 40(4):873--890, 1993.
\newblock \href {https://doi.org/10.1145/153724.153741}
  {\path{doi:10.1145/153724.153741}}.

\bibitem{aspness-random2}
James Aspnes.
\newblock Time-and space-efficient randomized consensus.
\newblock In {\em PODC}, page 325–331, New York, NY, USA, 1990. ACM.
\newblock \href {https://doi.org/10.1145/93385.93433}
  {\path{doi:10.1145/93385.93433}}.

\bibitem{AAC12}
James Aspnes, Hagit Attiya, and Keren Censor{-}Hillel.
\newblock Polylogarithmic concurrent data structures from monotone circuits.
\newblock {\em J. {ACM}}, 59(1):2:1--2:24, 2012.
\newblock \href {https://doi.org/10.1145/2108242.2108244}
  {\path{doi:10.1145/2108242.2108244}}.

\bibitem{AspnesH1990consensus}
James Aspnes and Maurice Herlihy.
\newblock Fast randomized consensus using shared memory.
\newblock {\em Journal of Algorithms}, 11(3):441--461, 1990.
\newblock \href {https://doi.org/https://doi.org/10.1016/0196-6774(90)90021-6}
  {\path{doi:https://doi.org/10.1016/0196-6774(90)90021-6}}.

\bibitem{AH90}
James Aspnes and Maurice Herlihy.
\newblock Wait-free data structures in the asynchronous {PRAM} model.
\newblock In {\em SPAA}, pages 340--349. {ACM}, 1990.
\newblock \href {https://doi.org/10.1145/97444.97701}
  {\path{doi:10.1145/97444.97701}}.

\bibitem{AGHKMV11}
Hagit Attiya, Rachid Guerraoui, Danny Hendler, Petr Kuznetsov, Maged~M.
  Michael, and Martin~T. Vechev.
\newblock Laws of order: expensive synchronization in concurrent algorithms
  cannot be eliminated.
\newblock In {\em POPL}, pages 487--498. {ACM}, 2011.
\newblock \href {https://doi.org/10.1145/1926385.1926442}
  {\path{doi:10.1145/1926385.1926442}}.

\bibitem{BaigHMT23}
Mirza~Ahad Baig, Danny Hendler, Alessia Milani, and Corentin Travers.
\newblock Long-lived counters with polylogarithmic amortized step complexity.
\newblock {\em Distributed Comput.}, 36(1):29--43, 2023.
\newblock URL: \url{https://doi.org/10.1007/s00446-022-00439-5}, \href
  {https://doi.org/10.1007/S00446-022-00439-5}
  {\path{doi:10.1007/S00446-022-00439-5}}.

\bibitem{Batty:2011}
Mark Batty, Scott Owens, Susmit Sarkar, Peter Sewell, and Tjark Weber.
\newblock Mathematizing {C++} concurrency.
\newblock In {\em POPL}, pages 55--66, New York, NY, USA, 2011. ACM.
\newblock URL: \url{http://doi.acm.org/10.1145/1926385.1926394}, \href
  {https://doi.org/10.1145/1926385.1926394}
  {\path{doi:10.1145/1926385.1926394}}.

\bibitem{tso-robustness2}
Ahmed Bouajjani, Egor Derevenetc, and Roland Meyer.
\newblock Checking and enforcing robustness against {TSO}.
\newblock In {\em ESOP}, volume 7792 of {\em LNCS}, pages 533--553. Springer,
  2013.

\bibitem{Bouajjani18}
Ahmed Bouajjani, Constantin Enea, Suha~Orhun Mutluergil, and Serdar Tasiran.
\newblock Reasoning about {TSO} programs using reduction and abstraction.
\newblock In {\em CAV}, pages 336--353, Cham, 2018. Springer International
  Publishing.
\newblock \href {https://doi.org/10.1007/978-3-319-96142-2_21}
  {\path{doi:10.1007/978-3-319-96142-2_21}}.

\bibitem{tso-robustness}
Ahmed Bouajjani, Roland Meyer, and Eike M{\"o}hlmann.
\newblock Deciding robustness against total store ordering.
\newblock In {\em ICALP (2)}, pages 428--440, 2011.

\bibitem{Burckhardt_2012_library_tso}
Sebastian Burckhardt, Alexey Gotsman, Madanlal Musuvathi, and Hongseok Yang.
\newblock Concurrent library correctness on the {TSO} memory model.
\newblock In {\em ESOP}, pages 87--107, Berlin, Heidelberg, 2012. Springer.

\bibitem{BurnsL93}
James~E. Burns and Nancy~A. Lynch.
\newblock Bounds on shared memory for mutual exclusion.
\newblock {\em Inf. Comput.}, 107(2):171--184, 1993.
\newblock URL: \url{https://doi.org/10.1006/inco.1993.1065}, \href
  {https://doi.org/10.1006/INCO.1993.1065} {\path{doi:10.1006/INCO.1993.1065}}.

\bibitem{full}
Armando Casta{\~{n}}eda, Gregory Chockler, Brijesh Dongol, and Ori Lahav.
\newblock What cannot be implemented on weak memory?
\newblock {\em CoRR}, abs/2405.16611, 2024.
\newblock URL: \url{https://doi.org/10.48550/arXiv.2405.16611}, \href
  {http://arxiv.org/abs/2405.16611} {\path{arXiv:2405.16611}}, \href
  {https://doi.org/10.48550/ARXIV.2405.16611}
  {\path{doi:10.48550/ARXIV.2405.16611}}.

\bibitem{CP21}
Armando Casta{\~{n}}eda and Miguel Pi{\~{n}}a.
\newblock Read/write fence-free work-stealing with multiplicity.
\newblock {\em J. Parallel Distributed Comput.}, 186:104816, 2024.
\newblock URL: \url{https://doi.org/10.1016/j.jpdc.2023.104816}, \href
  {https://doi.org/10.1016/J.JPDC.2023.104816}
  {\path{doi:10.1016/J.JPDC.2023.104816}}.

\bibitem{CRR23}
Armando Casta{\~{n}}eda, Sergio Rajsbaum, and Michel Raynal.
\newblock Set-linearizable implementations from read/write operations: Sets,
  fetch {\&}increment, stacks and queues with multiplicity.
\newblock {\em Distributed Comput.}, 36(2):89--106, 2023.
\newblock \href {https://doi.org/10.1007/s00446-022-00440-y}
  {\path{doi:10.1007/s00446-022-00440-y}}.

\bibitem{ewd123}
Edsger~W. Dijkstra.
\newblock {EWD123: Cooperating Sequential Processes}.
\newblock Technical report, 1965.
\newblock URL:
  \url{http://www.cs.utexas.edu/~EWD/transcriptions/EWD01xx/EWD123.html}.

\bibitem{disk-paxos}
Eli Gafni and Leslie Lamport.
\newblock Disk paxos.
\newblock {\em Distrib. Comput.}, 16(1):1–20, feb 2003.

\bibitem{GilbertL02}
Seth Gilbert and Nancy~A. Lynch.
\newblock Brewer's conjecture and the feasibility of consistent, available,
  partition-tolerant web services.
\newblock {\em {SIGACT} News}, 33(2):51--59, 2002.
\newblock \href {https://doi.org/10.1145/564585.564601}
  {\path{doi:10.1145/564585.564601}}.

\bibitem{wait-free}
Maurice Herlihy.
\newblock Wait-free synchronization.
\newblock {\em ACM Trans. Program. Lang. Syst.}, 13(1):124--149, 1991.

\bibitem{obstruction-free}
Maurice Herlihy, Victor Luchangco, and Mark Moir.
\newblock Obstruction-free synchronization: Double-ended queues as an example.
\newblock In {\em ICDCS}, pages 522--529, 2003.

\bibitem{HW90}
Maurice Herlihy and Jeannette~M. Wing.
\newblock Linearizability: {A} correctness condition for concurrent objects.
\newblock {\em {ACM} Trans. Program. Lang. Syst.}, 12(3):463--492, 1990.
\newblock \href {https://doi.org/10.1145/78969.78972}
  {\path{doi:10.1145/78969.78972}}.

\bibitem{guide2011intel}
Intel.
\newblock Intel{\textregistered} 64 and {IA}-32 architectures software
  developer’s manual.
\newblock {\em Volume 3B: system programming guide, Part}, 2(11):1--64, 2011.

\bibitem{promising}
Jeehoon Kang, Chung-Kil Hur, Ori Lahav, Viktor Vafeiadis, and Derek Dreyer.
\newblock A promising semantics for relaxed-memory concurrency.
\newblock In {\em POPL}, pages 175--189, New York, NY, USA, 2017. ACM.
\newblock \href {https://doi.org/10.1145/3009837.3009850}
  {\path{doi:10.1145/3009837.3009850}}.

\bibitem{KawashThesis}
J.~Y. Kawash.
\newblock {\em Limitation and capabilities of weak memory consistency systems}.
\newblock PhD thesis, University of Calgary, 2000.
\newblock doi:10.11575/PRISM/19939.

\bibitem{sra}
Ori Lahav, Nick Giannarakis, and Viktor Vafeiadis.
\newblock Taming release-acquire consistency.
\newblock In {\em POPL}, pages 649--662, New York, NY, USA, 2016. ACM.
\newblock URL: \url{http://doi.acm.org/10.1145/2837614.2837643}, \href
  {https://doi.org/10.1145/2837614.2837643}
  {\path{doi:10.1145/2837614.2837643}}.

\bibitem{Lahav:pldi19}
Ori Lahav and Roy Margalit.
\newblock Robustness against release/acquire semantics.
\newblock In {\em PLDI}, pages 126--141, New York, NY, USA, 2019. ACM.
\newblock URL: \url{http://doi.acm.org/10.1145/3314221.3314604}, \href
  {https://doi.org/10.1145/3314221.3314604}
  {\path{doi:10.1145/3314221.3314604}}.

\bibitem{Lahav21_liveness}
Ori Lahav, Egor Namakonov, Jonas Oberhauser, Anton Podkopaev, and Viktor
  Vafeiadis.
\newblock Making weak memory models fair.
\newblock {\em Proc. ACM Program. Lang.}, 5(OOPSLA), oct 2021.
\newblock \href {https://doi.org/10.1145/3485475} {\path{doi:10.1145/3485475}}.

\bibitem{bakery}
Leslie Lamport.
\newblock A new solution of {D}ijkstra's concurrent programming problem.
\newblock {\em Commun. ACM}, 17(8):453–455, 1974.

\bibitem{Margalit21}
Roy Margalit and Ori Lahav.
\newblock Verifying observational robustness against a {C11}-style memory
  model.
\newblock {\em Proc. ACM Program. Lang.}, 5(POPL), jan 2021.
\newblock \href {https://doi.org/10.1145/3434285} {\path{doi:10.1145/3434285}}.

\bibitem{MA14}
Adam Morrison and Yehuda Afek.
\newblock Fence-free work stealing on bounded {TSO} processors.
\newblock In {\em ASPLOS}, pages 413--426. {ACM}, 2014.
\newblock \href {https://doi.org/10.1145/2541940.2541987}
  {\path{doi:10.1145/2541940.2541987}}.

\bibitem{x86-tso}
Scott Owens, Susmit Sarkar, and Peter Sewell.
\newblock A better {x86} memory model: {x86-TSO}.
\newblock In {\em TPHOLs}, pages 391--407, Berlin, Heidelberg, 2009. Springer.
\newblock URL: \url{http://dx.doi.org/10.1007/978-3-642-03359-9_27}, \href
  {https://doi.org/10.1007/978-3-642-03359-9_27}
  {\path{doi:10.1007/978-3-642-03359-9_27}}.

\bibitem{Raad2019:libraries}
Azalea Raad, Marko Doko, Lovro Ro\v{z}i\'{c}, Ori Lahav, and Viktor Vafeiadis.
\newblock On library correctness under weak memory consistency: Specifying and
  verifying concurrent libraries under declarative consistency models.
\newblock {\em Proc. ACM Program. Lang.}, 3(POPL), January 2019.
\newblock \href {https://doi.org/10.1145/3290381} {\path{doi:10.1145/3290381}}.

\bibitem{raynal17}
Michel Raynal.
\newblock Distributed universal construcitons: a guided tour.
\newblock {\em Bulleting of EATCS: Distributed Computing Column}, (121), 2011.

\bibitem{RK20}
Arik Rinberg and Idit Keidar.
\newblock Intermediate value linearizability: {A} quantitative correctness
  criterion.
\newblock In {\em DISC}, volume 179 of {\em LIPIcs}, pages 2:1--2:17. Schloss
  Dagstuhl - Leibniz-Zentrum f{\"{u}}r Informatik, 2020.
\newblock \href {https://doi.org/10.4230/LIPIcs.DISC.2020.2}
  {\path{doi:10.4230/LIPIcs.DISC.2020.2}}.

\bibitem{SinghL23}
Abhishek~Kr Singh and Ori Lahav.
\newblock An operational approach to library abstraction under relaxed memory
  concurrency.
\newblock {\em Proc. {ACM} Program. Lang.}, 7({POPL}):1542--1572, 2023.
\newblock \href {https://doi.org/10.1145/3571246} {\path{doi:10.1145/3571246}}.

\bibitem{sparc-tso}
{SPARC International Inc.}
\newblock {\em The {SPARC} architecture manual (version 9)}.
\newblock Prentice-Hall, 1994.

\end{thebibliography}

\clearpage
\appendix

\iffull
 \newcommand{\rafig}{
\begin{figure*} 
\begin{mathparpagebreakable}
  \inferrule[write]
  {\evt=\ev{\proc}{\Write(\var,v)} \\
    \pv(\proc)(\var) < t \\ t \nin \timef{\restrict{\mra}{\var}} \\\\
    \view'=\pv(\proc)[\var \mapsto t] \\
    \msg = \tup{\var,v,t,\view'}
  }
  {\tup{\mra, \pv, \fv} 
    \xrightarrow{\evt}
    \tup{\mra \cup \set{\msg}, \pv[\proc \mapsto \view'], \fv}
  }
    \and
      \inferrule[read]
      {\evt=\ev{\proc}{\Read(\var,v)} \\\\
    \tup{\var,v,\pv(\proc)(\var),\_} \in \mra
  }
  {\tup{\mra, \pv, \fv}
    \xrightarrow{\evt}
    \tup{\mra, \pv, \fv}
  }
    \and
    \inferrule[fence]
  {\evt = \ev{\proc}{\Fence} \\\\ 
    \view' = \pv(\proc) \sqcup \fv
  }
  {\tup{\mra, \pv, \fv}
    \xrightarrow{\evt} \tup{\mra, \pv[\proc \mapsto \view'], \view'}
  }
\and
    \inferrule[rmw]
  {\evt=\ev{\proc}{\RMW(\var,\vexp,\vnew)} 
    \\\\
        \tup{\var,\vexp,\pv(\proc)(\var),\_} \in \mra \\\\ t = \pv(\proc)(\var) + 1 \\
    t \nin \timef{\restrict{\mra}{\var}} \\\\
    \view'=\pv(\proc)[\var \mapsto t] \\
    \msg = \tup{\var,\vnew,t,\view'}
  }
  {\tup{\mra, \pv, \fv}
    \xrightarrow{\evt}
    \tup{\mra \cup \set{\msg}, \pv[\proc \mapsto \view'],  \fv}
  }
  \and
    \inferrule[propagate]
{   \tup{\var,\_,t,\view} \in \mra \\
	  \pv(\proc)(\var) < t  \\
    \view' = \pv(\proc) \sqcup \view
  }
  {\tup{\mra, \pv, \fv}
    \xrightarrow{\tau} 
    \tup{\mra, \pv[\proc \mapsto \view'], \fv}
  }
\end{mathparpagebreakable}
\vspace{-10pt}
\caption{Transitions of the $\RA$ model}
\label{fig:ra_steps}
\end{figure*}
}

  \section{Memory Models}
\label{sec:memory-models}

\subsection{Strongly Consistent Memory Model}

In the \emph{strongly consistent} memory model, denoted $\Atomic$,
every read returns the value written by the last previously executed 
write to the same variable, or $v_\init$ if no such write exists;
RMW is an atomic sequencing of a read and a write;
and fence is a no-op.

Formally, $\Atomic$ is the LTS whose states are mappings
$m \in\Var \to \Val$; initial state is
$\init(\Atomic)\defeq\lambda \var \ldotp v_\init$; and transitions are
given as follows, where $m[\var\mapsto v]$ denotes functional override
(returning $v$ for $\var$ and $m(y)$ for every $y\neq x$).
\begin{mathpar}
  \inferrule[write]{\actionf{\evt}= \Write(\var,v)}
  {m \xrightarrow{\evt} m[\var\mapsto v]} 
\and
\inferrule[read]{\actionf{\evt}= 
  \Read(\var,m(\var))
}{m \xrightarrow{\evt} m} 
\and
\inferrule[rmw]{\actionf{\evt}= 
  \RMW(\var,\vexp,\vnew)
  \\\\ m(\var)=\vexp}{m \xrightarrow{\evt} m[\var\mapsto \vnew]} 
\and
\inferrule[fence]{\actionf{\evt}=\Fence}{m \xrightarrow{\evt} m} 
\end{mathpar}
\noindent Note that $\Atomic$ has no internal memory steps, and hence $\traces{\Atomic} = \otraces{\Atomic}$.

\subsection{Release-Acquire Memory Model}
\label{sec:ra}

The \emph{release-acquire} memory model, denoted $\RA$~\cite{sra}, is
a fragment of the C/C++11 memory model~\cite{Batty:2011}.  
We present an operational model~\cite{promising} augmented with
explicit non-deterministic propagation steps
following~Lahav et al~\cite{Lahav21_liveness}.

In this model the memory $\mra$ collects \emph{all} previously executed
writes stored as a set of \emph{messages}. 
Each message is a tuple of the form
$\msg =\tup{\var,v,t,\view}$ where:
$\var\in\Var$ is the variable of the message,
$v\in\Val$ is the written value,
$t\in\N$ is the \emph{timestamp} of the message,
and $\view \in \View \defeq \Var \to \N$ is the \emph{view} of the message.
Timestamps are used to order messages of the same variable. 
This order does not necessarily match the execution order. 
That is, when a process
writes to memory, a message may be added with a timestamp earlier than some
existing messages.  
Views are used to ensure causal consistency:
the machine states maintain a mapping $\pv : \Proc \to \View$ that assigns for 
every process $\proc$ a view recording the largest timestamp of a message 
that process $\proc$ has observed for every shared variable.
In turn, the view in every message $\msg$ is the view 
of the process that added the message to the memory at the time it added it.

When process $\proc$ writes to a variable $\var$ it adds a corresponding message $\msg$ to $\mra$ 
with some available timestamp $t$ that is larger than the timestamp of the latest message to $\var$ 
that was observed by $\proc$, as recorded in $\pv(\proc)(\var)$.
Then, the process $\proc$ updates its view to include the timestamp $t$
and records the updated view in the added message $\msg$.

To model inter-process communication, processes may
non-deterministically advance their views in silent internal memory
steps.  To do so, a process $\proc$ picks from the memory $\mra$ some
message $\msg$ of variable $\var$ whose timestamp is larger than the
the latest timestamp of messages to $\var$ that was observed by
$\proc$, as recorded in $\pv(\proc)(\var)$.  Then, the process $\proc$
inherits into its own view the view of $\msg$.  For the formal
definition, we let
$\view_1 \sqcup \view_2 \defeq \lambda \var \ldotp
\max\set{\view_1(\var),\view_2(\var)}$.  With this definition, the
process view after propagating its view using message $\msg$ with view
$\view$ is set to $\pv(\proc) \sqcup \view$.

Read steps are simple: when a process $\proc$ reads a variable $\var$ it retrieves the value
stored in the message of $\var$ currently viewed by $\proc$, \ie the message of $\var$ whose timestamp is $\pv(\proc)(\var)$.

Compare-and-swap operations can be thought of as an atomic combination
of a read and a write.  To ensure their atomicity, the message $\msg$
added by a successful RMW must be the immediate successor (in
timestamp order) of the message from which the RMW read its value.
This is enforced by using integer timestamps, and requiring that the
timestamp of the added message is the successor of the timestamp of
the read message in successful RMWs.  Namely, no message later added
to $\mra$ can have timestamp between the read message and the message
added by a successful RMW.

Finally, fences are supported using an additional view $\fv \in \View$
in the machine state.  When a fence is executed by process $\proc$ the
process inherits $\fv$ into its own view (setting its new view to
$\view' = \pv(\proc) \sqcup \fv$), and the new view is also stored in
$\fv$.

$\RA$'s states are tuples $\tup{\mra,\pv,\fv}$, with the initial state
$\init(\RA) \defeq \tup{\mra_\init,\pv_\init,\fv_\init}$,
where
$\mra_\init \defeq \set{\tup{\var,v_\init,0,\lambda \var \ldotp 0} \st
  \var\in\Var}$,
$\pv_\init \defeq \lambda \proc \ldotp \lambda \var \ldotp 0$, and
$\fv_\init \defeq \lambda \var \ldotp 0$.  The formal transitions are
given in \cref{fig:ra_steps}.  We use the functions $\varf{\msg}$,
$\valmf{\msg}$, $\timef{\msg}$, $\viewf{\msg}$ to retrieve the
variable ($\var$), value ($v$), timestamp ($t$), and view ($\view$) of
a message $\msg$.  We lift $\timef{\msg}$ to sets of messages
pointwise (\ie $\timef{S} \defeq \set{\timef{\msg} \st \msg \in S}$),
and
write $\restrict{\mra}{\var}$ for the set of all messages $\msg\in\mra$ with $\varf{\msg}=\var$.
\rafig

\subsection{Litmus Tests Revisited}
\label{sec:relat-strengths-memo}

It is straightforward to see how our operational models exhibits the
intended outcomes. First consider SB. Under \TSO the write to $x$
may be cached in the local buffers of $\cproc_1$. 
When $\cproc_2$ reads from $x$, it
reads from shared memory, which still contains the original value $0$.
This behavior is impossible under \Atomic, since
the writes occur immediately in the shared memory. Under \RA, 
it is possible for process $\cproc_1$ (resp.,
$\cproc_2$) to read a ``stale'' value for $y$ (resp., $x$) since the
read may occur before the view $\pv(\cproc_1)$ (resp.,
$\pv(\cproc_2)$) is updated.

Now consider IRIW. Under both \Atomic and \TSO, when updates to $x$
and $y$ occur in shared memory, they are seen by both $\cproc_3$ and
$\cproc_4$, meaning that if $a = c = 1$, then either $b = 1$ or $d=1$
(or both). However, under \RA, the following execution is
possible:
\begin{enumerate*}[label=\bfseries(\roman*)]
\item Both writes to $x$ and $y$ are executed.
\item Process $\proc_2$ updates its view of $x$ (via a propagation
  step), then performs its two reads, resulting in $a=1$ and $b=0$.
\item Process $\proc_3$ updates its view of $y$ (via a
  propagation step), then performs its two reads, resulting in $c=1$ and
  $d=0$.
\end{enumerate*}

Consider MP. Causal consistency for \Atomic is immediate, and
for \TSO is due to the FIFO ordering of propagation from local
buffers. Under \RA, the reasoning is a bit more subtle. If $a=1$, then
the following sequence of events must have occurred:
\begin{enumerate*}[label=\bfseries(\roman*)]
\item both writes in $\cproc_1$,
\item a propagation step that updates $\pv(\cproc_2)(y)$, \label{RA:step2}
\item the first read in $\cproc_2$.
\end{enumerate*}
Notice that after the write to $x$, a new message
$m_x \defeq \tup{x, 1, t_x, \_}$ becomes available, and
$\pv(\cproc_1)(x) = t_x$. After the write to $y$, a second message
$m_y \defeq \tup{y, 1, t_y,V}$ becomes available, where $V(x) = t_x$,
i.e., the view of $x$ for $m_y$ corresponds to $m_x$. When the
propagation in $\cproc_2$ occurs (i.e., step \labelcref{RA:step2}
above), $\pv(\cproc_2)(y)$ is updated to $t_y$ and $\pv(\cproc_2)(x)$ is  updated to $t_x$. 
Thus, when the second read in $\cproc_2$
occurs, it must read from $m_x$, setting $b = 1$.

Finally, \Atomic behavior can be recovered in both \TSO and \RA by using
\emph{fence} instructions to enforce propagation of writes. Let
SB$_{F}$ be SB but with a fence introduced between the write and the
read in both processes. Execution of SB$_{F}$ does not end with
$a = b = 0$ under both \TSO and \RA. However, note that the semantics
of fences are different for \TSO and \RA. In \TSO, executing a fence
ensures that all writes in the write buffer of the executing process
are drained. In \RA, fences work in pairs: the second fence that is
executed receives all ``happens-before'' information known by the
first fence. Thus, in SB$_{F}$, each fence has knowledge of the
preceding write that ``happened-before'' in the same
process. Regardless of how the instructions of SB$_{F}$ are
interleaved, the second fence to be executed receives information
about the write in the other process, making it impossible for the
read following the fence to read from the initial value of the shared
variable.
  
To recover \Atomic behavior for IRIW under \RA memory, one must
introduce a fence between the two reads in both $\cproc_2$ and
$\cproc_3$. In this version of IRIW, in any execution, where
$\cproc_2$ reads $x$ from the write in $\cproc_1$, and $\cproc_3$
reads $y$ from the write in $\cproc_4$ (thus setting $a = c = 1$), the
second fence to be executed (in $\cproc_2$ or $\cproc_3$) receives
``happens-before'' information from the first fence so that
$b = d = 0$ is impossible.

\wellbehaved*
\begin{proof}
For $\mm=\Atomic$ the claim is obvious:
$\sigma \in \traces{\Atomic}$ entails $\initial{\Atomic} \xrightarrow{\sigma}_\mm q$ for some $q\in\states{\Atomic}$;
every state of $\Atomic$ is stable; and since $\Atomic$ is deterministic, 
$\initial{\Atomic} \xrightarrow{\sigma}_\Atomic q$ and $\sigma \cdot \sigma' \in \traces{\Atomic}$ 
imply $\sigma' \in \otraces{\Atomic,q}$.

For $\TSO$, we can simulate the execution generating $\sigma$ in
$\Atomic$ by propagating in a $\tau$ step of $\TSO$ every write
immediately after the write is performed.  This way we reach a state
$q$ in which all store buffers are empty, which is stable.
Furthermore, for every $\sigma'$ such that
$\sigma \cdot \sigma' \in \traces{\Atomic}$, we can continue the
simulation of $\Atomic$ in $\TSO$ by propagating writes immediately
after they are performed, and obtain the
$\sigma' \in \otraces{\TSO,q}$.

The argument for $\RA$ is similar, except that instead of one propagation step after every write,
we use $\size{\Proc}-1$ $\tau$ steps of $\RA$ in which all processes, except for the writer itself,
acquire the view stored in the added message. We maintain the invariant that after every simulation step 
all thread views are equal and they all point to the most recently added message.
When this holds, we must be at a stable state.
\end{proof}

 \section{Proofs for Section \ref{sec:merge}}
\label{app:proofs_merge}

We formally define the shuffle operation.
\begin{definition2}
The \emph{shuffle product} (or, \emph{shuffle}) of sequences $s_1,s_2 \in X^*$,
denoted $s_1 \shuffle s_2$, is inductively defined as follows:
\begin{mathpar}
\inferrule{~}{\emptyseq\in\emptyseq \shuffle \emptyseq} 
\and
\inferrule{s\in s_1 \shuffle s_2}{s \cdot \tup{x} \in (s_1 \cdot \tup{x}) \shuffle s_2}
\and
\inferrule{s\in s_1 \shuffle s_2}{s \cdot \tup{x} \in  s_1 \shuffle (s_2 \cdot \tup{x})}
\end{mathpar}
\end{definition2}

\begin{lemma}
\label{lem:shuffle_restrict}
Let $s_1,s_2\in X^*$ and let $Y\subseteq X$.
Then, $(\restrict{s_1}{Y}) \shuffle (\restrict{s_2}{Y}) = \restrict{(s_1 \shuffle s_2)}{Y}$.
\end{lemma}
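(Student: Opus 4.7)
My plan is to prove the identity by double inclusion, with each inclusion established by induction on $\size{s_1} + \size{s_2}$ using the three rules of the inductive definition of $\shuffle$. The base case $s_1 = s_2 = \emptyseq$ is immediate since both sides of the equality reduce to $\set{\emptyseq}$.

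For the inclusion $\restrict{(s_1 \shuffle s_2)}{Y} \subseteq (\restrict{s_1}{Y}) \shuffle (\restrict{s_2}{Y})$, I pick $s \in s_1 \shuffle s_2$ and inspect which of the two inductive rules produced $s$. Without loss of generality, assume $s_1 = s_1' \cdot \tup{x}$ and $s = s' \cdot \tup{x}$ with $s' \in s_1' \shuffle s_2$; the induction hypothesis gives $\restrict{s'}{Y} \in (\restrict{s_1'}{Y}) \shuffle (\restrict{s_2}{Y})$. If $x \in Y$, then $\restrict{s}{Y} = \restrict{s'}{Y} \cdot \tup{x}$ and $\restrict{s_1}{Y} = \restrict{s_1'}{Y} \cdot \tup{x}$, so the shuffle rule for appending to the first operand delivers the conclusion. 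If $x \nin Y$, then $\restrict{s}{Y} = \restrict{s'}{Y}$ and $\restrict{s_1}{Y} = \restrict{s_1'}{Y}$, so the induction hypothesis already yields the required membership.

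For the reverse inclusion, given $t \in (\restrict{s_1}{Y}) \shuffle (\restrict{s_2}{Y})$, I first handle the case where $s_1$ (symmetrically $s_2$) ends with an element $x \nin Y$. Then $\restrict{s_1}{Y} = \restrict{s_1'}{Y}$ where $s_1 = s_1' \cdot \tup{x}$, and the induction hypothesis applied to $s_1'$ and $s_2$ yields some $s' \in s_1' \shuffle s_2$ with $\restrict{s'}{Y} = t$; then $s \defeq s' \cdot \tup{x} \in s_1 \shuffle s_2$ witnesses the claim since $\restrict{s}{Y} = \restrict{s'}{Y} = t$. Otherwise, every non-empty side of $s_1$ and $s_2$ ends in an element of $Y$, so $t$ itself must end with the last element of $s_1$ or of $s_2$ (by the definition of $\shuffle$ applied to $\restrict{s_1}{Y}$ and $\restrict{s_2}{Y}$); stripping that final element from $t$ and from whichever of $s_1, s_2$ it came from, the induction hypothesis provides a witness shuffle, and we append the element back.

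The proof is entirely routine; the main obstacle is simply performing the bookkeeping of the case split (last element of $s_1$ in $Y$ vs.\ not in $Y$, and symmetrically for $s_2$) carefully enough that the chosen induction call has strictly smaller combined length and that the extending shuffle rule can indeed be applied at each step.
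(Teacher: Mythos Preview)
Your proof is correct. The paper states this lemma without proof, treating it as a routine fact; your double-inclusion argument by induction on $\size{s_1}+\size{s_2}$, with the case split on whether the trailing element lies in $Y$, is exactly the natural way to unfold the inductive definition of $\shuffle$ and would serve perfectly well as the omitted justification.
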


\subsection{Mergeability in \Atomic}

\begin{theorem}
  \label{thm:atomic-merge}
  Every RW-RBW observable memory sequence $\sigma_1$
  and arbitrary observable memory sequence $\sigma_2$ with $\procsf{\sigma_1} \cap \procsf{\sigma_2} = \emptyset$
  are weakly mergeable in $\Atomic$.
\end{theorem}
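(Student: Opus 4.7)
The plan is to make precise the informal argument sketched after \cref{tab:results}. Given a stable $q_0 \in \states{\Atomic}$ with $\sigma_1, \sigma_2 \in \otraces{\Atomic, q_0}$, I will exhibit one specific interleaving that is still executable from $q_0$.

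First, decompose $\sigma_1 = \sigma_1^{\sf r} \cdot \sigma_1^{\sf w}$, where $\sigma_1^{\sf r}$ is the maximal prefix of $\sigma_1$ consisting only of read actions and $\sigma_1^{\sf w}$ is the remainder (possibly empty, and, when non-empty, beginning with a write). Set $\sigma \defeq \sigma_1^{\sf r} \cdot \sigma_2 \cdot \sigma_1^{\sf w}$. Then $\sigma \in \sigma_1 \shuffle \sigma_2$ by construction, so it remains only to show $\sigma \in \otraces{\Atomic, q_0}$.

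Since $\Atomic$ has no internal steps and its state is just a map $\Var \to \Val$, executability reduces to checking that each read returns the value of the most recent write to its variable (or the value in $q_0$ if no such write exists). I would verify this on $\sigma$ in three segments. Reads in $\sigma_1^{\sf r}$ precede every write in $\sigma_1$, so in the original execution of $\sigma_1$ from $q_0$ they all return values from $q_0$; since reads leave state unchanged, the segment still fires from $q_0$ and the resulting state is $q_0$. Next, $\sigma_2$ runs from $q_0$ by assumption, yielding some state $q'$. Finally, I claim $\sigma_1^{\sf w}$ is executable from \emph{any} state, and in particular from $q'$: its first event is a write, which always fires, and for any read at position $k_2$ in $\sigma_1^{\sf w}$ on variable $x$, letting $k_1$ be the position of the initial write of $\sigma_1^{\sf w}$, either that write is already on $x$, or RBW applied to the pair $(k_1,k_2)$ supplies an intervening write on $x$ strictly between them. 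Either way, there is a write to $x$ inside $\sigma_1^{\sf w}$ strictly before $k_2$, so the read's returned value is pinned down by the last such write within $\sigma_1^{\sf w}$, independently of the starting state. Since $\sigma_1 \in \otraces{\Atomic,q_0}$, that value matches the one the read carries, so the transition fires.

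The only nontrivial step is extracting this ``every read in $\sigma_1^{\sf w}$ has an earlier same-variable writer inside $\sigma_1^{\sf w}$'' property from RBW; everything else is bookkeeping. Disjointness of $\procsf{\sigma_1}$ and $\procsf{\sigma_2}$ plays no role here, because $\Atomic$'s state is process-independent, though it will matter for the other entries of \cref{tab:results}.
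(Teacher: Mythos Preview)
Your proposal is correct and follows essentially the same approach as the paper: decompose $\sigma_1=\sigma_1^{\sf r}\cdot\sigma_1^{\sf w}$ with $\sigma_1^{\sf r}$ the maximal read-only prefix, take the interleaving $\sigma_1^{\sf r}\cdot\sigma_2\cdot\sigma_1^{\sf w}$, and use RBW (with $k_1$ the first, necessarily write, position of $\sigma_1^{\sf w}$) to argue every read in $\sigma_1^{\sf w}$ is preceded by a same-variable write inside $\sigma_1^{\sf w}$. Your observation that process disjointness is unused in the $\Atomic$ case is also accurate.
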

\begin{proof}
Let $m_0 \in \states{\Atomic}=\Var \to \Val$ such that $\sigma_1,\sigma_2 \in \otraces{\Atomic,m_0}$.
We show that there exists $\sigma \in \sigma_1 \shuffle \sigma_2$ such that $\sigma \in \otraces{\Atomic,m_0}$.
Let $\sigma_1^{\textsf{r}} \in \Reads^*$ be the longest prefix of $\sigma_1$ consisting solely of read events,
and let $\sigma_1^{\textsf{w}}$ such that $\sigma_1 = \sigma_1^{\textsf{r}} \cdot \sigma_1^{\textsf{w}}$.
Define $\sigma \defeq \sigma_1^{\textsf{r}} \cdot \sigma_2 \cdot \sigma_1^{\textsf{w}}$.
Then, $\sigma\in \sigma_1 \shuffle \sigma_2$.
It is also easy to verify that $\sigma \in \otraces{\Atomic,m_0}$.
Indeed, the reads in $\sigma_1^{\textsf{r}}$ all read the value in $m_0$ and do not affect the memory state,
and the reads in $\sigma_1^{\textsf{w}}$ read from $\sigma_1^{\textsf{w}}$ itself.
Specifically, a read in $\sigma_1^{\textsf{w}}$ reading from variable $\var$
must appear after a write in $\sigma_1^{\textsf{w}}$ writing to $\var$,
or else $\sigma_1^{\textsf{w}}$ is not RBW (with $k_1=1$).
\end{proof}

\subsection{Mergeability in \TSO}

To prove mergeability properties in \TSO, we use the following simple observations about the $\TSO$ system.
First, stable states in $\TSO$ are states where all store buffers are empty:

\begin{proposition}
\label{prop:tso0}
If $\tup{m, b} \in \states{\TSO}$ is stable, then $b = b_\init$.
\end{proposition}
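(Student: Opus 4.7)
The plan is to unpack the definition of stability (\cref{sec:memory}, ``Stable States'') and observe that within the \TSO transitions listed in \cref{sec:tso}, the only rule producing a $\tau$-labeled transition is \textsc{propagate}. Thus stability of $\tup{m,b}$ is equivalent to the absence of an applicable instance of \textsc{propagate} from $\tup{m,b}$.

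Next I would inspect the premises of \textsc{propagate}: it fires for some process $\proc$ whenever $b(\proc)$ can be decomposed as $\tup{\tup{\var,v}} \cdot \beta$, i.e., whenever $b(\proc)$ is nonempty. So \textsc{propagate} is inapplicable from $\tup{m,b}$ iff for every $\proc \in \Proc$ we have $b(\proc) = \emptyseq$. This is precisely the statement that $b = \lambda \proc \ldotp \emptyseq = b_\init$, by the definition of $\init(\TSO)$ in \cref{sec:tso}.

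Combining the two steps yields the proposition by a straightforward contrapositive argument: if $b \neq b_\init$, then there is some $\proc$ with $b(\proc) \neq \emptyseq$, so \textsc{propagate} gives a $\tau$-transition out of $\tup{m,b}$, contradicting stability. The proof is entirely mechanical; there is no real obstacle beyond observing that \textsc{propagate} is the unique source of $\tau$-steps in the \TSO LTS.
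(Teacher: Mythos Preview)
Your proposal is correct and matches the paper's treatment: the paper states this proposition as a ``simple observation'' without giving a proof, and your argument---that \textsc{propagate} is the unique $\tau$-rule in the \TSO LTS and is enabled exactly when some $b(\proc)$ is nonempty---is precisely the intended justification.
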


Second, fences have no effect when buffers are empty:

\begin{proposition}
\label{prop:tso01}
Suppose that $\sigma \in \otraces{\TSO,\tup{m, b_\init}}$,
and let $\sigma^{\textsf{lf}}$ be a sequence of fence events.
Then, $\sigma^{\textsf{lf}} \cdot \sigma \in \otraces{\TSO,\tup{m, b_\init}}$.
\end{proposition}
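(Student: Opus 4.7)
The plan is to proceed by a simple induction on the length $n$ of the fence sequence $\sigma^{\textsf{lf}}$. The key observation driving the whole argument is that the \textsc{fence} rule of $\TSO$ requires the executing process to have an empty store buffer and, when it fires, leaves the entire state $\tup{m,b}$ unchanged. Combined with the hypothesis that we start in the stable state $\tup{m,b_\init}$ (where \emph{all} buffers are empty by the definition of $b_\init$), this means that any fence event is enabled at this state and brings us right back to it.

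For the base case $n=0$ there is nothing to prove, since $\sigma^{\textsf{lf}} \cdot \sigma = \sigma$ is already assumed to be in $\otraces{\TSO,\tup{m, b_\init}}$. For the inductive step, I would decompose $\sigma^{\textsf{lf}} = \ev{\proc}{\Fence} \cdot \sigma^{\textsf{lf}\prime}$ where $\sigma^{\textsf{lf}\prime}$ has length $n-1$ and is still a sequence of fences. Since $b_\init(\proc) = \emptyseq$, the \textsc{fence} rule gives the transition
\[
\tup{m, b_\init} \xrightarrow{\ev{\proc}{\Fence}}_\TSO \tup{m, b_\init}.
\]
Applying the induction hypothesis to $\sigma^{\textsf{lf}\prime}$ yields $\sigma^{\textsf{lf}\prime} \cdot \sigma \in \otraces{\TSO,\tup{m, b_\init}}$, and prepending the fence transition above produces a trace whose observable projection is $\ev{\proc}{\Fence} \cdot \sigma^{\textsf{lf}\prime} \cdot \sigma = \sigma^{\textsf{lf}} \cdot \sigma$, as required.

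There is essentially no obstacle here; the only thing one has to be slightly careful about is that ``observable trace'' strips $\tau$-steps, so I would phrase the argument in terms of execution fragments of $\TSO$ and then take the restriction to $\MemEvs$ at the very end. Everything else is routine because the \textsc{fence} transition is an identity on states, and no $\tau$-steps need to be inserted to make the fences fire from a stable starting state.
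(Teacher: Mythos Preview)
Your argument is correct. The paper states this proposition without proof, treating it as an immediate observation about the $\TSO$ system; your induction on the length of $\sigma^{\textsf{lf}}$, using that the \textsc{fence} rule is enabled at $\tup{m,b_\init}$ (since every buffer is empty) and leaves the state unchanged, is exactly the expected justification.
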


Third, in the end of the trace, buffers can drained and then fences can be executed:

\begin{proposition}
\label{prop:tso02}
Suppose that $\sigma \in \otraces{\TSO,\tup{m, b}}$,
and let $\sigma^{\textsf{tf}}$ be a sequence of fence events.
Then, $\sigma \cdot \sigma^{\textsf{tf}} \in \otraces{\TSO,\tup{m, b}}$.
\end{proposition}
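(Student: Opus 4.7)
The plan is to take the given execution fragment from $\tup{m,b}$ that generates $\sigma$, extend it with $\tau$ propagation steps to drain every store buffer, and then append the fence events one at a time, using the fact that under \TSO a fence only requires the buffer of its own process to be empty and leaves the state unchanged.

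First I would use the hypothesis $\sigma \in \otraces{\TSO,\tup{m,b}}$ to obtain an execution fragment $\alpha$ of $\TSO$ that starts at $\tup{m,b}$ and whose restriction to $\MemEvs$ is $\sigma$. Let $\tup{m',b'}$ be the final state of $\alpha$. Next I would repeatedly apply the \textsc{propagate} rule: for each process $\proc$ with $b'(\proc)\neq\emptyseq$, one $\tau$-labeled propagate transition removes the first pending write from $b'(\proc)$ and commits it to memory. Iterating this (in any order, say process by process) yields a finite sequence of $\tau$-steps ending in a state of the form $\tup{m'',b_\init}$ where every buffer is empty. These $\tau$-steps contribute nothing to the observable trace.

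From the state $\tup{m'',b_\init}$, I would then execute the events of $\sigma^{\textsf{tf}}$ in order. Each such event is of the form $\ev{\proc}{\Fence}$, and the \textsc{fence} rule only requires $b(\proc) = \emptyseq$, which is satisfied since all buffers are empty; moreover, the \textsc{fence} transition leaves both $m$ and $b$ unchanged, so the state remains $\tup{m'',b_\init}$ after each fence. Concatenating $\alpha$ with the draining $\tau$-steps and then with this sequence of fence transitions gives an execution fragment of $\TSO$ from $\tup{m,b}$ whose restriction to $\MemEvs$ is exactly $\sigma \cdot \sigma^{\textsf{tf}}$, establishing $\sigma \cdot \sigma^{\textsf{tf}} \in \otraces{\TSO,\tup{m,b}}$.

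There is no real obstacle here: the only point that needs care is that the draining phase uses $\tau$-steps, which are permitted inside execution fragments and are invisible in observable traces by the definition $\otraces{\TSO,q} \defeq \restrict{\traces{\TSO,q}}{\MemEvs}$. Everything else is a direct application of the transition rules of \cref{sec:tso}.
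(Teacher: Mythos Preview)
Your proposal is correct and matches the paper's intended argument, which is only stated informally there as ``in the end of the trace, buffers can be drained and then fences can be executed.'' The only (minor) point to note is that draining terminates because each buffer is a finite sequence and each \textsc{propagate} step removes one entry; you gesture at this with ``finite sequence of $\tau$-steps'' and that is enough.
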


Forth, any solo execution in $\TSO$ that only performs reads and writes 
(no RMWs and no fences)
can be simulated by an execution that only uses the store buffer, 
and never propagates its writes to the main memory:

\begin{proposition}
\label{prop:tso1}
Let $\sigma \in \otraces{\TSO,\tup{m, b}}$ be a non-empty solo 
observable memory sequence that consists solely of read and write events,
and let $\proc=\procf{\sigma[1]}$.
Then, $\tup{m, b} \xrightarrow{\sigma}_\TSO \tup{m, b[\proc \mapsto b(\proc) \cdot \beta]}$
for some $\beta \in (\Var \times \Val)^*$.
\end{proposition}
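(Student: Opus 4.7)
The proof proceeds by induction on the length $n$ of $\sigma$, constructing a $\tau$-free execution of $\sigma$ from $\tup{m, b}$ step by step while maintaining the invariant that after performing the first $i$ events, the reached state is $\tup{m, b[\proc \mapsto b(\proc) \cdot \beta_i]}$, where $\beta_i \in (\Var \times \Val)^*$ records, in order, the writes performed by $\proc$ in $\sigma[1..i]$. Setting $\beta = \beta_n$ then yields the conclusion. A write event $\ev{\proc}{\Write(\var, v)}$ is handled trivially by the \textsc{write} rule, which is unconditional and simply appends $\tup{\var, v}$ to $\proc$'s buffer while leaving $m$ unchanged. The substantive case is a read event $\ev{\proc}{\Read(\var, v)}$, where one must justify that the current simulated state admits a transition returning exactly $v$.

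To handle reads, fix a witnessing execution $\rho \in \traces{\TSO, \tup{m, b}}$ with $\restrict{\rho}{\MemEvs} = \sigma$, and compare, at the moment of this read in $\rho$, $\proc$'s buffer and the memory with their counterparts in the simulated execution. Since $\sigma$ is solo in $\proc$ and uses no RMWs or fences, every item ever present in $\proc$'s buffer along $\rho$ is drawn from $b(\proc) \cdot \beta_i$, and propagate-steps by $\proc$ remove only front items in FIFO order; hence $\proc$'s $\rho$-buffer restricted to $\var$ is a suffix of $(b(\proc) \cdot \beta_i)|_\var$, and when non-empty it shares the same tail with it. A case split on whether $(b(\proc) \cdot \beta_i)|_\var$ is empty and on which read rule $\rho$ used then shows that \textsc{read-from-buffer} (if $(b(\proc) \cdot \beta_i)|_\var$ is non-empty) or \textsc{read-from-memory} (otherwise) yields precisely $v$ in the new execution: notably, if the new $\var$-buffer is non-empty but $\rho$'s $\var$-buffer is empty at the read, then the value $v$ that $\rho$ read from memory was the last $\var$-item propagated by $\proc$, which coincides with the tail of $(b(\proc) \cdot \beta_i)|_\var$.

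The main obstacle is the \textsc{read-from-memory} subcase: when $(b(\proc) \cdot \beta_i)|_\var = \emptyseq$, the new execution returns $m(\var)$, so one needs $m(\var) = v$. Because emptiness rules out $\proc$ itself from ever propagating any $\var$-write, the only way $m_\rho(\var)$ could differ from $m(\var)$ is through propagations of pre-existing $\var$-items from the buffers of other processes $\proc' \neq \proc$. In the intended invocations of the proposition — from stable states (see \cref{prop:tso0}), where $b(\proc') = \emptyseq$ for every $\proc' \neq \proc$ — no such items exist and no such propagations occur in $\rho$, so $m_\rho(\var) = m(\var)$ and the read succeeds in the new execution, closing the induction.
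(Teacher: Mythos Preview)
Your induction argument is sound and more detailed than the paper, which states this proposition without proof. More notably, you have correctly spotted that the proposition \emph{as stated} is false for arbitrary $\tup{m,b}$: take $m(x)=0$, $b(\proc)=\emptyseq$, $b(\proc')=\tup{\tup{x,1}}$ for some $\proc'\neq\proc$; then $\sigma=\tup{\ev{\proc}{\Read(x,1)}}\in\otraces{\TSO,\tup{m,b}}$ via a $\tau$-propagation of $\proc'$'s write, yet no $\tau$-free read of $1$ is enabled from $\tup{m,b}$. Your final paragraph correctly observes that the proposition is only ever invoked from states with $b=b_\init$ (via \cref{prop:tso0}, in the proofs of \cref{thm:tso-merge1,thm:tso-merge2} and as a hypothesis in \cref{prop:tso3}), where the argument goes through. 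One small addition: the same ``no foreign propagations'' assumption is also silently needed in your ``notably'' subcase (where $\rho$'s $\var$-buffer is empty but the simulated one is not), since a later $\var$-propagation by some $\proc'$ in $\rho$ could otherwise overwrite the value $\proc$ last propagated; the fix is identical. With that caveat folded in, your proof is complete for the intended use, and your observation exposes a minor imprecision in the paper's statement rather than a flaw in your reasoning.
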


Fifth, if there is an execution not involving process $\proc$ and the initial store buffer of $\proc$ is empty,
then there is a similar execution from any initial store buffer of $\proc$:

\begin{proposition}
\label{prop:tso2}
Let $\sigma \in \otraces{\TSO,\tup{m, b}}$ for some $b$ 
such that $b(\proc)=\emptyseq$ for every $\proc\in\Proc \setminus \procsf{\sigma}$.
Then, $\sigma \in \otraces{\TSO,\tup{m, b'}}$
for every $b'$ such that $b'(\proc)=b(\proc)$ for every $\proc\in \procsf{\sigma}$.
\end{proposition}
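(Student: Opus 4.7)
\textbf{Proof plan for Proposition~\ref{prop:tso2}.}
The plan is to take any execution fragment of $\TSO$ that witnesses $\sigma \in \otraces{\TSO,\tup{m,b}}$ and replay it verbatim starting from $\tup{m,b'}$. The key observation is that the evolution of main memory and of the buffers of processes in $\procsf{\sigma}$ is entirely independent of the buffer contents of processes outside $\procsf{\sigma}$, provided we never fire a propagate step from the latter.

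First I would fix an execution fragment $\alpha$ of $\TSO$ with $\tup{m,b} \xrightarrow{\rho}_\TSO \tup{m_n,b_n}$ and $\restrict{\rho}{\MemEvs}=\sigma$. Let $Q \defeq \Proc \setminus \procsf{\sigma}$. I would then prove by induction on the length of $\alpha$ the following invariant: at every intermediate state $\tup{m_i,b_i}$, we have $b_i(\proc)=\emptyseq$ for every $\proc\in Q$, and consequently no transition label in $\rho$ corresponds to an event of a process in $Q$ (nor to a $\tau$-propagation of such a process). This follows because (i)~the \textsc{write} rule is the only one that grows a buffer and $\sigma$ contains no write by a process in $Q$; (ii)~an empty buffer cannot fire a propagate step; and (iii)~all other rules leave buffers unchanged.

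Next I would define $b'_i \defeq b_i[\proc \mapsto b'(\proc) \mid \proc\in Q]$, i.e., the buffer obtained from $b_i$ by replacing the (empty) entries for $Q$-processes by their values in $b'$. I would show inductively that $\tup{m_0,b'_0} \xrightarrow{l_1}_\TSO \tup{m_1,b'_1} \xrightarrow{l_2}_\TSO \cdots \xrightarrow{l_n}_\TSO \tup{m_n,b'_n}$, where $l_1,\dots,l_n$ is the label sequence of the original $\rho$. Every transition in the original execution is fired by a process in $\procsf{\sigma}$; inspecting the rules of Definition~\ref{sec:tso}, each rule's premises refer only to $m_i$ and to $b_i(\proc)$ for the firing process $\proc\in\procsf{\sigma}$, and each rule's effect only modifies $m_i$ or $b_i(\proc)$ for such $\proc$. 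Since $b_i$ and $b'_i$ agree on $\procsf{\sigma}$, the same premises hold and the same updates are well defined in the primed execution.

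The only potential obstacle is handling the \textsc{rmw} and \textsc{fence} rules, which require the firing process's buffer to be empty; but since these are fired by processes in $\procsf{\sigma}$ and the buffers of those processes agree in the two executions by construction, the premises transfer. Concluding, $\restrict{\rho}{\MemEvs}=\sigma$ is an observable trace of $\TSO$ from $\tup{m,b'}$, which is what we needed to show.
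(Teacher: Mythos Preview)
Your proof is correct and follows essentially the same approach as the paper's, which is a one-line sketch: ``We can consider an execution fragment inducing $\sigma$ that never propagates from buffers of processes that are not in $\procsf{\sigma}$.'' You have simply unpacked this sentence, proving explicitly that buffers of $Q$-processes remain empty throughout (so no propagate step from $Q$ is ever enabled), and then verifying rule-by-rule that the execution transfers to the modified initial buffer map~$b'$.
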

\begin{proof}
We can consider an execution fragment inducing $\sigma$ that never propagates
from buffers of processes that are not in $\procsf{\sigma}$.
\end{proof}

Sixth, solo executions of different processes that never propagate from buffers can be arbitrarily interleaved:

\begin{proposition}
\label{prop:tso3}
Let $\sigma_1$ and $\sigma_2$ be non-empty solo 
observable memory sequences that consist solely of read and write events.
Let $\proc_1=\procf{\sigma[1]}$ and $\proc_2=\procf{\sigma[2]}$,
and suppose that $\proc_1 \neq \proc_2$.
Let $m\in\Var\to\Loc$ and $\beta_1,\beta_2  \in (\Var \times \Val)^*$
such that $\tup{m, b_\init} \xrightarrow{\sigma_1}_\TSO \tup{m, b_\init[\proc_1 \mapsto \beta_1]}$
and $\tup{m, b_\init} \xrightarrow{\sigma_2}_\TSO \tup{m, b_\init[\proc_2 \mapsto \beta_2]}$.
Then, $\tup{m, b_\init} \xrightarrow{\sigma}_\TSO \tup{m, b_\init[\proc_1 \mapsto \beta_1,\proc_2 \mapsto \beta_2]}$
for every $\sigma \in \sigma_1 \shuffle \sigma_2$.
\end{proposition}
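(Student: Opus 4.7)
I would prove \cref{prop:tso3} by separating the construction of the interleaved execution into two phases: first replay both solo sequences without any propagation $\tau$-steps, then append at the end exactly the propagations needed to recover the specific final buffer contents $\beta_1,\beta_2$. The core idea is that in $\TSO$, every read/write transition of process $\proc_i$ consults only $\proc_i$'s own buffer and the main memory; thus, as long as we suppress propagation, the actions of the two processes do not interfere.

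\textbf{Phase 1 (no-propagation interleaving).} By \cref{prop:tso1}, there exist executions $\alpha_1^0$ and $\alpha_2^0$ of $\sigma_1$ and $\sigma_2$ from $\tup{m,b_\init}$ that use no $\tau$-steps and reach $\tup{m, b_\init[\proc_1\mapsto \bar{\beta}_1]}$ and $\tup{m, b_\init[\proc_2\mapsto \bar{\beta}_2]}$, where $\bar{\beta}_i\in(\Var\times\Val)^*$ is the sequence of write events appearing in $\sigma_i$ in order. For any $\sigma\in\sigma_1\shuffle\sigma_2$ I would argue by induction on the length of $\sigma$ that there is a $\tau$-free execution $\tup{m,b_\init}\xrightarrow{\sigma}_\TSO \tup{m, b_\init[\proc_1\mapsto \bar{\beta}_1^{(j_1)}, \proc_2\mapsto \bar{\beta}_2^{(j_2)}]}$, where $j_i$ counts how many events of $\sigma_i$ have been consumed and $\bar{\beta}_i^{(j_i)}$ is the sequence of writes in $\sigma_i[1..j_i]$. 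For the inductive step, the next event of $\sigma$ is either a write or a read by some $\proc_i$; inspecting the three relevant $\TSO$ rules (\textsc{write}, \textsc{read-from-buffer}, \textsc{read-from-memory}) shows that each consults only $\proc_i$'s buffer and main memory. Since no propagation has occurred, main memory is still $m$ and $\proc_i$'s buffer agrees with the corresponding prefix of $\alpha_i^0$, so the transition is enabled and produces exactly the same buffer update as in $\alpha_i^0$. Setting $j_1=|\sigma_1|$ and $j_2=|\sigma_2|$ yields the target of Phase 1.

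\textbf{Phase 2 (appending propagations).} The hypothesis execution of $\sigma_i$ ending with buffer $\beta_i$ must have performed some number $k_i$ of propagation $\tau$-steps, which (since propagation is FIFO and the initial buffer is empty) propagated exactly the first $k_i$ entries of $\bar{\beta}_i$; call this prefix $\gamma_i$, so that $\bar{\beta}_i=\gamma_i\cdot\beta_i$. Because read and write events do not modify main memory, only the propagations in the solo execution can alter memory; hence the final memory of that execution equals $m$ applied with the writes of $\gamma_i$ in order, and since this final memory is $m$ by hypothesis, applying $\gamma_i$ to $m$ yields $m$. I then extend the Phase 1 execution by appending $k_1$ propagate $\tau$-steps by $\proc_1$ (which pop $\gamma_1$ from $\proc_1$'s buffer, leaving $\beta_1$, and restore main memory to $m$) followed by $k_2$ propagate $\tau$-steps by $\proc_2$ (which pop $\gamma_2$, leave $\beta_2$, and preserve $m$). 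Because these appended steps are all silent, the observable trace is still $\sigma$, and the final state is exactly $\tup{m, b_\init[\proc_1\mapsto\beta_1,\proc_2\mapsto\beta_2]}$.

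\textbf{Main obstacle.} The delicate part is justifying the Phase 1 invariant: one might worry that reads executed by $\proc_i$ in the interleaved run could see values different from those read in $\alpha_i^0$. This concern is dispelled precisely by forbidding propagations in Phase 1 (main memory stays $m$) and by the fact that reads-from-buffer by $\proc_i$ inspect only $\proc_i$'s buffer, which evolves identically in the solo and interleaved cases. Once Phase 1 goes through, Phase 2 is essentially bookkeeping, relying on the simple observation that the solo execution's $k_i$ propagations are invisible (silent) and have zero net effect on memory.
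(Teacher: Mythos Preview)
Your Phase~1 induction on the shuffle formation is exactly the paper's proof. Phase~2, however, is unnecessary and stems from a misreading of the notation: in this paper, $q \xrightarrow{\sigma}_\TSO q'$ for an observable sequence $\sigma$ means an execution taking \emph{precisely} the labels in $\sigma$, with no interleaved $\tau$-steps (see the LTS definition of $\xrightarrow{\pi}_L$). Hence the hypothesis already hands you $\tau$-free solo runs, so $\beta_i$ is necessarily your $\bar\beta_i$, $k_i=0$, and Phase~1 alone reaches the required final state. Note also that, under the correct reading, the \emph{conclusion} $\xrightarrow{\sigma}_\TSO$ likewise forbids $\tau$-steps, so appending propagations would not even establish it had $k_i>0$ been possible. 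Your appeal to \cref{prop:tso1} at the start of Phase~1 is similarly redundant for the same reason.
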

\begin{proof}
By induction on the shuffle formation $\sigma \in \sigma_1 \shuffle \sigma_2$.
\end{proof}

We can now prove our two mergeability properties for $\TSO$:

\begin{theorem}
\label{thm:tso-merge1}
Every solo-RWF-LTF observable memory sequence $\sigma_1$
and arbitrary observable memory sequence $\sigma_2$ with $\procsf{\sigma_1} \cap \procsf{\sigma_2} = \emptyset$
are weakly mergeable in $\TSO$.
\end{theorem}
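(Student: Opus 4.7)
Fix a stable state $q_0\in\states{\TSO}$ with $\sigma_1,\sigma_2\in\otraces{\TSO,q_0}$. By Proposition~\ref{prop:tso0}, $q_0=\tup{m,b_\init}$ for some $m$. Since $\sigma_1$ is solo-RWF-LTF, write $\sigma_1=\sigma_1^{\sf lf}\cdot\sigma_1'\cdot\sigma_1^{\sf tf}$, where $\sigma_1^{\sf lf},\sigma_1^{\sf tf}$ consist only of fence events and $\sigma_1'$ is RW. Let $\proc_1$ be the sole process of $\sigma_1$. The candidate merge is
\[
\sigma \;\defeq\; \sigma_1^{\sf lf}\cdot\sigma_1'\cdot\sigma_2\cdot\sigma_1^{\sf tf},
\]
which is easily seen to lie in $\sigma_1\shuffle\sigma_2$. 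It then suffices to build an execution fragment of $\TSO$ from $q_0$ whose observable trace is $\sigma$.

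The plan is to run $\sigma$ in four phases, each justified by one of the already-listed TSO propositions. First, from $\tup{m,b_\init}$ execute $\sigma_1^{\sf lf}$: each fence simply requires $\proc_1$'s buffer to be empty, which holds, so we stay in $\tup{m,b_\init}$ (this is exactly Proposition~\ref{prop:tso01} applied to the empty suffix, or a direct check). Second, using the fact that $\sigma_1\in\otraces{\TSO,q_0}$ and that the leading fences leave us again at $\tup{m,b_\init}$, we have $\sigma_1'\cdot\sigma_1^{\sf tf}\in\otraces{\TSO,\tup{m,b_\init}}$, so in particular $\sigma_1'\in\otraces{\TSO,\tup{m,b_\init}}$. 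Since $\sigma_1'$ is a non-empty solo RW trace, Proposition~\ref{prop:tso1} supplies an execution that only buffers writes locally, reaching $\tup{m,b_\init[\proc_1\mapsto\beta_1]}$ for some $\beta_1$ (the memory $m$ is unchanged).

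Third, we splice $\sigma_2$ in. Because $\procsf{\sigma_1}\cap\procsf{\sigma_2}=\emptyset$, process $\proc_1$ does not appear in $\sigma_2$, and $\sigma_2\in\otraces{\TSO,\tup{m,b_\init}}$ by hypothesis. Proposition~\ref{prop:tso2} then gives $\sigma_2\in\otraces{\TSO,\tup{m,b_\init[\proc_1\mapsto\beta_1]}}$, yielding an execution from our current state with observable trace $\sigma_2$ and ending in some state $\tup{m'',b''}$ with $b''(\proc_1)=\beta_1$. Fourth, from $\tup{m'',b''}$ we execute $\sigma_1^{\sf tf}$: by Proposition~\ref{prop:tso02} applied to the (empty) tail trace starting at $\tup{m'',b''}$, we may drain buffers via $\tau$-steps and then fire the fences in turn. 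Concatenating the four fragments yields the desired execution, proving $\sigma\in\otraces{\TSO,q_0}$.

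The only genuinely delicate point is the third phase: we must verify that leaving $\proc_1$'s buffer full of pending writes from $\sigma_1'$ does not disable any step of $\sigma_2$. This is exactly what Proposition~\ref{prop:tso2} guarantees, since all $\sigma_2$-steps are by processes distinct from $\proc_1$ and no action in $\TSO$ inspects another process's buffer. The remaining phases are essentially bookkeeping using the fact that fences are benign when the acting process's buffer is empty.
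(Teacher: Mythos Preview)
Your proof is correct and follows essentially the same approach as the paper's: decompose $\sigma_1$ into leading fences, an RW core, and trailing fences, take $\sigma=\sigma_1^{\sf lf}\cdot\sigma_1'\cdot\sigma_2\cdot\sigma_1^{\sf tf}$, and justify each phase via Propositions~\ref{prop:tso0}--\ref{prop:tso2}. The only minor point to tidy is the edge case where $\sigma_1'$ is empty (so Proposition~\ref{prop:tso1} does not literally apply), but then phase~2 is vacuous and the argument goes through with $\beta_1=\emptyseq$.
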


\begin{proof}
Let $\tup{m_0, b_0}$ be a stable state
such that $\sigma_1,\sigma_2 \in \otraces{\TSO,\tup{m_0, b_0}}$.
By \cref{prop:tso0}, we have $b_0 = b_\init$.
Let $\sigma_1^{\textsf{lf}} \in \Fences^*$ be the longest prefix of $\sigma_1$ consisting solely of fence events,
$\sigma_1^{\textsf{tf}} \in \Fences^*$ be the longest suffix of $\sigma_1$ consisting solely of fence events,
and let $\sigma_1'$ such that $\sigma_1 = \sigma_1^{\textsf{lf}} \cdot \sigma_1' \cdot \sigma_1^{\textsf{tf}}$.
Define $\sigma \defeq \sigma_1^{\textsf{lf}} \cdot \sigma_1' \cdot \sigma_2  \cdot \sigma_1^{\textsf{tf}}$.
Then, $\sigma\in \sigma_1 \shuffle \sigma_2$.
We show that $\sigma \in \otraces{\TSO,\tup{m_0, b_\init}}$.
By \cref{prop:tso01,prop:tso02}, it suffices to show that $\sigma_1' \cdot \sigma_2 \in \otraces{\TSO,\tup{m_0, b_\init}}$.
If $\sigma_1$ is empty, then $\sigma=\sigma_2 \in \otraces{\TSO,\tup{m_0, b_\init}}$.
Otherwise, let $\proc=\procf{\sigma[1]}$.
By \cref{prop:tso1}, we have 
$\tup{m_0, b_\init} \xrightarrow{\sigma_1}_\TSO \tup{m_0, b_\init[\proc \mapsto \beta]}$
for some $\beta \in (\Var \times \Val)^*$,
and by \cref{prop:tso2}, we have 
$\sigma_2 \in \otraces{\TSO,\tup{m_0, b_\init[\proc \mapsto \beta]}}$.
It follows that $\sigma \in \otraces{\TSO,\tup{m_0, b_\init}}$.
\end{proof}

\begin{theorem}
\label{thm:tso-merge2}
Every two solo-RW observable memory sequences, $\sigma_1$ and $\sigma_2$,
with $\procsf{\sigma_1} \cap \procsf{\sigma_2} = \emptyset$
are strongly mergeable in $\TSO$.
\end{theorem}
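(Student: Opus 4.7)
The strategy is to use the previously established propositions about $\TSO$: since both sequences consist only of reads and writes (no fences, no RMWs), each of them can be simulated with a trajectory that never propagates from the store buffer. Writes of each process therefore remain invisible to the other process throughout, so the two local executions can be freely interleaved while leaving main memory untouched.

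\textbf{Steps.} Fix a stable state $\tup{m_0, b_0}$ with $\sigma_1, \sigma_2 \in \otraces{\TSO, \tup{m_0, b_0}}$, and fix an arbitrary $\sigma \in \sigma_1 \shuffle \sigma_2$. First, \Cref{prop:tso0} gives $b_0 = b_\init$. If $\sigma_1 = \emptyseq$ or $\sigma_2 = \emptyseq$, then $\sigma_1 \shuffle \sigma_2 = \{\sigma_2\}$ or $\{\sigma_1\}$, respectively, and the claim is immediate. Otherwise, let $\proc_i = \procf{\sigma_i[1]}$ for $i \in \set{1,2}$; by hypothesis $\proc_1 \neq \proc_2$. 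Applying \Cref{prop:tso1} to each $\sigma_i$ from the state $\tup{m_0, b_\init}$ (using that $\sigma_i$ is solo, read-write, and non-empty), we obtain $\beta_i \in (\Var \times \Val)^*$ with
\[
  \tup{m_0, b_\init} \xrightarrow{\sigma_i}_\TSO \tup{m_0, b_\init[\proc_i \mapsto \beta_i]}.
\]
Now invoke \Cref{prop:tso3} with $m = m_0$ and the $\beta_i$ just produced: it yields
\[
  \tup{m_0, b_\init} \xrightarrow{\sigma}_\TSO \tup{m_0, b_\init[\proc_1 \mapsto \beta_1, \proc_2 \mapsto \beta_2]},
\]
so in particular $\sigma \in \otraces{\TSO, \tup{m_0, b_\init}}$, as required.

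\textbf{Obstacles.} There is essentially no combinatorial difficulty here; all the work has been done inside \Cref{prop:tso1} (which justifies that the writes of a solo RW trace can all be kept in the buffer) and \Cref{prop:tso3} (which produces the arbitrary interleaving by induction on the shuffle). The only points to handle with care are: (i) the edge case where one of the sequences is empty, which \Cref{prop:tso1,prop:tso3} exclude by stipulating non-emptiness; and (ii) ensuring one starts from a state whose buffers for the active processes are empty, which is exactly what \Cref{prop:tso0} buys us from the stability hypothesis. With these two observations, the theorem reduces to chaining the three propositions.
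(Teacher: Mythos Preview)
Your proof is correct and follows essentially the same approach as the paper: reduce to empty buffers via \Cref{prop:tso0}, handle the trivial empty cases, then apply \Cref{prop:tso1} to each solo-RW sequence to obtain buffer-only runs and combine them via \Cref{prop:tso3}. The only cosmetic difference is that you (correctly) write $\proc_i = \procf{\sigma_i[1]}$, whereas the paper's version contains what appears to be a typo ($\procf{\sigma[1]}$, $\procf{\sigma[2]}$) at that point.
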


\begin{proof}
Let $\tup{m_0, b_0}$ be a stable state
such that $\sigma_1,\sigma_2 \in \otraces{\TSO,\tup{m_0, b_0}}$.
Let $\sigma \in \sigma_1 \shuffle \sigma_2$.
By \cref{prop:tso0}, we have $b_0 = b_\init$.
First, if $\sigma_1$ is empty, then $\sigma=\sigma_2 \in \otraces{\TSO,\tup{m_0, b_\init}}$.
Similarly, if $\sigma_2$ is empty, then $\sigma=\sigma_1 \in \otraces{\TSO,\tup{m_0, b_\init}}$.
Otherwise, let $\proc_1=\procf{\sigma[1]}$ and $\proc_2=\procf{\sigma[2]}$.
By \cref{prop:tso1}, we have $\tup{m_0, b_\init} \xrightarrow{\sigma_1}_\TSO \tup{m_0, b_\init[\proc_1 \mapsto \beta_1]}$
and $\tup{m_0, b_\init} \xrightarrow{\sigma_2}_\TSO \tup{m_0, b_\init[\proc_2 \mapsto \beta_2]}$
for some $\beta_1,\beta_2 \in (\Var \times \Val)^*$.
By \cref{prop:tso3}, we obtain that 
$\tup{m_0, b_\init} \xrightarrow{\sigma}_\TSO \tup{m_0, b_\init[\proc_1 \mapsto \beta_1,\proc_2 \mapsto \beta_2]}$
and so $\sigma \in \otraces{\TSO,\tup{m_0, b_\init}}$.
\end{proof}

\subsection{Mergeability in \RA}

To prove mergeability properties in \RA, we use the following observations about the $\RA$ system.

The memory and view in \RA configurations are only increasing:

\begin{proposition}
\label{prop:ra_mon}
Suppose that $\tup{\mra, \pv, \fv} \xrightarrow{\rho}_\RA \tup{\mra', \pv', \fv'}$.
Then:
\begin{itemize}
\item $\mra' = \mra \uplus \mra_1$ for some finite set of messages $\mra_1$.
\item $\pv'(\proc)(\var) \geq \pv(\proc)(\var)$ for every $\proc\in\Proc$ and $\var\in\Var$.
\item $\fv'(\var) \geq \fv(\var)$ for every $\var\in\Var$.
\end{itemize}
\end{proposition}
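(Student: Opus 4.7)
The plan is a straightforward induction on the length of the derivation $\tup{\mra,\pv,\fv} \xrightarrow{\rho}_\RA \tup{\mra',\pv',\fv'}$. The base case ($\rho = \emptyseq$) is immediate: take $\mra_1 = \emptyset$, and the view inequalities become equalities. For the inductive step, split $\rho = \rho'' \cdot l$ and apply the induction hypothesis to the prefix, obtaining an intermediate state $\tup{\mra'',\pv'',\fv''}$ with $\mra'' = \mra \uplus \mra_1''$ and the pointwise inequalities $\pv'' \geq \pv$ and $\fv'' \geq \fv$. It then suffices to show that one more step preserves all three properties; composing this with the induction hypothesis yields the claim (disjoint union is associative, so $\mra_1 = \mra_1'' \uplus \mra_1'''$ where $\mra_1'''$ is added by the last step, and pointwise $\geq$ is transitive).

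For the one-step preservation, I would proceed by a routine case analysis on the transition rule in \cref{fig:ra_steps}. For the \textsc{read} rule, nothing changes, so all three properties hold trivially. For the \textsc{fence} rule, only $\pv(\proc)$ and $\fv$ change, and each is replaced by its join $\sqcup$ with another view, which is pointwise $\geq$ the original. For the \textsc{propagate} rule, $\pv(\proc)$ is similarly replaced by a join, and $\mra,\fv$ are unchanged. For the \textsc{write} and \textsc{rmw} rules, a new message $\msg$ is added to $\mra''$, and $\pv(\proc)$ is raised to include a fresh timestamp $t > \pv''(\proc)(\var)$ (for other processes and variables the view is unchanged), while $\fv$ is untouched.

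The only substantive point is the disjointness $\mra'' \uplus \set{\msg}$ in the write/rmw cases, i.e., confirming $\msg \nin \mra''$. This is exactly what the premise $t \nin \timef{\restrict{\mra''}{\var}}$ in the rules guarantees: since $\timef{\msg} = t$ and $\varf{\msg} = \var$, any message in $\mra''$ with variable $\var$ has a different timestamp from $\msg$, while messages in $\mra''$ with a different variable trivially differ from $\msg$. Hence $\msg \nin \mra''$, so the addition is indeed a disjoint union, and finiteness of $\mra_1$ follows because each step adds at most one message and $\rho$ is a finite sequence.

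No step is truly an obstacle here; the mild subtlety is being careful with the disjointness bookkeeping across the inductive step, which is handled by observing that freshness of the new timestamp is relative to $\mra''$ (which contains $\mra$ by the induction hypothesis), so the union $\mra \uplus \mra_1'' \uplus \set{\msg}$ remains disjoint.
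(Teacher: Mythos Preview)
Your proposal is correct and follows the same approach as the paper, which simply states ``By inspecting all steps of $\RA$.'' You have merely spelled out in detail the induction on $\rho$ and the per-rule case analysis that the paper leaves implicit.
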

\begin{proof}
By inspecting all steps of $\RA$.
\end{proof}

In every step of \RA only a single process is involved,
and the view of the other processes do not matter and are not changed.
For the formal statement, we define
the \emph{participating process} in a transition
$\mathit{tr}=\tup{\tup{\mra, \pv, \fv}, l, \tup{\mra', \pv', \fv'}} \in \trans{\RA}$,
denoted $\procf{\mathit{tr}}$,
to be the (unique) process $\proc\in\Proc$ such that 
\begin{enumerate*}[label=(\roman*)]
\item $\procf{l}=\proc$ if $l \in \MemEvs$;
and \item $\pv(\proc) \neq \pv'(\proc)$ if $l = \tau$.
\end{enumerate*}

\begin{proposition}
\label{prop:ra_proc_step}
Let $\mathit{tr}=\tup{\tup{\mra, \pv, \fv}, l, \tup{\mra', \pv', \fv'}} \in \trans{\RA}$
and let $\proc=\procf{\mathit{tr}}$.
Then, for every $\pv^* \in \Proc \to \View$,
we have $\tup{\mra, \pv^*[\proc \mapsto \pv(\proc)], \fv} \xrightarrow{l}_\RA \tup{\mra', \pv^*[\proc \mapsto \pv'(\proc)], \fv'}$.
\end{proposition}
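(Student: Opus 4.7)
The plan is to proceed by case analysis on which of the five inference rules of $\RA$ (write, read, fence, rmw, propagate) derives the transition $\mathit{tr}$. For each rule I will verify two things: (i) all the premises involving the thread-view mapping $\pv$ refer only to $\pv(\proc)$ where $\proc$ is the participating process, so they remain true when $\pv$ is replaced by any $\pv^*$ that agrees with $\pv$ on $\proc$; and (ii) the update to the view component in the conclusion modifies only the entry for $\proc$, so starting from $\pv^*[\proc\mapsto\pv(\proc)]$ yields exactly $\pv^*[\proc\mapsto\pv'(\proc)]$. The memory $\mra$, the set of messages added, and the fence view $\fv$ are independent of $\pv(\proc')$ for $\proc' \neq \proc$ in every rule, so those components are unaffected by the substitution.

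Concretely, for the write, read, and rmw rules, the participating process is $\procf{l}$ because $l\in\MemEvs$. Inspecting each rule in \cref{fig:ra_steps}, the side conditions only constrain $\mra$, the timestamp, and $\pv(\proc)$; the resulting state updates are of the form $\pv[\proc\mapsto\view']$ with $\view'$ computed from $\pv(\proc)$ alone, so the substitution is valid. For the fence rule, $\proc=\procf{l}$, the premise computes $\view' = \pv(\proc)\sqcup\fv$ from $\pv(\proc)$ and $\fv$ only, and the conclusion modifies only $\pv(\proc)$ and $\fv$; both are untouched by swapping $\pv$ with $\pv^*[\proc\mapsto\pv(\proc)]$. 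For the propagate rule the label is $\tau$, and by definition $\proc$ is the unique process whose view changes; the premise picks a message $\tup{\var,\_,t,\view}\in\mra$ with $\pv(\proc)(\var)<t$, which depends only on $\mra$ and $\pv(\proc)$, and the conclusion modifies only the $\proc$-entry of the view mapping, so again the substitution preserves the step.

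The proof is essentially mechanical: the premises and conclusions of each $\RA$ rule are ``local'' in the participating process's view. The only mild subtlety, and hence the main thing to be careful about, is the $\tau$ case, where one must first invoke the definition of $\procf{\mathit{tr}}$ to identify the unique process whose view is altered, and then observe that the propagate rule indeed only references that process's view. Once this case analysis is laid out, the statement follows immediately by rebuilding the same derivation over $\tup{\mra,\pv^*[\proc\mapsto\pv(\proc)],\fv}$.
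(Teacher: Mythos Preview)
Your proposal is correct and takes exactly the same approach as the paper: the paper's proof reads simply ``By inspecting all steps of $\RA$'', and you have carried out precisely that inspection, case-by-case over the five rules. Your write-up is a faithful expansion of the paper's one-line proof.
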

\begin{proof}
By inspecting all steps of $\RA$.
\end{proof}

There is never need to propagate messages to threads that do not participate in a certain execution.
For the formal statement, 
for an execution fragment $\alpha=\tup{q_0, l_1, q_1, l_2 \til l_n, q_n}$ of $\RA$,
we define the \emph{participating processes} in $\alpha$,
denoted $\procsf{\alpha}$, by $\procsf{\alpha} \defeq \set{\procf{q_{i-1}, l_i, q_i} \st 1 \leq i \leq n}$.

\begin{proposition}
\label{prop:ra_proc}
Suppose that $\tup{\mra_0, \pv_0, \fv_0} \xrightarrow{\rho}_\RA \tup{\mra, \pv, \fv}$.
Then, some execution fragment $\alpha$ of $\RA$ satisfies:
\begin{enumerate*}[label=(\roman*)]
\item $\alpha$ starts in $\tup{\mra_0, \pv_0, \fv_0}$;
\item $\alpha$ ends in a state $\tup{\mra, \pv', \fv}$ where 
$\pv'(\proc)=\pv(\proc)$ for every $\proc\in\procsf{\alpha}$;
\item $\restrictmem{\rho} = \restrictmem{\tracef{\alpha}}$;
and \item $\procsf{\alpha} = \procsf{\restrictmem{\rho}}$.
\end{enumerate*}
\end{proposition}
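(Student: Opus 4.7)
The plan is to construct $\alpha$ by taking an execution fragment $\alpha_0$ of $\RA$ that induces $\rho$ (which exists by hypothesis) and deleting every $\tau$-step whose participating process lies outside $P^* \defeq \procsf{\restrictmem{\rho}}$. All memory-event transitions, and all $\tau$-steps whose participating process belongs to $P^*$, are kept. Intuitively, a $\tau$-step by a process that never takes a memory action in $\rho$ is invisible to everyone else, and so removing it cannot disturb the rest of the execution.

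To make this rigorous, I would proceed by induction on the length of $\alpha_0$, maintaining the following invariant: after processing a prefix of $\alpha_0$, the resulting pruned sequence $\alpha'$ is a valid execution fragment starting at $\tup{\mra_0,\pv_0,\fv_0}$ and ending at some state $\tup{\bar\mra, \bar\pv, \bar\fv}$, where $\bar\mra$ equals the memory component of the corresponding state in $\alpha_0$, $\bar\fv$ equals its fence-view component, and $\bar\pv(\proc)$ equals the $\pv(\proc)$-component in $\alpha_0$ for every $\proc\in P^*$ (while $\bar\pv(\proc)$ may differ for $\proc\notin P^*$). The inductive step splits on the next transition of $\alpha_0$: if it is a $\tau$-step whose participating process is outside $P^*$, we skip it and observe that the skipped step only modifies $\pv(\proc)$ for that single $\proc\notin P^*$ (by the shape of the \textsc{propagate} rule), so the invariant continues to hold. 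Otherwise, we append the same transition; Proposition~\ref{prop:ra_proc_step} guarantees that this transition is still enabled from the current state of $\alpha'$, since it only requires the view of its participating process, which agrees (when that process is in $P^*$) with the corresponding state in $\alpha_0$, and which for processes outside $P^*$ taking a memory step cannot occur because such a process would then belong to $P^*$.

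Given the invariant, the four clauses are immediate. For (i), the starting state is unchanged because we only delete steps. For (iii), the deleted steps are all $\tau$-steps, so $\restrictmem{\tracef{\alpha}} = \restrictmem{\rho}$. For (iv), every retained transition either is a memory event of some $\proc\in P^*$ or a $\tau$-step of some $\proc\in P^*$, and conversely every process in $P^*$ takes at least one memory event, so $\procsf{\alpha} = P^*$. For (ii), the final invariant gives $\bar\mra=\mra$, $\bar\fv=\fv$, and $\bar\pv(\proc)=\pv(\proc)$ for every $\proc\in P^* = \procsf{\alpha}$.

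The main obstacle is the inductive step when the next transition of $\alpha_0$ is a memory event of a process $\proc\in P^*$ whose view has been built up in $\alpha_0$ partly through kept transitions and partly through skipped $\tau$-steps by other processes. Here one must verify that the skipped steps really have no effect on $\pv(\proc)$: this follows because \textsc{propagate} only modifies the view of its own participating process, and no deleted step involves $\proc$. Thus the view of $\proc$ in $\alpha'$ exactly matches the view of $\proc$ in $\alpha_0$ just before the transition, which together with Proposition~\ref{prop:ra_proc_step} (applied with $\pv^*$ set to the current $\bar\pv$) yields enabledness of the transition and preservation of the invariant.
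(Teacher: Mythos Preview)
The paper states this proposition without proof, treating it as a routine observation about the $\RA$ transition system. Your argument supplies exactly the natural construction the authors presumably had in mind: take an execution fragment inducing $\rho$, drop every \textsc{propagate} step whose participating process lies outside $P^* = \procsf{\restrictmem{\rho}}$, and verify by induction that the remaining steps still form a valid execution fragment with the required endpoint. Your use of \cref{prop:ra_proc_step} to justify that each retained step stays enabled (since only the views of processes outside $P^*$ can differ between the original and pruned runs, and each step depends only on the view of its own participating process) is precisely the right tool, and your invariant is stated correctly. The proof is sound.
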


Without CAS, only the relative order of the timestamps matters.
This means that if we start from a stable state where all messages are propagated to all processes,
then nothing can force us to use particular numbers as timestamps. 

\begin{proposition}
\label{prop:ra_T}
Let $\tup{\mra_0, \pv_0, \fv_0}$ be a stable state.
Suppose that  $\tup{\mra_0, \pv_0, \fv_0} \xrightarrow{\rho}_\RA \tup{\mra_0 \uplus \mra, \pv, \fv}$
where $\rho$ does not have CAS events.
Then, for every finite set $T \subseteq \N \setminus \timef{\mra_0}$, we have 
$\tup{\mra_0, \pv_0, \fv_0} \xrightarrow{\rho}_\RA \tup{\mra_0 \uplus \mra', \pv', \fv'}$
for some $\mra'$, $\pv'$, and  $\fv$ such that $T \cap \timef{\mra'} = \emptyset$.
\end{proposition}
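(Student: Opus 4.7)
The plan is to exploit the fact that, since there are no RMW events in $\rho$, the write rule only requires $t > \pv(\proc)(\var)$ and $t \nin \timef{\restrict{\mra}{\var}}$, giving us complete freedom in choosing each new write's timestamp. Thus, I would take the original execution fragment $\alpha$ witnessing $\tup{\mra_0, \pv_0, \fv_0} \xrightarrow{\rho}_\RA \tup{\mra_0 \uplus \mra, \pv, \fv}$ and replay it with a different choice of fresh timestamps, chosen large enough to avoid $T$. The first key observation is that stability of $\tup{\mra_0, \pv_0, \fv_0}$ entails $\pv_0(\proc)(\var) = \max \timef{\restrict{\mra_0}{\var}}$ for every $\proc,\var$, because otherwise a propagation $\tau$-step would be enabled. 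Hence every new timestamp added by $\rho$ for a variable $\var$ is strictly larger than all initial timestamps for $\var$.

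Next, for each variable $\var$, enumerate the new $\var$-timestamps used in $\alpha$ in increasing order as $\tau^\var_1 < \dots < \tau^\var_{K_\var}$. Fix $M > \max(T \cup \timef{\mra_0})$ and define a per-variable renaming $\phi_\var \colon \N \pfn \N$ by $\phi_\var(t) = t$ for $t \in \timef{\restrict{\mra_0}{\var}}$ and $\phi_\var(\tau^\var_i) = M + i$. By the stability observation above, $\phi_\var$ is well-defined and strictly order-preserving on every pair of $\var$-timestamps appearing in $\alpha$, and its image on the new timestamps is contained in $\{M+1, M+2, \ldots\}$, which is disjoint from $T$.

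Then, I would construct $\alpha'$ by replaying $\alpha$ step by step, in each write step replacing the chosen timestamp $\tau^\var_i$ by $M + i$, and in each state applying the lifted renaming $\Phi(\view) \defeq \lambda \var \ldotp \phi_\var(\view(\var))$ to every process view, to $\fv$, and to the views stored inside newly created messages (leaving the messages of $\mra_0$ untouched). Validity of each transition is then checked by cases: write constraints transfer because each $\phi_\var$ is strictly increasing and injective on the relevant timestamps; read steps transfer because the renamed message $\tup{\var, v, \phi_\var(\pv(\proc)(\var)), \_}$ lies in the renamed memory by construction; fence steps commute with $\Phi$ trivially; and propagation steps transfer because $\phi_\var$ is order-preserving and hence satisfies $\phi_\var(\max(a,b)) = \max(\phi_\var(a), \phi_\var(b))$, so $\Phi$ commutes with $\sqcup$. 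Finally, $\timef{\mra'} \subseteq \{M+1, \ldots, M+\max_\var K_\var\}$, which is disjoint from $T$.

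\paragraph{Main obstacle.}
The delicate point is the propagation case when a process propagates from an untouched message $\msg \in \mra_0$, whose view $\viewf{\msg}$ is \emph{not} shifted, but is joined with the shifted process view $\Phi(\pv(\proc))$. Showing consistency here reduces to checking the identity $\phi_\var(\max(\pv(\proc)(\var), \viewf{\msg}(\var))) = \max(\phi_\var(\pv(\proc)(\var)), \viewf{\msg}(\var))$, where $\viewf{\msg}(\var)$ is an initial $\var$-timestamp (so $\phi_\var$ fixes it). This reduces to order-preservation of $\phi_\var$ on an initial and a possibly new $\var$-timestamp, which holds because every new timestamp is strictly larger than every initial one for the same variable, and $\phi_\var$ maps new timestamps into $\{M+1, \ldots\}$, above $\max \timef{\mra_0}$.
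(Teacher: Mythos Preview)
Your approach is correct and is the natural formalization of the paper's one-line justification (the proposition is stated merely with the remark that ``without CAS, only the relative order of the timestamps matters''); an order-preserving per-variable timestamp renaming is exactly how one makes this precise.

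One simplification you may want to incorporate: your ``main obstacle'' is in fact vacuous. Stability of $\tup{\mra_0,\pv_0,\fv_0}$ gives $\pv_0(\proc)(\var) \geq \max\timef{\restrict{\mra_0}{\var}}$ for every $\proc$ and $\var$ (otherwise a \textsc{propagate} step would be enabled), and by \cref{prop:ra_mon} this lower bound persists throughout the run. Hence the precondition $\pv(\proc)(\var) < t$ of \textsc{propagate} can never be satisfied by any message of $\mra_0$, so every propagation step in $\rho$ uses a freshly created message whose stored view is already $\Phi$-shifted, and the join commutes with $\Phi$ immediately.
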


Two executions with disjoint sets of participating processes
that use disjoint sets of timestamps for their added messages,
and such that only one of them may use fences, cannot observe one another,
and thus can be strongly merged.

\begin{proposition}
\label{prop:ra_main}
Let $q_0=\tup{\mra_0, \pv_0, \fv_0} \in \states{\RA}$,
and let
\begin{align*}
\alpha_1 &=\tup{q_0, l_1, \tup{\mra_1, \pv_1, \fv_1}, l_2 \til l_{n_1}, \tup{\mra_{n_1}, \pv_{n_1}, \fv_{n_1}}} \\
\alpha_2 &=\tup{q_0, l_1', \tup{\mra_1', \pv_1', \fv_1'}, l_2' \til l_{n_2}', \tup{\mra_{n_2}', \pv_{n_2}', \fv_{n_2}'}} 
\end{align*}
be two execution fragments of $\RA$.
Suppose that 
\begin{enumerate*}[label=(\roman*)]
\item $\procsf{\alpha_1} \cap \procsf{\alpha_2} = \emptyset$;
\item $l_1 \cdots l_{n_1}$ has no fence events;
and \item $\timef{\mra_{n_1} \setminus \mra_0} \cap \timef{\mra_{n_2}' \setminus \mra_0} = \emptyset$.
\end{enumerate*}
Let $\rho \in (l_1 \cdots l_{n_1}) \shuffle (l_1' \cdots l_{n_2}')$.
Then, $q_0 \xrightarrow{\rho}_\RA \tup{\mra_0 \cup (\mra_{n_1} \setminus \mra_0) \cup (\mra_{n_2}' \setminus \mra_0), \pv, \fv_{n_2}'}$,
where
$$\pv \defeq \lambda \proc \ldotp 
\begin{cases} 
\pv_{n_1}(\proc) & \proc \in \procf{\alpha_1} \\ 
\pv_{n_2'}(\proc) & \proc \nin \procf{\alpha_1}
\end{cases}$$
\end{proposition}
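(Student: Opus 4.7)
The plan is to prove the proposition by induction on the length of the shuffle $\rho$, strengthening the claim so that it applies to every prefix of the shuffle and pins down the intermediate merged state. For $i \in \set{0 \til n_1}$ and $j \in \set{0 \til n_2}$, let
$\mra_{i,j}^* \defeq \mra_0 \cup (\mra_i \setminus \mra_0) \cup (\mra_j' \setminus \mra_0)$; let $\pv_{i,j}^*$ map $\proc \in \procsf{\alpha_1}$ to $\pv_i(\proc)$ and $\proc \notin \procsf{\alpha_1}$ to $\pv_j'(\proc)$; and let $\fv_{i,j}^* \defeq \fv_j'$. The strengthened claim is that for every prefix $\rho'\in (l_1 \cdots l_i)\shuffle (l_1'\cdots l_j')$, we have $q_0 \xrightarrow{\rho'}_\RA \tup{\mra_{i,j}^*,\pv_{i,j}^*,\fv_{i,j}^*}$. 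This is well-defined because, by process-disjointness, any $\proc\in\procsf{\alpha_1}$ does not participate in $\alpha_2$ so $\pv_j'(\proc)=\pv_0(\proc)$, and symmetrically for $\alpha_2$.

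The base case $i=j=0$ is immediate since the empty shuffle stays at $q_0=\tup{\mra_{0,0}^*,\pv_{0,0}^*,\fv_{0,0}^*}$. For the inductive step, I split on whether the last label of $\rho'$ comes from $\alpha_1$ or from $\alpha_2$, apply the inductive hypothesis to the shorter prefix, and verify that the final transition is enabled in the intermediate merged state and leads to the next expected merged state. The key leverage is \cref{prop:ra_proc_step}: since each step in $\alpha_1$ (resp., $\alpha_2$) depends only on the view of its participating process $\proc\in\procsf{\alpha_1}$ (resp., $\procsf{\alpha_2}$) and the fence view, and the invariant ensures $\pv_{i,j}^*(\proc)$ matches $\pv_i(\proc)$ (resp., $\pv_j'(\proc)$), the step can be replayed.

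The main obstacles are the enabling conditions for write and RMW steps, and the tracking of the fence view. For a write or RMW step originating in $\alpha_1$, the chosen timestamp $t$ must be absent from $\timef{\restrict{\mra_{i,j}^*}{\var}}$; since $t$ was freshly picked in $\alpha_1$, it lies in $\timef{\mra_{n_1}\setminus\mra_0}$ and, by the disjointness hypothesis, cannot appear in $\mra_{n_2}'\setminus\mra_0$, while the constraint $t>\pv_i(\proc)(\var)$ (or $t=\pv_i(\proc)(\var)+1$ for RMW) is inherited from $\alpha_1$ through the invariant. A symmetric argument handles steps of $\alpha_2$. For fence steps, note that only $\alpha_2$ executes them; since $\alpha_1$ has no fence events and only fence transitions modify $\fv$, the invariant $\fv_{i,j}^* = \fv_j'$ is preserved, so a fence's effect on $\fv$ and on the process view exactly matches $\alpha_2$. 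For read and propagate steps, the required source message is present in $\mra_{i,j}^*$ because the original memory and all relevant newly added messages are retained.

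Instantiating the strengthened claim at $i = n_1$ and $j = n_2$ for the given shuffle $\rho$ yields the desired transition to $\tup{\mra_0 \cup (\mra_{n_1}\setminus\mra_0) \cup (\mra_{n_2}'\setminus\mra_0),\pv,\fv_{n_2}'}$, which is precisely the target state in the proposition's conclusion.
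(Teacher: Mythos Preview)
Your proposal is correct and takes essentially the same approach as the paper, which simply states ``By induction on the shuffle formation $\rho \in (l_1 \cdots l_{n_1}) \shuffle (l_1' \cdots l_{n_2}')$.'' You have supplied the strengthened inductive invariant and the per-rule case analysis that the paper leaves implicit; in particular, your use of \cref{prop:ra_proc_step} to replay a step against the merged view mapping, the timestamp-disjointness argument for the \textsc{write}/\textsc{rmw} freshness side condition (using $\mra_j'\setminus\mra_0 \subseteq \mra_{n_2}'\setminus\mra_0$ via \cref{prop:ra_mon}), and the observation that only \textsc{fence} touches $\fv$ so that $\fv_{i,j}^*=\fv_j'$ is maintained, are exactly the ingredients the one-line proof is relying on.
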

\begin{proof}
By induction on the shuffle formation $\rho \in (l_1 \cdots l_{n_1}) \shuffle (l_1' \cdots l_{n_2}')$.
\end{proof}

We now have all the ingredients to prove the first mergeability
property for $\RA$. 

\begin{theorem}
\label{thm:ra-merge1}
\label{thm:ra-merge3}
\label{thm:ra-merge4}
Observable memory sequences $\sigma_1$ and $\sigma_2$ with $\procsf{\sigma_1} \cap \procsf{\sigma_2} = \emptyset$
are strongly mergeable in $\RA$ provided that either
\begin{enumerate}
\item $\sigma_1$ is RW; or \label{thm:RAM-Strong1}
\item $\sigma_1$ is RWF-PPTF and $\sigma_2$ is PPTF; or \label{thm:RAM-Strong2}
\item $\sigma_1$ is RWF-PPLF and $\sigma_2$ is PPLF. \label{thm:RAM-Strong3}
\end{enumerate}
\end{theorem}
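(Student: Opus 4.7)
The proof handles the three cases in order, with case~1 serving as the base and cases~2 and~3 reducing to it by fence manipulation.

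For case~1, where $\sigma_1$ is RW, I would apply Proposition~\ref{prop:ra_proc} to the given executions producing $\sigma_1$ and $\sigma_2$ from $q_0$ to obtain execution fragments $\alpha_1, \alpha_2$ whose participating-process sets equal $\procsf{\sigma_1}, \procsf{\sigma_2}$. Since $\sigma_1$ is RW, $\alpha_1$ is fence- and CAS-free, so Proposition~\ref{prop:ra_T} lets me re-run $\alpha_1$ using timestamps disjoint from those added by $\alpha_2$. For any target shuffle $\sigma \in \sigma_1 \shuffle \sigma_2$, I construct a label-level shuffle $\rho \in \tracef{\alpha_1} \shuffle \tracef{\alpha_2}$ with $\restrictmem{\rho} = \sigma$ (placing the $\tau$ steps at their natural intra-trace positions), and Proposition~\ref{prop:ra_main} delivers an $\RA$-execution of $\rho$, giving $\sigma \in \otraces{\RA,q_0}$.

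For case~2 ($\sigma_1$ RWF-PPTF, $\sigma_2$ PPTF), I would strip $\sigma_1$'s fences to obtain an RW trace $\sigma_1^*$. Dropping the fence steps from the given $\RA$-execution of $\sigma_1$ yields an execution of $\sigma_1^*$ from $q_0$, since PPTF ensures no process performs any RW or RMW after its own fences, and fence steps leave $\mra$ unchanged. Case~1 applied to $\sigma_1^*$ and $\sigma_2$ produces an execution for the shuffle $\sigma^* \in \sigma_1^* \shuffle \sigma_2$ obtained from the target $\sigma$ by removing $\sigma_1$'s fences. I then splice the stripped fence steps back at their $\sigma$-positions: every fence is precondition-free, and PPTF on both traces guarantees that the induced updates to $\pv$ and cascading updates to $\fv$ never feed into any later non-fence step.

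Case~3 ($\sigma_1$ RWF-PPLF, $\sigma_2$ PPLF) is handled symmetrically, relying on the dual observation that leading fences in a PPLF execution from the stable $q_0$ can be replaced by $\tau$-propagation. Every view accumulated into $\fv$ during a PPLF execution from $q_0$ is a join of message-views of messages already present in $\mra$, so the update $\pv(\proc) := \pv(\proc) \sqcup \fv$ performed by a leading fence for $\proc$ is also achievable by an appropriate sequence of propagate steps for $\proc$. This lets me strip $\sigma_1$'s leading fences to get an RW trace $\sigma_1' \in \otraces{\RA,q_0}$, apply case~1 to $\sigma_1'$ and $\sigma_2$, and reintroduce the fences at their $\sigma$-positions; the disjoint-timestamp choice from case~1 together with the replacement observation ensures the result still produces $\sigma$.

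The main obstacle is the fence-replacement argument for case~3. Unlike case~2's trailing fences, which are obviously benign because PPTF forbids subsequent RW, leading fences do affect subsequent RW events, so I must establish an invariant that $\fv$ in PPLF executions always encodes information reachable via $\tau$-propagation from $\mra$, and then carry this invariant through the reintroduction in the merged execution. Cases~1 and~2 are otherwise routine applications of the supplied propositions once the setup is in place.
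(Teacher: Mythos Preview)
Your outline tracks the paper's proof closely. Case~1 is essentially identical to the paper's argument (which invokes \cref{prop:ra_mon,prop:ra_T,prop:ra_proc,prop:ra_main} in the same order, using \cref{lem:shuffle_restrict} to manufacture the label-level shuffle $\rho$). For cases~2 and~3 the paper strips the trailing (resp.\ leading) fences from \emph{both} $\sigma_1$ and $\sigma_2$, not just $\sigma_1$; either variant works for case~2, but stripping both matters for case~3.

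You are right to flag case~3 as the delicate one---the paper's ``symmetric to case~2'' does gloss over a real asymmetry---and your propagation-replacement observation is correct and does justify the \emph{stripping} direction (that $\sigma_1'\in\otraces{\RA,q_0}$). But your reintroduction step, justified only by ``the disjoint-timestamp choice together with the replacement observation,'' has a gap: when you splice a leading fence of a $\sigma_1$-process $p$ back into the merged execution, that fence pushes $\pv(p)$ into $\fv$; if $p$ has already taken $\tau$-steps in $\alpha_1$ (to set up its later reads), $\fv$ now carries $\sigma_1$-writes, and a subsequent leading fence of some $\sigma_2$-process $q$ will then raise $\pv(q)$ accordingly, possibly changing $q$'s reads and invalidating $\sigma$. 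The replacement observation goes the wrong way for this. The clean fix---and what makes the paper's strip-both variant simpler---is to choose the underlying executions so that \emph{no process performs any $\tau$-step before its own leading fences}. This is always possible because PPLF ensures each process's reads come after its fences, so any required propagation can be delayed. With that choice, at every leading fence the fencing process's view is still the common stable-state view $\pv_0$, hence $\fv$ never exceeds $\pv_0$, and every fence is a genuine no-op for $\pv$; stripping and reintroduction then become immediate, and no propagation-replacement invariant is needed.
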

\begin{proof}[Proof of \cref{thm:ra-merge1}~\eqref{thm:RAM-Strong1}]
Let $q_0 = \tup{\mra_0, \pv_0, \fv_0}$ be a stable state
such that $\sigma_1,\sigma_2 \in \otraces{\RA,q_0}$. 
Let $\sigma \in \sigma_1 \shuffle \sigma_2$.
We show that $\sigma \in \otraces{\RA,q_0}$.
Let $\rho_1, \rho_2 \in \traces{\RA,q_0}$ such that $\sigma_i = \restrictmem{\rho_i}$ ($i\in\set{1,2}$).

By \cref{prop:ra_mon}, there exist $\mra_1^0,\pv_1^0,\fv_1^0$ and $\mra_2^0,\pv_2^0,\fv_2^0$ such that \smallskip 

\hfill{} $q_0  \xrightarrow{\rho_1}_\RA \tup{\mra_0 \uplus \mra_1^0, \pv_1^0, \fv_1^0}$
and 
$q_0  \xrightarrow{\rho_2}_\RA \tup{\mra_0 \uplus \mra_2^0, \pv_2^0, \fv_2^0}$.\hfill{}

\smallskip\noindent
By \cref{prop:ra_T}, 
we have $q_0 \xrightarrow{\rho_1}_\RA \tup{\mra_0 \uplus \mra_1^*, \pv_1^*, \fv_1^*}$
for $\mra_1^*$, $\pv_1^*$, and $\fv_1^*\in\View$ such that
$\timef{\mra_2^0} \cap \timef{\mra_1^*} = \emptyset$.
Moreover, by \cref{prop:ra_proc}, there exists an execution fragment 
$\alpha_1=\tup{q_0, l_1, \tup{\mra_1, \pv_1, \fv_1}, l_2 \til l_{n_1}, \tup{\mra_{n_1}, \pv_{n_1}, \fv_{n_1}}}$ of $\RA$
such that $\sigma_1 = \restrictmem{\tracef{\alpha_1}}$, $\procsf{\alpha_1} = \procsf{\sigma_1}$,
and $\mra_{n_1} =  \mra_0 \uplus \mra_1^*$.
Since 
$q_0  \xrightarrow{\rho_2}_\RA \tup{\mra_0 \uplus \mra_2, \pv_2, \fv_2}$,
by \cref{prop:ra_proc}, there exists an execution fragment 
\[
  \alpha_2=\tup{q_0, l_1', \tup{\mra_1', \pv_1', \fv_1'}, l_2' \til l_{n_2}', \tup{\mra_{n_2}', \pv_{n_2}', \fv_{n_2}'}}
\]
of $\RA$
such that $\sigma_2 = \restrictmem{\tracef{\alpha_2}}$, $\procsf{\alpha_2} = \procsf{\sigma_2}$,
and $\mra_{n_2}' =  \mra_0 \uplus \mra_2^0$.
By \cref{lem:shuffle_restrict},
there exists $\rho \in (l_1 \cdots l_{n_1}) \shuffle (l_1' \cdots l_{n_2}')$
such that $\sigma = \restrictmem{\rho}$.
Finally, by \cref{prop:ra_main}, we have $\rho \in \traces{\RA,q_0}$,
and so  $\sigma \in \otraces{\RA,q_0}$.
\end{proof}
\begin{proof}[Proof of \cref{thm:ra-merge1}~\eqref{thm:RAM-Strong2}]
Let $q_0 = \tup{\mra_0, \pv_0, \fv_0}$ be a stable state
such that $\sigma_1,\sigma_2 \in \otraces{\RA,q_0}$.
Consider the traces $\sigma_1'$ and $\sigma_2'$ obtained by removing from 
$\sigma_1$ and $\sigma_2$ (respectively) the trailing fence of each process (if such fence exists).
Since we only removed fences, is easy to see that $\sigma_1',\sigma_2' \in \otraces{\RA,q_0}$.
Then, by \cref{thm:ra-merge1}, $\sigma_1'$ and $\sigma_2'$ are strongly mergeable in $\RA$.
Moreover, since the missing fence steps are the last steps performed by each process,
we can add them in any arbitrary position after the last event of the corresponding process.
Every sequence generated this way is still an observable trace is in $\otraces{\RA,q_0}$.
\end{proof}
\begin{proof}[Proof of \cref{thm:ra-merge1}~\eqref{thm:RAM-Strong3}]
The proof is symmetric to the proof of \cref{thm:ra-merge1}~\eqref{thm:RAM-Strong2}.
\end{proof}

\begin{theorem}
\label{thm:ra-merge2}
Every RWF-LTF observable memory sequence $\sigma_1$ 
and arbitrary observable memory sequence $\sigma_2$ with $\procsf{\sigma_1} \cap \procsf{\sigma_2} = \emptyset$
are weakly mergeable in $\RA$. 
\end{theorem}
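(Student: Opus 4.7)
My plan is to follow the informal strategy for $\RA^\textsf{w}$ outlined in \cref{sec:merge}: reduce to the already-established strong merge property $\RA^\textsf{s}_1$ (\cref{thm:ra-merge1}~\eqref{thm:RAM-Strong1}) by peeling off leading and trailing fences from $\sigma_1$. Fix a stable state $q_0 = \tup{\mra_0,\pv_0,\fv_0}$ such that $\sigma_1,\sigma_2 \in \otraces{\RA,q_0}$. Since $\sigma_1$ is RWF-LTF, write $\sigma_1 = \sigma_1^{\sf lf}\cdot\sigma_1'\cdot\sigma_1^{\sf tf}$ where $\sigma_1^{\sf lf}$ and $\sigma_1^{\sf tf}$ consist solely of fence events and $\sigma_1'$ is RW. The candidate interleaving is then
\[
\sigma \defeq \sigma_1^{\sf lf}\cdot\sigma_1'\cdot\sigma_2\cdot\sigma_1^{\sf tf},
\]
which is clearly an element of $\sigma_1\shuffle\sigma_2$. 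It remains to show $\sigma\in\otraces{\RA,q_0}$.

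First I would strip the fences from the trace generating $\sigma_1$: because fences in $\RA$ only affect the view of the executing process (they do not introduce new messages nor constrain reads), removing the fence transitions from any execution of $\sigma_1$ from $q_0$ yields an execution whose observable trace is $\sigma_1'$, i.e., $\sigma_1' \in \otraces{\RA,q_0}$. (I would phrase this as a short lemma: for a process-local fence step, the machine's memory and the views of all other processes are unchanged, so dropping it preserves enabledness of all subsequent non-fence steps by that process and of all steps by other processes.) Since $\sigma_1'$ is RW and $\procsf{\sigma_1'}\cap\procsf{\sigma_2} = \emptyset$, by \cref{thm:ra-merge1}~\eqref{thm:RAM-Strong1} (applied to the strong mergeability of $\sigma_1'$ with $\sigma_2$) we get in particular $\sigma_1'\cdot\sigma_2 \in \otraces{\RA,q_0}$; let $q_1 = \tup{\mra_1,\pv_1,\fv_1}$ be an $\RA$-state reachable from $q_0$ via this observable trace.

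Next I would reinsert the leading and trailing fences. For the leading fences $\sigma_1^{\sf lf}$: since $q_0$ is stable, the processes in $\procsf{\sigma_1^{\sf lf}}\subseteq\procsf{\sigma_1}$ have nothing new to inherit from $\fv_0$ (their per-process view already includes every timestamp that $\fv_0$ mentions, because all previously added messages have been propagated everywhere in a stable state). Hence each fence step is enabled and leaves the machine state essentially unchanged, so $q_0 \xrightarrow{\sigma_1^{\sf lf}}_\RA q_0'$ for some $q_0'$ from which all of $\sigma_1'\cdot\sigma_2$ can still be executed, using \cref{prop:ra_proc_step} to substitute the updated views. For the trailing fences $\sigma_1^{\sf tf}$, note that each such fence event is executed by some $\proc\in\procsf{\sigma_1}$ which, by disjointness, does not appear in $\sigma_2$. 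Thus $\pv(\proc)$ after executing $\sigma_1^{\sf lf}\cdot\sigma_1'\cdot\sigma_2$ from $q_0$ is the same as $\pv(\proc)$ after executing $\sigma_1^{\sf lf}\cdot\sigma_1'$ from $q_0$, so the trailing fence remains enabled; moreover, $\fv$ has only grown in the interim, which cannot disable a fence step.

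The main obstacle I anticipate is handling the leading fences rigorously: a fence by $\proc$ sets $\pv(\proc) \gets \pv(\proc)\sqcup\fv$, and at $q_0$ this could in principle alter $\pv(\proc)$ and thereby constrain $\proc$'s subsequent reads. The clean way to dispose of this is to exploit stability of $q_0$: a stable $\RA$-state satisfies $\pv(\proc)\sqsupseteq\viewf{\msg}$ for every $\msg\in\mra_0$ and every $\proc$, and in particular $\pv(\proc)\sqsupseteq\fv_0$, so the leading fence is a no-op on $\pv(\proc)$. I would state this as an auxiliary proposition (analogous in spirit to \cref{prop:tso01}) and then the remainder of the argument is a direct bookkeeping verification via \cref{prop:ra_proc_step,prop:ra_mon}.
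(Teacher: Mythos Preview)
Your proposal is correct and follows essentially the same approach as the paper: strip the leading and trailing fences from $\sigma_1$, invoke $\RA^\textsf{s}_1$ on the resulting RW core against $\sigma_2$, and reinsert the fences using stability of $q_0$ to argue that the leading fences are no-ops on the process views. The only cosmetic difference is that the paper's formal proof picks the interleaving $\sigma_1^{\sf lf}\cdot\sigma_2\cdot\sigma_1'\cdot\sigma_1^{\sf tf}$ rather than your $\sigma_1^{\sf lf}\cdot\sigma_1'\cdot\sigma_2\cdot\sigma_1^{\sf tf}$ (though the informal sketch in \cref{sec:merge} in fact uses your ordering); since strong mergeability yields every interleaving, either choice works.
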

\begin{proof}
Let $q_0 = \tup{\mra_0, \pv_0, \fv_0}$ be a stable state
such that $\sigma_1,\sigma_2 \in \otraces{\RA,q_0}$.
Consider the trace $\sigma_1'$ obtained by removing from $\sigma_1$ the leading and trailing fences.
Since we only removed fences, is easy to see that $\sigma_1' \in \otraces{\RA,q_0}$.
Then, by \cref{thm:ra-merge1}, $\sigma_1'$ and $\sigma_2$ are strongly mergeable in $\RA$.
In particular, $\sigma_2 \cdot \sigma_1' \in \otraces{\RA,q_0}$.
Next, we add back the fences we removed from $\sigma_1$
in the very beginning and very end of $\sigma_2 \cdot \sigma_1'$.
Since these fence steps are the first/last steps performed by each process,
the generated observable trace is in $\otraces{\RA,q_0}$.
In particular, since $q_0$ is stable, the execution of a fence from $q_0$ has no effect
(the process view is not changed).
\end{proof}


 \section{Proofs for Section \ref{sec:impos-obj}}
\label{app:impos-obj}

\begin{proposition}
\label{lem:lin_shuffle}
Let $\h \in \h_1 \shuffle \h_2$ for some $\h_1,\h_2 \in \cshistories{\obj}$.
Suppose that $\h \sqsubseteq \seqh$ for some $\seqh \in \cshistories{\obj}$.
Then, $\seqh \in \h_1 \shuffle \h_2$.
\end{proposition}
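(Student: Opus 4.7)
The plan is to exploit the sequentiality of $\h_1$ and $\h_2$ (and of $\seqh$). For a complete sequential history, any two consecutive events are either an invocation-and-response of a single operation (hence same process) or a response followed by the invocation of the next operation. Crucially, both kinds of consecutive pairs already lie in $\lin$: the first via $\sproc$, the second via the response-before-invocation clause in the definition of $\lin$. This is the only place where sequentiality of $\h_1$ and $\h_2$ (rather than mere completeness) is used.

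Concretely, I would start from the partition $\{1, \ldots, |\h|\} = I_1 \uplus I_2$ that witnesses $\h \in \h_1 \shuffle \h_2$, so that enumerating the events of $\h$ at positions in $I_j$ (in increasing order of position) reproduces $\h_j$, together with the bijection $f \colon \{1, \ldots, |\h|\} \to \{1, \ldots, |\seqh|\}$ coming from $\h \sqsubseteq \seqh$, which satisfies $\h[i] = \seqh[f(i)]$ and $f(i) < f(i')$ whenever $i < i'$ and $\tup{\h[i], \h[i']} \in \lin$. The core claim is that $f$ restricted to each $I_j$ is monotone. By transitivity of $<$, it suffices to verify this for pairs $i < i'$ that are consecutive within $I_j$; for any such pair, $\h[i]$ and $\h[i']$ are consecutive in $\h_j$, so by the observation above $\tup{\h[i], \h[i']} \in \lin$, and therefore $f(i) < f(i')$.

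With this monotonicity in hand, the subsequence of $\seqh$ at positions $f(I_j)$, read in sorted order, matches event-by-event the subsequence of $\h$ at positions $I_j$, which is $\h_j$. Since $f$ is a bijection, $f(I_1) \uplus f(I_2) = \{1, \ldots, |\seqh|\}$, and the partition $(f(I_1), f(I_2))$ exhibits $\seqh$ as an element of $\h_1 \shuffle \h_2$. The only mildly delicate step is the case analysis certifying $\lin$-membership of consecutive events of a sequential history; everything else is a routine reading-off from the definitions of $\shuffle$, $\sqsubseteq$, and $\reorder{\lin}{\cdot}$.
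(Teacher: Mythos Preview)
The paper states this proposition without proof (it appears in the appendix as a bare statement), so there is nothing to compare against. Your argument is correct: the key observation that any two consecutive events of a complete sequential history are related by $\lin$ (either via $\sproc$ for an invocation--response pair, or via the response--invocation clause for the boundary between two operations) is exactly what is needed, and the rest follows from the definitions of $\shuffle$ and $\reorder{\lin}{\cdot}$ as you describe.
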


\begin{lemma}
\label{lem:impl_sproc}
$\traces{\impl}$ is closed under $\reorder{\sproc}{\cdot}$,
\ie $\reorder{\sproc}{\pi} \subseteq \traces{\impl}$ whenever $\pi \in \traces{\impl}$.
\end{lemma}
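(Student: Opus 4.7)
The plan is to exploit the fact that $\sys{\impl}$ is an asynchronous interleaving of the per-process LTSs $\sysp{\impl}{\proc}$: a transition of $\sys{\impl}$ under label $l$ with $\procf{l} = \proc$ only reads and updates the $\proc$-component of the state tuple $\bar{q}$, leaving all other components unchanged. Hence transitions involving distinct processes commute, and this is precisely what $\reorder{\sproc}{\cdot}$ is designed to capture.

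First I would prove a single adjacent-swap lemma: if
$\bar{q}_0 \xrightarrow{l_1}_{\sys{\impl}} \bar{q}_1 \xrightarrow{l_2}_{\sys{\impl}} \bar{q}_2$
with $\procf{l_1} \neq \procf{l_2}$, then there exists $\bar{q}_1'$ with
$\bar{q}_0 \xrightarrow{l_2}_{\sys{\impl}} \bar{q}_1' \xrightarrow{l_1}_{\sys{\impl}} \bar{q}_2$.
Writing $\proc_i \defeq \procf{l_i}$, the transition rule in \cref{fig:lts-SI} gives
$\bar{q}_0(\proc_1) \xrightarrow{l_1}_{\sysp{\impl}{\proc_1}} q_1'$ with
$\bar{q}_1 = \bar{q}_0[\proc_1 \mapsto q_1']$, and similarly
$\bar{q}_1(\proc_2) \xrightarrow{l_2}_{\sysp{\impl}{\proc_2}} q_2'$ with
$\bar{q}_2 = \bar{q}_1[\proc_2 \mapsto q_2']$. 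Because $\proc_2 \neq \proc_1$ we have
$\bar{q}_1(\proc_2) = \bar{q}_0(\proc_2)$, so applying the transition rule in the reverse order yields
$\bar{q}_0 \xrightarrow{l_2}_{\sys{\impl}} \bar{q}_0[\proc_2 \mapsto q_2'] \xrightarrow{l_1}_{\sys{\impl}}
\bar{q}_0[\proc_2 \mapsto q_2', \proc_1 \mapsto q_1']$, and the latter state equals $\bar{q}_2$.

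Next, given $\pi \in \traces{\impl}$ and $\pi' \in \reorder{\sproc}{\pi}$, I would show by induction on $\size{\pi}$ that $\pi'$ can be obtained from $\pi$ by a finite number of adjacent swaps of labels belonging to distinct processes (a standard bubble-sort argument, using that the bijection $f$ witnessing the reordering never inverts two positions of the same process because $\sproc$ is reflexive on same-process events). Iterating the adjacent-swap lemma along such a sequence of swaps produces an execution of $\sys{\impl}$ whose trace is $\pi'$, so $\pi' \in \traces{\impl}$.

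There is no real obstacle here; the argument is entirely mechanical once the adjacent-swap lemma is in place. The only point worth verifying carefully is that per-process order is truly preserved by $\reorder{\sproc}{\cdot}$ — i.e., the bijection $f$ in the definition of reordering forces $f(i) < f(j)$ whenever $i < j$ and $\procf{\pi[i]} = \procf{\pi[j]}$ — so that every adjacent inversion needed to transform $\pi$ into $\pi'$ indeed involves two labels of different processes, which is exactly the hypothesis required by the adjacent-swap lemma.
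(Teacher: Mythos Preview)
Your proof is correct. You and the paper take different (both valid) routes. The paper argues by induction on $\size{\pi}$: writing $\pi' = \pi \cdot \tup{x}$, it characterizes $\reorder{\sproc}{\pi'}$ as all sequences $u \cdot \tup{x} \cdot v$ with $u \cdot v \in \reorder{\sproc}{\pi}$ and $\restrict{u}{\procf{x}} = \restrict{\pi}{\procf{x}}$; by the induction hypothesis $u \cdot v$ (hence $u$) is a trace, then it shows $x$ is enabled after $u$ (same local state for $\procf{x}$) and that $v$ can still be replayed afterward by reassembling the per-process runs. Your argument instead isolates the commutation of a single adjacent pair of transitions by distinct processes and reduces the general case to repeated adjacent swaps via a bubble-sort argument. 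Your approach is the standard independence/Mazurkiewicz-trace style and is arguably cleaner; it also yields, for free, that the reordered execution reaches the same final state, a fact the paper's proof does not explicitly extract. The paper's proof, on the other hand, avoids reasoning about a sequence of intermediate permutations and works directly with the characterization of $\reorder{\sproc}{\pi \cdot \tup{x}}$, which keeps the bookkeeping confined to a single inserted element.
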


\begin{proof}
The proof is by induction on the length of $\pi \in \traces{\impl}$. First, if $\pi=\emptyseq$,
then\\ $\reorder{\sproc}{\pi} = \emptyseq \in \traces{\impl}$. Hence, the base holds.

Suppose that $\reorder{\sproc}{\pi} \subseteq \traces{\impl}$ for all $\pi \in \traces{\impl}$
such that $\size{\pi}=k$, and consider $\pi' = \pi \cdot \tup{x} \in \traces{\impl}$. 
Then,
\begin{equation}
\reorder{\sproc}{\pi'} = \{u \cdot \tup{x} \cdot v \mid u \cdot v \in \reorder{\sproc}{\pi} \wedge 
\restrict{u}{\procf{x}} = \restrict{\pi}{\procf{x}} \}
\label{eq:reorder}
\end{equation}
We prove that $\reorder{\sproc}{\pi'} \subseteq \traces{\impl}$.

Let $\pi'' \in \reorder{\sproc}{\pi'}$. Then, $\pi'' = u \cdot \tup{x} \cdot v$ for some $u$, $v$
such that $u \cdot v \in \reorder{\sproc}{\pi}$ and $\restrict{u}{\procf{x}} = \restrict{\pi}{\procf{x}}$.
Since by the induction hypothesis, $u\cdot v \in \traces{\impl}$, we have  $u \in \traces{\impl}$. 
Let $\bar{q}$ be the state of $\sys{\impl}$ at the end of $u$.
By~(\ref{eq:reorder}), $\bar{q}(\procf{x})$ is also the state of $\procf{x}$ at the end
of $\pi$. Since $\pi \cdot \tup{x} \in \traces{\impl}$, by the definition of 
$\sys{\impl}$, there exists a state $q'$ of $\sysp{\impl}{\procf{x}}$ such that
$\tup{\bar{q}(\procf{x}), x, q'}$ is a transition of $\sysp{\impl}{\procf{x}}$.
Thus, the definition of $\sys{\impl}$ implies that
$\tup{\bar{q}, x, \bar{q}[\procf{x} \mapsto q']}$ is a transition
of $\sys{\impl}$, and therefore, $u \cdot \tup{x} \in \traces{\impl}$.

It remains to show that there exists an execution fragment $\alpha$ of $\sys{\impl}$
starting from $\bar{q}[\procf{x} \mapsto q']$ such that $\tracef{\alpha}=v$.
Since $\reorder{\sproc}{\pi}$ preserves the process order
in $\pi$, for all $p\in \procsf{u}$, $\restrict{u}{p}$ is a prefix of 
$\restrict{\pi}{p}$. Since by~(\ref{eq:reorder}), 
$\procf{x} \not\in \procsf{v}$, for all $p\in \procsf{v}$,
$\bar{q}[\procf{x} \mapsto q'](p)$ is equal to the state of $p$ at the end
of some prefix of $\restrict{\pi}{p}$. Since $\pi \in \traces{\impl}$,
for each $p \in \procsf{v}$, there exists an execution fragment $\alpha_p$
of $\sysp{\impl}{p}$ starting from $\bar{q}[\procf{x} \mapsto q'](p)$
such that $\tracef{\alpha_p} = \restrict{v}{p}$. Thus, by the definition
of $\sys{\impl}$, the interleaving $\alpha$ of 
$\{\alpha_p \mid p \in \procsf{v}\}$ such that
$\tracef{\alpha} = v$ is an execution fragment of $\sys{\impl}$, as needed.
\end{proof}

\begin{lemma}
\label{lem:impl_shuffle}
Suppose that $\pi_0 \cdot \pi_1 \in \traces{\impl}$
and $\pi_0 \cdot \pi_2 \in \traces{\impl}$
for some sequences $\pi_0, \pi_1,\pi_2$
such that $\restrictobj{\pi_0} \in \chistories{\obj}$
and $\procsf{\pi_1} \cap \procsf{\pi_2} = \emptyset$.
Then, $\pi_0 \cdot \pi \in \traces{\impl}$ for every $\pi \in \pi_1 \shuffle \pi_2$.
\end{lemma}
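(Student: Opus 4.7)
The plan is to reduce the claim to \cref{lem:impl_sproc} via the intermediate step of establishing $\pi_0 \cdot \pi_1 \cdot \pi_2 \in \traces{\impl}$. Assuming this step, for any $\pi \in \pi_1 \shuffle \pi_2$ the sequence $\pi_0 \cdot \pi$ is an $\sproc$-reordering of $\pi_0 \cdot \pi_1 \cdot \pi_2$: the events inside $\pi_0$ are left untouched, same-process pairs straddling $\pi_0$ and $\pi_1 \cdot \pi_2$ keep their order since $\pi_0$ remains first, same-process pairs inside $\pi_1$ (resp.\ inside $\pi_2$) keep their order by the definition of shuffle, and no same-process pair has one event in $\pi_1$ and the other in $\pi_2$ by the disjointness hypothesis. \cref{lem:impl_sproc} then yields $\pi_0 \cdot \pi \in \traces{\impl}$.

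The main step is therefore to show $\pi_0 \cdot \pi_1 \cdot \pi_2 \in \traces{\impl}$. I would begin by arguing that the state $\bar{q}_0$ of $\sys{\impl}$ after $\pi_0$ is the fully idle state, i.e.\ $\bar{q}_0(\proc) = \bot$ for every $\proc \in \Proc$. This follows from $\restrictobj{\pi_0} \in \chistories{\obj}$ together with the shape of $\sysp{\impl}{\proc}$: from $\bot$ only an invocation is enabled, and a response returns the process to $\bot$, so any process $\proc$ with $\restrict{\restrictobj{\pi_0}}{\proc}$ either empty or ending with a response cannot have further memory events in $\restrict{\pi_0}{\proc}$ (they would require an intermediate invocation, contradicting completeness).

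Given $\bar{q}_0$ is idle, since $\pi_0 \cdot \pi_2 \in \traces{\impl}$ there is an execution fragment of $\sys{\impl}$ from $\bar{q}_0$ with trace $\pi_2$, which projects, for each $\proc \in \procsf{\pi_2}$, to an execution fragment of $\sysp{\impl}{\proc}$ starting from $\bot$ with trace $\restrict{\pi_2}{\proc}$. After $\pi_0 \cdot \pi_1$ the system reaches some state $\bar{q}_1$ with $\bar{q}_1(\proc) = \bar{q}_0(\proc) = \bot$ for every $\proc \in \procsf{\pi_2}$, using $\procsf{\pi_1} \cap \procsf{\pi_2} = \emptyset$. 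Interleaving the per-process executions of $\procsf{\pi_2}$ from $\bar{q}_1$ to match $\pi_2$ extends $\pi_0 \cdot \pi_1$ by $\pi_2$, establishing the intermediate claim.

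The only real subtlety is verifying that completeness of $\restrictobj{\pi_0}$ rules out lingering memory events of a process after its last response; everything else is a routine bookkeeping argument using the per-process decomposition of $\sys{\impl}$ and the disjointness hypothesis.
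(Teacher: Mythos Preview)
Your approach is correct and genuinely different from the paper's. The paper proves the lemma by a direct induction on the length of the prefix of $\pi$: it fixes the state $\bar{q}_0$ reached after $\pi_0$ (which is the all-$\bot$ state, by completeness of $\restrictobj{\pi_0}$), records the sequences of $\sys{\impl}$-states along the two given executions of $\pi_1$ and $\pi_2$, and then shows inductively that every prefix $\pi[1..l]$ of the shuffle leads to a state whose $p$-component agrees with the appropriate intermediate state from the $\pi_1$-execution if $p\in\procsf{\pi_1}$, and from the $\pi_2$-execution otherwise. It never invokes \cref{lem:impl_sproc}.

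Your route---first establish $\pi_0\cdot\pi_1\cdot\pi_2\in\traces{\impl}$ (using that the state after $\pi_0$ is all-$\bot$, hence all $\procsf{\pi_2}$-components are still $\bot$ after $\pi_1$), then observe that any $\pi_0\cdot\pi$ with $\pi\in\pi_1\shuffle\pi_2$ is an $\sproc$-reordering of $\pi_0\cdot\pi_1\cdot\pi_2$ and apply \cref{lem:impl_sproc}---is shorter and reuses existing machinery. The case analysis you sketch for the $\sproc$-reordering claim is exactly right: the only potentially order-violating pairs would straddle $\pi_1$ and $\pi_2$, and those are ruled out by the disjointness hypothesis. One small remark: your ``subtlety'' about lingering memory events after the last response is in fact immediate from the LTS structure (from $\bot$ only invocations are enabled), so the state after $\pi_0$ is literally $\lambda p.\,\bot$, the initial state of $\sys{\impl}$; this also dispels any worry about nondeterminism in the choice of $\bar{q}_0$. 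The paper's direct induction is more self-contained but longer; your reduction is cleaner given that \cref{lem:impl_sproc} is already available.
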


\begin{proof}
Since $\pi_0 \in \traces{\impl}$, there exists an execution 
fragment $\alpha_0$ of $\sys{\impl}$ such that $\tracef{\alpha_0}=\pi_0$.
Let $\bar{q_0}\in \states{\sys{\impl}}$ be the last state of $\alpha_0$.
Since $\restrictobj{\pi_0} \in \chistories{\obj}$, all $\sysp{\impl}{p}$
are in the state $\bot$ at the end of $\pi_0$.
Thus, for all $i\in \{1,2\}$ and $0 \le n \le \size{\pi_i}$, 
there exists a sequence 
$\tup{\bar{q}_0,\bar{q}_{i,1} \til \bar{q}_{i,\size{\pi_i}}}$ of the $\sys{\impl}$'s states
such that $\tup{\bar{q}_0,\pi_i[1],\bar{q}_{i,1} \til \pi_i[\size{\pi_i}], \bar{q}_{i,\size{\pi_i}}}$
is an execution fragment of $\sys{\impl}$.

Let $\pi \in \pi_1 \shuffle \pi_2$.
We prove by induction that for all $0 \le l \le \size{\pi}$, the following
holds:
\begin{description}
	\item
$P(l)$: There exists an execution fragment $\alpha$ of $\sys{\impl}$ such that 
$\tracef{\alpha}=\tup{\pi[1] \til \pi[l]}$, $\alpha[1]=\bar{q}_0$, and the following
holds for some integers $m$, $n$:
\begin{enumerate}[label=(\roman*)]
\item \label{Pi} for all $p \in \Proc$:
\[
\alpha[\size{\alpha}](p)= 
\begin{cases}
    \bar{q}_{1,m}(p),& \text{if } p\in \procsf{\pi_1}\\
    \bar{q}_{2,n}(p),              & \text{otherwise}
\end{cases}
\]
\item \label{Pii} $\tup{\pi[1] \til \pi[l]} \in 
\tup{\pi_1[1] \til \pi_1[m]} \shuffle \tup{\pi_2[1] \til \pi_2[n]}$.
\end{enumerate}
\end{description}
\paragraph{Base:} Since $\alpha=\tup{\bar{q}_0}$ is an execution fragment of $\sys{\impl}$,
it remains to show that~\ref{Pi}--\ref{Pii} for some $m$ and $n$.
Let $m=n=0$. Since $\bar{q}_{i,0}=\bar{q}_0$ for $i \in \{1,2\}$,
$\bar{q}_0(p)=\bar{q}_{1,0}(p)$, if $p\in \procsf{\pi_1}$, and
$\bar{q}_0(p)=\bar{q}_{2,0}(p)$, otherwise. 
Since $\emptyseq = \emptyseq \shuffle \emptyseq$,
both~\ref{Pi} and~\ref{Pii} hold for $m=n=0$, as needed.

\paragraph{Inductive step:} Assume $P(k)$ is true for $k<\size{\pi}$, meaning that 
there exists an execution fragment 
$\tup{\bar{q_0},\pi[1],\bar{q_1} \til \pi[k], \bar{q_k}}$ of $\sys{\impl}$ 
and integers $\hat{m}$, $\hat{n}$ such that~\ref{Pi}--\ref{Pii} hold 
for $m=\hat{m}$ and $n=\hat{n}$. We will prove that $P(k+1)$ holds. 
We first show that there exists $\bar{q}_{k+1} \in \states{\sys{\impl}}$
such that $\bar{q_k} \xrightarrow{\pi[k+1]}_\sys{\impl} \bar{q}_{k+1}$.
Since~\ref{Pii} holds for $l=k$, $m=\hat{m}$, $n=\hat{n}$,
$\tup{\pi[1] \til \pi[k]} \in  \tup{\pi_1[1] \til \pi_1[\hat{m}]} \shuffle 
\tup{\pi_2[1] \til \pi_2[\hat{n}]}$, 
which, given that $\pi \in \pi_1 \shuffle \pi_2$, implies that
$\pi[k+1] \in \{\pi_1[\hat{m}+1], \pi_2[\hat{n}+1]\}$. Suppose first that
$\pi[k+1]=\pi_1[\hat{m}+1]$. Thus, we need to show that 
there exists $\bar{q}_{k+1} \in \states{\sys{\impl}}$ such that
\begin{equation}
\label{eq:case1}
\bar{q_k} \xrightarrow{\pi_1[\hat{m}+1]}_\sys{\impl} \bar{q}_{k+1}
\end{equation}
Let $p=\procf{\pi_1[\hat{m}+1]}$. 
Since $\bar{q}_{1,\hat{m}} \xrightarrow{\pi_1[\hat{m}+1]}_{\sys{\impl}} \bar{q}_{1,{\hat{m}+1}}$, 
by the transition rule in \cref{fig:lts-SI}, there exists $q' \in \states{\sysp{\impl}{p}}$ such that
\begin{equation}
\label{eq:trans-m}
\bar{q}_{1,{\hat{m}+1}}=\bar{q}_{1,\hat{m}}[p \mapsto q'],
\end{equation}
and $\bar{q}_{1,\hat{m}}(p) \xrightarrow{\pi_1[\hat{m}+1]}_{\sysp{\impl}{p}} q'$.
Since~\ref{Pi} holds for $l=k$, $m=\hat{m}$, $n=\hat{n}$, we have
$\bar{q_k}(p)=\bar{q}_{1,\hat{m}}(p)$. Thus,
$\bar{q_k}(p) \xrightarrow{\pi_1[\hat{m}+1]}_{\sysp{\impl}{p}} q'$. 
Applying transition rule in \cref{fig:lts-SI} again, we get
$$
\bar{q_k}(p) \xrightarrow{\pi_1[\hat{m}+1]}_{\sysp{\impl}{p}} \bar{q_k}[p \mapsto q'].
$$
Thus,~(\ref{eq:case1}) holds for $\bar{q}_{k+1}=\bar{q_k}[p \mapsto q']$, and therefore, 
$\tup{\bar{q_0},\pi[1],\bar{q_1} \til \pi[k], \bar{q_k}, \pi_{k+1}, \bar{q}_{k+1}}$
is an execution fragment of $\sys{\impl}$. 

Furthermore, since $\procsf{\pi_1} \cap \procsf{\pi_2} = \emptyset$, and 
$p\in \procsf{\pi_1}$, we get $p\not\in \procsf{\pi_2}$. Thus, 
for all $p' \in \procsf{\pi_2}$, $\bar{q}_{k+1}(p')=\bar{q}_k(p')$. 
Since~\ref{Pi} holds for $l=k$, $m=\hat{m}$, $n=\hat{n}$, we have
for all $p'\in \procsf{\pi_2}$,  $\bar{q}_k(p') = \bar{q}_{2,\hat{n}}(p')$.

Next, consider a process $p'\in \procsf{\pi_1}$. If $p'=p$, then
by~(\ref{eq:trans-m}), $\bar{q}_{k+1}(p')=q'=\bar{q}_{1,\hat{m}+1}$. 
Otherwise, $\bar{q}_{k+1}(p')=\bar{q}_k(p')$. 
Since~\ref{Pi} holds for $l=k$, $m=\hat{m}$, $n=\hat{n}$,
$\bar{q}_k(p')=\bar{q}_{1,\hat{m}}(p')$. Since by~(\ref{eq:trans-m}),
$\bar{q}_{1,\hat{m}}(p')=\bar{q}_{1,\hat{m}+1}(p')$, we have
$\bar{q}_{k+1}(p')=\bar{q}_{1,\hat{m}+1}(p')$.
Thus,~\ref{Pi} holds for $l=k+1$, $m=\hat{m}+1$, and
$n=\hat{n}$.

Next, since~\ref{Pii} holds for $l=k$, $m=\hat{m}$, $n=\hat{n}$, 
$\tup{\pi[1] \til \pi[k]}=\tup{\pi_1[1] \til \pi_1[m]} \shuffle \tup{\pi_2[1] \til \pi_2[n]}$,
which implies that
$\tup{\pi[1] \til \pi[k], \pi[k+1]}=
\tup{\pi[1] \til \pi[k], \pi_1[m+1]}=\\ 
\tup{\pi_1[1] \til \pi_1[m], \pi_1[m+1]} \shuffle \tup{\pi_2[1] \til \pi_2[n]}$.
Hence,~\ref{Pii} holds for $l=k+1$, $m=\hat{m}+1$, and
$n=\hat{n}$. Thus, we conclude that both~\ref{Pi} and~\ref{Pii} hold
for $l=k+1$, $m=\hat{m}+1$, and $n=\hat{n}$.

Finally, instantiating the above argument for $\pi[k+1]=\pi_1[\hat{n}+1]$, we obtain
that\\
$\tup{\bar{q_0},\pi[1],\bar{q_1} \til \pi[k], \bar{q_k}, \pi_{k+1}, \bar{q}_{k+1}}$
is an execution fragment of $\sys{\impl}$ such that 
both~\ref{Pi} and~\ref{Pii} hold for $l=k+1$, $m=\hat{m}$, and $n=\hat{n}+1$.
Thus, $P(k+1)$ is true, and therefore, $P(l)$ holds for all $0 \le l \le \size{\pi}$.

Since $P(\size{\pi})$ holds, there exists an execution fragment $\alpha$
of $\sys{\impl}$ such that $\tracef{\alpha}=\pi$ and $\alpha[1]=\bar{q_0}$.
Since $\alpha_0$ is an execution of $\sys{\impl}$, and
$\alpha_0[\size{\alpha_0}]=\bar{q_0}=\alpha[1]$,
$\alpha_0 \cdot \alpha$ is also an execution of $\sys{\impl}$.
Thus, $\tracef{\alpha_0 \cdot \alpha}=\pi_0 \cdot \pi$
is a trace of $\sys{\impl}$, as needed.
\end{proof}

\mainzero*
\begin{proof}
Let $\h_i = \restrictobj{\pi_i}$ and $\sigma_i = \restrictmem{\pi_i}$ ($i\in\set{0,1,2}$). 
We give the argument for the ``some'' case. (The proof for ``every'' is similar.)
Let $\sigma\in \sigma_1 \shuffle \sigma_2$ such that $\sigma_0 \cdot \sigma \in \otraces{\mm}$.
By \cref{lem:shuffle_restrict}, there exists $\pi \in \pi_1 \shuffle \pi_2$ such that $\restrictmem{\pi} = \sigma$.
By \cref{lem:impl_shuffle}, we have $\pi_0 \cdot \pi \in \traces{\impl}$.
Let $\h=\restrictobj{\pi}$. By \cref{lem:shuffle_restrict}, we have $\h \in \h_1 \shuffle \h_2$.
Then, since $\h_0 \cdot \h=\restrictobj{(\pi_0 \cdot \pi)}$ and
$\sigma_0 \cdot \sigma = \restrictmem{(\pi_0 \cdot \pi)}$,
it follows that $\h \in \histories{\pi_0,\impl,\mm}$.
\end{proof}

\begin{proposition}
\label{lem:shuffle_complete}
Let $\h_1,\h_2 \in \chistories{\obj}$.
Then, $\h_1 \shuffle \h_2 \subseteq \chistories{\obj}$.
\end{proposition}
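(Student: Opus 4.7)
The plan is to fix an arbitrary $\h \in \h_1 \shuffle \h_2$ and verify the two clauses of completeness for $\h$: well-formedness (each $\restrict{\h}{\proc}$ is sequential) and that each non-empty $\restrict{\h}{\proc}$ ends in a response event. Both clauses are per-process, so the first move is to pass from $\h$ to the per-process restrictions via \cref{lem:shuffle_restrict}, which yields $\restrict{\h}{\proc} \in (\restrict{\h_1}{\proc}) \shuffle (\restrict{\h_2}{\proc})$ for every $\proc \in \Proc$. Completeness of $\h_1$ and $\h_2$ guarantees that each $\restrict{\h_i}{\proc}$ is either empty or of the alternating form $\invres{\proc}{\op_1}{u_1} \cdots \invres{\proc}{\op_{n_i}}{u_{n_i}}$ ending with a response.

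The ``ends with a response'' clause follows from a one-step induction on the shuffle formation rules: the last element of any non-empty shuffle equals the last element of one of its two operands, which in our case is always a response event. If both per-process restrictions are empty then so is $\restrict{\h}{\proc}$ and the clause is vacuous.

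The \emph{main obstacle} is the well-formedness clause, where I need to show that the shuffle $\restrict{\h}{\proc}$ of the two sequential sequences $\restrict{\h_1}{\proc}$ and $\restrict{\h_2}{\proc}$ is itself a prefix of an alternating invocation-response sequence at $\proc$. This is not automatic from the shuffle formation rules alone: for instance, interleaving $\tup{\ev{\proc}{\invi{\op_1}}, \ev{\proc}{\resi{u_1}}}$ with $\tup{\ev{\proc}{\invi{\op_2}}, \ev{\proc}{\resi{u_2}}}$ in an invocation-invocation-response-response pattern is a valid shuffle that violates alternation. The step therefore reduces to arguing that at each $\proc$ the relevant interleavings collapse, which happens precisely when at most one of $\restrict{\h_1}{\proc}, \restrict{\h_2}{\proc}$ is non-empty: then $\restrict{\h}{\proc}$ coincides with the non-empty operand (or is empty), and sequentiality is inherited directly. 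This collapsing condition is equivalent to $\procsf{\h_1} \cap \procsf{\h_2} = \emptyset$, which is supplied by every call-site of the proposition in the paper---most notably via \cref{lem:main0}, which feeds into \cref{thm:main}. The formal proof then finishes by a routine case split on emptiness of $\restrict{\h_i}{\proc}$ at each $\proc$, together with the immediate observation that $\emptyseq \shuffle s = s \shuffle \emptyseq = \set{s}$.
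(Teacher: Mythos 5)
The paper states this proposition without proof, so there is no argument of its own to compare against; your write-up is effectively the only proof on the table, and it is correct --- including its most important observation, namely that the proposition as literally stated is \emph{false}. Your counterexample is right: for $\h_1 = \invres{\proc}{\op_1}{u_1}$ and $\h_2 = \invres{\proc}{\op_2}{u_2}$ over the \emph{same} process $\proc$, the interleaving $\tup{\ev{\proc}{\invi{\op_1}}, \ev{\proc}{\invi{\op_2}}, \ev{\proc}{\resi{u_1}}, \ev{\proc}{\resi{u_2}}}$ lies in $\h_1 \shuffle \h_2$, but its restriction to $\proc$ is not sequential, so the shuffle is not well-formed and hence not complete. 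The missing hypothesis $\procsf{\h_1} \cap \procsf{\h_2} = \emptyset$ does hold at the proposition's only use site (the proof of \cref{thm:main}, where it is part of condition (i) and is inherited by the shuffled history produced via \cref{lem:main0}), so the overall argument of the paper is unaffected, but the hypothesis should be added to the statement. With it, your proof goes through exactly as you describe: \cref{lem:shuffle_restrict} reduces both clauses of completeness to per-process restrictions, disjointness forces $\restrict{\h}{\proc}$ to coincide with $\restrict{\h_1}{\proc}$ or $\restrict{\h_2}{\proc}$ (or $\emptyseq$), and sequentiality together with the trailing response event are inherited verbatim. Two minor remarks: once you have the per-process collapse, the separate ``last element of a shuffle is the last element of an operand'' argument is subsumed by it; and the identity $\emptyseq \shuffle s = s \shuffle \emptyseq = \set{s}$ deserves the one-line induction on the shuffle formation rules that you invoke implicitly.
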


\begin{proposition}
\label{lem:same_procs}
Suppose that $\pi_0 \cdot \pi \in \traces{\impl}$
for some sequences $\pi_0, \pi$ such that $\restrictobj{\pi_0} \in \chistories{\obj}$.
Then, $\procsf{\restrictobj{\pi}}=\procsf{\pi}$.
\end{proposition}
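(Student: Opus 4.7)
The plan is to prove both inclusions $\procsf{\restrictobj{\pi}} \subseteq \procsf{\pi}$ and $\procsf{\pi} \subseteq \procsf{\restrictobj{\pi}}$. The first is immediate since $\restrictobj{\pi}$ is, by definition, a subsequence of $\pi$, so every process appearing in $\restrictobj{\pi}$ also appears in $\pi$.

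The crux is the reverse inclusion, and the key leverage is the assumption that $\restrictobj{\pi_0}$ is complete. First I would establish the intermediate fact: if $\bar{q}$ is the state of $\sys{\impl}$ reached by executing $\pi_0$, then $\bar{q}(\proc) = \bot$ for every $\proc\in\Proc$. For processes $\proc \nin \procsf{\pi_0}$ this holds by the definition of $\initial{\sys{\impl}}$, which sets every component to $\bot$. For processes $\proc \in \procsf{\pi_0}$, completeness of $\restrictobj{\pi_0}$ ensures that $\restrict{\restrictobj{\pi_0}}{\proc}$ ends with a response event; inspection of the transition rules in \cref{fig:lts-SIP} shows that the only transition labelled by a response event leads to $\bot$, and furthermore memory events preserve the $\tup{\op,\_}$ shape of the state while invocation events only leave $\bot$. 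Hence the last object-level transition of $\proc$ returns the per-process LTS to $\bot$, and subsequent memory events of other processes leave $\proc$'s component untouched.

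Now let $\proc \in \procsf{\pi}$ and take the smallest index $i$ with $\procf{\pi[i]} = \proc$. Before the transition labelled $\pi[i]$ fires, the $\proc$-component of the $\sys{\impl}$-state is still $\bar{q}(\proc) = \bot$, because no prior event of $\pi$ touches $\proc$ (by minimality of $i$) and the transition rule in \cref{fig:lts-SI} only updates the component of the participating process. Again by \cref{fig:lts-SIP}, the only transitions enabled from $\bot$ are those labelled by invocation events of the form $\ev{\proc}{\invi{\op}}$. Therefore $\pi[i]$ is an invocation event, hence an object event, and so $\proc \in \procsf{\restrictobj{\pi}}$.

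No step here looks like an obstacle; the whole argument is bookkeeping on the per-process LTS, and the only subtle point to state carefully is that completeness of $\restrictobj{\pi_0}$ (rather than mere well-formedness) is what forces every $\bar{q}(\proc)$ back to $\bot$, so that every first appearance of a process in $\pi$ is forced to be an invocation.
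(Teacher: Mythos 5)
Your proof is correct. Both directions match the paper's strategy at the top level: the inclusion $\procsf{\restrictobj{\pi}} \subseteq \procsf{\pi}$ is immediate, and the reverse inclusion is where completeness of $\restrictobj{\pi_0}$ does the work. The organization of that second half differs, though. The paper argues backwards from an arbitrary memory event $e$ of $\proc$ in $\pi$: it asserts (without proof, as a structural property of traces of $\sys{\impl}$) that $e$ must be preceded by an invocation $\ev{\proc}{\invi{\op}}$ with no intervening object action of $\proc$, and then uses completeness of $\restrictobj{\pi_0}$ to rule out that this invocation lies inside $\pi_0$. You instead argue forwards: you first establish the state invariant that every component of the $\sys{\impl}$-state equals $\bot$ after $\pi_0$ (which is exactly where completeness enters), and then observe that from $\bot$ only invocation transitions are enabled, so the \emph{first} event of each process in $\pi$ is an object event. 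Your version has the advantage of making explicit, directly from the transition rules of \cref{fig:lts-SIP} and \cref{fig:lts-SI}, the LTS fact that the paper's proof takes for granted; the paper's version avoids having to state the intermediate all-$\bot$ invariant. One small point worth tightening in your write-up: when you split on $\proc \in \procsf{\pi_0}$ versus not, the case of a process with memory events but no object events in $\pi_0$ should be explicitly dismissed (it cannot occur, again because from $\bot$ only invocations are enabled), so that $\restrict{\restrictobj{\pi_0}}{\proc}$ is guaranteed nonempty before you invoke completeness.
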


\begin{proof}
If $p\in \procsf{\restrictobj{\pi}}$, then there exists $e\in \restrictobj{\pi}$ such that
$\procf{e}=p$. Since $e\in \pi$, $p\in \procsf{\pi}$, and therefore
$\procsf{\restrictobj{\pi}} \subseteq \procsf{\pi}$.
Suppose that $p\in \procsf{\pi}$. Then, there exists $e\in \pi$ such that
$\procf{e}=p$. If $\actionf{e} \in \actsf{\obj}$, then $p\in \procsf{\restrictobj{\pi}}$, 
which means that $\procsf{\pi} \subseteq \procsf{\restrictobj{\pi}}$.

Suppose that $\actionf{e} \nin \actsf{\obj}$. Then, $\actionf{e} \in \actsf{\memacts}$. 
Let $i$ be the index of $e$ in $\pi_0 \cdot \pi$. Since $\pi_0 \cdot \pi \in \traces{\impl}$, 
there exists $j < i$ such that $(\pi_0 \cdot \pi)[j] = \ev{p}{\invi{o}}$ 
for some $o \in \opsf{\obj}$, and there does not exist $k$ such that
$j < k < i$ and $(\pi_0 \cdot \pi)[k] = \ev{p}{a}$ for some $a \in \actsf{\obj}$.
If $j \le \size{\pi_0}$, then since $\pi_0$ is complete, there exists $j < k' < i$
such that $(\pi_0 \cdot \pi)[k'] = \ev{p}{\resi{o}}$, which is a contradiction. 
Hence, $j > \size{\pi_0}$, which implies that $\ev{p}{\invi{o}} \in \pi$.
Thus, $p \in \procsf{\restrictobj{\pi}}$, and therefore, 
$\procsf{\pi} \subseteq \procsf{\restrictobj{\pi}}$.

We get that in all cases, $\procsf{\restrictobj{\pi}} \subseteq \procsf{\pi}$ and
$\procsf{\pi} \subseteq \procsf{\restrictobj{\pi}}$, and therefore,
$\procsf{\restrictobj{\pi}} = \procsf{\pi}$, as needed.
\end{proof}

\availablelin*
\begin{proof}
For ease the presentation, we will assume that all invocations and
responses in $h'$ are unique. We can easily achieve such property
by adding a fictitious timestamp to each invocation or response
of a process.

Let $\h\in \histories{\obj}$ such that $\h \sqsubseteq \h'$.
Consider a $\pi_0 \in \traces{\impl}$ such that 
$\restrictmem{\pi_0}\in\traces{\Atomic}$ and $\restrictobj{\pi_0}=\h_0$.
The assumption of the lemma implies that $\h' \in \histories{\pi_0,\impl,\Atomic}$.
Thus there is a $\pi'$ such that $\pi_0\cdot \pi' \in\traces{\impl}$,
$\restrictmem{(\pi_0\cdot \pi')} \in \otraces{\Atomic}$ and $\h'=\restrictobj{\pi'}$.
We have that $\restrictobj{\pi'} = \h' \in \reorder{\lin}{\h}$.
We will show that there exists a $\pi$ such that $\pi_0\cdot \pi \in\traces{\impl}$,
$\restrictmem{(\pi_0\cdot \pi)} \in \otraces{\Atomic}$
and $\h=\restrictobj{\pi}$, which proves the lemma.
Such $\pi$ will be constructed by gradually modifying invocation and responses of $\pi'$
until the history of $\obj$ in the final modified trace is precisely $\h$.

For any sequence $s$ and $1 \leq i,j \leq |s|$, we let $s[i,j]$ denote the subsequence of $s$ with
all its elements from position $i$ to position $j$. 
By induction on $\ell$, $0 \leq \ell \leq |\h| (= |\h'|)$, we show that there exists a $\pi^\ell$
such that:
\begin{itemize}
\item $\pi_0 \cdot \pi^\ell \in \traces{\impl}$,
\item $\restrictmem{(\pi_0 \cdot \pi^\ell)} \in \otraces{\Atomic}$, 
\item $\restrictobj{\pi^\ell} \in \reorder{\lin}{\h}$ and
\item $\restrictobj{\pi^\ell}[1, \ell] = h[1, \ell]$.
\end{itemize}
Notice $\pi^{|\h|}$ is a sequence we are looking for. 

For the base case, we let $\pi^0 = \pi'$. 
We already saw that $\pi_0 \cdot \pi^0 \in \traces{\impl}$,
$\restrictmem{(\pi_0 \cdot \pi^0)} \in \otraces{\Atomic}$, 
$\restrictobj{\pi^0} \in \reorder{\lin}{\h}$,
and $\restrictobj{\pi^0}[1, 0] = h[1, 0]$ is true by vacuity.
 
For the inductive step $\ell+1$, $0 \leq \ell < |h|$, we have two cases.
In the fist case, $\restrictobj{\pi^\ell}[\ell+1] = \h[\ell+1]$, 
and then $\restrictobj{\pi^\ell}[1, \ell+1] = h[1, \ell+1]$.
We set $\pi^{\ell+1} = \pi^\ell$, and we are done, by induction hypothesis.

In the second case, $\restrictobj{\pi^\ell}[\ell+1] \neq \h[\ell+1]$,
we modify $\pi^\ell$ to obtain~$\pi^{\ell+1}$.
Let $1 \leq i \leq |\pi^\ell|$ such that $\pi^\ell[i] = \restrictobj{\pi^\ell}[\ell+1]$.
By induction hypothesis, $\restrictobj{\pi^\ell}$ is a reorder of $\h$,
and hence $\h[\ell+1]$ appears somewhere in $\pi^\ell$.
Let $1 \leq j \leq |\pi^\ell|$ such that $\pi^\ell[j] = \h[\ell+1]$.
 Notice that~$i < j$.
 For the rest of the proof, let $e$ and $f$ denote $\pi^\ell[i]$ and $\pi^\ell[j]$, respectively.
We have the following subcases, considering that $e$ and $f$ can be invocation or responses:
\begin{itemize}

\item \underline{$e$ is an invocation and $f$ is a response.}
Observe that $e$ appears somewhere in $h[\ell+2, |\h|]$.
Thus, the response $f$ appears before the invocation $e$ in $\h$,
but they appear in opposite order in $\restrictobj{\pi^\ell}$,
which implies that $\restrictobj{\pi^\ell} \nin \reorder{\lin}{\h}$, contradicting the induction hypothesis.
Therefore, this subcase cannot happen.\\

\item \underline{$f$ is an invocation.}
Let $$\pi^{\ell+1} = \pi^\ell[1,i-1] \cdot f \cdot e \cdot \pi^\ell[i+1,j-1] \cdot \pi^\ell[j+1,|\pi^\ell|].$$

We argue that $\pi^{\ell+1}$ has the desired properties. 
\begin{itemize}
\item By construction and since $\restrictobj{\pi^\ell}[1, \ell] = h[1, \ell]$, by induction hypothesis,
we have $\restrictobj{\pi^{\ell+1}}[1, \ell+1] = h[1, \ell+1]$.

\item We have $\restrictmem{(\pi_0 \cdot \pi^{\ell+1})} \in \otraces{\Atomic}$
because $\restrictmem{(\pi_0 \cdot \pi^{\ell+1})} = \restrictmem{(\pi_0 \cdot \pi^\ell)}$, by construction,
and $\restrictmem{(\pi_0 \cdot \pi^\ell)} \in \otraces{\Atomic}$, by induction hypothesis. 

\item We now prove that $\pi_0 \cdot \pi^{\ell+1} \in \traces{\impl}$. 
Let $\proc = \procf{f}$.
Observe that there is no $i \leq k < j$ such that $\pi^\ell[k]$ is an invocation or response of $\proc$. 
If so, then observe that $\pi^\ell[k]$ appears somewhere in $h[\ell+2, |\h|]$,
and hence, in $\h$, two events of $\proc$, $f$ and $\pi^\ell[k]$, appear in an order, and they appear in the opposite
order in $\restrictobj{\pi^\ell}$, which implies that  $\restrictobj{\pi^\ell}$ does not follows the process
order in $h$; hence $\restrictobj{\pi^\ell} \nin \reorder{\lin}{\h}$, contradicting the induction hypothesis. 
Thus, $\procf{e} \neq p$. 
We conclude that there are no events of $\proc$ in $\pi^\ell[i,j-1]$.
Note that $\pi_0 \cdot \pi^{\ell+1} \in \reorder{\sproc}{\pi_0 \cdot \pi^\ell}$,
and thus Lemma~\ref{lem:impl_sproc} gives that 
$\pi_0 \cdot \pi^{\ell+1} \in \traces{\impl}$, as 
$\pi_0 \cdot \pi^\ell \in \traces{\impl}$, by induction hypothesis.

\item To prove $\restrictobj{\pi^{\ell+1}} \in \reorder{\lin}{\h}$, we use the conclusion above that 
there are no events of $\proc$ in $\pi^\ell[i,j-1]$. 
By induction hypothesis, $\restrictobj{\pi^\ell} \in \reorder{\lin}{\h}$, and thus
$\restrictobj{\pi^\ell}$ follows the process order of $\h$, and the response-invocation order of $\h$ as well.
To obtain $\pi^{\ell+1}$, only the invocation $f$ of $\proc$ at position $j$ is moved backward to position $i$.
Thus, $\restrictobj{\pi^{\ell+1}}$ follows the process order of $\h$. 
Furthermore, it follows the response-invocation order of $\h$ too 
becase any response $\pi^\ell[k]$, $i \leq k \leq j$, with $\procf{\pi^\ell[k]} \neq \proc$, appears in $h[\ell+2, |\h|]$,
hence the invocation $f$ appears before the response $\pi^\ell[k]$ in $h$ (so it is OK if the order of them is changed in $\pi^{\ell+1}$).
Therefore,  $\restrictobj{\pi^{\ell+1}} \in \reorder{\lin}{\h}$.\\
\end{itemize}

\item \underline{$e$ and $f$ are responses.}
This subcase is more tricky, it might require several modifications to~$\pi^\ell$.
Let $$\bar \pi = \pi^\ell[1,i-1] \cdot \pi^\ell[i+1,j-1] \cdot f \cdot e \cdot \pi^\ell[j+1,|\pi^\ell|].$$

The sequence $\bar \pi$ satisfies the following properties:
\begin{itemize}
\item We have $\restrictmem{(\pi_0 \cdot \bar \pi)} \in \otraces{\Atomic}$
because $\restrictmem{(\pi_0 \cdot \bar \pi)} = \restrictmem{(\pi_0 \cdot \pi^\ell)}$, by construction,
and $\restrictmem{(\pi_0 \cdot \pi^\ell)} \in \otraces{\Atomic}$, by induction hypothesis. 

\item We also have $\pi_0 \cdot \bar \pi \in \traces{\impl}$. 
Let $\proc = \procf{e}$.
Observe that there is no $i < k < j$ such that $\pi^\ell[k]$ is an invocation of $\proc$. 
If so, then observe that $\pi^\ell[k]$ appears somewhere in $h[\ell+2, |\h|]$,
and hence, in $\h$, a response, $f$, appears before an invocation, $\pi^\ell[k]$, 
and they appear in the opposite
order in $\restrictobj{\pi^\ell}$, which implies that  $\restrictobj{\pi^\ell}$ does not follows the response-invocation
order in $h$; hence $\restrictobj{\pi^\ell} \nin \reorder{\lin}{\h}$, contradicting the induction hypothesis. 
Thus, we conclude that $\procf{f} \neq p$ and there are no events of $\proc$ in $\pi^\ell[i+1,j]$.
Observe $\pi_0 \cdot \bar \pi \in \reorder{\sproc}{\pi_0 \cdot \pi^\ell}$,
and thus Lemma~\ref{lem:impl_sproc} gives that 
$\pi_0 \cdot \bar \pi \in \traces{\impl}$, as 
$\pi_0 \cdot \pi^\ell \in \traces{\impl}$, by induction hypothesis.

\item It holds $\restrictobj{\bar \pi} \in \reorder{\lin}{\h}$ too.
To prove it, we use the conclusion above that  there are no events of $\proc$ in $\pi^\ell[i+1,j]$.
By induction hypothesis, $\restrictobj{\pi^\ell} \in \reorder{\lin}{\h}$, and thus
$\restrictobj{\pi^\ell}$ follows the process order of $\h$, and the response-invocation order of $\h$ as well.
To obtain $\pi^{\ell+1}$, only the response $e$ of $\proc$ at position $i$ is moved forward to position~$j$.
Thus, $\restrictobj{\bar \pi}$ follows the process order of $\h$. 
Moreover, it follows the response-invocation order of $\h$ too because 
$e$ is moved backward in $\pi^\ell$ and hence any response-invocation pair in $\h$
that is followed in $\restrictobj{\pi^\ell}$ is still followed in  $\restrictobj{\bar \pi}$.
Therefore,  $\restrictobj{\pi^{\ell+1}} \in \reorder{\lin}{\h}$.\\
\end{itemize}

To conclude this subcase, we consider whether $\restrictobj{\bar \pi}[1, \ell+1]$ is equal to $h[1, \ell+1]$ or not.
Let $1 \leq k \leq |\bar \pi|$ such that $\bar \pi[k] = \restrictobj{\bar \pi}[\ell+1]$.
\begin{itemize}

\item If $\bar \pi[k] = f (= \h[\ell+1])$, then, by construction and since $\restrictobj{\pi^\ell}[1, \ell] = h[1, \ell]$, by induction hypothesis,
we have $\restrictobj{\bar \pi}[1, \ell+1] = h[1, \ell+1]$. Thus, we set $\pi^{\ell+1} = \bar \pi$ and we are done.

\item Otherwise, consider  $e' = \bar \pi[i]$, $1 \leq i \leq |\bar \pi|$, such that $\bar \pi[i] = \restrictobj{\bar \pi}[\ell+1]$.
Note that, by construction and induction hypothesis, $\restrictobj{\bar \pi}[1, \ell] = h[1, \ell]$.
By the first subcase, $e'$ has to be a response, so we are in the same third subcase again, now with $\bar \pi$ 
(which satisfies all properties as $\pi^\ell$), and responses $e'$ and $f$.
Hence we can repeat the same construction.
This sequence of constructions has to end, due to finiteness. Therefore, eventually, we end up with a
$\bar \pi$ satisfying $\restrictobj{\bar \pi}[1, \ell+1] = h[1, \ell+1]$ too,
and then we set $\pi^{\ell+1} = \bar \pi$.
This concludes the third subcase.
\end{itemize}
\end{itemize}
The induction step holds. This completes the proof.
\end{proof}

\main*

\begin{proof}
For each $i \in \set{1,2}$,
$\impl$is available after $\h_0$ \wrt $\seqh^i$ and $\h_i \sqsubseteq \seqh^i$.
By \cref{lem:available_lin},  we have that $\impl$ is also available after $\h_0$ \wrt $\h_i$.
Thus, there exists $\pi_i$ such that $\pi_0 \cdot \pi_i \in \traces{\impl}$,
$\restrictobj{\pi_i}=\h_i$ and $\sigma_0 \cdot \restrictmem{\pi_i}\in\traces{\Atomic}$.
By \cref{lem:same_procs}, we have that $\procsf{\h_i}=\procsf{\pi_i}$.
Moreover,  $h_0 \cdot \h_i \sqsubseteq \spec{\obj}$, as $\impl$ is consistent under $\mm$.

We now show that for every pair  $\pi_1' \in \reorder{\sproc}{\pi_1}$ and $\pi_2' \in \reorder{\sproc}{\pi_2}$
such that 
$\restrictobj{\pi'_1}=\h_1$, $\restrictobj{\pi'_2}=\h_2$, and $\sigma_0 \cdot \restrictmem{\pi'_1}, \sigma_0 \cdot \restrictmem{\pi'_2} \in \traces{\Atomic}$,
the traces $\restrictmem{\pi'_1}$ and $\restrictmem{\pi'_2}$ are not weakly (resp., strongly) mergeable in $\mm$.
Let $\sigma_1' = \restrictmem{\pi_1'}$ and $\sigma_2' = \restrictmem{\pi_2'}$.
Since $\mm$ is well-behaved, there exists a stable state $q_0 \in \states{\mm}$  such that:
\begin{enumerate}
\item $\initial{\mm} \xrightarrow{\rho_0}_\mm q_0$ for some memory sequence $\rho_0$ such that $\restrict{\rho_0}{\MemEvs}=\sigma_0$.
\item For every $\sigma'$ such that $\sigma_0 \cdot \sigma' \in \traces{\Atomic}$, $\sigma' \in \otraces{\mm,q_0}$.
\end{enumerate}
Note that, in particular, we have $\sigma_1',\sigma_2' \in \otraces{\mm,q_0}$.

Now, by contradiction, suppose that  $\sigma_1'$ and $\sigma_2'$ are weakly (resp., strongly) mergeable in $\mm$.
Thus, for some (resp., for every)  $\sigma' \in \sigma_1' \shuffle \sigma_2'$, $\sigma' \in \otraces{\mm,q_0}$.
Hence $\sigma_0 \cdot \sigma' \in \otraces{\mm}$, for any such~$\sigma'$.
For each $i \in \set{1,2}$, we have $\pi_0 \cdot \pi'_i  \in \reorder{\sproc}{\pi_0 \cdot \pi_1}$,
since $\pi'_i \in \reorder{\sproc}{\pi_i}$.
Thus, \cref{lem:impl_sproc} gives that $\pi_0 \cdot \pi'_i \in \traces{\impl}$, as we already saw that $\pi_0 \cdot \pi_i \in \traces{\impl}$.
By \cref{lem:main0}, $\h \in \histories{\pi_0,\impl,\mm}$, for some (resp., for every) $\h\in \h_1 \shuffle \h_2$.
Then, $\h_0 \cdot \h \in \histories{\emptyseq,\impl,\mm}$, for any such $\h$.
Since $\h_1$ and $\h_2$ are complete, by \cref{lem:shuffle_complete}, $\h$ is complete too,
and hence $\h_0 \cdot \h$ is also complete.
Moreover, since $\impl$ is consistent under $\mm$,
we have that $\h_0 \cdot \h \sqsubseteq \spec{\obj}$.
This contradicts that $\h_1$ and $\h_2$ are not weakly (resp., strongly) mergeable in $\spec{\obj}$ after $\h_0$,
becase we already saw that $\h_0 \cdot \h_1 \sqsubseteq \spec{\obj}$ and $\h_0 \cdot \h_2 \sqsubseteq \spec{\obj}$.
\end{proof}


 \section{Proofs for Section \ref{sec:app_weak}}
\label{app:app_weak}

\lemmaweaknoncommandmerge*
\begin{proof}
Since $\op_1$ is one-sided non-commutative  \wrt $\op_2$ in $\spec{\obj}$, there exist
$\h_0\in \cshistories{\obj}$, processes $\proc_1 \neq \proc_2$, and response values $u_1,v_1,u_2\in\retsf{\obj}$
such that $u_1 \neq v_1$,
$\h_1=\h_0 \cdot \invres{\proc_1}{\op_1}{u_1} \in \spec{\obj}$, and
$\h_2=\h_0 \cdot \invres{\proc_2}{\op_2}{u_2} \cdot \invres{\proc_1}{\op_1}{v_1} \in \spec{\obj}$.
Since $\spec{\obj}$ is prefix-closed, we have $\h_2'=\h_0 \cdot \invres{\proc_2}{\op_2}{u_2} \in \spec{\obj}$.
Since $\h_1, \h_2'\in \spec{\obj}$, we have to show that there exists 
$\h \in \invres{\proc_1}{\op_1}{u_1} \shuffle \invres{\proc_2}{\op_2}{u_2}$ such that
$\h_0 \cdot \h \nin \spec{\obj}$. Let $\h=\invres{\proc_2}{\op_2}{u_2} \cdot \invres{\proc_1}{\op_1}{u_1}$,
and assume by contradiction that 
$\h_0 \cdot \h \in \spec{\obj}$.
Then, since $u_1 \neq v_1$, $\h_0 \cdot \invres{\proc_2}{\op_2}{u_2} \cdot \tup{\proc_1, \invi{\op_1}}$, 
the longest common prefix of $\h_2$ and $\h_0 \cdot \h$,
ends in an invocation. Since both $\h_2, \h_0 \cdot \h \in \spec{\obj}$,
this contradicts the fact that $\obj$ is deterministic.
\end{proof}

\thmweaknoncomm*
\begin{proof}
Since $\op_1$ is one-sided non-commutative \wrt $\op_2$ in $\spec{\obj}$,
by \cref{lemma-weak-non-comm-and-merge}, 
there exist processes $\proc_1 \neq \proc_2$,  
response values $u_1,u_2\in\retsf{\obj}$,
and history $\h_0 \in \cshistories{\obj}$
such that $\h_1=\invres{\proc_1}{\op_1}{u_1}$ and $\h_2=\invres{\proc_2}{\op_2}{u_2}$ 
are not strongly mergeable in $\spec{\obj}$ after $\h_0$.

Since $\spec{\obj}$ is prefix-closed, $\h_0\in \spec{\obj}$. 
Since $\impl$ is spec-available, there exists $\pi_0\in \traces{\impl}$
such that $\restrictobj{\pi_0} = \h_0$ 
and $\sigma_0=\restrictmem{\pi_0} \in \otraces{\Atomic}$.
Moreover, since $\impl$ is spec-available, we also 
we get that $\impl$ is available after $\h_0$ \wrt both $\h_1$ and $\h_2$.

Thus, by \cref{thm:main}, there exist 
$\pi_1 \in \traces{\impl(\op_1,\proc_1)}$ and $\pi_2\in \traces{\impl(\op_2,\proc_2)}$
such that $\sigma_1=\restrictmem{\pi_1}$ and $\sigma_2=\restrictmem{\pi_2}$
are not strongly mergeable in $\mm$. 
Then, the required follows
from property $\TSO^\textsf{s}$ in \cref{tab:results} for $\mm=\TSO$
or from properties $\RA^\textsf{s}_1,\RA^\textsf{s}_2,\RA^\textsf{s}_3$ in \cref{tab:results} for $\mm=\RA$.
\end{proof}

\begin{restatable}{theorem}{thmstrongimpl}
\label{thm:strong-impl}
For $\obj \in \{\rego, \mrego\}$, we have:
\begin{enumerate*}[label=(\roman*)]
\item there exists a linearizable wait-free implementation of $\obj$ under $\TSO$ that
uses only reads, writes, and a single fence at the end of\hspace{2pt} $\writeoo$, and
\item there exists a linearizable wait-free implementation of $\obj$ under $\RA$ that
uses only reads, writes, and a pair of fences at both the 
beginning and the end of\hspace{2pt} both $\writeoo$ and $\reado$.
\end{enumerate*}
\end{restatable}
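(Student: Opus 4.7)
The plan is to construct explicit implementations for each of the four combinations and verify them separately, relying on the fence-insertion transformation for $\TSO$ and a direct view-based argument for $\RA$. For the $\TSO$ cases, I would invoke the fence-insertion strategy (stated in the paper and proved in the referenced appendix), which turns any linearizable wait-free $\Atomic$ implementation into a linearizable wait-free $\TSO$ one by inserting a fence between every internal write-then-read pair and between a terminal write and its response. Applied to the trivial single-variable $\Atomic$ algorithm for $\rego$, this yields the desired store-then-fence $\writeoo$ and single-load $\reado$. Applied to the RBW $\Atomic$ algorithm for $\mrego$ from~\cite{AAC12}, in which $\reado$ performs no writes and $\writeoo$ never reads after writing, the only inserted fences sit at the end of $\writeoo$, matching the claimed pattern.

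For the $\RA$ cases, I would augment each $\TSO$ implementation by adding a leading fence at the start of every operation and a trailing fence at the end of every operation that does not already have one. Wait-freedom is preserved because fences are always enabled and terminate in a single $\RA$ step. To prove linearizability of a complete history $h$ produced under $\RA$, I would assign each operation $\op$ by process $\proc$ a linearization point by recording $V_\op \defeq \pv(\proc)$ immediately after the leading fence of $\op$, and ordering operations according to a total order consistent with the coordinate-wise order on these views (breaking ties by real-time response order). The trailing fence publishes the view acquired during $\op$ into $\fv$, so that the leading fence of any later operation (by any process) inherits it; this guarantees that real-time happens-before between operations is respected by the constructed order. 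For $\rego$ this is essentially the single-location coherence of $\RA$: each $\reado$ reads some message whose writer is uniquely determined, and the fences force that writer to precede the reader in the linearization. For $\mrego$, I would adapt the correctness invariants of a simplified variant of the $\Atomic$ algorithm of~\cite{AAC12} and check that they survive under $\RA$ precisely because the leading fence gathers all causally prior observations and the trailing fence publishes all observations made during the operation.

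The main obstacle is the $\mrego$ case under $\RA$: the internal reads and writes of an operation run without any intermediate fences, so one must show that when two concurrent $\writeoo$'s interleave their internal stores, the per-location maxima witnessed by later $\reado$'s are still consistent with some linearization. I plan to handle this by exploiting the monotonicity of max-register --- a $\reado$ is correct as long as it returns a value at least the largest linearized-before-it write and at most the largest linearized-before-or-concurrent write --- so it suffices to show that the per-variable timestamp orders induced by $\RA$ refine the linearization order on $\writeoo$'s. This refinement follows from the leading/trailing fence discipline: the leading fence of a write forces its timestamp choices to dominate those of every causally prior write, and the trailing fence makes those choices visible to every later leading fence, so the $\RA$ message graph on the max-register's internal variables is totally ordered consistently with the linearization order on operations.
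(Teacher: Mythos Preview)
Your $\TSO$ half is fine and matches the paper: apply the fence-insertion corollary to the trivial $\rego$ and the Aspnes--Attiya--Censor $\mrego$ algorithms, observe that both are RBW so the only inserted fence is the one after the final write of $\writeoo$.

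The $\RA$ half has a real gap in the linearization argument. Your proposed order---record $V_\op=\pv(\proc)$ right after the leading fence and take a linear extension of the pointwise order on these views, breaking ties by response time---can place a write before a concurrent read that missed it. Concretely, from the initial state let $\cproc_1$ run $\writeo{5}$ and $\cproc_2$ run $\reado$ with the interleaving: $\cproc_1$.fence, $\cproc_2$.fence, $\cproc_2$ does all its internal reads (seeing only zeros), $\cproc_1$ does its write(s), $\cproc_1$.fence, $\cproc_1$ responds, $\cproc_2$.fence, $\cproc_2$ responds with $0$. Both recorded views equal the initial $\fv$, so your tie-break by response time puts $\writeo{5}$ before $\reado$, yet $\reado$ returns $0$---an invalid sequential history for both $\rego$ and $\mrego$. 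The history \emph{is} linearizable (put the read first), but your ordering does not find that linearization. The underlying issue is that the view captured at the leading fence knows nothing about what the operation will subsequently read or write; reads-from edges and the actual values have to enter the ordering somehow.

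The paper avoids this by abandoning views as the ordering key. For $\rego$ under $\RA$ it orders operations by the \emph{timestamp} of the single memory message they write or read (plus the reads-from edge), and checks that the leading/trailing fence pair makes real-time order imply timestamp order. For $\mrego$ it switches to a simple collect-based algorithm (each process has a slot; $\writeo{v}$ collects, writes its slot if $v$ exceeds the max seen; $\reado$ collects and returns the max) and linearizes \emph{by value}: writes are ordered by the value written, each read is placed right after the write of the value it returns. The fence discipline is then used only to show that real-time precedence forces a $\leq$ relation on the values involved, which is exactly the monotonicity you were reaching for. If you want to keep the Aspnes--Attiya--Censor tree algorithm under $\RA$, you would need a similarly value-driven argument, not the view-snapshot one you sketched.
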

\begin{proof}
\cref{coro-from-scm-to-tso} directly proves (a).
Under \Atomic, $\rego$ can trivially be wait-free linearizable implemented using a single register.
\cref{coro-from-scm-to-tso} implies that if a fence at the end of $\writeoo$ is added,
the implementations remains wait-free and linearizable under \TSO. 
For $\mrego$, we consider the wait-free linearizable implementation under \Atomic of~Aspnes et al~\cite{AAC12},
whose $\reado$ operation executes a sequence of writes and 
$\writeoo$ operation executes a sequence of reads followed by a sequence of writes. 
Again, \cref{coro-from-scm-to-tso} implies that if a fence at the end of $\writeoo$ is added,
the implementations remains wait-free and linearizable in \TSO. 

For (b), by~\cref{theo-max-reg},
the wait-free collect based implementation with fences at the beginning and end of each operation in~\cref{alg:max-reg},
is linearizable for $\mrego$ under \RA. 
For $\rego$, the obvious wait-free implementation that uses a single register and fences as the theorem specifies, is linearizable
(the linearizability uses a similar and simpler idea than the linearizability proof of~\cref{alg:max-reg}).
\end{proof}


 \section{Simulating $\Atomic$ in $\TSO$ with Few Fences}
\label{sec-simulation}

In this section we show a simple systematic transformation that transfers linearizability under \Atomic
to linearizability under \TSO. The transformation takes an implementation and adds a fence
in between any pair of consecutive write  and read, and any pair of consecutive write and response of an operation.

\begin{lemma}
\label{lemma-from-scm-to-tso}
Consider an implementation $\impl$ of an object $\obj$.
Let $\impl'$ be the implementation obtained from $\impl$ by adding a fence in between 
every pair of consecutive write and read or response.
Then, $\histories{\impl,\Atomic} = \histories{\impl',\TSO}$. 
\end{lemma}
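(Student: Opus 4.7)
The plan is to establish the two inclusions separately. The forward direction $\histories{\impl,\Atomic} \subseteq \histories{\impl',\TSO}$ rests on the familiar observation that \Atomic can be simulated in \TSO by immediately propagating every write, while the reverse direction $\histories{\impl',\TSO} \subseteq \histories{\impl,\Atomic}$ exploits the transformation's fence placement to reorder writes to their propagation positions, producing a valid \Atomic trace.

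For the forward inclusion, I will take $\pi \in \traces{\impl}$ with $\restrictmem{\pi} \in \otraces{\Atomic}$ and $h = \restrictobj{\pi}$, and construct $\pi'$ by inserting, for each process $\proc$, a fence event immediately after every write event of $\proc$ in $\pi$ that is followed in $\proc$'s program order by a read or a response. By construction of $\impl'$, this yields $\pi' \in \traces{\impl'}$ with $\restrictobj{\pi'} = h$. A witnessing \TSO execution for $\restrictmem{\pi'}$ is obtained by performing a $\tau$ propagation step immediately after each write: all buffers stay empty outside of write-propagate pairs, so fences are enabled and reads return the same values as under \Atomic.

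For the reverse inclusion, I will fix $\pi' \in \traces{\impl'}$ with $\restrictmem{\pi'} \in \otraces{\TSO}$, together with a \TSO execution $\alpha$ (with $\tau$ propagation steps) realizing $\restrictmem{\pi'}$. A straightforward induction on the prefixes of $\alpha$ establishes the key invariant: the store buffer of process $\proc$ is empty just before every read, response, or invocation of $\proc$. (In $\impl'$, any write in $\proc$'s program immediately followed by a read or response has a fence in between, so no $\proc$-write can remain buffered at those moments.) I will then obtain $\pi$ from $\pi'$ by deleting every fence event and moving each write event to the position of its propagation $\tau$ step in $\alpha$. Per-process order is preserved because FIFO propagation keeps writes of $\proc$ in program order, and the buffer-empty invariant forces each write of $\proc$ to propagate before the next read, response, or invocation of $\proc$; hence $\pi \in \traces{\impl}$. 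Moreover $\restrictmem{\pi} \in \otraces{\Atomic}$: every read of $\proc$ returns the memory value at its position in $\alpha$ (since the buffer is empty), and in $\pi$ that value coincides with the value written by the most recently preceding write to the same variable. Finally $\restrictobj{\pi} = \restrictobj{\pi'} = h$, since only memory events are touched.

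The main technical obstacle will be in the reverse direction: checking that the propagation-based reordering preserves per-process order, which relies both on FIFO propagation (for same-process writes) and on the buffer-empty invariant (for writes against later reads, responses, or invocations of the same process). Once the invariant is in hand, the \Atomic-validity of the reordered memory trace and the routine forward simulation follow mechanically.
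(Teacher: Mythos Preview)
Your proposal is correct and follows essentially the same approach as the paper: the forward direction uses the standard \Atomic-in-\TSO simulation (the paper phrases this more abstractly via ``fences are no-ops in \Atomic'' plus $\otraces{\Atomic}\subseteq\otraces{\TSO}$), and the reverse direction moves each write forward to its propagation point, relying on FIFO buffers and the inserted fences to guarantee per-process order is preserved and reads come from main memory. The only cosmetic difference is that the paper keeps the added fences and lands in $\traces{\impl'}$ (then uses that fences are no-ops under \Atomic), whereas you delete them to land directly in $\traces{\impl}$; this is harmless provided you delete only the \emph{added} fences (or assume $\impl$ itself is fence-free), and you should also note that $\alpha$ can be chosen to drain all buffers at the end so every write has a propagation position.
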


\begin{proof}
We first observe that $\histories{\impl,\Atomic} \subseteq \histories{\impl',\TSO}$.
Since fences in $\Atomic$ are no-ops, we have $\histories{\impl,\Atomic} = \histories{\impl',\Atomic}$,
and since $\TSO$ is weaker than $\SCM$, \ie $\otraces(\SCM) \subset \otraces{\TSO}$,
we have $\histories{\impl',\SCM} \subseteq \histories{\impl',\TSO}$.
Thus, $\histories{\impl,\Atomic} \subseteq \histories{\impl',\TSO}$.

We now prove that $\histories{\impl',\TSO} \subseteq \histories{\impl,\Atomic}$.
Since fences are no-op in $\SCM$, it suffices to show $\histories{\impl',\TSO} \subseteq \histories{\impl',\Atomic}$.
Let $\pi'$ be any trace of $\impl'$ such that $\restrictmem{\pi'} \in \otraces{\TSO}$. 
We argue that $\restrictobj{\pi'} \in \histories{\impl',\Atomic}$.
Let $\rho$ be a sequence of memory and object events, and propagation events  
such that (1) it is $\pi'$ when propagation events are removed from it,
and (2) it belongs to $\traces{\TSO}$ when restricted to memory and propagation events.
Such sequence exists since $\restrictmem{\pi'} \in \otraces{\TSO}$.
Since $\impl'$ only adds to $\impl$ fences  in between pairs of consecutive write and read or response,
for every operation of~$\pi'$:
\begin{enumerate} 
\item every read reads its value from main memory, and
\item writes are propagated in FIFO order to main memory before the end of the operation.
\end{enumerate}
We now modify $\rho$ as follows: first, every write is moved forward, right before 
the silent memory event that propagates it to main memory, and second
all propagation events are removed. Let us call $\pi$ the resulting sequence.
As store buffers in $\TSO$ are propagated in FIFO order,
we have $\pi \in \reorder{\sproc}{\pi'}$ and then 
$\pi \in \traces{\impl'}$, by \cref{lem:lin_shuffle}.
Also, we have $\restrictmem{\pi} \in \otraces{\TSO}$.
Observe that, for every process $\proc$, $\restrict{\pi}{\proc} = \restrict{\pi'}{\proc}$,
since the writes of each process are moved forward in the order the were issued. 
Moreover, the relative order of invocations and responses in $\pi$ and $\pi'$ is the same, 
as writes are moved within the invocation and response of the operation it belongs to,
\ie $\restrictobj{\pi} = \restrictobj{\pi'}$. 
We have that in $\pi$ every write is visible to every process at the moment the
write is issued. This is precisely how writes behave in $\Atomic$.
Then, $\restrictmem{\pi} \in \otraces{\Atomic}$, and hence $\restrictobj{\pi'} \in \histories{\impl',\Atomic}$.
\end{proof}

The following corollary considers the usual definition of linearizability, 
and not the restricted notion of consistency in \cref{thm:main}.

\begin{corollary}
\label{coro-from-scm-to-tso}
Consider any implementation $\impl$ of an object $\obj$. 
Let $\impl'$ be the implementation obtained from $\impl$ by adding a fence in between 
every pair of consecutive write and read or response.
Then, $\impl$ is obstruction-free/lock-free/wait-free linearizable under \Atomic if and only if 
$\impl'$ is obstruction-free/lock-free/wait-free linearizable under \TSO.
\end{corollary}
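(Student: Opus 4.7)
I would derive the corollary almost directly from Lemma \ref{lemma-from-scm-to-tso}, which establishes $\histories{\impl,\Atomic} = \histories{\impl',\TSO}$. Since linearizability is determined entirely by the set of complete histories an implementation generates --- namely, whether every such history linearizes to one in $\spec{\obj}$ --- and these two sets coincide, linearizability of $\impl$ under $\Atomic$ is equivalent to linearizability of $\impl'$ under $\TSO$. This handles one half of the biconditional for free, and is independent of the progress condition attached.

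For the three progress conditions, I would extract a trace-level correspondence implicit in the proof of Lemma \ref{lemma-from-scm-to-tso}. The forward direction is routine: any $\Atomic$ execution of $\impl$ lifts to a $\TSO$ execution of $\impl'$ by inserting the mandated fences and immediately propagating each buffered write via an internal $\TSO$ step; fences never block and add only a bounded constant number of steps per operation. The backward direction uses the rearrangement given in the lemma's proof, which transforms any $\TSO$ trace $\pi'$ of $\impl'$ into an equivalent trace $\pi$ with $\restrictmem{\pi} \in \otraces{\Atomic}$, the same per-process projection, and the same history. Crucially, writes are moved only within their enclosing operation, so per-process step counts per operation are preserved up to the constant fence overhead introduced by the $\impl \mapsto \impl'$ rewriting. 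Via these bi-directional simulations, wait-freedom and lock-freedom transfer because per-operation step counts agree up to constants and complete operations are in bijection across the two settings.

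The main subtlety I anticipate concerns obstruction-freedom, which requires \emph{solo} extensions from reachable states. A $\TSO$-reachable state may contain pending writes in multiple store buffers without any direct $\Atomic$ analogue. I would handle this by observing that the buffers of processes other than the solo process $p$ are invisible to $p$ and so cannot interfere with its local computation, while $p$'s own buffer is drained by the first fence $p$ executes within its pending or next operation in $\impl'$. Hence $p$'s solo continuation under $\TSO$ mirrors the corresponding solo continuation in $\Atomic$ guaranteed by obstruction-freedom of $\impl$; the reverse direction is symmetric via the trace rearrangement, so obstruction-freedom transfers in both directions as well.
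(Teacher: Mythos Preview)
Your proposal is correct and follows the same core idea as the paper: invoke Lemma~\ref{lemma-from-scm-to-tso} for the linearizability equivalence, and argue separately that the inserted fences preserve progress. The paper's own proof is considerably terser---it dispatches progress in a single sentence (``Clearly, the added fences in $\impl'$ do not change the progress properties of $\impl$'')---whereas you work out bi-directional trace simulations and carefully analyze obstruction-freedom from $\TSO$-reachable states with non-empty store buffers. Your more detailed treatment is a strict elaboration of what the paper leaves implicit; nothing in it diverges from the paper's approach. One minor quibble: linearizability is not determined solely by \emph{complete} histories (pending invocations matter too), but since the lemma gives equality of \emph{all} generated histories, your conclusion stands regardless.
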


\begin{proof}
Clearly, the added fences in $\impl'$ do not change the progress properties of $\impl$.
Thus, $\impl$ and $\impl'$ have the same progress properties. 
By \cref{lemma-from-scm-to-tso}, $\histories{\impl,\Atomic} = \histories{\impl',\TSO}$,
and hence either both $\impl$ and $\impl'$ are linearizable, or none of them is linearizable.
\end{proof}

We now consider object linearizable implementations under $\Atomic$ 
in which each operation performs a sequence of reads followed by a sequence of writes.
We call such implementations \emph{read-then-write}.
Formally, a memory sequence $\sigma$ is read-then-write (RTW) if for every $k_1 < k_2$,
if $(1)$ $\typf{\sigma[k_1]}=\Write$ and 
$(2)$ $\procf{\sigma[k_1]} = \procf{\sigma[k_2]}$,
then $\typf{\sigma[k_2]}\neq\Write$.
Note that RTW implies RBW but not vice versa.  
Then, an implementation is RTW if 
for every $\op\in\opsf{\obj}$ and $\proc\in\Proc$, all memory traces in $\impl(\op,\proc)$ are RW and RTW.
By \cref{coro-from-scm-to-tso}, these implementations under \SCM implements the same object under $\TSO$
by adding a fence at the end of each operation (note that the modified implementation is not RTW).

In $\Atomic$, there have been proposed several lock-free or wait-free (both strictly stronger than obstruction-free) 
RTW linearizable implementations of objects with weakly mergeable pairs of histories.
The following are examples we are aware of:

\begin{itemize}
\item The classical wait-free snapshot implementation of Afek et al~\cite{AADGMS93}. 
In this implementation, $\mathtt{Scan}$ repeatedly performs sets of reads 
until the process is able to identify a consistent snapshot; and
$\mathtt{Update}$ first invokes $\mathtt{Scan}$ to take a snapshot, and then executes a single write.
Thus, the memory traces generated by both operations are RTW.

\item The universal construction of the object types in which every pair of operations either commute
or one overwrites the other~\cite{AH90}. (\emph{Counter} is an example of an object type
satisfying this condition.) 
The construction of~\cite{AH90} relies on a single wait-free snapshot object. To execute an operation,
a process invokes $\mathtt{Scan}$, performs some local computation, and then executes $\mathtt{Update}$.
Thus, when instantiated with the snapshot implementation of~\cite{AADGMS93}, 
this construction guarantees that the memory traces produced by every operation
of the original object are RTW.

\item The wait-free $m$-bounded max-register implementation of Aspnes et al~\cite{AAC12}. $\mathtt{Read}$ performs $O(\log m)$ reads, and $\mathtt{MaxWrite}$ performs a sequence of $O(\log m)$ reads followed by $O(\log m)$ writes.

\item The wait-free relaxed FIFO work-stealing of \cite{CP21}.  $\mathtt{Read}$ performs two writes, 
and $\mathtt{Take}$ and $\mathtt{Steal}$ perform two reads followed by a single write.
\end{itemize}

By Corollary~\ref{coro-from-scm-to-tso}, 
these implementations can easily be adapted to correctly implement the same objects in $\TSO$.
Finally, it is easy to prove that these objects have pairs of operations that are one-sided non-commutative,
and hence Theorem~\ref{thm:weak-non-comm} shows that any lock-free read/write implementation
of them in \TSO must use fences.


\fi
 \iffull
\section{Fence-Optimal Implementations under \RA}
\label{app:impl_under_RA}

\subsection{Fence-Optimal Max Register Under $\RA$}
\label{app:max_reg}
\else
\section{Fence-optimal Max Register Under $\RA$}
\label{app:max_reg}
\fi

\newcommand{\opf}[1]{{\mathsf{op}({#1})}}

The pseudocode of a linearizable wait-free implementation of $\mrego$ 
under \RA is given in~\cref{alg:max-reg}. The function 
$collect(M)$ reads one by one, in an arbitrary order, the entries of $M$, and
returns an array with the read values. The algorithm 
is fence-optimal. It uses one fence at the beginning and one fence
at the end of every operation, thus matching the lower bounds of~\cref{thm:weak-non-comm}.
\iffull
\else
The correctness proof appears in~\cite{full}.
\fi

\begin{algorithm}[h!]
\caption{\small $\mrego$ implementation in \RA. Algorithm for proces $p_i$.}
\label{alg:max-reg}
\begin{algorithmic}[1]\small
	\begin{multicols}{2}
\Statex Shared variables:
\Statex  \hspace{0.2cm} $int[n] \, \, \, M = [0, \hdots, 0]$ 
\Statex
\Procedure{read}{$\,$}
	\State {\sf fence()}
	\State $m[] = collect(M)$
	\State {\sf fence()}
	\State \Return $max(m[])$
\EndProcedure
\Procedure{write}{$v$}
	\State {\sf fence()}
	\State $m[] = collect(M)$
	\If{$max(m[]) < v$}
		\State $M[i] = v$
	\EndIf
	\State {\sf fence()}
	\State \Return $\ack$
\EndProcedure
\end{multicols}

\end{algorithmic}
\end{algorithm}

\iffull
\begin{theorem}
\label{theo-max-reg}
\cref{alg:max-reg} is a wait-free linearizable implementation of $\mrego$ under \RA.
\end{theorem}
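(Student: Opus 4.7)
The plan is to prove wait-freedom by direct inspection and linearizability under \RA by constructing an explicit linearization for each execution. Wait-freedom is immediate: every invocation of $\reado$ or $\writeoo$ executes two fences, an $n$-step collect, at most one write, and local computation, for a total of $O(n)$ memory steps independently of contention.

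For linearizability, I would first establish a \emph{fence-propagation lemma}: whenever operation $op_1$'s trailing fence occurs before operation $op_2$'s leading fence in the execution, the view $op_2$ holds after its leading fence dominates the view $op_1$ held at its trailing fence. This is a direct consequence of the \RA fence rule, which joins the executing process's view into the global fence view $\fv$ at a trailing fence and joins $\fv$ back into the process's view at a leading fence. As a corollary, if $op_1$ finishes before $op_2$ starts in real time, then $op_2$ observes every write that $op_1$ performed or observed; in particular, if $op_1$ was an updating $\writeoo(u)$ by $p_i$, then each read $M[i]$ that $op_2$ performs returns a value at least $u$.

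Next, I would exploit the single-writer structure of each $M[i]$: only $p_i$ writes to $M[i]$, and $p_i$ writes $v$ only when its own collect, which includes $M[i]$, returns a maximum strictly below $v$. Because $p_i$'s own view of $M[i]$ always reflects its latest written value, the values and the \RA timestamps of the messages written to $M[i]$ form strictly increasing sequences. With this in hand I would define the linearization by placing an updating $\writeoo$ at its \RA write step and every read or non-updating write at the end of its collect. Real-time order is preserved because any non-concurrent predecessor's linearization point lies before the successor's leading fence, by the fence-propagation lemma. For the specification I would show that every read $R$ returning $v$ is preceded by a write of value exactly $v$ (namely, the write whose \RA message $R$ read in some $M[i]$), and that no write of value strictly greater than $v$ is linearized before $R$.

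The principal obstacle is a concurrency subtlety: an updating $\writeoo(u)$ may perform its \RA write step before a concurrent read $R$ finishes its collect, yet $R$'s collect may still miss $u$ because no propagation reached $R$ in time and the writer's trailing fence happens after $R$'s leading fence. The naive placement would then put the write before $R$ while $R$ returns a smaller value, breaking the spec. I expect to resolve this by refining the linearization so that, for such concurrent pairs, the updating write is moved to its trailing fence and hence placed after $R$; since the two operations overlap in real time, this shift is still compatible with real-time order. The technical content will be to show that the collection of ``must precede'' constraints (real-time order, ``read saw write'', and ``read missed write'') is acyclic, which I would argue from the monotonicity of the per-variable timestamps established above, and then extend to a total order realising the sequential specification.
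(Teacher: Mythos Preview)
Your plan is plausible but takes a noticeably different route from the paper, and the resolution of your ``principal obstacle'' has a gap as written.

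The paper avoids event-based linearization points entirely and orders operations \emph{by value}. For each operation it picks one representative memory event: the store for an updating $\writeoo$, and, for a $\reado$ or a non-updating $\writeoo$, the read in the collect that returned the maximum. It then defines a strict partial order $T$ on these representatives purely in terms of the values they carry: smaller value before larger; among equal values, a write before a read; among equal-valued reads, trace order. Any total extension of $T$ is automatically a legal max-register sequential history, so the only remaining obligation is that real-time order is contained in $T$. That is exactly where the fences are used, via the observation you isolate as the fence-propagation lemma: if $op_1$ responds before $op_2$ is invoked, the pair of fences forces $op_2$'s representative value to be at least $op_1$'s. This decomposition is short because it exploits the monotone semantics of max-register directly and never needs to talk about which concrete write a read ``saw'' or ``missed''.

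In your approach, the step ``move the updating write to its trailing fence and hence place it after $R$'' does not follow. If $R$ misses $W$'s store, all you know from fence propagation is that $W$'s trailing fence occurs after $R$'s \emph{leading} fence; it can still occur before the end of $R$'s collect, which is where you linearize $R$. So relocating $W$ to its trailing fence need not put it after $R$'s linearization point. You seem to recognize this, because you immediately shift to an acyclicity argument on the abstract constraints (real-time, ``saw'', ``missed''). That framing can be made to work, but the acyclicity proof is where all the content now lives, and the hint you give---monotonicity of per-variable \RA timestamps---is not by itself the right invariant: cycles would have to be ruled out across different $M[i]$, not within one. The invariant you actually need is that every real-time edge and every ``saw'' edge is value-nondecreasing while every ``missed'' edge is value-increasing in the opposite direction; once you prove that, you have essentially reconstructed the paper's value order, and it is cleaner to state the linearization that way from the start.
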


\begin{proof}
Clearly, the implementation is wait-free. For linearizability, consider 
any trace $\pi$ of the implementation such that $\restrictmem{\pi}\in\otraces{\RA}$. 
Consider any continuation $\pi'$ of
$\pi$ that completes all pending operation of $\pi$ (and starts no new operation) and
$\restrictmem{\pi'}\in\otraces{\RA}$.
Such continuation exists due to wait-freedom.
Observe that any linearization of $\pi'$ is a linearization of $\pi$.
Thus, we can assume that  $\pi$ has no pending operations.

To linearize $\pi$, we focus on write events (of $\writeo$ operations) and
read events that read the maximal value of the collect it belongs to (of $\writeo$ and $\reado$ operations).
These are the events in $\pi$ that determine the changes in the state of the object implemented.
We define a strict partial order $T$ on this set of events such that 
every total order on the operations that extends $T$ induces a valid sequential history of $\mrego$.
Any such valid sequential history is a linearization of $\pi$ because $T$ respects real-time
order of operations, \ie the relative order of operations in $\pi$.

Let $J \defeq \set{1 \til \size{\pi}}$.
For every $j \in J$, let $\valwf{j}$ denote the value written/read in event $\pi[j]$ (if it is a read or write event).
For every $j \in J$, let $\opf{j}$ denote the set of indices in $J$ that belong to the 
operation (either $\writeo{v}$ or $\reado$) that contains the event $\pi[j]$.
Let:
\begin{itemize}
\item $W \defeq \set{ j \in J \st \typf{\pi[j]}=\Write }$.
\item $R \defeq \set{ j \in J \st \typf{\pi[j]}=\Read \text{ and $\valwf{j}$ is the maximal value 
among all read events in $\opf{j}$} }$.
\end{itemize}

The relation $T$ relates $j_1, j_2 \in W \cup R$, if one of the following holds:
\begin{itemize}
\item $\valwf{j_1} < \valwf{j_2}$; 
\item $j_1 \in W$, $j_2 \in R$, and $\valwf{j_1} = \valwf{j_2}$;
or \item$j_1 \in R$, $j_2 \in R$, $\valwf{j_1} = \valwf{j_2}$, and $j_1 < j_2$.
\end{itemize}

It is not difficult to check that $T$ is transitive and irreflexive.
Also, observe that a non-$0$ read in the trace always obtains its value from a previous write.
Formally, for every $r \in R$ with $\valwf{r}\neq 0$, there exists some $w\in W$ such that 
$\valwf{w}=\valwf{r}$ and $wTr$.

Now, we claim that every total order on the set of operations, $\set{\opf{j} \st j \in J}$, extending $T$ induces
a valid sequential history of $\mrego$.
Indeed, in every such total order $S \supseteq T$, every $\reado$ operation that returns a non-$0$ value,
obtains its value from a previous $\writeo$ operation,
and if $\opf{j_1}$ is a $\writeo$ operation, $\opf{j_2}$ is a $\writeo$ operation,
and $j_1 S j_2$, we must have $\valwf{j_1} \leq \valwf{j_2}$
(or else, $j_2 T j_1$).

To conclude that every total order $S$ is a linearization of $\pi$, we argue that 
the real-time order of operations in $\pi$ is a subset of $T$. This implies
that if the response of $\opf{j_1}$ precedes the invocation of $\opf{j_2}$,
then $\opf{j_1}$ appears before $\opf{j_2}$ in $S$.
First, events of the same thread are increasing in value order.
Second, to see that that order between response events and subsequent invocation events is included in $T$,
suppose that $j_1$ is a response event and $j_2>j_1$ is an invocation event.
Then, due to the placement of fences in the algorithm, every access in $\opf{j_2}$ is aware
(\ie later in the ``happens-before'' order) than every access in $\opf{j_1}$.
Then, if $j_2\in W$, then $\opf{j_2}$ only writes a larger value, so $\valwf{j_1} < \valwf{j_2}$.
Otherwise, if $j_2\in R$, then $\opf{j_2}$ reads a greater of equal value than $\opf{j_1}$, so $\valwf{j_1} \leq \valwf{j_2}$.
In both case, we have $j_1 T j_2$.
This concludes the proof.
\end{proof}

\subsection{Fence-Optimal Register Under $\RA$}
\label{app:reg_under_RA}

\begin{theorem}
\label{thm:reg_RA}
Consider the implementation of a register object that uses a single memory location
for the register contents; implements write by storing to memory, read by loading from memory;
and has fences at the beginning and at the end of both write and read.
Then, this implementation is a wait-free linearizable implementation of $\rego$ under \RA.
\end{theorem}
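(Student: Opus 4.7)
The plan is to adapt the structure of the proof of \cref{theo-max-reg} to this much simpler object. Wait-freedom is immediate: each operation performs a constant number of memory steps (two fences plus one read or one write). For linearizability, I would consider an arbitrary trace $\pi$ of the implementation with $\restrictmem{\pi}\in\otraces{\RA}$, and (as in the max-register proof) extend it to a trace $\pi'$ in which every pending operation has been completed, using wait-freedom; any linearization of $\pi'$ is a linearization of $\pi$, so I may assume no operation of $\pi'$ is pending. Let $\var$ denote the unique shared memory location used by the implementation.

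For every operation $\op$ in $\pi'$, I associate a witness timestamp $t_\op \in \N$: if $\op$ is $\writeo{v}$, then $t_\op$ is the timestamp of the $\RA$ message added to $\mra$ by its single $\Write$ event; if $\op$ is $\reado$, then $t_\op$ is the timestamp of the message from which its single $\Read$ event reads. Because $\var$ is the only variable used and timestamps on messages of a given variable are pairwise distinct (by the side condition $t\nin\timef{\restrict{\mra}{\var}}$ in \textsc{write}), two operations with equal $t_\op$ have equal return values and, if both are reads, they read the same message. Define a strict partial order $T$ on operations by setting $\op_1 \mathrel{T} \op_2$ iff either (i) $t_{\op_1}<t_{\op_2}$, or (ii) $t_{\op_1}=t_{\op_2}$, $\op_1$ is a write and $\op_2$ is a read, or (iii) $t_{\op_1}=t_{\op_2}$, both are reads, and $\op_1$'s read event precedes $\op_2$'s in $\pi'$.

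I would then show two claims. First, every total order $S\supseteq T$ on operations induces a sequential history in $\spec{\rego}$: for every read $\op_2$ with witness message $\msg$, the last preceding write in $S$ either produced $\msg$ (if one exists) or there is no preceding write and $\msg$ is the initial message, in which case $\op_2$ returns $\reginit$. Second, and more importantly, the real-time order of operations in $\pi'$ is contained in $T$. Suppose $\op_1$ (by process $\proc_1$) completes before $\op_2$ (by process $\proc_2$) begins. The final fence of $\op_1$ sets $\fv \gets \pv(\proc_1)\sqcup\fv$, and just before this fence we have $\pv(\proc_1)(\var)\ge t_{\op_1}$: for a write this is immediate from the \textsc{write} rule, and for a read the \textsc{read} rule requires $\pv(\proc_1)(\var)=t_{\op_1}$ at the moment of the read, after which only \textsc{propagate} may run on $\proc_1$, and \textsc{propagate} only increases views. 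Hence at the end of $\op_1$ we have $\fv(\var)\ge t_{\op_1}$. Inspecting the $\RA$ rules, only \textsc{fence} modifies $\fv$, and it does so by a join, so $\fv$ is monotonically nondecreasing; thus $\fv(\var)\ge t_{\op_1}$ still holds when $\op_2$'s initial fence fires, which sets $\pv(\proc_2)(\var)\ge\fv(\var)\ge t_{\op_1}$. By monotonicity of $\pv$ under \textsc{propagate}, this lower bound persists until $\op_2$'s memory event: if $\op_2$ is a read, the \textsc{read} rule forces $t_{\op_2}=\pv(\proc_2)(\var)\ge t_{\op_1}$, and if $\op_2$ is a write the \textsc{write} rule forces $t_{\op_2}>\pv(\proc_2)(\var)\ge t_{\op_1}$. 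In every combination of types this yields $\op_1\mathrel{T}\op_2$ (the equality case $t_{\op_1}=t_{\op_2}$ occurring only for read-read, where clause (iii) applies since $\op_1$'s read precedes $\op_2$'s in $\pi'$, and for write-read, where clause (ii) applies).

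The main obstacle is the careful tracking of how $\fv$ serves as a conduit between the terminating fence of one operation and the initial fence of the next, which requires establishing and using the monotonicity of both $\fv$ and per-process views across \textsc{propagate} steps, as well as handling the subtle point that a $\Read$ event by itself does not raise the reader's view but propagations before the subsequent fence do not lower it either. Once this inequality $t_{\op_1}\le t_{\op_2}$ (strict where needed) is established, linearizability follows by picking any total extension of $T$, which by the two claims above is both a valid sequential history of $\rego$ and a reordering respecting real-time precedence.
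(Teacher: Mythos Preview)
Your proposal is correct and follows essentially the same strategy as the paper: define a timestamp-based partial order $T$ on the single read/write event of each operation, argue that every linear extension yields a valid sequential register history, and show via the fence semantics (monotonicity of $\fv$ and per-process views) that the real-time order is contained in $T$. Your version is in fact slightly more careful than the paper's sketch, since you explicitly include the read--read tie-breaking clause~(iii), which is needed for the claim that real-time order is contained in $T$ when two real-time-ordered reads share the same timestamp.
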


\begin{proof}
Clearly, the implementation is wait-free. For linearizability, consider 
any trace $\pi$ of the implementation such that $\restrictmem{\pi}\in\otraces{\RA}$. 
Consider any continuation $\pi'$ of
$\pi$ that completes all pending operation of $\pi$ (and starts no new operation) and
$\restrictmem{\pi'}\in\otraces{\RA}$.
Such continuation exists due to wait-freedom.
Observe that any linearization of $\pi'$ is a linearization of $\pi$.
Thus, we can assume that  $\pi$ has no pending operations.

Let $J \defeq \set{1 \til \size{\pi}}$.
For every $j \in J$, let $\opf{j}$ denote the set of indices in $J$ that belong to the 
operation (either $\writeo{v}$ or $\reado$) that contains the event $\pi[j]$.
Let:
\begin{itemize}
\item $W \defeq \set{ j \in J \st \typf{\pi[j]}=\Write }$.
\item $R \defeq \set{ j \in J \st \typf{\pi[j]}=\Read}$.
\end{itemize}

The relation $T$ relates $j_1, j_2 \in W \cup R$, if one of the following holds:
\begin{itemize}
\item $j_1 \in W$, $j_2 \in R$, and $\pi[j_2]$ reads its value from $\pi[j_1]$.
\item $\pi[j_2]$ has timestamp larger than $\pi[j_1]$
(either by picking a timestamp when writing, or by obtaining the message of the write when reading).
\end{itemize}

It is not difficult to check that $T$ is transitive and irreflexive.
Also, observe that a non-$0$ read in the trace always obtains its value from a previous write.
Formally, for every $r \in R$ with $\valwf{r}\neq 0$, there exists some $w\in W$ such that 
$\valwf{w}=\valwf{r}$ and $wTr$.

Moreover, by inspecting the \RA semantics, it is straightforward to see that 
every total order on the set of operations, $\set{\opf{j} \st j \in J}$, extending $T$
 induces a valid sequential history of $\rego$;
 and that due to the placement of fences in the implementation 
 the real-time order of operations in $\pi$ is a subset of $T$.
Indeed, the fences induce synchronization between the events, 
which means that only a larger timestamp may be picked in the second operation
if it is a write, and only a larger or equal timestamp may be observed in the second
operation if it is a read.
\end{proof}
\fi

 \iffull
\section{Proofs and More Details for Section \ref{sec:app_two}}
\label{sec:non-strong-impl}

\subsection{Two-sided non-commutativity}
\label{sec:two-sided}

\lemmastrongnoncommandmerge*
\begin{proof}
Let $\h_0 \in \cshistories{\obj}$, $\proc_1, \proc_2\in\Proc$, and $u_1,v_1,u_2,v_2\in\retsf{\obj}$
that satisfy the conditions of \cref{def-strongly-non-commutative}.
We argue that $\invres{\proc_1}{\op_1}{u_1}$ and $\invres{\proc_2}{\op_2}{u_2}$ 
are not weakly mergeable in $\spec{\obj}$ after $\h_0$.
First, we have that $\h_0 \cdot \invres{\proc_1}{\op_1}{u_1} \in \spec{\obj}$ 
and $\h_0 \cdot \invres{\proc_2}{\op_2}{u_2} \in \spec{\obj}$ (since $\spec{\obj}$ is prefix-closed).
Second, there cannot be $\h \in \invres{\proc_1}{\op_1}{u_1} \shuffle  \invres{\proc_2}{\op_2}{u_2}$ 
with $\h_0 \cdot \h \sqsubseteq \spec{\obj}$.
Indeed, otherwise, either $\h_0 \cdot \invres{\proc_1}{\op_1}{u_1} \cdot  \invres{\proc_2}{\op_2}{u_2}$ 
or  $\h_0 \cdot  \invres{\proc_2}{\op_2}{u_2} \cdot \invres{\proc_1}{\op_1}{u_1}$ are in $\spec{\obj}$,
but this contradicts the fact that $\obj$ is deterministic, as 
the longest common prefix of 
$\h_0 \cdot \invres{\proc_1}{\op_1}{u_1} \cdot \invres{\proc_2}{\op_2}{u_2}$ 
and $\h_0 \cdot \invres{\proc_1}{\op_1}{u_1} \cdot \invres{\proc_2}{\op_2}{v_2}$ 
ends with an invocation to $\op_2$, and similarly, the longest common prefix of 
$\h_0 \cdot \invres{\proc_2}{\op_2}{u_2} \cdot \invres{\proc_1}{\op_1}{u_1}$ 
and $\h_0 \cdot \invres{\proc_2}{\op_2}{u_2} \cdot \invres{\proc_1}{\op_1}{v_1}$
ends with an invocation of $\op_1$.
\end{proof}

\thmtwosidelower*
\begin{proof}
By \cref{lemma-strong-non-comm-and-merge},
there exist $\h_0 \in \cshistories{\obj}$, processes $\proc_1 \neq \proc_2$, and response values $u_1,u_2\in\retsf{\obj}$
such that $\invres{\proc_1}{\op_1}{u_1}$ and $\invres{\proc_2}{\op_2}{u_2}$ 
are not weakly mergeable in $\spec{\obj}$ after $\h_0$.
By \cref{thm:main}, it follows that there exist 
$\pi_1 \in \traces{\impl(\op_1,\proc_1)}$ and $\pi_2 \in \traces{\impl(\op_2,\proc_2)}$
such that $\restrictmem{\pi_1}$ and $\restrictmem{\pi_2}$ are not weakly mergeable in $\mm$.
Then, the required follows from properties $\SCM^\textsf{w}$, $\TSO^\textsf{w}$, and $\RA^\textsf{w}$ in \cref{tab:results}).
\end{proof}

\paragraph{Upper bounds.}
It is well-known that a linearizable wait-free implementation of an arbitrary
(deterministic) data type can be obtained in $\Atomic$ using an array of wait-free
consensus objects $\mathit{CONS}[]$ such that the $k$th operation to execute  
is determined by the outcome of $\mathit{CONS}[k]$.

On the other hand, an obstruction-free consensus can 
be implemented from registers in $\Atomic$, e.g., using a shared memory
implementation of the Paxos algorithm~\cite{disk-paxos}. 
The latter simulates each of the prepare and accept phases of the Paxos ballot
using a write followed by a collect. Since the first ballot can have an a priori
assigned leader, it only requires a single accept phase. Thus, in the absence
of contention, shared memory Paxos can reach a decision using a single write-collect
block. Since this algorithm can be converted into 
a $\TSO$-based implementation by applying the fence insertion strategy
described in~\cref{{sec:app_weak}}, we have the following:

\begin{theorem}
\label{thm:two-sided-upper}
Let $\obj$ be an object with a pair of two-sided non-commutative operations $\op_1$ and $\op_2$.
Then, there exists a linearizable obstruction-free implementation $\impl_\mm$ under a memory
model $\mm$, where:
\begin{enumerate}[leftmargin=*,label=(\alph*)]
\item \label{thm:two-side-upper:atomic}
$\impl_{\Atomic}$ uses a sequence of repeated write-collect blocks 
to implement either $\op_1$ or $\op_2$, and only executes a single
write-collect in the absence of contention.

\item \label{thm:two-side-upper:tso}
$\impl_{\TSO}$ uses a sequence of repeated write-fence-collect blocks 
to implement either $\op_1$ or $\op_2$, and only executes a single
write-fence-collect in the absence of contention.
\end{enumerate}
\end{theorem}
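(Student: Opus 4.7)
The plan is to instantiate the standard universal construction of linearizable objects, which uses an array $\mathit{CONS}[1], \mathit{CONS}[2], \ldots$ of consensus objects to sequentialize the operations: a process invoking an operation proposes it (together with its identity) to $\mathit{CONS}[k]$ for the next unused index $k$, and the decided operation is then applied locally against $\spec{\obj}$ to compute the response. Linearizability follows from the fact that all processes agree on the same operation sequence, and obstruction-freedom lifts from the underlying consensus objects. I would instantiate each $\mathit{CONS}[k]$ with a read/write obstruction-free consensus obtained from shared-memory Paxos~\cite{disk-paxos}: a process executes a sequence of ballots, where each ballot consists of a \emph{prepare} phase (write own ballot number, then collect all processes' ballot numbers and previously accepted values) followed by an \emph{accept} phase (write own proposal, then collect to confirm no higher ballot has started). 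Each ballot thus has the shape of two consecutive write-collect blocks, giving the claimed ``repeated write-collect blocks'' pattern.

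The contention-free optimization asserted in part~\ref{thm:two-side-upper:atomic} comes from assigning an a priori leader for the first ballot of each $\mathit{CONS}[k]$, \eg the first process to start a new slot. Since a leader can safely skip its own prepare phase whenever no competing ballot exists, in a solo execution the leader completes using a single accept phase, \ie a single write-collect block; this satisfies the statement that only one of $\op_1$ or $\op_2$ needs this specific shape (the leader's operation). Under contention the standard ballot-retry pattern kicks in, yielding the repeated write-collect behavior.

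For part~\ref{thm:two-side-upper:tso}, I would apply the fence-insertion transformation of \cref{sec-simulation}: it takes any linearizable implementation under $\Atomic$ and, by inserting a fence between every consecutive write-read pair (and between every write and the enclosing operation's response), produces a linearizable implementation under $\TSO$ with the same progress properties. Applied to the $\Atomic$ implementation of part~\ref{thm:two-side-upper:atomic}, every write-collect block becomes a write-fence-collect block — a single fence placed between the write and the first read of the ensuing collect suffices, since collects consist solely of reads and thus require no further intra-collect fences. The contention-free single-block pattern is preserved, and the final fence before a response is absorbed into the trailing structure.

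The main obstacle is verifying that the shape of the Paxos ballots truly decomposes into the claimed pattern — in particular, that the prepare phase's collect is not followed by local writes that would break the write-collect block structure of the subsequent accept phase, and that the leader's single-accept fast path is cleanly extractable. This is mostly a careful bookkeeping exercise for a standard algorithm: the prepare collect is immediately followed by the accept's write (no intervening memory operations are strictly needed, as the choice of proposed value depends only on local state), so the overall trace really is a sequence of write-collect blocks; and the a priori leader assignment per slot is a well-known Paxos optimization that does not affect correctness.
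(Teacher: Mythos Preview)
Your proposal is correct and follows essentially the same route as the paper: a universal construction over an array of consensus instances, each implemented via shared-memory Paxos~\cite{disk-paxos} whose ballots decompose into write-collect blocks, with the first-ballot leader optimization yielding the single write-collect fast path, and the $\TSO$ version obtained via the fence-insertion transformation. The paper's own argument is in fact slightly terser than yours and does not spell out the bookkeeping concerns you raise at the end.
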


\noindent Since consensus has a pair of two-sided non-commutative operations, it is subject
to the lower bound of \cref{thm:two-side-lower}, which implies that in the absence of contention, 
the upper bound of \cref{thm:two-sided-upper} is tight. 
An obstruction-free universal 
construction under $\RA$ with a fence pattern optimal in our lower bounds 
is left for  future work.
\fi

\iffull
\subsection{Mutual Exclusion}
\label{sec:mutex}
\else
\section{Mutual Exclusion}
\label{sec:mutex}
\fi
We use the merge theorem (\cref{thm:main}) for the case of non-weakly mergeable histories
and the mergeability results for the memory models to establish minimum synchronization requirements 
for mutual exclusion. Our result for $\Atomic$ reproves the corresponding lower bound of~\cite{AGHKMV11}. 

Consider a (non-standard) lock object $\locko$ with 
$\opsf{\locko} \defeq \set{\acquireo}$ and 
$\retsf{\locko} \defeq \set{\ack}$.
Its specification is given by
\begin{center}
$\spec{\locko} \defeq \set{\emptyseq} \cup \{\invress{\proc}{\acquireo}{}{40pt} \st \proc\in\Proc\}$.
\end{center}
The histories $\invress{\proc}{\acquireo}{}{40pt}$ and 
$\invress{\proc'}{\acquireo}{}{40pt}$ where $\proc\neq \proc'$ are not weakly mergeable.
Thus, by the merge theorem and 
properties $\SCM^\textsf{w}$, $\TSO^\textsf{w}$, and $\RA^\textsf{w}$ in \cref{tab:results}, we have:
\begin{theorem}
Let $\impl$ be a spec-available implementation of $\locko$ that is consistent under a memory model $\mm$.
Then, there exist $\proc \in\Proc$ and $\pi\in \traces{\impl(\acquireo, \proc)}$  
such that the following hold for $\sigma=\restrictmem{\pi}$: 
\begin{enumerate}[leftmargin=*,label=(\alph*)]
\item
if $\mm = \Atomic$, then 
$\sigma$ either has an RMW event or is not RBW; and
\item
if $\mm \in \set{TSO,\RA}$, then 
$\sigma$ either has an RMW event or a fence.
\end{enumerate}
\end{theorem}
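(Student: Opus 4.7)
The plan is to instantiate the Merge Theorem (\cref{thm:main}) at the empty prefix with two singleton-acquire histories by distinct processes, to deduce non-weak mergeability of their memory traces in $\mm$, and then to read off the required structural constraint from the weak-merge entries of \cref{tab:results}. Concretely, fix $\proc_1 \neq \proc_2$ in $\Proc$ and let $\h_i \defeq \invres{\proc_i}{\acquireo}{\ack}$ for $i\in\set{1,2}$. The starting observation is that $\h_1$ and $\h_2$ are not weakly mergeable in $\spec{\locko}$ after $\emptyseq$: by definition $\spec{\locko}$ contains only $\emptyseq$ and the single-acquire histories, so every $\h \in \h_1 \shuffle \h_2$ contains two completed acquires by different processes and hence $\h \not\sqsubseteq \spec{\locko}$, while $\h_i \in \spec{\locko}$ holds trivially.

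Next, I would verify the hypotheses of \cref{thm:main} with $\pi_0 = \emptyseq$ (so $\h_0 = \sigma_0 = \emptyseq$). Condition~\ref{item:1} is immediate from $\emptyseq \in \spec{\locko}$, $\emptyseq \in \traces{\Atomic}$, and $\procsf{\h_1} \cap \procsf{\h_2} = \emptyset$. For condition~\ref{item:2}, spec-availability applied to $\h_0 = \emptyseq$ and $\h = \h_i$ (which lies in $\spec{\locko}$ since $\emptyseq \cdot \h_i = \h_i$) yields availability after $\emptyseq$ w.r.t.~$\h_i$; I then pick $\seqh^i \defeq \h_i$, which is a complete sequential history linearizing itself. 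Condition~\ref{item:3} is the non-weak mergeability established above. The theorem then delivers traces $\pi_i$ with $\restrictobj{\pi_i} = \h_i$, $\procsf{\pi_i} = \set{\proc_i}$, and $\sigma_i \defeq \restrictmem{\pi_i} \in \traces{\Atomic}$, such that $\sigma_1$ and $\sigma_2$ are not weakly mergeable in $\mm$. Since $\pi_i$ involves only $\proc_i$ and realizes a single $\acquireo$ invocation, the transitions of $\sysp{\impl}{\proc_i}$ along $\pi_i$ correspond one-to-one to transitions of $\impl(\acquireo,\proc_i)$, so $\pi_i \in \traces{\impl(\acquireo,\proc_i)}$.

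To finish, I would take $\proc \defeq \proc_1$ and $\pi \defeq \pi_1$, and read off the stated restriction on $\sigma = \sigma_1$ by contraposition of the appropriate entry of \cref{tab:results}. For $\mm \in \set{\TSO,\RA}$, $\sigma_1$ is solo since $\procsf{\sigma_1} = \set{\proc_1}$, so $\TSO^\textsf{w}$ (respectively $\RA^\textsf{w}$) applies: any RWF-LTF trace is weakly mergeable with $\sigma_2$, hence $\sigma_1$ must either contain an RMW event or fail to be LTF, and in either case $\sigma_1$ contains an RMW or a fence. For $\mm = \Atomic$, $\SCM^\textsf{w}$ says that any RW-RBW $\sigma_1$ is weakly mergeable with $\sigma_2$, so $\sigma_1$ either contains a non-RW event or is not RBW. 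The mild subtlety here is that fences are no-ops in $\Atomic$: if $\sigma_1$ were RBW and contained only fences as its non-RW events, I would delete its fences to obtain an RW-RBW trace $\sigma_1'$, invoke $\SCM^\textsf{w}$ to weakly merge $\sigma_1'$ with $\sigma_2$, and then reinsert the fences back into the merged interleaving at their original relative positions, producing a witness that $\sigma_1$ and $\sigma_2$ are weakly mergeable after all, a contradiction. Thus $\sigma_1$ must contain an RMW or fail to be RBW, as required. The main obstacle is precisely this fence-removal argument for $\Atomic$; everything else is a direct application of the framework already developed in the paper.
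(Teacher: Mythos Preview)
Your proposal is correct and follows essentially the same route as the paper: the paper states that the proof is ``identical to that of \cref{thm:two-side-lower}'', which amounts to exhibiting two non-weakly-mergeable singleton histories, invoking the Merge Theorem at $\pi_0=\emptyseq$, and reading off the structural constraint from the weak-merge rows $\SCM^\textsf{w}$, $\TSO^\textsf{w}$, $\RA^\textsf{w}$ of \cref{tab:results}. Your write-up is in fact more careful than the paper's sketch, in particular your fence-deletion/reinsertion argument for the $\Atomic$ case (needed because $\SCM^\textsf{w}$ literally requires $\sigma_1$ to be RW, not merely RMW-free) makes explicit a step the paper leaves implicit.
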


\iffull
The proof of this theorem is identical to that of~\cref{thm:two-side-lower}.
\else
The proof of this theorem is identical to that of~\cref{thm:two-side-lower}, which
appears in~\cite{full}.
\fi
Since the implementation of the entry section of a mutual exclusion algorithm 
can be used to implement $\acquireo$, we obtain that  entry section of 
a solo-terminating mutual exclusion algorithm on \SCM has to
use a RAW pattern or an RMW;
and on \TSO or \RA, it must use an RMW or a fence.

There exist many algorithms implementing starvation-free mutual exclusion under $\SCM$, which 
use the RAW pattern to implement the entry section. As before, their 
counterparts under $\TSO$ can be obtained by adding a fence between every pair of 
consecutive write and read (\cref{sec:app_weak}). For example, 
the transformation of Bakery algorithm~\cite{bakery} only requires a single
fence to separate a write-only block at the beginning of the entry section from 
the read-only block right afterwards. The resulting implementation
is therefore tight. 
Mutual exclusion under $\RA$ with an RMW or a fence has several 
verified implementations~\cite{Lahav:pldi19}.


\section{Lower and Upper Bounds for Snapshot and Counter}
\label{sec:snap-count}

\paragraph{Lower bounds for snapshot.}
Consider a (single-writer) snapshot object $\snapshoto$ storing a
vector of a length $\size{\Proc}$ over a set of values $W$
(also represented as function in $\Proc \to W$)
 with the initial vector of $\tup{\bot \til \bot}$. 
The operations are
$\set{\updateopp{w} \st w\in V} \cup \set{\scanop}$, 
and its return values are $\set{\ack} \cup (\Proc \to W)$. 
The specification $\spec{\snapshoto}$ 
consists of all complete sequential histories where each $\scanop$ event returns $v$
such that $v(\proc)$ is the value written by the last preceding $\updateop$ by process $\proc$, 
or $\bot$ if no such $\updateop$ exists. 

\begin{proposition}
\label{prop:snap-merge}
Let $w, w' \in W$, $\proc_1,\proc_2,\proc_3\in\Proc$, and $\h_1, \h_2 \in \chistories{\snapshoto}$,
such that $w \neq w'$, $i \neq j$, $\procsf{\h_1} \cap \procsf{\h_2} = \emptyset$, and the following hold:
\begin{itemize}
\item
$\h_1 \sqsubseteq \invress{\proc_1}{\updateopp{w}}{}{60pt} \cdot \invress{\proc_3}{\snapop}{v}{30pt}$, 
where $v=\lambda \proc \ldotp \ite{\proc=\proc_1}{w}{\bot}$; and
\item
$\h_2  \sqsubseteq \invress{\proc_2}{\updateopp{w'}}{}{60pt} \cdot \invress{\proc_2}{\snapop}{v'}{30pt}$,
where $v'=\lambda \proc \ldotp \ite{\proc=\proc_2}{w'}{\bot}$.
\end{itemize}
Then, $\h_1$ and $\h_2$ are not weakly mergeable in $\spec{\snapshoto}$ after $\emptyseq$.
\end{proposition}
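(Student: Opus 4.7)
The plan is to argue by contradiction. First, I would observe that the two sequential histories given in the statement, call them $\tilde\h_1$ and $\tilde\h_2$, both lie in $\spec{\snapshoto}$ by direct verification against the snapshot specification (each $\snapop$ returns the entry-wise last $\updateopp{}$ value, or $\bot$ if none). Hence the hypotheses $\h_i \sqsubseteq \tilde\h_i$ yield $\h_i \sqsubseteq \spec{\snapshoto}$. Thus, if $\h_1$ and $\h_2$ were weakly mergeable after $\emptyseq$, Definition~\ref{def:mergeable_obj} would guarantee some $\h \in \h_1 \shuffle \h_2$ and some $\h^* \in \spec{\snapshoto}$ with $\h \sqsubseteq \h^*$.

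Next, I would exploit the fact that $\reorder{\lin}{\cdot}$ is a bijective reordering, so $\h^*$ has the same multiset of events as $\h$, which in turn equals the combined events of $\h_1$ and $\h_2$, and (chaining through $\h_i \sqsubseteq \tilde\h_i$) also equals the combined events of $\tilde\h_1$ and $\tilde\h_2$. Combined with $\procsf{\h_1} \cap \procsf{\h_2} = \emptyset$, this shows that $\h^*$ is a complete sequential history whose four operations are exactly $U_1 = \updateopp{w}$ by $\proc_1$, $S_3 = \snapop$ by $\proc_3$ returning $v$, $U_2 = \updateopp{w'}$ by $\proc_2$, and $S_2 = \snapop$ by $\proc_2$ returning $v'$, occurring in some linear order $<_{\h^*}$.

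Finally, I would read off ordering constraints on $\h^*$ from $\h^* \in \spec{\snapshoto}$. Since $v(\proc_1) = w \neq \bot$ and $U_1$ is the unique $\updateop$ by $\proc_1$ in $\h^*$, the specification forces $U_1 <_{\h^*} S_3$; since $v(\proc_2) = \bot$ and $U_2$ is the unique $\updateop$ by $\proc_2$, it forces $S_3 <_{\h^*} U_2$. Applying the same reasoning to $v'$ yields $U_2 <_{\h^*} S_2$ and $S_2 <_{\h^*} U_1$. Chaining these gives the cycle $U_1 <_{\h^*} S_3 <_{\h^*} U_2 <_{\h^*} S_2 <_{\h^*} U_1$, contradicting the linearity of $<_{\h^*}$. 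The main obstacle is the bookkeeping step that guarantees $\h^*$ contains precisely these four operations with the specified return values; this reduces to the event-preservation property of $\sqsubseteq$ and to the disjointness of the process sets, after which the cycle argument is immediate.
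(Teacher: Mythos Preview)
Your argument is correct and is precisely the natural one: the paper does not give an explicit proof of this proposition, treating it as routine, but the informal justification in \cref{ex:snapshot-intro} (``no valid single history $\h$ with only two updates \ldots\ can have \emph{both} scan results $\tup{1,\bot,\bot}$ and $\tup{\bot,1,\bot}$'') is exactly the cycle $U_1 <_{\h^*} S_3 <_{\h^*} U_2 <_{\h^*} S_2 <_{\h^*} U_1$ that you spell out. Your bookkeeping via the event-preservation property of $\sqsubseteq$ (a bijective reordering) together with disjointness of process sets is the right way to pin down that $\h^*$ contains exactly the four operations with the stated return values.
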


Next, we use the merge theorem (instantiated for the case of non-weakly mergeable histories) 
together with \cref{prop:snap-merge} 
and the mergeability results 
$\SCM^\textsf{w}$, $\TSO^\textsf{w}$, and $\RA^\textsf{w}$
from \cref{tab:results}
to establish lower bounds on implementability of snapshot.

\thmsnapweaktsoscmra*
\begin{proof}
First, consider the case of $\mm \in \set{\Atomic,\TSO}$.
Let $\proc_1, \proc_2$ be distinct processes and consider the histories
\begin{center}
$\h_1=\invress{\proc_1}{\updateopp{1}}{}{60pt} \cdot \invress{\proc_1}{\snapop}{v}{30pt}$ \quad and \quad
$\h_2=\invress{\proc_2}{\updateopp{1}}{}{60pt} \cdot \invress{\proc_2}{\snapop}{v'}{30pt}$,
\end{center}
where $v=\lambda \proc \ldotp \ite{\proc=\proc_1}{w}{\bot}$ and $v'=\lambda \proc \ldotp \ite{\proc=\proc_2}{w'}{\bot}$.
Then, by \cref{prop:snap-merge}, $\h_1$ and $\h_2$ are not weakly mergeable in $\spec{\snapshoto}$
after $\h_0=\emptyseq$.
Clearly, we also have 
$\h_1, \h_2 \in \spec{\snapshoto}$, and since $\impl$ is spec-available,
it is available \wrt both $\h_1$ and $\h_2$.

Thus, by \cref{thm:main}, there exist $\pi'_1, \pi'_2 \in \traces{\impl}$ such that 
$\h_1=\restrict{\pi'_1}{\snapshoto}$ and $\h_2=\restrict{\pi'_2}{\snapshoto}$, and
$\sigma_1'=\restrictmem{\pi_1'}$ and $\sigma_2'=\restrictmem{\pi_2'}$ are not weakly
mergeable in $\mm$.
Observe that $\pi_1'=\pi_1 \cdot \pi_2$ where 
$\pi_1\in \traces{\impl(\updateopp{1}, \proc_1)}$ and
$\pi_2 \in \traces{\impl(\snapop, \proc_1)}$.
Let $\sigma_1=\restrictmem{\pi_1}$ and $\sigma_2=\restrictmem{\pi_2}$.
Then, $\sigma_1'=\sigma_1 \cdot \sigma_2$. 
Thus, the required follows properties
$\SCM^\textsf{w}$ and $\TSO^\textsf{w}$ in \cref{tab:results}.

\medskip
Next, we consider the case of $\mm=\RA$.
Let $\proc_1, \proc_2, \proc_3 \in \Proc$ be distinct processes, and consider the histories:

$\h_1 = \tup{\ev{\proc_1}{\invi{\updateopp{1}}},\ev{\proc_3}{\invi{\scanop}}, \ev{\proc_1}{\resi{\ack}},\ev{\proc_3}{\resi{v}}}$ and

$\h_2 =\invress{\proc_2}{\updateopp{2}}{}{60pt} \cdot \invress{\proc_2}{\snapop}{v'}{30pt}$,

\noindent
where $v=\lambda \proc \ldotp \ite{\proc=\proc_1}{w}{\bot}$ and $v'=\lambda \proc \ldotp \ite{\proc=\proc_2}{w'}{\bot}$.
Then, by \cref{prop:snap-merge}, $\h_1$ and $\h_2$ are not weakly mergeable in $\spec{\snapshoto}$
after $\h_0=\emptyseq$.
Note that $\h_2 \in \spec{\snapshoto}$. Consider the following sequential history of $\spec{\snapshoto}$: \medskip

\hfill{}$\seqh^1 = \invress{\proc_1}{\writesno{1}}{}{50pt} \cdot
\invress{\proc_3}{\scanop}{v}{30pt}
$\hfill{}

\medskip \noindent By assumption, $\impl$ is available \wrt $h_2$ and $\seqh^1$.

Then, by \cref{thm:main}, there exist $\pi_1$ and $\pi_2$ such that:
\begin{itemize}
\item $\restrictobj[\snapshoto]{\pi_i}=\h_i$ for $i\in\set{1,2}$.
\item $\pi_i \in \traces{\impl}$ for $i\in\set{1,2}$.
\item $\restrictmem{\pi_i} \in \traces{\Atomic}$ for $i\in\set{1,2}$.
\item $\procsf{\pi_i} = \procsf{\h_i}$ for $i\in\set{1,2}$.
\item For every $\pi_1' \in \reorder{\sproc}{\pi_1}$ such that $\restrictmem{\pi_1'}\in \traces{\Atomic}$ and $\restrictobj[\snapshoto]{\pi_1}=\h_1$ 
and $\pi_2' \in \reorder{\sproc}{\pi_2}$ such that $\restrictmem{\pi_2'}\in \traces{\Atomic}$ and $\restrictobj[\snapshoto]{\pi_2}=\h_2$,
$\restrictmem{\pi_1'}$ and $\restrictmem{\pi_2'}$ are not weakly mergeable in $\RA$.
\end{itemize}

Let $\pi_1'$ be the sequence obtained from $\pi_1$ by:
\begin{itemize}
\item  moving 
$\tup{\proc_1,\invi{\updateopp{1}}}$, $\tup{\proc_3,\invi{\scanop}}$ and all leading fences to the beginning of the sequence; and 
\item  moving 
$\tup{\proc_1,\resi{\ack}}$, $\tup{\proc_3,\resi{v}}$ and all trailing fences to the end of the sequence.
\end{itemize}
In this rearrangement we keep the internal order among moved events as it is in $\pi_1$.
Then, $\pi_1' \in \reorder{\sproc}{\pi_1}$ and $\restrictobj[\snapshoto]{\pi_1}=\h_1$.
Moreover, among memory events, we only moved fences which are no-ops under $\Atomic$.
Thus, $\restrictmem{\pi_1} \in \traces{\Atomic}$ implies $\restrictmem{\pi_1'}\in \traces{\Atomic}$.
By taking $\pi_2'=\pi_2$, we obtain that 
$\restrictmem{\pi_1'}$ and $\restrictmem{\pi'_2}$ are not weakly mergeable in $\RA$.
Finally, by property $\RA^\textsf{w}$ in \cref{tab:results}, we obtain that 
$\restrictmem{\pi'_1}$ is not LTF, or it contains some RMW event.
This implies that either $\restrict{\restrictmem{\pi'_1}}{\proc_1}$ or  $\restrict{\restrictmem{\pi'_1}}{\proc_3}$ are not LTF or contain some RMW event.
\end{proof}

\paragraph{Lower bounds for counter.}
Consider a counter object $\counto$ with the initial value of $0$, and
the increment ($\inco$), decrement ($\deco$), and read ($\reado$)
operations. Then, we have:

\begin{proposition}
\label{prop:count-merge}
Let $\proc_1,\proc_2,\proc_3\in\Proc$ and $\h_1, \h_2 \in \chistories{\counto}$
such that $\procsf{\h_1} \cap \procsf{\h_2} = \emptyset$ and the following hold:
\begin{itemize}
\item $\h_1 \sqsubseteq \invress{\proc_1}{\inco}{}{40pt} \cdot \invress{\proc_3}{\reado}{1}{40pt}$; and
\item $\h_2 \sqsubseteq \invress{\proc_2}{\deco}{}{40pt} \cdot \invress{\proc_2}{\reado}{-1}{40pt}$
\end{itemize}
Then, $\h_1$ and $\h_2$ are not weakly mergeable in $\spec{\counto}$ after $\emptyseq$.
\end{proposition}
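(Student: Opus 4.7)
The plan is to verify both clauses of \cref{def:mergeable_obj} for $\spec{\counto}$ with $\h_0=\emptyseq$: first I will confirm that each of $\h_1$ and $\h_2$ linearizes into $\spec{\counto}$, and then I will rule out the existence of any $\h \in \h_1 \shuffle \h_2$ whose linearization lies in $\spec{\counto}$.

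The positive direction is immediate. The two witness sequential histories $\invres{\proc_1}{\inco}{} \cdot \invres{\proc_3}{\reado}{1}$ and $\invres{\proc_2}{\deco}{} \cdot \invres{\proc_2}{\reado}{-1}$ both belong to $\spec{\counto}$ (starting from $0$, the first reaches the value $1$ that its $\reado$ returns, and the second reaches $-1$ that its $\reado$ returns), so by hypothesis $\h_1, \h_2 \sqsubseteq \spec{\counto}$.

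For the negative direction I will argue by contradiction. Assume some $\h \in \h_1 \shuffle \h_2$ satisfies $\h \sqsubseteq \seqh$ for some $\seqh \in \spec{\counto}$. By \cref{lem:lin_shuffle}, $\seqh$ is itself an interleaving of $\h_1$ and $\h_2$, and since the $\sqsubseteq$-relation preserves the multiset of events and $\h_1$ (resp.\ $\h_2$) linearizes to the given witness, $\seqh$ must contain exactly the four operations: $\inco$ by $\proc_1$ with response $\ack$, $\reado$ by $\proc_3$ returning $1$, $\deco$ by $\proc_2$ with response $\ack$, and $\reado$ by $\proc_2$ returning $-1$.

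The crux is then a short state-trajectory argument on $\seqh$: with only one $\inco$ and one $\deco$, the counter's value equals $1$ only strictly between the $\inco$ and the $\deco$ (and only when $\inco$ precedes $\deco$ in $\seqh$), and equals $-1$ only strictly between the $\deco$ and the $\inco$ (and only when $\deco$ precedes $\inco$). Since $\seqh$ is sequential, exactly one of these two mutually exclusive windows exists, so at most one of the two reads can return its prescribed value, contradicting $\seqh \in \spec{\counto}$. No step is a serious obstacle; the only conceptual ingredient beyond bookkeeping is \cref{lem:lin_shuffle}, which is what lets us lift non-mergeability at the level of interleavings to non-mergeability of their linearizations.
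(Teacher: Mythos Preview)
Your proof is correct. The paper states this proposition without proof, so there is nothing to compare against directly; your argument supplies exactly the kind of routine verification the paper leaves implicit.

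One minor technical point: \cref{lem:lin_shuffle} as stated in the paper requires $\h_1,\h_2 \in \cshistories{\obj}$ (complete \emph{sequential} histories), but here $\h_1$ involves two distinct processes $\proc_1$ and $\proc_3$ and is only assumed complete, not sequential, so the lemma's hypothesis is not literally met. This does not damage your argument, because you do not actually need the conclusion $\seqh \in \h_1 \shuffle \h_2$; you only use that $\seqh$ consists of exactly the four indicated operation/response pairs. That weaker fact follows directly from the definition of $\sqsubseteq$ as a bijective reordering: $\seqh$ has the same multiset of events as $\h$, which has the same events as $\h_1$ together with $\h_2$, which by the hypotheses $\h_i \sqsubseteq \seqh^i$ have the same events as the two sequential witnesses. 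With that adjustment, the state-trajectory argument you give (the counter reaches $1$ only strictly between $\inco$ and $\deco$ in one order, and $-1$ only strictly between $\deco$ and $\inco$ in the other, so both reads cannot be correct) is sound and completes the proof.
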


Then, the following can be obtained by instantiating 
the proof of \cref{thm:snap-weak-tso-scm-ra} to use \cref{prop:count-merge}.

\begin{theorem}
\label{thm:count-weak-tso-scm-ra}
Let $\impl$ be a spec-available implementation of $\counto$ that is consistent under 
a memory model $\mm$. 
Then, there exist $\proc, \proc'\in\Proc$, 
$\pi_1\in \traces{\impl(\op, \proc)}$ where $\op \in \{\inco, \deco\}$ and
$\pi_2 \in \traces{\impl(\reado, \proc')}$ 
such that the following hold
for $\sigma_1=\restrictmem{\pi_1}$ and $\sigma_2=\restrictmem{\pi_2}\}$:
\begin{enumerate}[label=(\alph*)]
\item
if $\mm = \Atomic$, then $\sigma_1 \cdot \sigma_2$ either has a RMW event 
or is not RBW; and
\item
if $\mm = \TSO$, then $\sigma_1 \cdot \sigma_2$ has either a RMW event 
or is non-LTF (\ie has a fence in the middle).
\item \label{count:snap-weak-ra}
if $\mm = \RA$, then 
\begin{enumerate*}[label=(\roman*)]
\item 
either $\sigma_1$ or $\sigma_2$ has an RMW, or
\item 
either $\sigma_1$ or $\sigma_2$ is non-LTF.
\end{enumerate*}
\end{enumerate}
\end{theorem}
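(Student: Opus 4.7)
The plan is to mirror the proof of \cref{thm:snap-weak-tso-scm-ra}, substituting the snapshot-specific histories with counter histories and invoking \cref{prop:count-merge} in place of \cref{prop:snap-merge}. The overall recipe is: pick two complete histories that cleanly witness the operations we care about, verify they lie in $\spec{\counto}$ and are not weakly mergeable there, apply the Merge Theorem to obtain memory traces $\pi_1,\pi_2$ that are not weakly mergeable in $\mm$, and finally read off the structural constraint from the appropriate row of \cref{tab:results}.

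For $\mm \in \set{\Atomic,\TSO}$, I would fix distinct processes $\proc_1\neq\proc_2$ and take
\[ \h_1 \,=\, \invress{\proc_1}{\inco}{}{40pt} \cdot \invress{\proc_1}{\reado}{1}{40pt}, \qquad \h_2 \,=\, \invress{\proc_2}{\deco}{}{40pt} \cdot \invress{\proc_2}{\reado}{-1}{40pt}. \]
Both are in $\spec{\counto}$, and by \cref{prop:count-merge} they are not weakly mergeable in $\spec{\counto}$ after $\emptyseq$. Spec-availability of $\impl$ gives availability \wrt both $\h_1$ and $\h_2$, so the hypotheses of \cref{thm:main} are met with $\pi_0=\emptyseq$. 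Applying the theorem produces traces $\pi'_1,\pi'_2$ with $\restrictobj[\counto]{\pi'_i}=\h_i$, such that $\restrictmem{\pi'_1}$ and $\restrictmem{\pi'_2}$ are not weakly mergeable in $\mm$. Since $\h_1$ is executed entirely by $\proc_1$, $\pi'_1$ decomposes as $\pi_1\cdot\pi_2$ with $\pi_1\in\traces{\impl(\inco,\proc_1)}$ and $\pi_2\in\traces{\impl(\reado,\proc_1)}$, so $\sigma_1\cdot\sigma_2 = \restrictmem{\pi'_1}$. Rows $\SCM^{\textsf{w}}$ and $\TSO^{\textsf{w}}$ of \cref{tab:results} then immediately give: for $\Atomic$, $\sigma_1\cdot\sigma_2$ must fail to be RW-RBW (i.e., contain an RMW or violate RBW); for $\TSO$, it must fail to be solo-RWF-LTF, and since it is already solo, either it has an RMW or is not LTF.

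For $\mm=\RA$ the histories must be chosen so that the $\updateopp{}$ and the $\reado$ of $\h_1$ are performed by \emph{distinct} processes, so that a subsequent rearrangement by $\reorder{\sproc}{\cdot}$ can push the invocation/response and fence events of both processes to the boundary. Concretely, I would pick three distinct processes $\proc_1,\proc_2,\proc_3$ and take
\[ \h_1 \,=\, \tup{\ev{\proc_1}{\invi{\inco}},\,\ev{\proc_3}{\invi{\reado}},\,\ev{\proc_1}{\resi{\ack}},\,\ev{\proc_3}{\resi{1}}}, \qquad \h_2 \,=\, \invress{\proc_2}{\deco}{}{40pt} \cdot \invress{\proc_2}{\reado}{-1}{40pt}. \]
\cref{prop:count-merge} again applies, and availability \wrt $h_2$ and \wrt a sequential linearization $\seqh^1=\invress{\proc_1}{\inco}{}{40pt}\cdot\invress{\proc_3}{\reado}{1}{40pt}$ of $\h_1$ follows from spec-availability. \cref{thm:main} produces $\pi_1,\pi_2$; I would then, as in the snapshot case, construct $\pi'_1\in\reorder{\sproc}{\pi_1}$ by moving all $\proc_1,\proc_3$ invocation events together with the leading fences of each to the front of the trace, and all responses together with the trailing fences to the back, preserving intra-process order. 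Because the only memory events relocated are fences (no-ops under $\Atomic$), $\restrictmem{\pi'_1}\in\traces{\Atomic}$ is preserved; taking $\pi'_2=\pi_2$, non-weak-mergeability of $\restrictmem{\pi'_1}$ and $\restrictmem{\pi'_2}$ in $\RA$ follows from clause \ref{item:main5} of \cref{thm:main}. Row $\RA^{\textsf{w}}$ of \cref{tab:results} forces $\restrictmem{\pi'_1}$ to fail RWF-LTF or to contain an RMW. Since by construction every fence in $\restrictmem{\pi'_1}$ lies in its leading or trailing block per process, any internal fence or RMW must originate from $\proc_1$ or $\proc_3$, and splitting $\restrictmem{\pi'_1}$ by process yields the required $\pi_1,\pi_2$.

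The main obstacle, as in the snapshot case, is the $\RA$ argument: one must be careful that the $\reorder{\sproc}{\cdot}$ rearrangement actually preserves both $\restrictmem{\pi'_1}\in\traces{\Atomic}$ and $\restrictobj[\counto]{\pi'_1}=\h_1$, and that from the global ``not LTF or has an RMW'' conclusion one can localize the offending event to either the increment trace or the read trace of a single process. The rest of the argument is a routine transcription from the snapshot proof, with the counter-specific mergeability content entirely packaged inside \cref{prop:count-merge}.
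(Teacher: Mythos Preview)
Your proposal is correct and follows essentially the same approach as the paper, which simply states that the result is obtained by instantiating the proof of \cref{thm:snap-weak-tso-scm-ra} with \cref{prop:count-merge} in place of \cref{prop:snap-merge}. Your case split, choice of histories (with $\proc_3=\proc_1$ for $\Atomic/\TSO$ and three distinct processes for $\RA$), and the $\reorder{\sproc}{\cdot}$ rearrangement in the $\RA$ case all match the snapshot argument line by line.
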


\paragraph{Upper bounds.}
There is a wait-free snapshot implementation~\cite{AADGMS93} that is linearizable under \SCM,
in which $\snapop$ performs a sequence of reads, and 
$\updateop$ performs a sequence of reads followed by a write,
Using the fence insertion strategy in~\cref{sec:app_weak},
a linearizable wait-free implementation of snapshot under $\TSO$ is obtained
from such implementations by adding a single fence at the end of $\updateop$.

\begin{theorem}
\label{thm:snap-upper}
For $\mm \in \{\Atomic, \TSO\}$, there exists a linearizable wait-free implementation of snapshot 
$\snapshoto_{\mm}$ under $\mm$ such that: 
\begin{enumerate}[leftmargin=*,label=(\alph*)]
\item \label{thm:snap-upper:atomic}
$\snapshoto_{\Atomic}$ uses only 
a sequence of reads followed by a write to implement $\updateop$ and only reads to implement $\snapop$, and

\item \label{thm:snap-upper:tso}
$\snapshoto_{\TSO}$ uses only a sequence of reads followed by a write and a fence at the end to implement
$\updateop$, and only reads to implement $\snapop$.
\end{enumerate}
\end{theorem}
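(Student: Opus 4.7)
The plan is to instantiate the classical wait-free snapshot construction of Afek et al.~\cite{AADGMS93} for part (a) and then apply the fence-insertion strategy already developed in the paper to derive the $\TSO$ implementation for part (b). First, I would recall the Afek et al.~algorithm: each process owns a single-writer memory cell that stores a triple $\tup{v, \mathit{ts}, \mathit{view}}$ (value, sequence number, and an embedded snapshot), and $\updateop(w)$ is implemented by $(i)$ performing a double collect to obtain a consistent snapshot $\mathit{view}$, followed by $(ii)$ a single write of $\tup{w, \mathit{ts}+1, \mathit{view}}$ to the process's own cell; $\snapop$ is implemented by repeatedly collecting all cells, returning a view as soon as two consecutive collects agree, or using some other process's embedded $\mathit{view}$ if that process has been observed to change its sequence number twice during the scan. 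This implementation is known to be wait-free and linearizable under $\SCM$, and by construction $\updateop$ issues a sequence of reads (the collects) followed by exactly one write, while $\snapop$ issues only reads. This gives part (a) immediately with $\snapshoto_{\SCM}$ taken to be this algorithm.

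For part (b), I would appeal to Corollary~\ref{coro-from-scm-to-tso} (the fence-insertion strategy), which states that inserting a fence between every consecutive write-read pair within a process and between every write and the response of the enclosing operation transforms a wait-free linearizable $\SCM$ implementation into a wait-free linearizable $\TSO$ implementation. Applied to $\snapshoto_{\SCM}$: inside $\snapop$ there are no writes at all, so no fences are inserted; inside $\updateop$ there is a single write that is the last memory access before the response, so the only fence introduced sits between that write and the $\resi{\ack}$ event. The resulting algorithm $\snapshoto_{\TSO}$ therefore has the exact shape required by part (b), namely a sequence of reads followed by a write and a trailing fence in $\updateop$, and only reads in $\snapop$.

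The argument is almost entirely a matter of assembling previously established pieces, so I do not expect any substantive obstacle. The only point that deserves care is confirming that, in the Afek et al.~construction, $\updateop$ really admits a form in which all its reads precede its unique write (a ``read-then-write'' pattern), rather than interleaving reads and writes; this follows because the double collect is completed entirely before the process commits its new value, and the embedded view used in the write is computed from those reads without any intervening store. Once this is noted, the combination of $\snapshoto_{\SCM}$ with Corollary~\ref{coro-from-scm-to-tso} yields both parts of the theorem, and wait-freedom is preserved since fence insertion adds only a bounded number of fences per operation.
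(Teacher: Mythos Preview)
Your proposal is correct and follows essentially the same approach as the paper: invoke the Afek et al.\ wait-free snapshot \cite{AADGMS93} for part (a), observing that $\updateop$ is read-then-write and $\snapop$ is read-only, and then apply the fence-insertion transformation (Corollary~\ref{coro-from-scm-to-tso}) to obtain part (b) with a single trailing fence in $\updateop$. The only minor imprecision is that in the Afek et al.\ construction $\updateop$ invokes the full $\snapop$ procedure (potentially many collects, not just a double collect), but this does not affect the argument since $\snapop$ still consists solely of reads.
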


Observe that~\cref{thm:snap-upper}~\ref{thm:snap-upper:atomic} implies 
that any pair of consecutive
$\updateop$ and $\snapop$ is RBW, which is tight in the lower bound of 
\cref{thm:snap-weak-tso-scm-ra}~\ref{thm:snap-weak-scm}. 
Likewise,~\cref{thm:snap-upper}~\ref{thm:snap-upper:tso} is tight in the lower bound
of \cref{thm:snap-weak-tso-scm-ra}~\ref{thm:snap-weak-tso}, which stipulates that a fence is needed somewhere within
consecutively executed $\updateop$ and $\snapop$.

A linearizable wait-free counter can be implemented on top of a snapshot instance as follows: each process
$\proc_i$ stores its contribution to the current counter value in a local variable $c_i$ initialized
to $0$. To increment (resp., decrement) the counter, $\proc_i$ increments (resp., decrements) 
$c_i$, and then invokes $\updateopp{c_i}$ to share its contribution with other processes.
To read the counter, a process calls $\snapop$ and returns the sum of the values stored
in the returned vector. 
\begin{theorem}
\label{thm:counter-upper}
For $\mm \in \{\Atomic, \TSO\}$, there exists a linearizable wait-free implementation of counter 
$\counto_{\mm}$ under $\mm$ such that: 
\begin{enumerate}[label=(\alph*)]
\item
$\counto_{\Atomic}$ uses only 
writes to implement $\inco$ and $\deco$ and only reads to implement $\reado$, and

\item $\counto_{\TSO}$ uses only writes and a fence at the end to implement
$\inco$ and $\deco$, and only reads to implement $\reado$.
\end{enumerate}
\end{theorem}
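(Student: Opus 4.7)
The plan is to verify the construction described in the paragraph immediately preceding the theorem, which reduces counter to snapshot plus purely local computation. Explicitly, each process $\proc_i$ maintains a local variable $c_i$ initialized to $0$; $\inco$ (resp.\ $\deco$) would increment (resp.\ decrement) $c_i$ and then invoke $\updateopp{c_i}$ on an underlying snapshot instance $\snapshoto_{\mm}$; and $\reado$ would invoke $\snapop$ and return the sum of the components of the returned vector.

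The first step is to establish linearizability by inheritance from $\snapshoto_{\mm}$. Given a complete history $\h$ of the counter implementation, its embedded snapshot history (obtained by projecting onto snapshot invocations and responses) is linearized by some sequential snapshot history $\seqh \in \spec{\snapshoto}$ via Theorem \ref{thm:snap-upper}. I would then map $\seqh$ back to a sequential counter history by replacing each $\updateopp{c}$ by $\proc_i$ with the corresponding $\inco$ or $\deco$, and each $\snapop$ returning vector $v$ by a $\reado$ returning $\sum_j v(\proc_j)$. The key invariant to verify is that after any prefix of $\seqh$, the entry $v(\proc_j)$ that $\proc_j$'s snapshot slot would hold equals the number of $\inco$ operations of $\proc_j$ completed in that prefix minus the number of $\deco$ operations of $\proc_j$ completed in that prefix; summing over all processes then yields the net number of increments minus decrements in the prefix, which is precisely the value prescribed by $\spec{\counto}$ for the $\reado$.

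Wait-freedom transfers immediately because the construction adds only $O(1)$ purely local work per operation (an update to $c_i$ and a linear-time sum). For the memory access pattern, every additional access introduced by the wrapper is process-local and incurs no shared memory event, so the shared-memory events produced by the counter's operations coincide with those of the underlying snapshot operations. Invoking Theorem \ref{thm:snap-upper} for the pattern of $\snapshoto_{\mm}$ then transfers the stated structure directly to $\counto_{\mm}$.

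The main subtlety I anticipate—rather than a serious obstacle—is reconciling the declared counter memory pattern (``only writes'' for $\inco$/$\deco$ in part (a), ``only writes and a fence at the end'' in part (b)) with the snapshot pattern of Theorem \ref{thm:snap-upper}, which lists ``a sequence of reads followed by a write'' for $\updateop$. I would resolve this by instantiating the construction with a single-writer snapshot variant whose $\updateop$ consists of a single write to the writer's dedicated slot, pushing all linearization work into $\snapop$ via extra reads there; such variants are standard and already captured by the $\snapop$ pattern of Theorem \ref{thm:snap-upper}. With this choice, $\inco$ and $\deco$ each perform exactly one shared write (plus the single trailing fence dictated by the fence-insertion strategy of \cref{sec:app_weak} under $\TSO$), matching the statement of the theorem exactly.
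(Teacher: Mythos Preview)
Your approach is essentially the paper's: the paper gives no formal proof beyond the construction paragraph you cite, and you have fleshed out exactly what was intended. The linearization argument (lift a linearization of the embedded snapshot history to a counter linearization via the per-process contribution invariant, then sum) is correct, and wait-freedom and the memory-event pattern transfer as you say.

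You are also right to flag the mismatch between the pattern asserted for $\inco$/$\deco$ (``only writes'') and the pattern \cref{thm:snap-upper} gives for $\updateop$ (``a sequence of reads followed by a write''). Your proposed fix---swap in a snapshot variant whose $\updateop$ is a single write with all the work pushed into $\snapop$---is the natural move, but be cautious about calling such variants ``standard'': the classical wait-free construction of~\cite{AADGMS93} embeds a scan inside $\updateop$ precisely to enable helping, and a wait-free single-writer snapshot with write-only $\updateop$ \emph{and} read-only $\snapop$ over plain registers is not obviously available off the shelf. This tension is already present in the paper's own statement (the theorem as phrased is slightly stronger than what the described construction together with \cref{thm:snap-upper} literally delivers), so it is not a flaw in your reasoning; but if you want the claim exactly as stated you should either point to a specific algorithm with that pattern or relax the $\inco$/$\deco$ clause to ``reads followed by a write'' to match \cref{thm:snap-upper}.
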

As in the case of snapshot, the synchronization strategy stipulated by this
result is optimal \wrt the lower bound of \cref{thm:count-weak-tso-scm-ra}.
The optimal implementations of  snapshot and counter  under $\RA$ 
are left for future work.


 \iffull
 \section{Consensus Number of Two-Sided Non-Commutative Operations}
\label{app:cn-two-sided}

We establish the following theorem:

\begin{theorem}
The consensus number of every deterministic object with a pair of two-sided non-commutative 
operations is at least 2.
\label{thm:consnu}
\end{theorem}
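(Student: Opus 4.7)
The plan is to exhibit a wait-free $2$-process binary consensus algorithm using one instance of $\obj$ together with two single-writer atomic registers; by Herlihy's framework, this witnesses that $\obj$'s consensus number is at least $2$. Let $\h_0$, $\proc_1 \neq \proc_2$, and response values $u_1 \neq v_1$, $u_2 \neq v_2$ be as guaranteed by \cref{def-strongly-non-commutative}. I identify the two consensus participants with $\proc_1$ and $\proc_2$, use registers $R_1, R_2$ (initially $\bot$, with $R_i$ writable only by $\proc_i$), and a single instance of $\obj$ pre-initialized to the state obtained by sequentially executing $\h_0$. The last point is standard in the consensus-hierarchy setting: the object is supplied in a reachable state of its specification; equivalently, one prepends a solo sequential preamble that performs the operations of $\h_0$, the state reached being uniquely determined by determinism of $\obj$.

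The protocol for participant $\proc_i$ with proposed input $x_i$ is: (i)~write $x_i$ into $R_i$; (ii)~invoke $\op_i$ on $\obj$ and let $r_i$ be the response; (iii)~if $r_i = u_i$, decide $x_i$, else read $R_{3-i}$ and decide the value obtained. Since the instance of $\obj$ is atomic, the invocations of $\op_1$ and $\op_2$ are linearized in some total order. By determinism of $\obj$ together with the two clauses $\h_0 \cdot \invres{\proc_1}{\op_1}{u_1} \cdot \invres{\proc_2}{\op_2}{v_2} \in \spec{\obj}$ and $\h_0 \cdot \invres{\proc_2}{\op_2}{u_2} \cdot \invres{\proc_1}{\op_1}{v_1} \in \spec{\obj}$, if $\op_1$ is linearized first the responses are $\tup{u_1, v_2}$, while if $\op_2$ is linearized first the responses are $\tup{v_1, u_2}$. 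Because $u_i \neq v_i$, each process locally detects whether it was the ``winner'' (went first) or the ``loser''.

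Correctness then follows routinely. \emph{Validity} is immediate since the decided value is always some proposed $x_j$. For \emph{agreement}, in either linearization exactly one participant is the winner and decides its own proposal $x_j$, while the other is the loser and reads $R_j$ to decide; since the winner wrote $R_j$ before invoking its operation on $\obj$, which in turn is linearized before the loser's invocation on $\obj$, which in turn precedes the loser's read of $R_j$, the loser is guaranteed to read $x_j$ rather than $\bot$, so both processes decide $x_j$. \emph{Wait-freedom} is trivial from the constant step count.

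The main obstacle is the bookkeeping around the preamble $\h_0$, whose operations may involve processes distinct from $\proc_1$ and $\proc_2$. The cleanest way to handle this is the one sketched above: treat the object as supplied in the state determined by $\h_0$, or equivalently include a sequential start-up phase in which a designated simulator executes $\h_0$; past this preamble, everything is forced by determinism and the two specification clauses guaranteed by two-sided non-commutativity, so the argument reduces to the standard Herlihy-style ``writer first, reader second'' consensus template.
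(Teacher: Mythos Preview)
Your proposal is correct and takes essentially the same approach as the paper: both use the standard Herlihy-style two-process consensus protocol in which each $\proc_i$ first announces its input in a register, then invokes $\op_i$ on an instance of $\obj$ pre-initialized via $\h_0$, and decides its own value iff the response is $u_i$ (otherwise reading the other register); determinism together with the two specification clauses forces the responses to be $(u_1,v_2)$ or $(v_1,u_2)$ depending on linearization order, and $u_i \neq v_i$ lets each process detect who went first. The paper's proof is just a slightly more explicit case analysis of the same argument, including the solo/crash cases that your ``winner/loser'' reasoning handles implicitly.
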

\begin{proof}
Let $\obj$ be a deterministic object with operations
$\op_1, \op_2 \in \opsf{\obj}$ that are two-sided non-commutative in $\spec{\obj}$.
Then, there exist
a history $\h_0 \in \cshistories{\obj}$, processes $\proc_1 \neq \proc_2$, 
and response values $u_1 \neq v_1$ and $u_2 \neq v_2$ in $\retsf{\obj}$
such that
$\h_1 = \h_0 \cdot \invres{\proc_1}{\op_1}{u_1} \cdot \invres{\proc_2}{\op_2}{v_2} \in \spec{\obj}$ and
$\h_2 = \h_0 \cdot \invres{\proc_2}{\op_2}{u_2} \cdot \invres{\proc_1}{\op_1}{v_1} \in \spec{\obj}$.

Let $\impl$ be a linearizable wait-free implementation of $\obj$ on \Atomic.
We give an algorithm $A$ on $\Atomic$ for two processes, $\proc_1$ and $\proc_2$, that solves consensus 
using $\impl$ and a two-element shared array $M$ of read/write variables. 
First, $\impl$ is initialized by executing $\h_0$,
which is possible since $\impl$ is wait-free and $\obj$ is deterministic.
To solve consensus, each process $p_i$, $i\in \{1, 2\}$, first writes its input $x_i$ in $M[i]$,
then performs operation $\op_i$ on $\impl$. If $\op_i$ returns
$u_i$, then $p_i$ decides $M[i]$, otherwise, $p_i$ decides
$M[j]$ where $j=2$ if $i=1$, and $j=1$, otherwise.

Since $\impl$ is wait-free, $A$ is wait-free. We now argue that $A$ 
satisfies the validity and agreement properties of consensus. 
We consider the possibility of process crashes (as usual in consensus number proofs). 

We first prove the following:
\begin{claim}
For all $i \in \{1, 2\}$, 
if a process decides the value stored in $M[i]$, then $M[i]=x_i$. 
\label{lemma:consnu-aux}
\end{claim}

\begin{claimproof}
If a process $p_i$ decides
the value stored in $M[i]$, then by the time this happens, $p_i$ has already
written its input value to $M[i]$, and therefore, $M[i]=x_i$, as needed. 
Let $\h$ be a history of $\impl$ in which $p_j$ decides the value stored in $M[i]$ where $j\neq i$,
and assume without loss of generality that $j=1$ and $i=2$. 
Then, the invocation of $\op_1$ by $\proc_1$ is 
complete and returns $u_1' \neq u_1$ in $\h$. Since $\impl$ is a linearizable
implementation of $\obj$, there exists a sequential and complete 
history $\h'$ such that $\h \sqsubseteq \h_0 \cdot \h' \sqsubseteq \spec{\obj}$. 
Since $\obj$ is deterministic, $\h_0 \cdot \h'\in \spec{\obj}$, 
and $u_1' \neq u_1$,
$\h'$ cannot be of the form $\invres{\proc_1}{\op_1}{u_1'} \cdot \h''$.
Hence, there exists $u_2'$ such that $\h' = \invres{\proc_2}{\op_2}{u_2'} \cdot \invres{\proc_1}{\op_1}{u_1'}$.
Since $\h \sqsubseteq \h_0 \cdot \h'$, $\op_2$ must be invoked before
$\op_1$ is completed in $\h$. Thus, by the time $\op_1$ returns, $p_2$ must have already completed
writing $M[2]$, and therefore, $M[2]=x_2$.  The result follows since $p_1$ can only
decide $M[2]$ after $\op_1$ returns.
\end{claimproof}

Now, since a process always decides a value stored in either $M[1]$ or $M[2]$, by 
\cref{lemma:consnu-aux}, validity holds. Moreover, if only one process decides and 
the other one crashes before deciding, agreement holds as well. Suppose that both processes
decide and assume by contradiction that agreement is violated. 
Then, there are two cases to consider:

\begin{enumerate}

\item $p_1$ decides $M[1]=x_1$ and $p_2$ decides $M[2]=x_2$. Then, both the 
invocation of $\op_1$ by $\proc_1$ and the invocation of $\op_2$ by $\proc_2$ 
are complete and return $u_1$ and $u_2$, respectively. Since $\impl$ is a linearizable
implementation of $\obj$, either
\emph{(i)} $\h_1'=\h_0 \cdot \invres{\proc_1}{\op_1}{u_1} \cdot \invres{\proc_2}{\op_2}{u_2} \in \spec{\obj}$ or
\emph{(ii)} $\h_2'=\h_0 \cdot \invres{\proc_2}{\op_2}{u_2} \cdot \invres{\proc_1}{\op_1}{u_1} \in \spec{\obj}$.
Suppose \emph{(i)} holds. Since $\h_1 \in \spec{\obj}$ and $u_2 \neq v_2$, $\h_1$ and $\h_1'$ have
the longest common prefix that ends in an invocation contradicting the fact that $\obj$
is deterministic. Likewise, if \emph{(ii)} holds, then since 
$\h_2 \in \spec{\obj}$ and $u_1 \neq v_1$, $\h_2$ and $\h_2'$ have
the longest common prefix that ends in an invocation contradicting the determinism of $\obj$.

\item $p_1$ decides $M[2]=x_2$ and $p_2$ decides $M[1]=x_1$. 
Then, both the 
invocation of $\op_1$ by $\proc_1$ and the invocation of $\op_2$ by $\proc_2$ 
are complete and return $u_1' \neq u_1$ and $u_2' \neq u_2$, respectively. 
Since $\impl$ is a linearizable implementation of $\obj$, either
\emph{(i)} $\h_1'=\h_0 \cdot \invres{\proc_1}{\op_1}{u_1'} \cdot \invres{\proc_2}{\op_2}{u_2'} \in \spec{\obj}$ or
\emph{(ii)} $\h_2'=\h_0 \cdot \invres{\proc_2}{\op_2}{u_2'} \cdot \invres{\proc_1}{\op_1}{u_1'} \in \spec{\obj}$.
Suppose \emph{(i)} holds. Since $\h_1 \in \spec{\obj}$ and $\obj$ is deterministic, we have 
$u_1'=u_1$, which is a contradiction. Likewise, if \emph{(ii)} holds, then $\h_2 \in \spec{\obj}$ and the fact that $\obj$ is 
deterministic imply that $u_2'=u_2$, which is a contradiction.
\qedhere
\end{enumerate}

\end{proof}

 \fi

\end{document}